\newtheorem{theorem}{Theorem}
\newtheorem{lemma}{Lemma} 
\newtheorem{corollary}{Corollary}
\newtheorem{question}{Question}
\def \bP {\mathbb{P}}
\def \bE {\mathbb{E}}
\def \cL {\mathcal{L}}
\def \spo {\mathsf{Poi}}
\DeclareMathOperator*{\argmin}{\arg\!\min}
\DeclareMathOperator*{\argmax}{\arg\!\max}
\begin{document}

\title{Minimax Estimation of Functionals of Discrete Distributions}

\author{Jiantao~Jiao,~\IEEEmembership{Student Member,~IEEE},~Kartik~Venkat,~\IEEEmembership{Student Member,~IEEE},~Yanjun~Han,~\IEEEmembership{Student Member,~IEEE}, and Tsachy~Weissman,~\IEEEmembership{Fellow,~IEEE}
\thanks{Manuscript received Month 00, 0000; revised Month 00, 0000; accepted Month 00, 0000. Date of current version Month 00, 0000. This work was supported in part by two Stanford Graduate Fellowships, and by the Center for Science of Information (CSoI), an NSF Science and Technology Center, under grant agreement CCF-0939370.}
\thanks{Jiantao Jiao, Kartik Venkat, and Tsachy Weissman are with the Department of Electrical Engineering, Stanford University, CA, USA. Email: \{jiantao,kvenkat,tsachy\}@stanford.edu.} 
\thanks{Yanjun Han is with the Department of Electronic Engineering, Tsinghua University, Beijing, China. Email: hanyj11@mails.tsinghua.edu.cn.}
\thanks{Communicated by H. Permuter, Associate Editor for Shannon Theory.}%
\thanks{Color versions of one or more of the figures in this paper are available 
online at \url{http://ieeexplore.ieee.org}.}%
\thanks{The Matlab code package of this paper can be downloaded from \url{http://web.stanford.edu/\~tsachy/software.html}.}%
\thanks{Copyright (c) 2014 IEEE. Personal use of this material is permitted.  However, permission to use this material for any other purposes must be obtained from the IEEE by sending a request to pubs-permissions@ieee.org.}
\thanks{Digital Object Identifier 10.1109/TIT.2015.0000000}%
}
%
%

\date{\today}

\vspace{-10pt}

\maketitle

\begin{abstract}
We propose a general methodology for the construction and analysis of essentially minimax estimators for a wide class of functionals of finite dimensional parameters, and elaborate on the case of discrete distributions, where the support size $S$ is unknown and may be comparable with or even much larger than the number of observations $n$. We treat the respective regions where the functional is ``nonsmooth'' and ``smooth'' separately. In the ``nonsmooth'' regime, we apply an unbiased estimator for the best polynomial approximation of the functional whereas, in the  ``smooth'' regime, we apply a bias-corrected version of the Maximum Likelihood Estimator (MLE). 

We illustrate the merit of this approach by thoroughly analyzing the performance of the resulting schemes for estimating two important information measures: the entropy $H(P) = \sum_{i = 1}^S -p_i \ln p_i$ and $F_\alpha(P) = \sum_{i = 1}^S p_i^\alpha,\alpha>0$. We obtain the minimax $L_2$ rates for estimating these functionals. In particular, we demonstrate that our estimator achieves the optimal sample complexity $n \asymp S/\ln S$ for entropy estimation. We also demonstrate that the sample complexity for estimating $F_\alpha(P),0<\alpha<1$, is $n\asymp S^{1/\alpha}/ \ln S$, which can be achieved by our estimator but not the MLE. For $1<\alpha<3/2$, we show the minimax $L_2$ rate for estimating $F_\alpha(P)$ is $(n\ln n)^{-2(\alpha-1)}$ for infinite support size, while the maximum $L_2$ rate for the MLE is $n^{-2(\alpha-1)}$. For all the above cases, the behavior of the minimax rate-optimal estimators with $n$ samples is essentially that of the MLE (plug-in rule) with $n\ln n$ samples, which we term ``effective sample size enlargement''. 

We highlight the practical advantages of our schemes for the estimation of entropy and mutual information. We compare our performance with various existing approaches, and demonstrate that our approach reduces running time and boosts the accuracy. Moreover, we show that the minimax rate-optimal mutual information estimator yielded by our framework leads to significant performance boosts over the Chow--Liu algorithm in learning graphical models. The wide use of information measure estimation suggests that the insights and estimators obtained in this work could be broadly applicable. 

\end{abstract}

\begin{IEEEkeywords}
Mean squared error, entropy estimation, nonsmooth functional estimation, maximum likelihood estimator, approximation theory, minimax lower bound, polynomial approximation, minimax-optimality, high dimensional statistics, R\'enyi entropy, Chow--Liu algorithm
\end{IEEEkeywords}

\section{Introduction and main results}

Given $n$ independent samples from an unknown discrete probability distribution $P = (p_1, p_2, \ldots, p_S)$, with \emph{unknown} support size $S$, consider the problem of estimating a functional of the distribution of the form:
\begin{equation}\label{eqn.generalf}
F(P) = \sum_{i=1}^{S} f(p_i),
\end{equation}
where $f: (0,1] \to \mathbb{R}$ is a continuous function. Among the most fundamental of such functionals is the entropy \cite{Shannon1948},
\begin{equation}\label{eqn.entropy}
H(P) \triangleq \sum_{i = 1}^S - p_i \ln p_i,
\end{equation}
which plays significant roles in information theory~\cite{Shannon1948}. Another information theoretic quantity which is closely related to the entropy is the mutual information, which for discrete random variables can be defined as
\begin{equation}
I(X;Y) = I(P_{XY}) = H(P_X) + H(P_Y) - H(P_{XY}),
\end{equation}
where $P_X, P_Y, P_{XY}$ denote, respectively, the distributions of random variables $X$, $Y$, and the pair $(X,Y)$.

We are also interested in the family of information measures $F_\alpha(P)$:
\begin{equation}
F_\alpha (P) \triangleq \sum_{i = 1}^S p_i^\alpha, \alpha>0.
\end{equation}
The significance of functional $F_\alpha(P)$ can be seen via the connection $H_\alpha(P) = \frac{\ln F_\alpha(P)}{1-\alpha}$, where $H_\alpha(P)$ is the R\'enyi entropy \cite{Renyi1961measures}, emerging in answering fundamental questions in information theory~\cite{Csiszar1995generalized},\cite{Courtade--Verdu2014cumulant,Courtade--Verdu2014variable}. The functional $1-F_2(P)$ is also called Gini impurity, which is widely used in machine learning~\cite{Breiman--Friedman--Stone--Olshen1984classification}.

Over the years, the use of information theoretic measures, especially entropy and mutual information, has extended far beyond the information theory community, and is deeply imbued in fundamental concepts from various disciplines. In statistics, one of the popular criteria for objective Bayesian modeling~\cite{Lehmann--Casella1998theory} is to design a prior on the parameter to maximize the mutual information between the parameter and the observations. In machine learning, the so-called \emph{infomax}\cite{Linsker1988self} criterion states that the function that maps a set of input values to a set of output values should be chosen or learned so as to maximize the mutual information between the input and output, subject to a set of specified constraints. This principle has been widely adopted in practice, for example, in decision tree based algorithms in machine learning such as C4.5~\cite{Quinlan1993c4}, one tries to select the feature at each step of tree splitting to maximize the mutual information (called \emph{information gain} principle~\cite{Nowozin2012improved}) between the output and the feature conditioned on previous chosen features. Other measures in feature selection have been proposed, such as the Gini impurity (used in CART~\cite{Breiman--Friedman--Stone--Olshen1984classification}), variance reduction~\cite{Breiman--Friedman--Stone--Olshen1984classification}, and many of them can be incorporated as special cases of what we study in this paper. We emphasize that in some applications, mutual information arises naturally as the only answer, for example, the well known Chow--Liu algorithm \cite{Chow--Liu1968} for learning tree graphical models relies on estimation of the mutual information, which is a natural consequence of maximum likelihood estimation. Recently, it was shown~\cite{Jiao--Courtade--Venkat--Weissman2014Justification} that mutual information is the unique measure of relevance for inference in the presence of side information to satisfy a natural data processing property. 

We also mention genetics~\cite{Olsen--Meyer--Bontempi2009impact}, image processing~\cite{Pluim--Maintz--Viergever2003mutual}, computer vision~\cite{Viola--Wells1997alignment}, secrecy~\cite{Batina--Gierlichs--Prouff--Rivain--Standaert--Veyrat2011mutual}, ecology~\cite{Hill1973diversity}, and physics~\cite{Franchini--Its--Korepin2008Renyi} as fields in which information theoretic measures are widely used. There are some other functionals that can be loosely categorized as information theoretic measures, such as the association measures quantifying certain dependency relations of random variables~\cite{Schmid--Schmidt--Blumentritt--Gaisser--Ruppert2010copula}, and divergence measures~\cite{Basseville2010}.

In most applications, the underlying distribution is unknown, so we cannot compute these information theoretic measures exactly. Hence, in nearly every problem that uses information theoretic measures, we need to estimate these quantities from the data, which is what we study in this paper. 
Our contributions are threefold. (i) We show that when the number of observations $n$ is comparable to the parameter dimension (a relevant regime in the ``big data'' era), the prevailing approaches (such as plug-in of the maximum likelihood estimator) can be highly sub-optimal. (ii) We propose new and computationally efficient algorithms that are essentially optimal in terms of the worst case squared error risk. That is, we both characterize the fundamental limits on estimation performance, and propose practical algorithms that essentially achieve them. Our results establish that for such functional estimation scenarios, replacing the plug-in (maximum likelihood) estimator by our practical and essentially minimax optimal estimators yields an effective enlargement of the sample size from $n$ to $n \ln n$, which can make a significant difference in practice. (iii) We demonstrate the efficacy of our schemes by a comparison with existing procedures in the literature, as well as illustrate performance boosts over traditional schemes on both real and simulated data. 

Notation:  We use the notation $a_\gamma \lesssim  b_\gamma$ to denote that there exists a universal constant $C$ such that $\sup_{\gamma } \frac{a_\gamma}{b_\gamma} \leq C$. Notation $a_\gamma \asymp b_\gamma$ is equivalent to $a_\gamma \lesssim  b_\gamma$ and $b_\gamma \lesssim  a_\gamma$. Notation $a_\gamma \gg b_\gamma$ means that $\liminf_\gamma \frac{a_\gamma}{b_\gamma} = \infty$, and $a_\gamma \ll b_\gamma$ is equivalent to $b_\gamma \gg a_\gamma$. The sequences $a_\gamma,b_\gamma$ are non-negative.

\subsection{Our estimators}\label{subsec.ourestimators}

Our main goal in this work is to present a general approach to the construction of minimax rate-optimal estimators for functionals of the form (\ref{eqn.generalf}) under $L_2$ loss. To illustrate our approach, we describe and analyze explicit constructions for the specific cases of entropy $H(P)$ and $F_\alpha(P)$, from which the construction for any other functional of the form~(\ref{eqn.generalf}) will be clear. Our estimators for each of these two functionals are agnostic with respect to the support size $S$, and achieve the minimax $L_2$ rates (i.e. the performance of our approaches when we do not know the support size $S$ does not degrade compared with the case where the support size $S$ is known).

Our approach is to tackle the estimation problem separately for the cases of ``small $p$'' and ``large $p$'' in $H(P)$ and $F_\alpha(P)$ estimation, corresponding to treating regions where the functional is nonsmooth and smooth in different ways. As we describe in detail in the sections to follow, where we give a full account of our estimators, in the nonsmooth region, we rely on the best polynomial approximation of the function $f$, by employing an unbiased estimator for this approximation. The best polynomial approximation for a function $f(x)$ on domain $A$ with order no more than $K$ is defined as
\begin{equation}
P_K^*(x) \triangleq \argmin_{P \in \mathsf{poly}_K} \max_{x\in A}|f(x) - P(x)|,
\end{equation}
where $\mathsf{poly}_K$ is the collection of polynomials with order at most $K$ on $A$. The part pertaining to the smooth region is estimated by a bias-corrected maximum likelihood estimator. We apply this procedure coordinate-wise based on the empirical distribution of each observed symbol, and finally sum the respective estimates.

We now look at the specific cases of entropy and $F_\alpha(P)$ separately. For the entropy, after we obtain the empirical distribution $P_n$, for each coordinate $P_n(i)$, if $P_n(i) \ll \ln n/n$, we (i) compute the best polynomial approximation for $-p_i \ln p_i$ in the regime $0\leq p_i \ll \ln n/n$, (ii) use the unbiased estimators for integer powers $p_i^k$ to estimate the corresponding terms in the polynomial approximation for $-p_i \ln p_i$ up to order $K_n \sim \ln n$, and (iii) use that polynomial as an estimate for $-p_i \ln p_i$. If $P_n(i) \gg \ln n/n$, we use the estimator $-P_n(i) \ln P_n(i) + \frac{1}{2n}$ to estimate $-p_i \ln p_i$. Then, we add the estimators corresponding to each coordinate. Our estimator for $F_\alpha(P)$ is very similar to that of entropy, with the only difference that we conduct polynomial approximation for $x^\alpha$ with order $K_n \sim \ln n$, and use the estimator $\left( 1 + \frac{\alpha(1-\alpha)}{2n P_n(i)} \right) P_n^\alpha(i)$ when $P_n(i) \gg \ln n/n$.

We remark that our estimator is both conceptually and algorithmically simple, with complexity linear in the number of samples $n$. Indeed, the only non-trivial computation required is the best polynomial approximation for functions, which is data independent and can be done \emph{offline} before obtaining any samples from the experiment. Moreover, the coefficients of the best polynomial approximation of different orders can be preprocessed and stored in advance in the implementation of our approach. We demonstrate in Section~\ref{sec.experiments} that the best polynomial approximation step can be performed efficiently using modern machinery from approximation theory and numerical analysis.

\subsection{Main results}\label{subsec.mainresults}
Simple as our estimators are to describe and implement, they can be shown to be near ``optimal'' in the strong sense we now describe. We adopt the conventional statistical decision theoretic framework~\cite{Wald1950statistical}. Regarding the task of estimating functional $F(P)$, the $L_2$ risk of an arbitrary estimator $\hat{F}$ is defined as
\begin{equation}
\bE_P \left( F(P) - \hat{F} \right)^2,
\end{equation}
where the expectation is taken with respect to the distribution $P$ that generates the observations used by $\hat{F}$. Apparently, the $L_2$ risk is a function of both the \emph{unknown} distribution $P$ and the estimator $\hat{F}$, and our goal is to minimize this risk. Since $P$ is unknown, we cannot directly minimize it, but if we want to do well no matter what the true distribution $P$ is, we may want to adopt the minimax criterion~\cite{Wald1950statistical}\cite{Lehmann--Casella1998theory}, and try to minimize the \emph{maximum risk}
\begin{equation}\label{eqn.maximumriskdef}
\sup_{P\in \mathcal{M}_S}  \bE_P \left( F(P) - \hat{F} \right)^2,
\end{equation}
where $\mathcal{M}_S$ denotes the set of all discrete distributions with support size $S$. The estimator that minimizes the maximum risk above is called the \emph{minimax} estimator, and the corresponding risk is called the \emph{minimax} risk. The exact computation of the minimax risk and the minimax estimator for general $F(P)$ seems intractable. Although the maximum risk in (\ref{eqn.maximumriskdef}) is a convex function of $\hat{F}$ (supremum of convex functions is convex), minimizing this function involves computation of the objective function via $\sup_{P\in \mathcal{M}_S}$, which is a non-convex optimization problem. Moreover, even if we can compute it exactly, the minimax estimator will surely depend on the support size $S$, which is unknown to the statistician in many applications.

Hence, we slightly relax the requirement, and seek \emph{minimax rate-optimal} estimators $\hat{F}^*$ with maximum (worst-case) risk equal to the minimax risk up to a multiplicative constant. In other words, we want to design estimator $\hat{F}^*$ such that there exist two universal positive constants $0<C_1 \leq C_2<\infty$ that do not depend on the problem configuration (such as the support size $S$ and sample size $n$), for which
\begin{align}
& C_1 \cdot \inf_{\hat{F}}\sup_{P\in \mathcal{M}_S}  \bE_P \left( F(P) - \hat{F} \right)^2  \leq
\sup_{P\in \mathcal{M}_S}  \bE_P \left( F(P) - \hat{F}^* \right)^2 \nonumber \\ 
 & \quad \quad \leq C_2 \cdot \inf_{\hat{F}}\sup_{P\in \mathcal{M}_S}  \bE_P \left( F(P) - \hat{F} \right)^2.
\end{align}

As it turns out, it is possible to construct estimators $\hat{F}^*$ for a wide class of functionals, which do not rely on the knowledge of support size $S$. A brief description of the constructions is given in Section~\ref{subsec.ourestimators}. We find it intriguing that our estimators, which are minimax rate-optimal, are intimately connected to the problem of best (minimax) polynomial approximation, which is a convex optimization problem. In some sense, we have transformed the difficult-to-solve minimax and convex problem of minimizing the maximum risk in (\ref{eqn.maximumriskdef}) into another efficently solvable minimax and convex problem of minimizing the maximum deviation of a polynomial from a given function, which turns out to have been studied extensively in approximation theory for more than a century.

To ease the presentation, we consider the ``Poissonized'' observation model \cite[Pg. 508]{LeCam1986asymptotic}, since we can show that the minimax risks under the Multinomial model and Poisson model are essentially the same (cf. Lemma~\ref{lemma.poissonmultinomial}). Moreover, adopting the Poisson model significantly reduces the length of the proofs, and we emphasize that similar analysis can also go through for Multinomial settings, with more nuanced analysis. In the Poisson setting, we first draw a Poisson random number $N \sim \mathsf{Poi}(n)$, and then conduct the sampling $N$ times. Consequently, the observed number of occurrences of each symbol are independent~\cite[Thm. 5.6]{Mitzenmacher--Upfal2005probability}.

We have the following characterization of the minimax risk for entropy estimation.

\begin{theorem}\label{thm.entropy.risk}
Suppose $n \gtrsim \frac{S}{\ln S}$. Then the minimax risk of estimating entropy $H(P)$ satisfies
\begin{equation} \label{eqn.entropy.risk}
\inf_{\hat{H}}\sup_{P \in \mathcal{M}_S} E_P\left( \hat{H} - H(P) \right)^2 \asymp \frac{S^2}{(n \ln n)^2} + \frac{(\ln S)^2}{n}.
\end{equation}
Our estimator achieves this bound without knowledge of the support size $S$ under the Poisson model.
\end{theorem}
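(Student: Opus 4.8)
\emph{Proof strategy.} Work under the Poisson model, so that $N_i\sim\spo(np_i)$ are \emph{independent} across $i$; this independence is exactly why both the bias and the variance of $\hat H=\sum_i\hat f(N_i)$ decompose as sums over $i$ with no cross terms. Write $\phi(x)=-x\ln x$, fix the threshold $\tau=c_1\ln n/n$ and the degree $K=c_2\ln n$ for small absolute constants $c_1,c_2$, and recall the estimator of Section~\ref{subsec.ourestimators}: on symbols whose count indicates $p_i\lesssim\tau$ (``nonsmooth'') use $\sum_{k\le K}a_k\,(N_i)_k/n^k$, the unbiased estimator of the degree-$K$ best polynomial approximation $P_K^*$ of $\phi$ on $[0,2\tau]$ (here $(N_i)_k=N_i(N_i-1)\cdots(N_i-k+1)$ and $\bE[(N_i)_k/n^k]=p_i^k$); on the remaining symbols (``smooth'') use $-\tfrac{N_i}{n}\ln\tfrac{N_i}{n}+\tfrac1{2n}$. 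Since $a+b\asymp\max(a,b)$, it suffices to prove each of the two terms in \eqref{eqn.entropy.risk} separately, once as an upper bound achieved by this estimator and once as a minimax lower bound.

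\textbf{Upper bound.} For the bias: on nonsmooth symbols the estimator is \emph{exactly} unbiased for $P_K^*(p_i)$, so the per-symbol bias is at most $\sup_{[0,2\tau]}|\phi-P_K^*|$, which after the scaling $x=2\tau u$ equals $2\tau$ times the degree-$K$ best approximation error of $u\ln u$ on $[0,1]$, hence $\asymp\tau/K^2\asymp 1/(n\ln n)$; summed over at most $S$ symbols this is $\lesssim S/(n\ln n)$. On smooth symbols a Taylor expansion of $\phi$ about $p_i$ (the leading $1/(2n)$ term cancelled by the correction, the event $N_i=0$ contributing $e^{-np_i}=n^{-\omega(1)}$) leaves residual bias $O(1/(n^2p_i))$, and $\sum_{p_i\gtrsim\tau}1/(n^2p_i)\lesssim S/(n\ln n)$ since $\sum p_i\le1$ with each $p_i\gtrsim\ln n/n$ forces at most $n/(c_1\ln n)$ such symbols. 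Squaring, the bias contributes $O\!\big(S^2/(n\ln n)^2\big)$. For the variance: the dominant contribution is the ``linear part'' of each estimator — the corrected MLE has variance $\approx(\ln p_i+1)^2p_i/n$ and the polynomial estimator's variance is controlled at the same order — and since $\sum_i p_i(\ln p_i)^2\lesssim(\ln S)^2$ on $\cM_S$ this yields $O\!\big((\ln S)^2/n\big)$, the degree-$\ge2$ monomial estimators contributing a strictly lower-order term (see below). Misclassification events (true $p_i$ on one side of $\tau$, count on the other) have probability $n^{-\omega(1)}$ by Poisson tail bounds and are negligible. Adding bias$^2$ and variance, and invoking the equivalence of the Poisson and Multinomial models, gives the upper bound; since $\tau,K$ depend only on $n$, the estimator never uses $S$.

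\textbf{Lower bound.} The term $(\ln S)^2/n$ follows from Le Cam's two-point method: let $P$ put mass $1/2$ on one symbol and spread the rest uniformly over the other $S-1$, and let $Q$ move $\Theta(1/\sqrt n)$ mass into the heavy symbol; then the squared Hellinger distance between the $n$-sample laws is $O(1)$ while $|H(P)-H(Q)|\asymp(\ln S)/\sqrt n$. The term $S^2/(n\ln n)^2$ is the nonsmooth lower bound, via the method of two moment-matched priors (fuzzy hypotheses): using the duality between best polynomial approximation and moment matching, construct probability measures $\mu_0,\mu_1$ on $[0,c\ln n/n]$ with $\int x^k\,d\mu_0=\int x^k\,d\mu_1$ for all $k\le K\asymp\ln n$ yet $\big|\int\phi\,d\mu_1-\int\phi\,d\mu_0\big|\asymp 1/(n\ln n)$; drawing the small probabilities essentially i.i.d.\ from $\mu_0$ versus $\mu_1$, the normalization absorbed by a fixed heavy symbol, produces two priors on $\cM_S$ under which $H$ concentrates around values differing by $\asymp S/(n\ln n)$, while matching $K\asymp\ln n$ moments together with $np_i\lesssim\ln n<K$ forces the two induced laws of $(N_i)_i$ to have $\chi^2$-divergence bounded below its maximum. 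A standard testing-to-estimation reduction then gives minimax risk $\gtrsim(S/(n\ln n))^2$, and combining the two bounds completes the proof.

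\textbf{Main obstacle.} In both directions the crux is the $\ln n$-degree polynomial machinery. On the upper-bound side it is showing $\sum_{i:\,p_i\lesssim\tau}\mathrm{Var}\big(\sum_{k\le K}a_k(N_i)_k/n^k\big)$ is absorbed by the right-hand side of \eqref{eqn.entropy.risk}: this needs (i) bounding the coefficients of $P_K^*$ through its Chebyshev-basis representation, giving $|a_k|\lesssim 2^{O(K)}\tau^{1-k}$ after rescaling, and (ii) the identity $\bE[(N_i)_k^2]=\sum_{j=0}^{k}\binom{k}{j}^2 j!\,(np_i)^{2k-j}$ together with the Poisson tail $\Pr(N_i\ge k)=n^{-\omega(1)}$ for $k\gg\ln n$, and then choosing $c_2$ small enough that the $2^{O(K)}=n^{o(1)}$ growth is beaten by the factors $n^{-j}$ and $p_i^{2k-j}\lesssim(\ln n/n)^{2k-j}$ for $k\ge2$. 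On the lower-bound side it is the explicit construction of a moment-matched pair with a large $\phi$-gap \emph{and} an admissible mean, plus the quantitative $\chi^2$ bound between the two Poisson mixtures; the hypothesis $n\gtrsim S/\ln S$ is precisely what makes this construction feasible. Both are instances of the same phenomenon — one can afford roughly $\ln n$ moments because the relevant Poisson rates are $O(\ln n)$ — which is exactly the passage from effective sample size $n$ to $n\ln n$.
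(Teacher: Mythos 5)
Your proposal follows essentially the same route as the paper: best polynomial approximation with the unbiased falling-factorial estimator in the nonsmooth regime, bias-corrected plug-in in the smooth regime, per-symbol bias/variance accounting under Poisson independence, Le Cam two-point for the $(\ln S)^2/n$ lower bound, and moment-matched fuzzy hypotheses for the $S^2/(n\ln n)^2$ lower bound (the paper credits that piece to Wu and Yang). The one technical step you gloss over is sample splitting: the paper's analysis (Section~\ref{sec.achievability}, Lemmas~\ref{lemma.entropyupper},~\ref{lemma.lowpartbiasvariance},~\ref{lemma.cailow1}--\ref{lemma.cailow2}) draws two \emph{independent} Poisson counts per symbol, one ($\hat p_{i,2}$) to decide the regime and the other ($\hat p_{i,1}$) to evaluate $L_H$ or $U_H$, so that the classification indicator is independent of the estimate and your ``misclassification events are negligible'' step becomes literally a product of $\bP(\text{misclassify})$ with a bounded term; using a single $N_i$ for both, as you implicitly do, is the estimator actually run in the experiments but needs a short extra argument since the indicator and the estimate are then dependent. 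A second, minor point: because the classification is noisy, the polynomial must approximate $-x\ln x$ on $[0,4\Delta]$, not $[0,2\Delta]$, to cover true $p_i$ that still land in the nonsmooth bin with non-negligible probability; this only costs constants. Everything else — the $1/(n\ln n)$ per-symbol approximation error (Lemma~\ref{lemma.approsmallentropy}), the $O(1/(n^2 p_i))$ smooth-regime Taylor residual and $(\ln p_i)^2 p_i/n$ variance (Lemma~\ref{lemma.entropyupper}), $\sum_i p_i(\ln p_i)^2\lesssim(\ln S)^2$ (Lemma~\ref{lemma.varentropy}), the $2^{O(K)}=n^{o(1)}$ Chebyshev coefficient bound (Lemma~\ref{lemma.nonasympxa}), and the Poisson-to-Multinomial reduction (Lemma~\ref{lemma.poissonmultinomial}) — matches the paper's ingredients.
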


The following is an immediate consequence of Theorem~\ref{thm.entropy.risk}.

\begin{corollary}\label{cor.entropy.risk}
For our entropy estimator, the maximum $L_2$ risk vanishes provided $n \gg  \frac{S}{\ln S}$. Moreover, if $n \lesssim \frac{S}{\ln S}$, then the maximum risk of any estimator for entropy is bounded from zero.
\end{corollary}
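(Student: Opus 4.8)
The plan is to read both halves of the corollary off Theorem~\ref{thm.entropy.risk}, supplemented only by the elementary monotonicity of the minimax risk in the sample size. Throughout, write $R^*_S(n)\triangleq\inf_{\hat H}\sup_{P\in\mathcal{M}_S}\bE_P(\hat H-H(P))^2$ for the minimax risk of entropy estimation under the $\mathsf{Poi}(n)$ model.

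For the consistency assertion: if $n\gg S/\ln S$ then in particular $n\gtrsim S/\ln S$, so Theorem~\ref{thm.entropy.risk} applies and the maximum $L_2$ risk of our estimator is of order $\frac{S^2}{(n\ln n)^2}+\frac{(\ln S)^2}{n}$; it then suffices to check that each summand vanishes. For the second summand, note that $S/\ln S\gtrsim(\ln S)^2$ once $S$ is large (equivalently $S\gtrsim(\ln S)^3$), so $n\gg S/\ln S$ forces $n\gg(\ln S)^2$, whence $(\ln S)^2/n\to0$. For the first summand, $n\gg S/\ln S$ forces $n\to\infty$, and $\ln n\ge\ln(S/\ln S)=\ln S-\ln\ln S\asymp\ln S$; combined with $n\gg S/\ln S$ this yields $n\ln n\gg S$, so $\frac{S^2}{(n\ln n)^2}\to0$. (The case $S=O(1)$ is immediate, since then $n\to\infty$ and both summands plainly vanish.)

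For the inconsistency assertion, the key preliminary step is that $R^*_S(n)$ is non-increasing in $n$. This follows by a thinning argument particular to the Poisson model: given the $\mathsf{Poi}(n')$-model sample from $P$ with $n'\ge n$, retaining each observation independently with probability $n/n'$ yields a sample distributed exactly as the $\mathsf{Poi}(n)$-model sample (thinning a $\mathsf{Poi}(\lambda)$ count by probability $n/n'$ gives a $\mathsf{Poi}(\lambda n/n')$ count, independently across symbols), so any estimator designed for the $\mathsf{Poi}(n)$ model induces, through this preprocessing, an estimator for the $\mathsf{Poi}(n')$ model with the same risk; hence $R^*_S(n')\le R^*_S(n)$. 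Now suppose $n\lesssim S/\ln S$, say $n\le C\,S/\ln S$, and set $n_0\triangleq\lceil C'\,S/\ln S\rceil$ with $C'$ the larger of $C$ and the constant implicit in the hypothesis of Theorem~\ref{thm.entropy.risk}; then $n_0\ge n$ and $n_0$ lies in the range covered by that theorem. Discarding the nonnegative term $\frac{(\ln S)^2}{n_0}$, Theorem~\ref{thm.entropy.risk} gives $R^*_S(n_0)\gtrsim\frac{S^2}{(n_0\ln n_0)^2}$; since $n_0\asymp S/\ln S$ has $\ln n_0=\ln S-\ln\ln S+O(1)\asymp\ln S$, we get $n_0\ln n_0\asymp S$ and hence $R^*_S(n_0)\gtrsim1$. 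Monotonicity then gives $R^*_S(n)\ge R^*_S(n_0)\gtrsim1$, i.e.\ the maximum risk of every estimator is bounded away from $0$. (Again $S=O(1)$ is trivial: a bounded number of samples cannot drive the risk of estimating $H(P)$ to $0$.)

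The only genuinely delicate point is the regime $n\ll S/\ln S$ of the second assertion, where the rate formula of Theorem~\ref{thm.entropy.risk} cannot be used directly---indeed, once $n\ll S/(\ln S)^2$ that formula would exceed $\sup_{P\in\mathcal{M}_S}H(P)^2=(\ln S)^2$, the largest possible value of the risk. Monotonicity is exactly what transports the order-$1$ lower bound obtained at the critical scaling $n\asymp S/\ln S$ down to every smaller $n$. The remaining ingredients---that $\ln n\asymp\ln S$ at and beyond this scaling, and the comparison $S\gtrsim(\ln S)^3$---are entirely routine.
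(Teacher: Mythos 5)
Your proposal is correct, and it deserves credit for closing a gap the paper glosses over when it calls the corollary ``an immediate consequence of Theorem~\ref{thm.entropy.risk}.'' The consistency direction really is a direct reading of the rate formula (the only minor computation being that $n\gg S/\ln S$ forces $\ln n\asymp\ln S$ and hence $n\ln n\gg S$, and also $n\gg(\ln S)^2$). The lower-bound direction, however, is \emph{not} an immediate consequence: Theorem~\ref{thm.entropy.risk} is only asserted under $n\gtrsim S/\ln S$, so by itself it says nothing about the sub-critical regime $n\ll S/\ln S$ that the corollary also covers. You correctly identify this as the delicate point and resolve it with the Poisson-thinning monotonicity argument $R^*_S(n')\le R^*_S(n)$ for $n'\ge n$, which transports the $\Omega(1)$ lower bound from the critical scale $n_0\asymp S/\ln S$ (where $n_0\ln n_0\asymp S$ makes the first term of the rate formula of constant order) down to every smaller sample size. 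The thinning construction is valid: binomially thinning each Poisson count $Z_i\sim\mathsf{Poi}(n'p_i)$ with probability $n/n'$, independently across $i$, yields exactly the $\mathsf{Poi}(n)$-model observations, so any estimator for the smaller model is an estimator for the larger one with the same risk. This is cleaner and more self-contained than what the paper implicitly does, which is to lean on the lower-bound constructions of Valiant--Valiant and Wu--Yang (cited in the text immediately after the corollary) for the sub-critical regime. Everything else in your write-up is routine and correct.
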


It was first shown in \cite{Valiant--Valiant2011} that one must have $n \gg \frac{S}{\ln S}$ for consistently estimating the entropy. However, the entropy estimators based on linear programming proposed in Valiant and Valiant~\cite{Valiant--Valiant2011, Valiant--Valiant2013estimating} have not been shown to achieve the minimax risk. Another estimator proposed by Valiant and Valiant~\cite{Valiant--Valiant2011power} has only been shown to achieve the minimax risk in the restrictive regime of $\frac{S}{\ln S} \lesssim n \lesssim \frac{S^{1.03}}{\ln S}$. Wu and Yang~\cite{Wu--Yang2014minimax} independently applied the idea of best polynomial approximation to entropy estimation, and obtained its minimax $L_2$ rates. The minimax lower bound part of Theorem~\ref{thm.entropy.risk} follows from Wu and Yang~\cite{Wu--Yang2014minimax}. We also remark that, unlike the estimator we propose, the estimator in Wu and Yang~\cite{Wu--Yang2014minimax} relies on knowledge of the support size $S$, which generally may not be known.

For the functional $F_\alpha(P),0<\alpha<1$, we have the following.
\begin{theorem}\label{thm.Falpha.risk}
Suppose $n \gtrsim \frac{S^{1/\alpha}}{\ln S}$ when we estimate $F_\alpha(P),0<\alpha<1$. Then we have the following characterizations of the minimax risk.
\begin{enumerate}
\item $0<\alpha\leq 1/2$. If we also have $\ln n \lesssim \ln S$, then
\begin{equation}\label{eqn.Falpha.risk}
\inf_{\hat{F}_\alpha} \sup_{P \in \mathcal{M}_S} \bE_P \left(\hat{F}_\alpha - F_\alpha(P) \right)^2 \asymp \frac{S^2}{(n \ln n)^{2\alpha}}.
\end{equation}
\item $1/2 <\alpha<1$.
\begin{equation}
\inf_{\hat{F}_\alpha} \sup_{P \in \mathcal{M}_S} \bE_P \left(\hat{F}_\alpha - F_\alpha(P) \right)^2 \asymp \frac{S^2}{(n \ln n)^{2\alpha}}  +  \frac{S^{2-2\alpha}}{n}.
\end{equation}
\end{enumerate}
Our estimators $\hat{F}_\alpha$ achieve this bound without knowledge of the support size $S$ under the Poisson model.
\end{theorem}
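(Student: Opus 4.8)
\emph{Proof strategy.} The plan is to establish matching upper and lower bounds, working throughout in the Poisson model and invoking Lemma~\ref{lemma.poissonmultinomial} to transfer conclusions to the Multinomial model; in the Poisson model the counts $\{N_i\}$ are independent with $N_i\sim\mathsf{Poi}(np_i)$. For the \textbf{upper bound} I would analyze the estimator described in Section~\ref{subsec.ourestimators} coordinate by coordinate, bounding bias and variance separately on the ``nonsmooth'' region $\{p_i\lesssim \ln n/n\}$ and the ``smooth'' region $\{p_i\gtrsim \ln n/n\}$, and then summing over the at most $S$ coordinates. To decouple the randomness used to classify a coordinate from the randomness used to estimate it, I would first split the Poisson sample into two independent halves, use the first to decide the regime and the second to form the estimate; a Chernoff/Poisson-tail bound makes the misclassification probability negligible whenever $p_i$ is bounded away from the threshold $\ln n/n$, and coordinates near the boundary are absorbed into a constant factor.

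For the bias in the nonsmooth region, the key input is the approximation-theoretic fact that the degree-$K$ best uniform polynomial approximation error of $x^\alpha$ satisfies $E_K(x^\alpha;[0,1])\asymp K^{-2\alpha}$; rescaled to $[0,c\ln n/n]$ this becomes $\asymp (\ln n/n)^\alpha K_n^{-2\alpha}$, and since the monomials $p_i^k$ are estimated without bias under the Poisson law, the only bias incurred is this approximation error. With $K_n\asymp c\ln n$ for a small universal $c$, summing over $S$ coordinates and squaring gives a bias contribution $\lesssim S^2(n\ln n)^{-2\alpha}$. In the smooth region a Taylor expansion of $x^\alpha$ around $P_n(i)$ shows the leading plug-in bias is of order $\frac{\alpha(1-\alpha)}{2nP_n(i)}P_n^\alpha(i)$, which the correction term removes, leaving a summably small residual. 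The variance is the delicate part: the coefficients of the best approximating polynomial grow, and the unbiased estimators of $p_i^k$ have variance that grows rapidly in $k$; the choice $K_n\asymp c\ln n$ keeps the resulting factor polynomial in $n$, and an accounting paralleling the entropy analysis yields total variance $\lesssim S^2/(n\ln n)^{2\alpha}$ from the nonsmooth part and $\lesssim S^{2-2\alpha}/n$ from the smooth part. The hypotheses $n\gtrsim S^{1/\alpha}/\ln S$, and additionally $\ln n\lesssim \ln S$ when $0<\alpha\le 1/2$, are used precisely so that the polynomial-approximation term dominates (respectively is comparable to) the classical term, giving the stated rates.

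For the \textbf{lower bound} I would combine two constructions. The term $S^2/(n\ln n)^{2\alpha}$ follows from the duality between best polynomial approximation and moment matching: by Chebyshev-system theory there exist probability measures $\nu_0,\nu_1$ on $[0,1]$ whose first $L\asymp \ln n$ moments agree while $|\int x^\alpha\,d\nu_0-\int x^\alpha\,d\nu_1|\asymp E_L(x^\alpha;[0,1])\asymp L^{-2\alpha}$. Rescaling to $[0,c\ln n/n]$ and drawing the roughly $S$ small coordinates i.i.d.\ from $\nu_0$ versus $\nu_1$ produces two priors on $\mathcal{M}_S$ under which $F_\alpha$ concentrates at values differing by $\asymp S(\ln n/n)^\alpha L^{-2\alpha}$, while the induced per-coordinate Poisson mixtures have matching first $L$ moments; since $L$ is of the order of the Poisson mean, a chi-square bound shows these mixtures are nearly indistinguishable, and after a routine adjustment to enforce $\sum_i p_i=1$ (reserving a heavy atom and conditioning on a high-probability event) Le Cam's two-point method gives the bound. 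The term $S^{2-2\alpha}/n$, active only for $1/2<\alpha<1$, follows from a two-point argument along a least-favorable one-parameter submodel through a mildly non-uniform distribution with $p_i\asymp 1/S$: perturbing along the direction maximizing the ratio of the squared change in $F_\alpha$ to the Fisher information yields a pair of Poisson product laws that are $O(1)$-close in chi-square while $F_\alpha$ changes by $\asymp \sqrt{S^{2-2\alpha}/n}$.

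The main obstacle is twofold. On the upper-bound side it is the sharp control of the variance of the unbiased polynomial estimator in the nonsmooth region: one must track the growth of the approximating polynomial's coefficients together with the moments of a Poisson count, and this is exactly where the degree $K_n\asymp c\ln n$ with a sufficiently small universal $c$ must be calibrated. On the lower-bound side it is quantitatively bounding the statistical distance between the two moment-matched Poisson mixtures and handling the unit-sum constraint without degrading the separation in $F_\alpha$. Once these are in place, verifying that the two bounds combine into the stated expression under $n\gtrsim S^{1/\alpha}/\ln S$ (and $\ln n\lesssim \ln S$ for $\alpha\le 1/2$) is routine.
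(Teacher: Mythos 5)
Your proposal tracks the paper's argument closely: the same smooth/nonsmooth decomposition with sample splitting, the same use of $E_K[x^\alpha]_{[0,1]}\asymp K^{-2\alpha}$ rescaled to an interval of length $\asymp\ln n/n$ with $K\asymp\ln n$ for the bias in the nonsmooth region, a bias-corrected plug-in in the smooth region, and a two-pronged lower bound combining moment-matching priors with a near-uniform two-point perturbation for the $S^{2-2\alpha}/n$ term. Two small caveats: the moment-matching lower bound is properly an application of Tsybakov's two-fuzzy-hypotheses method (the priors $\nu_0,\nu_1$ do not pin down $F_\alpha$, so you must separately establish concentration of $F_\alpha$ under each prior via Chebyshev before testing between the two mixture marginals --- this is what Lemma~\ref{lemma.tsybakov} formalizes), not Le Cam's two-point method as you name it; and the paper controls the distance between the per-coordinate Poisson mixtures via a direct total-variation bound (splitting the sum over $y$ at $\asymp\ln n$ and exploiting moment cancellation below and the Poisson tail above, as in Lemma~\ref{lemma.lowermeanvar}), rather than chi-square, which sidesteps the difficulty that the mixture pmf appearing in a chi-square denominator is exponentially small in the tail. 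Neither of these affects the overall correctness of your plan, but executing the chi-square route naively would require extra truncation care that the TV route avoids.
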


One immediate corollary of Theorem~\ref{thm.Falpha.risk} is the following.
\begin{corollary}\label{cor.Falpha.risk}
For our estimators of $F_\alpha$, the maximum $L_2$ risk vanishes provided $n \gg  \frac{S^{1/\alpha}}{\ln S},0<\alpha<1$. Moreover, if $n \lesssim \frac{S^{1/\alpha}}{\ln S}$, then the maximum risk of any estimator for $F_\alpha$ is bounded from zero.
\end{corollary}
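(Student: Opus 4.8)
The plan is to read both halves of the corollary directly off the rate formulas in Theorem~\ref{thm.Falpha.risk}; the only genuine content is an elementary asymptotic check for the achievability direction and one monotonicity observation needed to reach the range $n\lesssim S^{1/\alpha}/\ln S$, which lies just outside the hypotheses of Theorem~\ref{thm.Falpha.risk}.

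\textbf{The maximum risk vanishes when $n\gg S^{1/\alpha}/\ln S$.} For $1/2<\alpha<1$, Theorem~\ref{thm.Falpha.risk} applies as soon as $n\gtrsim S^{1/\alpha}/\ln S$, so here it suffices to show that both $S^2/(n\ln n)^{2\alpha}$ and $S^{2-2\alpha}/n$ tend to $0$. For $0<\alpha\le 1/2$ the theorem additionally requires $\ln n\lesssim\ln S$; this fails only when $n$ is super-polynomial in $S$, a regime in which our estimator's risk is already trivially small (it is dominated by the generic risk upper bound established when proving the achievability half of Theorem~\ref{thm.Falpha.risk}), so again it suffices to handle $S^2/(n\ln n)^{2\alpha}$. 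That term vanishes iff $n\ln n\gg S^{1/\alpha}$: writing $n\ln n=(n\ln S)(\ln n/\ln S)$, the factor $n\ln S$ dominates $S^{1/\alpha}$ by hypothesis, while $\ln n/\ln S\gtrsim 1$, since $n\ge S^{1/\alpha}/\ln S$ eventually forces $\ln n\ge\frac{1}{2\alpha}\ln S$ for large $S$; hence the product dominates $S^{1/\alpha}$. For the remaining term $S^{2-2\alpha}/n$ I would use the elementary inequality $\frac1\alpha>2-2\alpha$, valid for all $\alpha\in(0,1)$ because $\frac1\alpha-(2-2\alpha)=(2\alpha^2-2\alpha+1)/\alpha$ and $2\alpha^2-2\alpha+1$ has negative discriminant; this gives $S^{1/\alpha}/\ln S\gg S^{2-2\alpha}$, so $n\gg S^{2-2\alpha}$ and the term vanishes. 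Thus the upper bound of Theorem~\ref{thm.Falpha.risk} goes to $0$.

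\textbf{The maximum risk stays bounded away from $0$ when $n\lesssim S^{1/\alpha}/\ln S$.} Since Theorem~\ref{thm.Falpha.risk} is stated only for $n\gtrsim S^{1/\alpha}/\ln S$, I would first invoke the trivial monotonicity of the minimax risk in the sample size: discarding samples shows $R^*(n,S)\ge R^*(n_0,S)$ for every $n_0\ge n$, where $R^*(m,S)$ denotes the minimax $L_2$ risk from $m$ samples at support size $S$. Given $n\lesssim S^{1/\alpha}/\ln S$, choose $n_0\asymp S^{1/\alpha}/\ln S$ with $n_0\ge n$; then $n_0$ satisfies the hypotheses of Theorem~\ref{thm.Falpha.risk} (including $\ln n_0\lesssim\ln S$ when $0<\alpha\le 1/2$, since $\ln n_0=\frac1\alpha\ln S-\ln\ln S+O(1)\asymp\ln S$), and substituting $n_0\ln n_0\asymp S^{1/\alpha}$ into the rate yields $R^*(n_0,S)\gtrsim S^2/(n_0\ln n_0)^{2\alpha}\asymp S^2/S^2=1$. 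Hence $R^*(n,S)\ge R^*(n_0,S)\gtrsim 1$, uniformly in the problem size.

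\textbf{Anticipated obstacle.} There is essentially no obstacle beyond the above bookkeeping; the one point to be careful about is precisely that the necessity direction is \emph{not} a literal specialization of Theorem~\ref{thm.Falpha.risk}, because of its $n\gtrsim S^{1/\alpha}/\ln S$ hypothesis. The sample-size monotonicity reduction is the cleanest fix; alternatively one could re-examine the fuzzy-hypothesis construction behind the lower bound $S^2/(n\ln n)^{2\alpha}$ and observe it only becomes more favorable as $n$ shrinks, but that is unnecessary.
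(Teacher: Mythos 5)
Your proof is correct, and since the paper simply asserts the corollary as an ``immediate'' consequence of Theorem~\ref{thm.Falpha.risk} without writing anything out, you are essentially supplying the missing bookkeeping. You are right to flag the two points where the deduction is not literally a substitution into the theorem, and both are handled soundly. For the achievability half, the side condition $\ln n\lesssim\ln S$ in the $0<\alpha\le 1/2$ case is only needed to collapse the risk upper bound
\[
\frac{S^2}{(n\ln n)^{2\alpha}}+\frac{S(\ln n)^{2+2\alpha}}{n^{2\alpha-\epsilon}}
\]
into a single term matching the minimax rate; for the purpose of showing the risk vanishes, both summands already tend to zero whenever $n\gg S^{1/\alpha}/\ln S$, as you observe. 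Your factorization $n\ln n=(n\ln S)\cdot(\ln n/\ln S)$ together with the eventual bound $\ln n\gtrsim\ln S$ is a clean way to see $n\ln n\gg S^{1/\alpha}$, and the elementary inequality $1/\alpha>2-2\alpha$ correctly disposes of the variance term for $1/2<\alpha<1$. For the converse, the theorem's lower bound is indeed only stated for $n\gtrsim S^{1/\alpha}/\ln S$, so the sample-size monotonicity of minimax risk is exactly the device needed, and your choice $n_0\asymp S^{1/\alpha}/\ln S$ gives $n_0\ln n_0\asymp S^{1/\alpha}$, hence $S^2/(n_0\ln n_0)^{2\alpha}\asymp 1$, as required. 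This is the argument the paper's authors evidently have in mind; you have just made it explicit.
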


The minimax lower bound we present in Theorem~\ref{thm.Falpha.risk}\footnote{In a previous version of the manuscript, there is a $\sqrt{\ln S}$ gap between our minimax lower bound and the achievability in Theorem~\ref{thm.Falpha.risk}. Partially inspired by Wu and Yang~\cite{Wu--Yang2014minimax}, we modified the proof by using an argument similar to that in the lower bound proof of \cite{Wu--Yang2014minimax}, thereby closing the gap.} significantly improves on Paninski's lower bound in \cite{Paninski2004}, which states that if $n \lesssim S^{1/\alpha -1}$, then the maximum $L_2$ risk of any estimator for $F_\alpha(P), 0<\alpha<1$, is bounded from zero.

The next two theorems correspond to estimation of $F_\alpha(P)$, $\alpha>1$.

\begin{theorem}\label{th_2}
Suppose $1<\alpha<\frac{3}{2}$. Under the Poissonized model, our estimator $\hat{F}_\alpha$ satisfies
    \begin{align}
  \limsup_{n\to \infty} \,(n\ln n)^{2(\alpha-1)} \cdot \sup_S \sup_{P \in \mathcal{M}_S}\mathbb{E}_P \left(\hat{F}_\alpha-F_\alpha(P)\right)^2 <\infty.
\end{align}
\end{theorem}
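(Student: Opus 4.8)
The plan is to bound the $L_2$ risk of the estimator $\hat F_\alpha$ of Section~\ref{subsec.ourestimators} by $(\mathrm{bias})^2+\mathrm{variance}$ and to show, \emph{uniformly in $S$}, that the total bias is $O\big((n\ln n)^{-(\alpha-1)}\big)$ while the total variance is $O(1/n)$. Since $1<\alpha<\tfrac32$ forces $2(\alpha-1)<1$, the $1/n$ variance term is of strictly smaller order, and the bias dictates the rate $(n\ln n)^{-2(\alpha-1)}$. Under the Poissonized model the counts $N_i\sim\mathsf{Poi}(np_i)$ are independent, so $\hat F_\alpha-F_\alpha(P)=\sum_i(\hat g_i-p_i^\alpha)$ with $\hat g_i$ the per-coordinate estimator (classifier plus the selected branch); hence the total bias is $\sum_i(\mathbb{E}\hat g_i-p_i^\alpha)$ and the total variance is $\sum_i\mathrm{Var}(\hat g_i)$. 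I would fix the two cutoffs $c_1\ln n/n<c_2\ln n/n$ and the degree $K_n\asymp\ln n$, reserve the overlap $[c_1\ln n/n,c_2\ln n/n]$ so that a symbol lying there is estimated well by either branch, and analyze the two regimes separately, partitioning the \emph{analysis} by the true $p_i$ (the data-driven classifier, which compares $P_n(i)$ to the cutoff, agrees with this partition up to a Poisson-tail correction term of lower order).

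The crux is the nonsmooth (small-$p$) region, and the decisive point is that one approximates not $x^\alpha$ but the factor $x^{\alpha-1}$. There $\hat g_i$ is the (conditionally) unbiased estimator, built from the unbiased estimators of the powers $p_i^{k}$, of $x\,P_{K_n}^*(x)$, where $P_{K_n}^*$ is the best degree-$K_n$ uniform approximation of the bounded, H\"older-continuous function $x^{\alpha-1}$ on $[0,c_2\ln n/n]$. Consequently the per-coordinate bias equals $p_i\big(P_{K_n}^*(p_i)-p_i^{\alpha-1}\big)$ up to a classifier-induced correction of lower order, so in absolute value it is at most $p_i$ times $E_{K_n}\big(x^{\alpha-1};[0,c_2\ln n/n]\big)$. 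I would then invoke the classical estimate $E_K(x^{\beta};[0,\delta])\lesssim\delta^{\beta}K^{-2\beta}$ for non-integer $\beta\in(0,1)$ (take $\beta=\alpha-1$); with $\delta\asymp\ln n/n$ and $K\asymp\ln n$ this is $\lesssim(n\ln n)^{-(\alpha-1)}$, whence the small-$p$ bias is at most $(n\ln n)^{-(\alpha-1)}\sum_ip_i\le(n\ln n)^{-(\alpha-1)}$. For the variance of this high-degree estimator I would bound the monomial coefficients of $x\,P_{K_n}^*(x)$ via Chebyshev/Markov-type estimates on $[0,\delta]$, combine with Poisson factorial-moment bounds (using $np_i\lesssim\ln n$ in this regime, and choosing the cutoff constants so the polynomial branch stays tame on its design interval), and sum against $\sum_ip_i\le1$; this yields a contribution of strictly smaller order than $(n\ln n)^{-2(\alpha-1)}$.

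For the smooth (large-$p$) region I would Taylor-expand the bias of the plug-in $P_n^\alpha(i)$ about $p_i$: the leading term $\tfrac{\alpha(\alpha-1)}{2n}p_i^{\alpha-1}$ is cancelled in expectation (to first order) by the correction $\tfrac{\alpha(1-\alpha)}{2nP_n(i)}P_n^\alpha(i)$, leaving a residual bias of order $n^{-2}p_i^{\alpha-2}$ per coordinate; summing over $p_i\gtrsim\ln n/n$ and using $\sum_ip_i\le1$ gives $O\big(n^{1-\alpha}(\ln n)^{\alpha-3}\big)=o\big((n\ln n)^{-(\alpha-1)}\big)$ (here $\alpha<3/2$ is used). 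The variance of the bias-corrected plug-in is $O\big(n^{-1}p_i^{2\alpha-1}\big)$ per coordinate, and since $2\alpha-1>1$ this sums to $O(1/n)$. The classifier's contribution is handled by Poisson concentration: each symbol is assigned a branch valid for its true $p_i$ except on an event of probability at most $n^{-c}$, on which the extra contribution — accounting also for the growth of the polynomial branch when evaluated off its design interval — is negligible, while the overlap region absorbs the ambiguity near the cutoff. Collecting the pieces gives, for all large $n$,
\[
\sup_{S}\ \sup_{P\in\mathcal{M}_S}\ \mathbb{E}_P\big(\hat F_\alpha-F_\alpha(P)\big)^2 \;\lesssim\; (n\ln n)^{-2(\alpha-1)}+\tfrac1n \;\lesssim\; (n\ln n)^{-2(\alpha-1)},
\]
and multiplying by $(n\ln n)^{2(\alpha-1)}$ and taking $\limsup_{n\to\infty}$ yields the claim.

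The main obstacle is exactly the uniformity in $S$: since $\mathcal{M}_S$ ranges over \emph{all} support sizes, bounding the small-$p$ region through the (unbounded) number of small symbols is useless, and using the best degree-$K_n$ approximant of $x^\alpha$ itself already loses logarithmic factors (indeed, simply applying the bias-corrected MLE everywhere gives only the inferior rate $n^{-2(\alpha-1)}$). The resolution is the factorization $x^\alpha=x\cdot x^{\alpha-1}$: the approximant $x\,P_{K_n}^*(x)$ is \emph{worse} than optimal in sup-norm, but its pointwise error carries the extra factor $p_i$, so the small-$p$ bias sums against $\sum_ip_i\le1$ rather than against the symbol count — and the calibration $K_n\asymp\ln n$, threshold $\asymp\ln n/n$ is chosen precisely so that $\delta^{\alpha-1}K_n^{-2(\alpha-1)}\asymp(n\ln n)^{-(\alpha-1)}$. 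The remaining work — sharp coefficient bounds for the polynomial branch together with Poisson moment estimates to control its variance and its behavior off the design interval, and the Poisson concentration underlying the classifier — is routine in spirit but is where most of the bookkeeping resides.
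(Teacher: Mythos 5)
Your strategy is essentially correct and isolates exactly the right difficulty: in the nonsmooth region the bias must be controlled by a \emph{pointwise} bound $|\mathbb{E}\hat g_i - p_i^\alpha|\lesssim p_i\,(n\ln n)^{-(\alpha-1)}$ rather than a sup-norm bound, so that it sums against $\sum_i p_i\le 1$ instead of against the (unbounded) symbol count. Where you diverge from the paper is in how that factor of $p_i$ is obtained. You \emph{build it in} by changing the polynomial branch: approximate $x^{\alpha-1}$ by its best degree-$K$ polynomial $P_K^*$ and apply the unbiased estimator of $x\,P_K^*(x)$. The paper's $\hat F_\alpha$ (Section~\ref{sec.achievability}) instead applies the unbiased estimator of the best degree-$K$ approximation of $x^\alpha$ \emph{itself}; the pointwise bound then has to be \emph{proved} for that polynomial, which Lemma~\ref{lemma.nonasympxa} does via a DeVore--Lorentz argument: the derivative of the best approximant of $x^\alpha$ tracks $(x^\alpha)'=\alpha x^{\alpha-1}$ to order $K^{-2(\alpha-1)}$, and integrating from $0$ to $x$ recovers the factor $x$. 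The two routes are morally equivalent (both pivot on the $(\alpha-1)$-smoothness of the derivative) and give the same rate, but since the theorem is a statement about the paper's specific $\hat F_\alpha$, a literal write-up of your proof establishes the rate for a rate-equivalent but \emph{different} estimator; you would either have to redefine $\hat F_\alpha$ accordingly or substitute the paper's pointwise lemma.

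One loose end in the variance bookkeeping: the factor of $p$ you lean on does not extract uniformly over the whole nonsmooth region $[0,\Delta]$. The Poisson moments $\mathbb{E}[(nX)^{2k}]$ carry a leading $np$ factor only when $np\lesssim 1$; once $1/n\lesssim p\lesssim \ln n/n$ the higher Stirling terms dominate and the variance of the polynomial branch is of order $n^{O(c_2)}(\ln n)^{O(1)}/n^{2\alpha}$ with no $p$ prefactor. This is exactly why the paper's Lemma~\ref{lem_indivi_bound} further splits the nonsmooth range into $p\le 1/(n\ln n)$ (where Lemma~\ref{lem_L_small} supplies the $p$-carrying variance and bias bounds, and the sum is tamed by $\sum p_i\le 1$) and $1/(n\ln n)<p\le\Delta$ (where Lemma~\ref{lemma.lowpartbiasvariance} gives a $p$-free bound, and the sum is tamed by the fact that at most $n\ln n$ symbols can live there). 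Your two-regime outline closes once you carry out this finer split; the rate you announce is unaffected, but the claim that the whole nonsmooth contribution "sums against $\sum_i p_i\le 1$" is not literally true for the intermediate band.
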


In other words, our estimator $\hat{F}_\alpha, 1<\alpha<3/2$ achieves an $L_2$ convergence rate of $(n \ln n)^{-2(\alpha-1)}$ regardless of the support size. This also turns out to be the minimax rate, as shown by the following result.

\begin{theorem}\label{th_1}
Suppose $1<\alpha<\frac{3}{2}$. There exists a universal constant $c_0>0$ such that if $S = c_0 n \ln n$ then
  \begin{align}
  \liminf_{n\to \infty} \,(n\ln n)^{2(\alpha-1)} \cdot \inf_{\hat{F}}\sup_{P\in\mathcal{M}_{S}}\mathbb{E}_P\left(\hat{F}-F_\alpha(P)\right)^2 > 0,
\end{align}
where the infimum is taken over all possible estimators $\hat{F}$.
\end{theorem}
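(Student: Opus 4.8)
The plan is to prove this lower bound by the two‑prior (``fuzzy hypotheses'') version of Le Cam's method, reducing it — through the same best‑polynomial‑approximation duality that drives our estimator — to a moment‑matching construction, in the spirit of the lower bound of Wu and Yang~\cite{Wu--Yang2014minimax} (cf.\ the lower bound for Theorem~\ref{thm.Falpha.risk}). By Lemma~\ref{lemma.poissonmultinomial} it suffices to work in the Poisson model. First I would fix a degree $L\asymp\ln n$ and an interval length $\lambda\asymp\ln n/n$, with the implied constant in $\lambda$ small relative to that in $L$, and build two probability measures $\mu_0,\mu_1$ on $[0,\lambda]$ whose first $L$ moments coincide and for which $\int x^\alpha\,d\mu_1-\int x^\alpha\,d\mu_0$ attains the order the duality permits, namely $E_L(x^\alpha)_{[0,\lambda]}\asymp\lambda^\alpha L^{-2\alpha}$, while keeping the common mean of the $\mu_j$ small enough that $S$ i.i.d.\ draws sum to at most $1$. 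The priors $\pi_0,\pi_1$ on $\mathcal{M}_S$ are then obtained by drawing the coordinates of $P$ i.i.d.\ from $\mu_0$, resp.\ $\mu_1$, with one ``heavy'' coordinate reserved to make $\sum_i p_i=1$ (and a conditioning step on the high‑probability event that the remaining mass is near its mean, which is the same under both priors). The choice $S=c_0 n\ln n$ and the calibration of the constants are made so that the separation $\Delta\triangleq\mathbb{E}_{\pi_1}F_\alpha-\mathbb{E}_{\pi_0}F_\alpha$ is of order $S\cdot E_L(x^\alpha)_{[0,\lambda]}\asymp(n\ln n)^{1-\alpha}$.

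Next I would check the two analytic inputs the method needs. Concentration: under each $\pi_j$, $F_\alpha(P)=\sum_i p_i^\alpha$ is a sum of $\asymp S$ independent bounded terms, so $\mathrm{Var}_{\pi_j}(F_\alpha)\lesssim S\,\mathbb{E}_{\mu_j}[p^{2\alpha}]$, which I expect to be $o(\Delta^2)$; and the fluctuation of the heavy coordinate perturbs $F_\alpha$ by only $O(\sqrt{\lambda})=O(\sqrt{\ln n/n})$, which is $o((n\ln n)^{1-\alpha})$ exactly because $\alpha<3/2$ — this is where that hypothesis enters. Hence $F_\alpha$ is, under $\pi_0$ (resp.\ $\pi_1$), within $o(\Delta)$ of a deterministic value $t_0$ (resp.\ $t_1$) with $|t_1-t_0|=\Delta$, with probability $1-o(1)$. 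Indistinguishability: in the Poisson model the counts are independent $\mathsf{Poi}(np_i)$, so the data law under $\pi_j$ is, up to the heavy coordinate, the $S$‑fold product of the mixed Poisson $\mathbb{E}_{p\sim\mu_j}\mathsf{Poi}(np)$. Matching the first $L$ moments makes the single‑symbol probabilities agree through all orders $\le L$, and bounding the Taylor remainder of $e^{-np}$ with $np\le n\lambda\asymp\ln n$ gives a per‑symbol total variation of order $(2e\,n\lambda/L)^L\le 2^{-L}$ as soon as $\lambda\le cL/n$ for a small absolute constant $c$; taking the constant in $L\asymp\ln n$ large enough makes $S\cdot 2^{-L}\to0$, while the heavy coordinate (handled by the conditioning above) adds only $o(1)$, so $\mathrm{TV}(\mathbb{P}_{\pi_0},\mathbb{P}_{\pi_1})\to0$.

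Given these, the conclusion is routine: for any estimator $\hat F$, the inability of the data to tell $\pi_0$ from $\pi_1$ together with the near‑deterministic, $\Delta$‑separated values of $F_\alpha$ under the two priors forces $\max_{j\in\{0,1\}}\mathbb{E}_{\pi_j}(\hat F-F_\alpha(P))^2\gtrsim(1-o(1))\Delta^2\asymp(n\ln n)^{2(1-\alpha)}$; since $\pi_0,\pi_1$ are supported on $\mathcal{M}_S$ this lower‑bounds $\sup_{P\in\mathcal{M}_S}\mathbb{E}_P(\hat F-F_\alpha(P))^2$, and taking $\liminf_n$ of $(n\ln n)^{2(\alpha-1)}$ times this inequality yields the claim. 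The hard part will be the measure construction in the first paragraph: one must produce $(\mu_0,\mu_1)$ that match $\asymp\ln n$ moments, \emph{and} retain the full separation $S\cdot E_L(x^\alpha)_{[0,\lambda]}$ despite the normalization constraint $P\in\mathcal{M}_S$, \emph{and} sit on an interval short enough ($\lambda\asymp\ln n/n$) that those same matched moments already make the Poisson mixtures indistinguishable — reconciling these three demands is precisely what pins down the scalings $\lambda\asymp\ln n/n$, $L\asymp\ln n$ and the admissible constant $c_0$ in $S=c_0 n\ln n$.
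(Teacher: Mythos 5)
Your high-level strategy (two near-indistinguishable product priors built from moment-matched measures, Poissonization, Lemma~\ref{lemma.tsybakov}, $L\asymp \ln n$, scaling length $\asymp\ln n/n$, $\chi\asymp (n\ln n)^{1-\alpha}$) is the right skeleton and correctly identifies the tension one must resolve, but the central measure construction has a genuine gap. You want $\mu_0,\mu_1$ on $[0,\lambda]$ with $\lambda\asymp\ln n/n$, matching $L$ moments, and separation $\int x^\alpha d(\mu_1-\mu_0)\asymp E_L[x^\alpha]_{[0,\lambda]}\asymp\lambda^\alpha L^{-2\alpha}$, while \emph{also} keeping the common mean small enough that $S-1$ i.i.d.\ draws leave room for a ``heavy'' coordinate. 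Those two requirements are in genuine conflict here: the LNS duality extremal pair that attains $E_L[x^\alpha]_{[0,\lambda]}$ has common mean of order $\lambda$, and with $S\asymp n\ln n$ this gives $(S-1)\lambda\asymp(\ln n)^2\to\infty$, so the heavy-coordinate normalization breaks down. (Contrast with the regime $0<\alpha<1$, $S\asymp(n\ln n)^\alpha$, where $SM\to0$ and the heavy-coordinate trick works — that is the regime of Theorem~\ref{thm.Falpha.risk}.) Forcing the mean down to $\asymp 1/S\asymp 1/(n\ln n)$, as you acknowledge you must, is a \emph{constrained} best-approximation problem, and nothing in your proposal shows that the constrained separation still has the unconstrained order $\lambda^\alpha L^{-2\alpha}$; this is precisely the hard part, and it is left as a wish.

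The paper closes exactly this gap with a specific device (adapted from Wu and Yang). Instead of building $\mu_0,\mu_1$ directly, it first constructs $\nu_0,\nu_1$ on $[\eta,1]$ with $\eta=(DL)^{-2}$, matching moments $0,\dots,L$ and separating $x^{\alpha-1}$ (not $x^\alpha$); it then defines $\tilde\nu_i$ by the Radon–Nikodym derivative $d\tilde\nu_i/d\nu_i(t)=\eta/t\le 1$ on $[\eta,1]$, with the remaining mass placed at $\{0\}$. This simultaneously (i) pins the first moment exactly, $\int t\,d\tilde\nu_i=\eta$, so after scaling by $M$ the mean is $M\eta=d_1/S$ as required; (ii) shifts moment matching to orders $2,\dots,L+1$; and (iii) converts the $x^{\alpha-1}$ separation into an $x^\alpha$ separation $2\eta M^\alpha E_L[x^{\alpha-1}]_{[\eta,1]}$, which by Lemma~\ref{lem_conti} is $\asymp M^\alpha L^{-2\alpha}$ — the same order as your target. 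Here the restriction $1<\alpha<3/2$ enters through $\alpha-1\in(0,1/2)$ in Lemma~\ref{lem_conti}, not (only) through your heavy-coordinate fluctuation estimate. Because every coordinate now has mean $d_1/S$, the paper dispenses with the heavy coordinate entirely, works with the full i.i.d.\ product prior $\mu_i^S$ concentrating on ``approximate probability vectors'' $\mathcal{M}_S(\gamma)$, conditions (as you do) to get genuine priors, and then uses Lemma~\ref{lem_equiv} to transfer from the Poissonized/approximate-simplex risk back to $R(S,n)$. You would need to incorporate this change-of-measure trick (or prove an equivalent constrained approximation bound) for your argument to go through.
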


Table~\ref{table.summary} summarizes the minimax $L_2$ rates and the $L_2$ convergence rates of the MLE in estimating $F_\alpha(P),\alpha>0$ and $H(P)$. When the $L_2$ rates have two terms, the first and second terms represent respectively the contributions of the bias and the variance. When there is a single term, only the dominant term is retained. Conditions for these results are presented in parentheses.

\begin{table*}[t]
\begin{center}
\hspace*{-0.5cm}
    \begin{tabular}{| l | l | l |}
    \hline
    & Minimax $L_2$ rates & $L_2$ rates of MLE\\ \hline
$H(P)$    & $\frac{S^2}{(n \ln n)^2} + \frac{\ln^2 S}{n} \quad \left( n \gtrsim \frac{S}{\ln S} \right)$ (Thm.~\ref{thm.entropy.risk},\cite{Wu--Yang2014minimax}) & $\frac{S^2}{n^2} + \frac{\ln^2 S}{n}\quad \left( n\gtrsim S \right)$ \cite{Jiao--Venkat--Weissman2014MLE}  \\ \hline

$F_\alpha(P), 0<\alpha\leq \frac{1}{2}$ &  $\frac{S^2}{(n \ln n)^{2\alpha}} \quad \left( n \gtrsim S^{1/\alpha}/\ln S, \ln n \lesssim \ln S \right)$ (Thm.~\ref{thm.Falpha.risk})  & $\frac{S^2}{n^{2\alpha}}\quad \left( n \gtrsim S^{1/\alpha} \right)$ \cite{Jiao--Venkat--Weissman2014MLE}     \\
    \hline

$F_\alpha(P), \frac{1}{2}<\alpha<1$ &    $\frac{S^2}{(n \ln n)^{2\alpha}} + \frac{S^{2-2\alpha}}{n}\quad \left( n \gtrsim S^{1/\alpha}/\ln S \right)$ (Thm.~\ref{thm.Falpha.risk})  & $\frac{S^2}{n^{2\alpha}} + \frac{S^{2-2\alpha}}{n}\quad \left( n \gtrsim S^{1/\alpha} \right)$ \cite{Jiao--Venkat--Weissman2014MLE}  \\ \hline

$F_\alpha(P), 1< \alpha<\frac{3}{2}$ & $(n \ln n)^{-2(\alpha-1)}\quad \left(S \gtrsim n\ln n \right)$ (Thm.~\ref{th_2},\ref{th_1})  & $n^{-2(\alpha-1)}\quad \left(S \gtrsim n \right)$ \cite{Jiao--Venkat--Weissman2014MLE}  \\ \hline

$F_\alpha(P), \alpha\geq \frac{3}{2}$ & $n^{-1}$ \cite{Jiao--Venkat--Weissman2014MLE}  & $n^{-1}$  \\ \hline
    \end{tabular}

    \caption{Summary of results in this paper and the companion~\cite{Jiao--Venkat--Weissman2014MLE}}
     \label{table.summary}
\end{center}
\end{table*}

From a sample complexity perspective (i.e. how should the number of samples $n$ scale with the support size $S$ to achieve consistent estimation), Table~\ref{table.summary} implies the results in Table~\ref{table.samplecomplexity}.

\begin{table}[h!]
\begin{center}
\hspace*{-0.5cm}
    \begin{tabular}{| l | l | l |}
    \hline
    & MLE & Minimax rate-optimal\\ \hline
$H(P)$    &  $n \gg S$ & $n \gg S/\ln S$ \\ \hline

$F_\alpha(P), 0<\alpha<1$ & $n \gg S^{1/\alpha}$ & $n\gg S^{1/\alpha}/\ln S$    \\
    \hline

$F_\alpha(P), \alpha>1$ &  $n \gg 1$ & $n \gg 1$\\ \hline
    \end{tabular}
    \caption{The number of samples needed to achieve consistent estimation}
    \label{table.samplecomplexity}
\end{center}
\end{table}

Our work (including the companion paper~\cite{Jiao--Venkat--Weissman2014MLE}) is the first to obtain the minimax rates, minimax rate-optimal estimators, and the maximum risk of MLE for estimating $F_\alpha(P), 0<\alpha<3/2$, and entropy $H(P)$ in the most comprehensive regime of $(S,n)$ pairs. Evident from Table~\ref{table.summary} is the fact that the MLE cannot achieve the minimax rates for estimation of $H(P)$, and $F_\alpha(P)$ when $0 < \alpha < 3/2$. In these cases, our estimators have performance with $n$ samples essentially the same as the MLE with $n\ln n$ samples, and it is the best possible. In other words, the minimax rate-optimal schemes \emph{enlarge} the ``effective sample size'' from $n$ to $n\ln n$. Furthermore, all the improvements we have are in the bias, which is the dominating factor in the risk. This observation suggests a simple way to obtain the minimax $L_2$ rates from the $L_2$ rates of the MLE. One need merely find the bias term in the expression of MLE $L_2$ rates, and replace the term $n$ by $n\ln n$. This simple rule is intimately connected to the rationale behind the construction and analysis of our estimators, on which we elaborate in Section~\ref{sec.motivation}.

We also note that Table~\ref{table.samplecomplexity} is a \emph{``lossy compression''} of Table~\ref{table.summary}. Indeed, it did not reflect the important improvement of our estimator over the MLE in estimating $F_\alpha(P), 1<\alpha<3/2$. However, it is more transparent about the increase of difficulty in estimation when we decrease $\alpha$. Indeed, when $\alpha\to 0^+$, the sample complexity in estimating $F_\alpha(P)$, $S^{1/\alpha}/\ln S$ becomes super-polynomial in $S$, which implies that the problem has become extremely challenging. Indeed, the limiting case of $\alpha = 0$ corresponds to estimating the support size of a discrete distribution, which has long been known impossible to do consistently without additional assumptions~\cite{Efron--Thisted1976}\cite{Bunge--Fitzpatrick1993estimating}.

\subsection{Discussion of main results}\label{subsec.discussionmainresults}

Within the scope of estimating entropy of discrete distributions from i.i.d.\ samples, the reader should be aware of different problem formulations, so as not to be confused by seemingly contradictory results. The problem we consider in this paper is to estimate the entropy $H(P)$ for \emph{all} possible distributions $P$ supported on $S$ elements, a setting for which we obtain Theorem~\ref{thm.entropy.risk}. However, one may impose some additional structure on the distribution $P$, and thus restrict attention to smaller uncertainty sets of distributions. One may expect different answers depending on the size and nature of the uncertainty sets. One of the popular alternative settings is to assume that the distribution $P$ comes from a \emph{uniform} distribution with unknown support size $S$. Note that it is a great simplification of the problem, indeed, there is only one parameter $S$ to estimate. Correspondingly, we only need $n \gg \sqrt{S}$ samples to consistently estimate the support size, and the entropy under this setting~\cite{Santhanam--Orlitsky--Viswanathan2007new}, which is much smaller than the required $n \gg \frac{S}{\ln S}$ samples in our setting, cf. Theorem~\ref{thm.entropy.risk}.

Some readers may be concerned that the minimax decision theoretic framework we adopt is too pessimistic. In some sense, it characterizes the worst-case performance over all possible distributions $P \in \mathcal{M}_S$, and it would be disappointing if our estimator fails to behave reasonably for distributions lying in a strict subset of $\mathcal{M}_S$ not including the worst case distribution. Regarding this question, Brown~\cite{Brown1994minimaxity,Brown2000essay} argued that the minimax idea has been an essential foundation for advances in many areas of statistical research, including general asymptotic theory and methodology, hierarchical models, robust estimation, optimal design, and nonparametric function analysis. Second, the statistics community in general uses the \emph{adaptive} estimation framework to alleviate the pessimism of minimaxity~\cite{Cai2012minimax}. Specifically, one specifies a nested sequence of subsets of $\mathcal{M}_S$, and tries to construct an estimator that achieves simultaneously the minimax rates over each of the subsets without knowing the subset to which the active parameter $P$ actually belongs. It was shown recently in another related paper~\cite{Han--Jiao--Weissman2015adaptive} that along the nested subsets $\mathcal{M}_S(H) = \{P: H(P) \leq H, P \in \mathcal{M}_S\}$, our estimator (without knowing $H$ nor $S$) simultaneously achieves the minimax rates over $P\in \mathcal{M}_S(H)$ for all $H \leq \ln S$. Most surprisingly, the maximum risk of our estimator over $\mathcal{M}_S(H)$ for every $S$ and $H$ with $n$ samples is still essentially that of the MLE with $n\ln n$ samples, further reinforcing the effectiveness of our estimator.

It is instructive to consider our results in the context of the intriguing connections and differences between three important problems in information theory: entropy estimation, estimating a discrete distribution under relative entropy loss, and minimax redundancy in compressing i.i.d. sources. Table~\ref{table.comparethree} summarizes the known results.

\begin{table*}[t]
\begin{center}
    \begin{tabular}{| l | l | l | l |}
    \hline
    & entropy estimation & estimation of distribution   & compression with blocklength $n$\\ \hline
   $S\textrm{ fixed}$  & $\text{MSE}\sim \frac{\mathsf{Var}(-\ln P(X))}{n} $ \cite{LeCam1986asymptotic} &  $ \inf_{\hat{P}}\sup_{P }\bE D(P_X \| \hat{P}_X) \sim \frac{S-1}{2n}       $ \cite{Chentsov1972statisticheskie,Braess--Dette2004asymptotic} & $\inf_{Q} \sup_{P } \frac{1}{n}D(P_{X^n}\|Q_{X^n}) \sim \frac{S-1}{2n}\ln n$  \cite{Rissanen1986stochastic} \\ \hline
 large $S$ &
   $n \gg S/\ln S$ \cite{Valiant--Valiant2011} & $n \gg S$ \cite{Paninski2004variational} & $n \gg S$ \cite{Orlitsky--Santhanam2004speaking,Szpankowski--Weinberger2012minimax} \\
    \hline
    \end{tabular}

   \caption{Comparison of difficulties in entropy estimation, estimation of distribution, and data compression under classical asymptotics and high dimensional asymptotics}
    \label{table.comparethree}
\end{center}
\end{table*}

Table~\ref{table.comparethree} conveys several important messages. First, in the asymptotic regime, there is a logarithmic factor between the redundancy of the compression and distribution estimation problems. Indeed, since compression requires use of a coding distribution $Q$ that does not depend on the data, the redundancy of compression will definitely be larger than the risk under relative entropy in estimating the distribution. However, in the large alphabet setting, the problems are equally difficult - the phase transition of vanishing risk for both compression and distribution estimation happen when $n$ is linear in the support size $S$.

Second, the large alphabet setting shows that estimation of entropy is considerably easier than both estimating the corresponding distribution, and compression. It is somewhat surprising and enlightening, since there has been a well-received tradition to apply data compression techniques to estimate entropy, even beyond the information theory community, e.g. \cite{Song--Qu--Blumm--Barabsi2010limits,Takaguchi--Nakamura--Sato--Yano--Masuda2011predictability}, whereas one of the implications of Table~\ref{table.comparethree} is that the approach of entropy estimation via compression can be highly sub-optimal.

%
%

If we plot the phase transitions of $\ln n/\ln S$ for estimating $F_\alpha(P)$ using $F_\alpha(P_n)$ with respect to $\alpha$, we obtain Figure~\ref{fig.falphaphase}.

\begin{center}
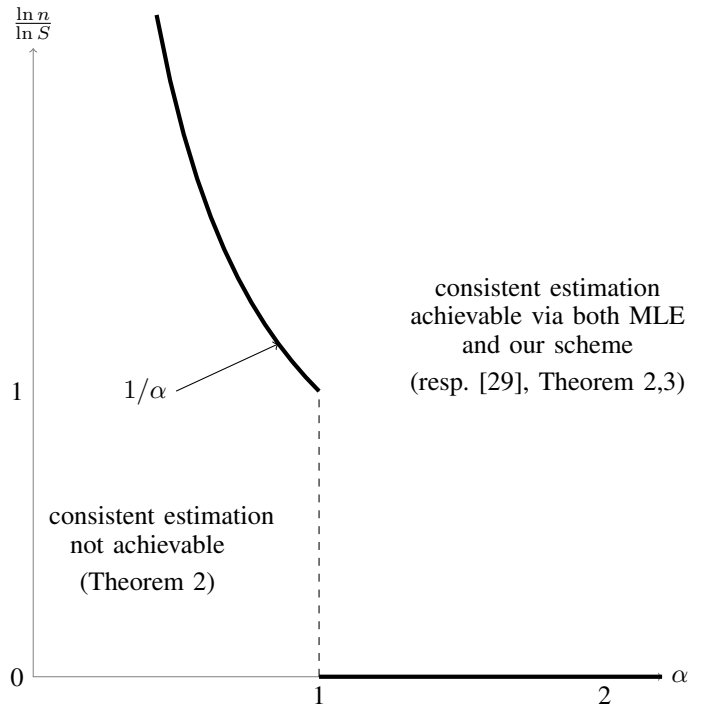

  \centering
\begin{tikzpicture}[xscale=3.8,yscale=3.8]
\draw [<->, help lines] (0.6,2.8) -- (0.6,0.6) -- (2.8,0.6);
\draw  [ultra thick](1.0316, 2.9171) -- (1.0789, 2.6879) -- (1.1263, 2.5000 ) -- (1.1737, 2.3431 )-- (1.2211, 2.2102) -- (1.2684,2.0961 ) -- (1.3158,1.9971 ) -- (1.3632,1.9103 ) -- (1.4105,1.8338 ) -- (1.4579,1.7656 ) -- (1.5053, 1.7047) -- (1.5526,1.6497 ) -- (1.6000, 1.6);
\draw [dashed] (1.6, 0.6) -- (1.6,1.6);
\draw [ultra thick](1.6, 0.6) -- (2.8, 0.6);
\node [above] at (1.05, 1.1) {consistent estimation};
\node [above] at (1,1) {not achievable};
\node [below] at (1,1) {(Theorem~\ref{thm.Falpha.risk})};
\node [left] at (0.6, 1.6) {1};
\node [left] at (0.6,0.6) {0};
\node [below] at (1.6, 0.6) {1};
\node [below] at (2.6,0.6) {2};
\node [right] at (2.8,0.6) {$\alpha$};
\node [above] at (0.6,2.8) {$\frac{\ln n}{\ln S}$};
\draw [->] (1.1,1.6) -- (1.4579,1.7656 );
\node [left] at (1.1, 1.6) {$1/\alpha$};
\node [above] at (2.4, 1.9) {consistent estimation};
\node [above] at (2.4, 1.8) {achievable via both MLE};
\node [above] at (2.4, 1.7) {and our scheme};
\node [below] at (2.4, 1.7) {(resp.~\cite{Jiao--Venkat--Weissman2014MLE}, Theorem~\ref{thm.Falpha.risk},\ref{th_2})};
\end{tikzpicture}
\captionof{figure}{For any fixed point above the thick curve, consistent estimation of $F_\alpha(P)$ is achieved using MLE $F_\alpha(P_n)$ \cite{Jiao--Venkat--Weissman2014MLE}. Our estimator slightly improves over MLE to achieve the optimal $n \asymp S^{1/\alpha}/\ln S$ sample complexity when $0<\alpha<1$. For the regime $0<\alpha<1$ below the thick curve, Theorem~\ref{thm.Falpha.risk} shows that no estimator can have vanishing maximum $L_2$ risk. }
\label{fig.falphaphase}
\end{center}

We observe a sharp phase transition at $\alpha=1$, as the sample size requirement shifts from $n \gg S^{\frac{1}{\alpha}}/\ln S$ to  $n \gg 1$, depending on whether $\alpha$ is in the left or right neighborhood of 1, respectively. Hence, $\alpha=1$ is a critical point in that consistent estimation requires a number of measurements super-linear or constant in the size of the alphabet according to whether $\alpha < 1$ or $\alpha > 1$.

Combining Table~\ref{table.comparethree} and Figure~\ref{fig.falphaphase} leads to the interesting observation that, in high dimensional asymptotics, estimating a functional of a distribution could be easier (e.g. $H(P), F_\alpha(P), \alpha>1$) or harder (e.g. $F_\alpha(P), 0<\alpha<1$) than estimating the distribution itself. This observation taps into another interesting interpretation of the functional $F_\alpha(P)$. In information theory, the random variable $\imath(X) = \ln \frac{1}{P(X)}$ is known as the \emph{information density}, and plays important roles in characterizing higher order fundamental limits of coding problems \cite{Polyanskiy--Poor--Verdu2010channel,Kontoyiannis--Verdu2013optimal}. The functional $F_\alpha(P)$ can be interpreted as the moment generating function for random variable $\imath(X)$ as
\begin{equation}
F_\alpha(P) = \bE_P \left[e^{(1-\alpha) \imath(X)}\right].
\end{equation}
It is shown in Valiant and Valiant \cite{Valiant--Valiant2011} that the distribution of $\imath(X)$ can be estimated using $n \gg S/\ln S$ samples. Since moment generating functions can determine the distribution under some conditions, it is indeed plausible to see that the problem of estimating $F_\alpha(P)$, or the moment generating function of $\imath(X)$, is either easier or harder than estimating the distribution of $\imath(X)$ itself for various values of $\alpha$.

We now briefly shift our focus towards estimation of R\'enyi entropy $H_\alpha(P)$, which is closely related to the functional $F_\alpha(P)$ via $H_\alpha(P) = \frac{\ln F_\alpha(P)}{1-\alpha}$. Acharya et al.\cite{Acharya--Orlitsky--Suresh--Tyagi2014complexity} considered the estimation of $H_\alpha(P)$, and demonstrated that the sample complexity for estimating $H_\alpha(P)$ may exhibit a different behavior than that of estimating $F_\alpha(P)$ for certain values of $\alpha$. It was also shown in \cite{Acharya--Orlitsky--Suresh--Tyagi2014complexity} that for $\alpha>1, \alpha \in \mathbb{Z}^{+}$, the sample complexity is $n\asymp S^{1-1/\alpha}$, which can be achieved by a bias-corrected MLE. Second, for non-integer $\alpha>1$, \cite{Acharya--Orlitsky--Suresh--Tyagi2014complexity} showed that the sample complexity for estimating $H_\alpha(P)$ is between $S^{1-\eta},\forall \eta>0$ and $S$, and that it suffices to take $n \gg S$ samples for the MLE to be consistent. By a partial application of results from the present paper, \cite{Acharya--Orlitsky--Suresh--Tyagi2014complexity} also showed that for $0<\alpha<1$, the sample complexity for estimating $H_\alpha(P)$ is between $S^{1/\alpha-\eta},\forall \eta>0$ and $S^{1/\alpha}/\ln S$.  However, certain questions remain unanswered. For example, it was not clear, for $\alpha>1, \alpha\notin \mathbb{Z}^+$, whether the MLE indeed requires $n \gg S$ samples, and whether there exist estimators that can consistently estimate $H_\alpha(P)$ with $n \ll S$ samples. We provide partial answers to these questions below by focusing on the case when $1 < \alpha < 3/2$. First, we show  in Theorem~\ref{th_3} that simply plugging in the novel estimator $\hat{F}_\alpha$ from Theorem~\ref{th_2} to the definition of $H_\alpha(P)$ results in an estimator that needs at most $S/\ln S$ samples when $1<\alpha<\frac{3}{2}$.

\begin{theorem}\label{th_3}
For any $\alpha\in(1,3/2)$ and any $\delta>0,\epsilon\in(0,1)$, there exists a constant $c=c_\alpha(\delta,\epsilon)>0$ such that,
  \begin{align}
  \limsup_{S\to\infty}\sup_{P\in\mathcal{M}_S, n\geq \frac{c S}{\ln S}} \bP\left(\left|\frac{\ln \hat{F}_\alpha}{1-\alpha}-H_\alpha(P)\right|\ge \delta\right) \le\epsilon,
\end{align}
where $\hat{F}_\alpha$ is the estimator from Theorem~\ref{th_2}.
\end{theorem}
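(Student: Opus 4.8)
The plan is to deduce the statement from the uniform $L_2$ bound of Theorem~\ref{th_2} by a \emph{multiplicative} (relative-error) form of Chebyshev's inequality, using that $F_\alpha(P)$, although it can be small, is bounded below in a controlled way. First I would record an a priori lower bound: for every $P\in\mathcal{M}_S$, the power-mean inequality with uniform weights gives $\left(\tfrac1S\sum_{i=1}^S p_i^\alpha\right)^{1/\alpha}\ge \tfrac1S\sum_{i=1}^S p_i=\tfrac1S$, hence $F_\alpha(P)=\sum_i p_i^\alpha\ge S^{1-\alpha}$ (the same holds if the support has fewer than $S$ atoms, since $1-\alpha<0$). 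In particular $H_\alpha(P)=\frac{\ln F_\alpha(P)}{1-\alpha}\in[0,\ln S]$ and $F_\alpha(P)^{-2}\le S^{2(\alpha-1)}$.

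Next, fix a parameter $\eta\in(0,1/2]$ to be chosen later. By Theorem~\ref{th_2} there are $n_0$ and $C<\infty$, depending only on $\alpha$, with $\sup_S\sup_{P\in\mathcal{M}_S}\mathbb{E}_P(\hat F_\alpha-F_\alpha(P))^2\le C(n\ln n)^{-2(\alpha-1)}$ for all $n\ge n_0$. Markov's inequality applied to the squared deviation, together with the lower bound on $F_\alpha(P)$, gives, for all such $n$,
\[
\mathbb{P}_P\left(|\hat F_\alpha-F_\alpha(P)|\ge \eta\,F_\alpha(P)\right)\le \frac{C(n\ln n)^{-2(\alpha-1)}}{\eta^2 F_\alpha(P)^2}\le \frac{C}{\eta^2}\left(\frac{S}{n\ln n}\right)^{2(\alpha-1)}.
\]
On the complementary event $\{|\hat F_\alpha/F_\alpha(P)-1|<\eta\}$ we have $\hat F_\alpha>0$, and the elementary bound $|\ln(1+u)|\le 2|u|$ for $|u|\le 1/2$ yields $\left|\frac{\ln\hat F_\alpha}{1-\alpha}-H_\alpha(P)\right|=\frac{1}{\alpha-1}\left|\ln(\hat F_\alpha/F_\alpha(P))\right|\le\frac{2\eta}{\alpha-1}$.

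Finally, given $\delta>0$ I would set $\eta=\min\{1/2,\ \delta(\alpha-1)/4\}$, so that the good event forces $\left|\frac{\ln\hat F_\alpha}{1-\alpha}-H_\alpha(P)\right|\le\delta/2<\delta$. Using $n\ge cS/\ln S$: since $\ln(cS/\ln S)=(1+o(1))\ln S$ as $S\to\infty$, for all sufficiently large $S$ we have $n\ln n\ge \frac{cS}{\ln S}\cdot\frac{\ln S}{2}=\frac{cS}{2}$, and also $n\ge n_0$. Since the displayed probability bound is decreasing in $n$, its worst case over $n\ge cS/\ln S$ is the smallest admissible $n$, so for all large $S$,
\[
\sup_{P\in\mathcal{M}_S,\ n\ge cS/\ln S}\ \mathbb{P}_P\left(\left|\frac{\ln\hat F_\alpha}{1-\alpha}-H_\alpha(P)\right|\ge\delta\right)\le \frac{C}{\eta^2}\left(\frac{2}{c}\right)^{2(\alpha-1)}.
\]
Choosing $c=c_\alpha(\delta,\epsilon)$ large enough that the right-hand side is $\le\epsilon$ — possible because $\alpha>1$ — and then taking $\limsup_{S\to\infty}$ completes the proof.

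The one genuinely delicate feature, and the reason the conclusion reads $\limsup_{S\to\infty}(\cdots)\le\epsilon$ for fixed $c$ rather than $\to 0$, is that when $n\asymp S/\ln S$ the absolute $L_2$ error of $\hat F_\alpha$, of order $(n\ln n)^{-(\alpha-1)}$, is of exactly the \emph{same} order as the smallest possible value $S^{1-\alpha}$ of $F_\alpha(P)$; an additive error argument for $\ln\hat F_\alpha$ is therefore hopeless, and one must work with the ratio $\hat F_\alpha/F_\alpha(P)$. The $\ln n$-versus-$\ln S$ discrepancy built into the sample-size condition $n\gtrsim S/\ln S$ is precisely what turns $S/(n\ln n)$ into a quantity that can be driven below any prescribed threshold by enlarging $c$; beyond this, the argument is just a second-moment bound plus the power-mean inequality.
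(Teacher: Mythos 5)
Your proof is correct and follows essentially the same route as the paper: lower-bound $F_\alpha(P)\ge S^{1-\alpha}$ (the paper uses convexity of $x^\alpha$, you use the power-mean inequality, which is the same fact), translate the deviation of $\frac{\ln\hat F_\alpha}{1-\alpha}$ into a \emph{relative} deviation of $\hat F_\alpha$ from $F_\alpha(P)$, apply Markov/Chebyshev via the uniform $L_2$ bound from Theorem~\ref{th_2}, and drive the bound below $\epsilon$ by enlarging $c$. The only cosmetic differences are that the paper uses the exact threshold $1-e^{(1-\alpha)\delta}$ where you use the cruder but adequate inequality $|\ln(1+u)|\le 2|u|$, and that you spell out the $n\ln n\gtrsim cS$ calibration a bit more explicitly than the paper does; neither affects the substance.
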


In words, with high probability $(\ln \hat{F}_\alpha)/(1-\alpha)$ is close to the R\'enyi entropy provided $n \gtrsim S/\ln S$. In contrast, the MLE requires $n \gtrsim S$ samples for estimating $H_\alpha(P), 1<\alpha<\frac{3}{2}$, as is implied by the following theorem.

\begin{theorem}\label{cor_3}
  For any $\alpha\in(1,3/2)$ and any constant $c>0$, there exist some $\delta=\delta_\alpha(c)>0$ such that the MLE $H_\alpha(P_n)$ satisfies
    \begin{align}
  \liminf_{n\to\infty} \inf_{S \geq n/c} \sup_{P\in\mathcal{M}_S} \mathbb{P}\left(\left|H_\alpha(P_n)-H_\alpha(P)\right|\ge \delta\right) =1,
\end{align}
where $P_n$ is the MLE of $P$.
\end{theorem}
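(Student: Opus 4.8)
\textit{Proof plan.} The plan is to exhibit, for every $S\ge n/c$, a single distribution $P\in\mathcal{M}_S$ on which the MLE misses $H_\alpha(P)$ by a fixed amount with probability tending to $1$, uniformly over such $S$; since the probability appearing in the theorem is at most $1$, this already forces $\liminf_{n\to\infty}\inf_{S\ge n/c}\sup_P\mathbb{P}(\cdot)=1$. The candidate is the uniform distribution $U_S$ on $S$ symbols, for which $F_\alpha(U_S)=S^{1-\alpha}$ and $H_\alpha(U_S)=\ln S$. The mechanism behind the failure is this: because $x\mapsto x^\alpha$ is strictly convex for $\alpha>1$, the plug-in $F_\alpha(P_n)=\sum_i P_n(i)^\alpha$ carries a strictly positive \emph{multiplicative} bias; passing this through the logarithm and the negative factor $\frac1{1-\alpha}$ in $H_\alpha=\frac{\ln F_\alpha}{1-\alpha}$, and using that $F_\alpha(U_S)$ is exponentially small in $H_\alpha(U_S)=\ln S$, the positive bias becomes a constant-order \emph{downward} bias of $H_\alpha(P_n)$.

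To make this quantitative, write $H_\alpha(P_n)-H_\alpha(U_S)=\frac1{1-\alpha}\ln\frac{F_\alpha(P_n)}{F_\alpha(U_S)}$ and work in the Poisson model: the occupancy counts $N_1,\dots,N_S$ are i.i.d.\ $\mathsf{Poi}(\lambda)$ with $\lambda=n/S\le c$, the total $N=\sum_iN_i\sim\mathsf{Poi}(n)$ concentrates around $n$, and hence $F_\alpha(P_n)=N^{-\alpha}\sum_iN_i^\alpha=(1+o_{\mathbb P}(1))\,n^{-\alpha}\sum_iN_i^\alpha$. I would first compute the ratio of expectations
\[
\frac{\mathbb{E}\bigl[n^{-\alpha}\sum_iN_i^\alpha\bigr]}{F_\alpha(U_S)}=\frac{S\,n^{-\alpha}\,\mathbb{E}_{N\sim\mathsf{Poi}(\lambda)}[N^\alpha]}{S^{1-\alpha}}=\lambda^{-\alpha}\,\mathbb{E}_{N\sim\mathsf{Poi}(\lambda)}[N^\alpha]=:g_\alpha(\lambda),
\]
and establish two facts about $g_\alpha$. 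First, $g_\alpha(\lambda)>1$ for every $\lambda>0$: this is exactly strict Jensen applied to the strictly convex $x\mapsto x^\alpha$ and the non-degenerate $\mathsf{Poi}(\lambda)$ variable, whose mean is $\lambda$. Second, since $\mathbb{E}_{N\sim\mathsf{Poi}(\lambda)}[N^\alpha]\sim\lambda$ as $\lambda\to0^+$, we get $g_\alpha(\lambda)\sim\lambda^{1-\alpha}\to\infty$; combined with continuity of $g_\alpha$ on $(0,\infty)$ this yields a uniform lower bound $r_0:=\inf_{\lambda\in(0,c]}g_\alpha(\lambda)>1$ depending only on $\alpha$ and $c$.

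Next I would add concentration. Since $\sum_iN_i^\alpha$ is a sum of $S$ i.i.d.\ terms, $\mathrm{Var}\bigl(\sum_iN_i^\alpha\bigr)=S\,\mathrm{Var}(N_1^\alpha)\le S\,\mathbb{E}[N_1^{2\alpha}]\le S\,\mathbb{E}[N_1^3]\le(c^2+3c+1)\,\lambda S$, using $k^{2\alpha}\le k^3$ for integers $k\ge0$ (here $2\alpha<3$) and $\mathbb{E}_{\mathsf{Poi}(\lambda)}[N^3]=\lambda^3+3\lambda^2+\lambda\le(c^2+3c+1)\lambda$; meanwhile $\mathbb{E}\bigl[\sum_iN_i^\alpha\bigr]\ge S\,e^{-\lambda}\lambda\ge e^{-c}\lambda S$. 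Because $\lambda S=n$, the coefficient of variation of $\sum_iN_i^\alpha$ is $O_{\alpha,c}(n^{-1/2})$, uniformly over $S\ge n/c$, so Chebyshev gives $F_\alpha(P_n)\ge(1-o(1))\,r_0\,F_\alpha(U_S)$ with probability $1-o(1)$, both $o(1)$'s uniform in $S$. Setting $\delta:=\frac{\ln r_0}{2(\alpha-1)}=:\delta_\alpha(c)>0$, on this event $\bigl|H_\alpha(P_n)-H_\alpha(U_S)\bigr|=\frac1{\alpha-1}\ln\frac{F_\alpha(P_n)}{F_\alpha(U_S)}\ge\frac{\ln r_0-o(1)}{\alpha-1}\ge\delta$ for all large $n$, whence $\sup_{P\in\mathcal{M}_S}\mathbb{P}\bigl(|H_\alpha(P_n)-H_\alpha(P)|\ge\delta\bigr)\ge1-o(1)$ uniformly over $S\ge n/c$.

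The step requiring the most care — though still routine — is making the Jensen gap $g_\alpha(\lambda)-1$ bounded below uniformly over all of $(0,c]$; this is handled by the blow-up of $g_\alpha$ at $0^+$ together with compactness of $[\varepsilon,c]$. Everything else stays uniform in $S$ automatically, since every bound depends on $S$ only through $\lambda=n/S\le c$ and the identity $\lambda S=n$. I do not expect a genuine obstacle here: the real content is simply the observation that, in the regime $S\asymp n$, the unavoidable convexity bias of the plug-in $F_\alpha$ is of the \emph{same constant multiplicative order} as $F_\alpha(P)$ itself, so the logarithm in $H_\alpha=\frac{\ln F_\alpha}{1-\alpha}$ turns it into a constant additive bias for the R\'enyi-entropy estimate. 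The multinomial case is entirely analogous (negative association of the counts gives the same variance bound).
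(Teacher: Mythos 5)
Your proof is correct, and it takes a genuinely different route from the paper's. The paper puts $P$ uniform on a sub-alphabet of size $S_0=\lceil n/c_m\rceil$ (padded with zeros) so that the Poisson rate $\lambda=n/S_0$ stays bounded \emph{away from $0$}, then runs a CLT-plus-counting argument: it shows that a constant fraction of cells exceed a fixed threshold with probability tending to $1$, and lower bounds $\sum\hat p_i^\alpha$ via an explicit two-point convexity functional $f(\eta,M)$ that is monotone in the conditional mass $M$ and strictly $>1$ on the relevant compact parameter range. Your argument stays with $U_S$ on the \emph{full} alphabet, identifies the clean scalar obstruction $g_\alpha(\lambda)=\lambda^{-\alpha}\mathbb{E}_{\mathsf{Poi}(\lambda)}[N^\alpha]>1$ as a strict-Jensen gap, bounds it away from $1$ on $(0,c]$ by continuity plus the blow-up $g_\alpha(\lambda)\sim\lambda^{1-\alpha}$ at $0^+$, and then controls fluctuations with an explicit moment/Chebyshev bound whose coefficient of variation is $O_{\alpha,c}(n^{-1/2})$ uniformly in $S\ge n/c$ because everything depends on $(S,\lambda)$ only through $\lambda\le c$ and $\lambda S=n$. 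What the paper's route buys is that it avoids having to deal with the $\lambda\to0^+$ regime at all (by truncating the alphabet); what yours buys is that the Jensen gap handles that regime automatically and for free, and the proof is shorter and makes the source of the bias transparent. You also handle the normalization factor $N^{-\alpha}$ in $F_\alpha(P_n)$ more carefully than the paper's writeup, which silently conflates $\hat p_i=Z_i/n$ with the normalized empirical frequencies. Two small points worth making explicit if you write this up in full: the claim $\inf_{(0,c]}g_\alpha>1$ needs exactly the two ingredients you name (strict Jensen on each compact $[\varepsilon,c]$, blow-up near $0$), and the bound $k^{2\alpha}\le k^3$ is where $\alpha<3/2$ enters your variance estimate (the paper's proof of this theorem does not actually use $\alpha<3/2$, only $\alpha>1$).
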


To conclude this discussion, we conjecture that plugging in our minimax rate-optimal estimators for $F_\alpha(P)$ into the definition of $H_\alpha(P)$ results in minimax rate-optimal estimators for $H_\alpha(P)$ for all $\alpha>0$.

\section{Motivation, methodology, and related work}\label{sec.motivation}

\subsection{Motivation}\label{subsec.motivation}

Existing theory proves inadequate for addressing the problem of estimating functionals of probability distributions. A natural estimator for functionals of the form~(\ref{eqn.generalf}) is the maximum likelihood estimator (MLE), or plug-in estimator, which simply evaluates $F({P}_n)$, where ${P}_n$ is the empirical distribution of the data. How well does the MLE perform? Interestingly, if $f \in C^1(0,1]$ and we focus on $n$ i.i.d.\  observations from a distribution with support size $S$, then the problem of estimating $F(P)$ becomes a classical problem when $S$ is fixed, and the number of observations $n\to \infty$. This maximum likelihood estimator is \emph{asymptotically efficient} \cite[Thm. 8.11, Lemma 8.14]{Vandervaart2000} in the sense of the H\'ajek convolution theorem \cite{Hajek1970characterization} and the H\'ajek--Le Cam local asymptotic minimax theorem \cite{Hajek1972local}. It is therefore not surprising to encounter the following quote from the introduction of Wyner and Foster \cite{Wyner--Foster2003lower} who considered entropy estimation:
\begin{center}
\parbox{.45\textwidth}{~~\emph{  ``The plug-in estimate is universal and optimal not only for finite alphabet i.i.d. sources but also for finite alphabet, finite memory sources. On the other hand, practically as well as theoretically, these problems are of little interest.
"}}
\end{center}

In light of this, is it fair to say that the entropy estimation problem is solved in the finite alphabet setting? It was observed in Paninski~\cite{Paninski2003} that the maximum of $\mathsf{Var}(-\ln P(X))$ over distributions with support size $S$ is of order $(\ln S)^2$ (a tight bound is also given by Lemma~\ref{lemma.varentropy} in the appendix). Since classical asymptotics (with the Delta method \cite[Chap. 3]{Vandervaart2000}) show that
\begin{equation}\label{eqn.approximationmle}
\bE_P (H(P_n) - H(P))^2 \sim \frac{\mathsf{Var}(-\ln P(X))}{n},\quad n \gg 1,
\end{equation}
a naive interpretation of~(\ref{eqn.approximationmle}) might be that it suffices to take $n \gg (\ln S)^2$ samples to guarantee the consistency of $H(P_n)$. Such an interpretation could however be very misleading. It was already observed in Paninski \cite{Paninski2003} that if $n  \lesssim S^{1-\delta},\delta>0$, then the maximum  $L_2$ risk of any entropy estimator would be unbounded as $S$ grows.

This apparent discrepancy shows that (\ref{eqn.approximationmle}) is not valid when $S$ might be growing with $n$, and it is of utmost importance to obtain risk bounds for estimators of entropy and other functionals of distributions in the latter regime. Indeed, in the modern era of high dimensional statistics, we often encounter situations where the support size is comparable to, or much larger than the number of observations. For example, half of the words in the collected works by Shakespeare appeared only once~\cite{Efron--Thisted1976}.

It was shown in the companion paper~\cite{Jiao--Venkat--Weissman2014MLE} that for $n \gtrsim S$, the maximum risk of the MLE $H(P_n)$ can be written as
\begin{equation}\label{eqn.mlemaximumriskentropy}
\sup_{P \in \mathcal{M}_S} \bE_P (H(P_n) - H(P))^2 \asymp \frac{S^2}{n^2} + \frac{(\ln S)^2}{n},
\end{equation}
where the first term corresponds to the squared bias (defined as $(\bE_P H(P_n) - H(P))^2$), and the second term corresponds to the variance (defined as $\bE_P (H(P_n) - \bE_P H(P_n))^2$). Then we can understand this mystery: when we fix $S$ and let $n \to \infty$, the variance dominates and we get the expression in~(\ref{eqn.approximationmle}). However, when $n$ and $S$ may grow together, the bias term will not vanish unless $n \gg S$. Thus we conclude that in the large alphabet setting of entropy estimation, it is the bias that dominates the risk, and we have to reduce bias to improve the estimation accuracy.

The fact that the bias dominates the risk in entropy estimation in the large alphabet setting has been known, see~\cite{Miller1955,Paninski2003}. However, a general recipe to overcome the bias has defied many attempts. We briefly review some of the approaches in the literature.

One of the earliest investigations on reducing the bias of MLE in entropy estimation is due to Miller~\cite{Miller1955}, who showed that, for any fixed distribution $P$ supported on $S$ elements, for each symbol $i, 1\leq i\leq S$, we have
\begin{equation}\label{eqn.miller}
\bE_P \left( -P_n(i) \ln P_n(i) \right) = -p_i \ln p_i - \frac{1-p_i}{2n} +  O\left(\frac{1}{n^2}\right).
\end{equation}
Summing up both sides shows that $\bE H(P_n)$ is nearly $H(P) - \frac{S-1}{2n}$ up to the error term $O\left(\frac{1}{n^2}\right)$. Hence, the so-called Miller--Madow bias-corrected entropy estimator is defined by $H(P_n) + \frac{S-1}{2n}$, whose maximum risk in estimating entropy was shown to be bounded from zero if $n \lesssim S$ by~\cite{Paninski2003}\cite{Jiao--Venkat--Weissman2014MLE}. Thus, it still requires $n \gg S$ samples to consistently estimate the entropy, which is the same as MLE.

Another popular approach in estimating entropy is based on using the Dirichlet prior smoothing. Dirichlet smoothing may carry two different meanings in terms of entropy estimation: 
\begin{itemize}
\item \cite{Schurmann--Grassberger1996entropy,Schober2013some} One first obtain a Bayes estimate for the discrete distribution $P$, which we denote by $\hat{P}_B$, and then plugs it in the entropy functional to obtain the entropy estimate $H(\hat{P}_B)$.
\item \cite{Wolpert--Wolf1995,Holste--Grosse--Herzel1998bayes} One calculates the Bayes estimate for entropy $H(P)$ under Dirichlet prior for squared error. The estimator is the conditional expectation $\bE[H(P)|\mathbf{X}]$, where $X$ represents the samples, and the distribution $P$ follows a certain Dirichlet prior. 
\end{itemize}
It was shown in~\cite{Han--Jiao--Weissman2015Bayes} that both approaches require at least $n\gg S$ to be consistent. 

The jackknife~\cite{Miller1974jackknife} is another popular technique in reducing the bias. However, it was shown by Paninski~\cite{Paninski2003} that the jackknifed MLE also requires $n\gg S$ samples to consistently estimate entropy.

Given the fact that $n \gg \frac{S}{\ln S}$ is both necessary and sufficient, the approaches mentioned above are far from optimal. We note that the problem of estimating entropy of discrete distributions from i.i.d. observations on large alphabets has been investigated in various disciplines by many authors, with many approaches difficult to analyze theoretically. Among them we mention the Miller--Madow bias-corrected estimator and its variants \cite{Miller1955,Carlton1969bias,Grassberger1988finite}, the jackknified estimator \cite{Zahl1977jackknifing}, the shrinkage estimator \cite{Hausser--Strimmer2009entropy}, the Bayes estimator under various priors \cite{Wolpert--Wolf1995, Nemenman--Shafee--Bialek2002entropy}, the coverage adjusted estimator \cite{Chao--Shen2003nonparametric}, the Best Upper Bound (BUB) estimator \cite{Paninski2003}, the B-Splines estimator \cite{Daub--Steuer--Selbig--Kloska2004}, and~\cite{Grassberger2008entropy}\cite{Nemenman2011coincidences}\cite{Nemenman--Bialek--vanSteveninck2004entropy}\cite{Vinck--Battaglia--Balakirsky--Vinck--Pennartz2012estimation} etc.

In what follows, we explain in detail our step by step approach to this problem, and arrive at minimax rate-optimal estimators for the entire range of functional estimation problems considered.

\subsection{How did we come up with our scheme?}\label{subsec.howdidwe}

Existing literature implied that it is possible to come up with consistent entropy estimators that require sublinear $n \ll S$ samples. The earliest indication to this effect appeared in Paninski \cite{Paninski2004}, but only an existential proof based on the Stone--Weierstrass theorem was provided. It was therefore a breakthrough when Valiant and Valiant \cite{Valiant--Valiant2011} introduced the first explicit entropy estimator requiring a sublinear number of samples.  They~\cite{Valiant--Valiant2011} showed that $n \gg S / \ln S$ samples are both necessary and sufficient to consistently estimate the entropy of a discrete distribution. However, the entropy estimators based on linear programming proposed in Valiant and Valiant~\cite{Valiant--Valiant2011, Valiant--Valiant2013estimating} have not been shown to achieve the minimax rate. Another estimator proposed by Valiant and Valiant~\cite{Valiant--Valiant2011power} has only been shown to achieve the minimax rate in the restrictive regime of $\frac{S}{\ln S} \lesssim n \lesssim \frac{S^{1.03}}{\ln S}$. Moreover, the scheme of \cite{Valiant--Valiant2011} can only be applied to functionals that are Lipschitz continuous with respect to a Wasserstein metric, which can be roughly understood as those functionals that are equally ``smooth'' or ``smoother'' than entropy. Notably, this does not include the functional $F_\alpha, \alpha < 1$ and other interesting nonsmooth functionals of distributions. Also, it is not clear whether these techniques generally lead to minimax rate-optimal estimators. Readers are referred to Valiant's thesis \cite{Valiant2012algorithmic} for more details.

Conceivably, there is a fundamental connection between the smoothness of a functional, and the hardness of estimating it. The ideal solution to this problem would be systematic and capture this trade-off for nearly every functional. Such a comprehensive view of functional estimation has yet to be realized. George P\'olya \cite{Polya2014solve} commented that ``the more general problem may be easier to solve than the special problem''. This motivated our present work, in which we provide a general framework and procedure for minimax estimation of functionals with non-asymptotic performance guarantees. To make things transparent, let us now start from scratch and demonstrate how our solution has a natural construction.

Suppose we would like to propose a general method to construct minimax rate-optimal estimators for functionals of the form~(\ref{eqn.generalf}). What are the \emph{prerequisites} that any method must satisfy? Based on our analysis above, the following criteria appear natural:
\begin{enumerate}
\item \emph{Asymptotic efficiency}. As modern asymptotic statistics~\cite{Vandervaart2000} tells us, if the function $f(p)$ in (\ref{eqn.generalf}) is differentiable on $(0,1]$, then the MLE $F(P_n)$ is asymptotically efficient. In other words, no matter how we adjust the MLE in finite sample settings, we have to ensure that when the number of samples $n$ go to infinity while $S$ remains fixed, our estimator is very similar to the MLE.
\item \emph{Bias reduction}. As our analysis of MLE~\cite{Jiao--Venkat--Weissman2014MLE} indicates, the MLE usually has large bias and small variance in high dimensions. Hence, the general method has to reduce bias in finite samples.
\end{enumerate}

Let us attempt to understand an estimator's bias more carefully. In the simplest setting, consider a Binomial random variable $X\sim \mathsf{B}(n,p)$, and suppose we wish to estimate the scalar $f(p)$ based on $X$. Denote by $g(X)$ an arbitrary estimator for $f(p)$. The bias of $g(X)$ can be written as
\begin{align}
\mathsf{Bias}_p(g(X)) & = \bE_p g(X) - f(p) \nonumber \\
& = \left(\sum_{j = 0}^n g(j) \binom{n}{j} p^j (1-p)^{n-j}\right) - f(p).\label{eqn.biasgeneralfwithg}
\end{align}

Equation~(\ref{eqn.biasgeneralfwithg}) conveys two important messages. First, the only form of $f(p)$ that can be estimated without bias is polynomials with order no more than $n$. Indeed, if $g(X)$ is an unbiased estimator for $f(p)$, then $\mathsf{Bias}_p (g(X)) =0$, for all $p\in [0,1]$. Thus, $f(p)$ is a polynomial of $p$ with order no more than $n$ because $\bE_p g(X)$ is such a polynomial. Conversely, any polynomial of $p$ whose order is no more than $n$ can be estimated without bias using $X$. Indeed, for all $0\leq r\leq n$,
\begin{equation}\label{eqn.unbiasedpolynomial}
\bE_p \left[ \frac{X(X-1)\cdot\cdots\cdot(X-r+1)}{n(n-1)\cdot \cdots \cdot (n-r+1)} \right] = p^r.
\end{equation}

Second, the bias $\mathsf{Bias}_p(g(X))$ as a function of $p$ corresponds to a polynomial approximation error. In other words, it is the difference between a function $f(p)$ and a polynomial $\bE_p g(X)$. This viewpoint was first proposed by Paninski~\cite{Paninski2003}, who made the important connection between the analysis of bias and approximation theory. Starting from the seminal work of Chebyshev, a central problem in approximation theory~\cite{Devore--Lorentz1993}, polynomial approximation, is targeted at designing polynomials that approximate any continuous function as well as possible. It then appears natural to choose the coefficients $\{g(j)\}_{j = 0}^n$ in a way that the resulting polynomial $\bE_p g(X)$ approximates the function $f(p)$ optimally. 

One may initially be tempted to use the Taylor series to approximate $f(p)$. However, a more careful inspection indicates that the Taylor polynomial is inappropriate for approximating general continuous functions. Even setting aside questions of convergence, Taylor polynomials are not defined for functions that are not infinitely differentiable. Even for functions that are analytic (such as $e^x, x\in [-1,1]$), one can show that truncating the Taylor series up to order $n$ results in maximum error $\sim \frac{1}{(n+1)!}$ on $[-1,1]$, but there exists a polynomial with order $n$ whose maximum approximation error is asymptotically $\frac{1}{2^n(n+1)!}$~\cite{Devore--Lorentz1993}, which is the so called \emph{best approximation polynomial}. The best polynomial approximation is targeted at computing the polynomial that minimizes the maximum deviation of the polynomial from the function $f(p)$. It is known that for any continuous function on a compact interval, there exists a unique best approximation polynomial for any order. Adopting this rationale, we may try to solve the following problem:
\begin{equation}\label{eqn.polyapproximationgbias}
g^* = \argmin_g \max_{p \in [0,1]} |\mathsf{Bias}_p(g(X))|,
\end{equation}
where we seek $g^*$ minimizing the maximum value of $|\mathsf{Bias}_p(g(X))|$. It gives us the best uniform control of the bias since we do not know $p$ a priori.

Applying advanced tools from approximation theory, Paninski~\cite{Paninski2003} tried the idea mentioned above, which unfortunately did not result in improved estimators. It turns out that this idea, while improving significantly in the bias, results in a blowing up of the variance term. Indeed, the squared bias of the estimator designed above can be shown to be $S^2/n^{4}$. Taking $n \gg \sqrt{S}$, the bias term will vanish, but the variance term will diverge, because Paninski~\cite{Paninski2003} already showed that if $n \lesssim S^{1-\delta}$, for any $\delta>0$, the maximum $L_2$ risk of any estimators for entropy will be bounded from zero.

In fact, there is a simpler way to understand why the global scale polynomial approximation idea of the form~(\ref{eqn.polyapproximationgbias}) does not work. It is destined to fail because it violates the first prerequisite of any general method to improve MLE in functional estimation. Indeed, this scheme does not behave like the MLE even if $n \gg S$.

This observation leads us to combine some core ideas that finally constitute our scheme. First, one needs to use approximation theory to reduce bias. Second, one cannot do approximation on a global scale (such as $p\in [0,1]$), but can only approximate the function $f(p)$ locally. Fortunately, the measure concentration phenomenon allows us to do approximation locally. For example, upon observing $\hat{p} = X/n, X\sim \mathsf{B}(n,p)$, we have $\mathsf{Var}(\hat{p}) = \frac{p(1-p)}{n}$, which vanishes as $n\to \infty$. Finally, where should we approximate? Intuitively, the bias is mainly due to the set of points where the function $f(p)$ changes abruptly. For $f(p) = -p\ln p$ or $p^\alpha,\alpha>0$, the most ``nonsmooth'' point is $p = 0$.

To sum up, we need to approximate locally around the ``nonsmooth'' points to reduce bias. Natural as this statement may seem, there are some parameters to be carefully specified. For this subsection we only consider $f(p) = -p\ln p$ or $p^\alpha,\alpha>0$. We detail the construction of our scheme by posing the following natural questions:
\begin{enumerate}
\item If we approximate function $f(p)$ in interval $[0,\Delta_n]$, how should we choose $\Delta_n$?
\item If we use a polynomial with order $K_n$ to approximate $f(p)$ in $[0,\Delta_n]$, how should we choose $K_n$? What should we do after obtaining the polynomial?
\item What should we do in interval $p\in [\Delta_n,1]$?
\end{enumerate}

Let us now answer these questions in the order in which they were asked. The value $\Delta_n$ should always be chosen to be the smallest number such that we can \emph{localize} the parameter $p$. In other words, suppose we observe $X\sim \mathsf{B}(n,p)$. Then, $\Delta_n$ should be chosen to ensure that if $X/n \leq c \Delta_n$, $c>0$ is a constant, then $p \in  [0,\Delta_n]$ with high probability. Similarly, if $X/n \geq C \Delta_n$, $C>0$ is a constant, then $p \in [\Delta_n,1]$ with high probability. It turns out that $\Delta_n \asymp \frac{\ln n}{n}$ fulfills this goal (cf. Lemma~\ref{lemma.poissontail}).

Regarding the second question, the value $K_n$ should always be chosen to be the largest number such that the increased variance does not exceed the bias. Indeed, if we use order $n$ approximation, then we essentially go back to the idea~(\ref{eqn.polyapproximationgbias}) that increases the variance too much such that the resulting estimator does not have vanishing risk. It turns out for $H(P)$ and $F_\alpha(P)$, $K_n \asymp \ln n$ is the correct order for which we need to conduct the best polynomial approximation.  Suppose we have obtained the best polynomial approximation of order $K_n$ for function $f(p)$ over regime $[0,\Delta_n]$. Noting that any polynomial of order no more than $n$ can be estimated without bias using the estimator in~(\ref{eqn.unbiasedpolynomial}), we use the corresponding unbiased estimator to estimate this $K_n$-order polynomial, thereby  ensuring that the bias of this estimator when $p\in [0,\Delta_n]$ is exactly the polynomial approximation error in approximating $f(p)$ over $[0,\Delta_n]$.

The third question refers to the scheme in the ``smooth'' regime. Interestingly, it was already observed in 1969 by Carlton~\cite{Carlton1969bias} that Miller's bias correction formula~(\ref{eqn.miller}) should \emph{only} be applied when $p_i \gg \frac{1}{n}$. In other words, Miller's formula~(\ref{eqn.miller}) is relatively accurate when $p_i \gg \frac{1}{n}$. In our case, since we have already chosen $\Delta_n \asymp \frac{\ln n}{n}$, in the smooth regime we have $p_i \gtrsim \frac{\ln n}{n}$. We use the first order bias-correction in this regime inspired by Miller, whose rationale is the following.

For Binomial random variable $X \sim \mathsf{B}(n,p)$, denote the empirical frequency by $\hat{p} = \frac{X}{n}$. Then it follows from Taylor's theorem that
\begin{align}
\bE f(\hat{p}) - f(p) & = \frac{1}{2} f''(p) \mathsf{Var}_p(\hat{p}) + O \left(\frac{1}{n^2}\right)  \\
& = \frac{f''(p) p(1-p)}{2n} + O \left(\frac{1}{n^2}\right),
\end{align}
where $f''(p)$ is the second derivative of $f(p)$. We define the first order bias-corrected estimator of $f(\hat{p})$ by
\begin{equation}\label{eqn.biascorrectiongeneralf}
f^c(\hat{p}) = f(\hat{p}) - \frac{f''(\hat{p}) \hat{p}(1-\hat{p})}{2n}.
\end{equation}

Taking $f(p) = -p\ln p$, $f^c(\hat{p}) = -\hat{p} \ln \hat{p} +\frac{1-\hat{p}}{2n}$, which is exactly the Miller--Madow bias corrected entropy estimator. Taking $f(p) = p^\alpha$, we have the corresponding bias corrected estimator
\begin{equation}
f^c(\hat{p}) = \hat{p}^\alpha \left( 1 + \frac{\alpha(1-\alpha)(1-\hat{p})}{2n \hat{p}} \right).
\end{equation}

Figure~\ref{fig.approximationfigure} demonstrates the estimators for $H(P)$ and $F_\alpha(P)$ pictorially, where $\hat{p}_i = P_n(i)$ is the empirical frequency of $i$-th symbol. An important observation is that our estimator naturally satisfies the first prerequisite of any improved method for functional estimation as discussed above. Indeed, as $n\to \infty$, all the observations will fall in the ``smooth'' regime, and in the smooth regime our estimators are very similar to the MLE, which naturally implies that they are also asymptotically efficient in the sense of H\'ajek and Le Cam~\cite{Vandervaart2000}.

\begin{center}
\centering
\begin{tikzpicture}[xscale=6,yscale=5]
\draw [<->, help lines] (0.6,1.8) -- (0.6,0.6) -- (1.8,0.6);
\draw (0.600000000000000,	0.600000000000000) -- (0.620000000000000,	0.741421356237310)--(0.640000000000000,0.800000000000000)--(0.660000000000000,	0.844948974278318)--(0.680000000000000,0.882842712474619)--(0.700000000000000,0.916227766016838)-- (0.705263157894737,	0.924442842261525) --(0.757894736842105,	0.997359707119513)--(0.810526315789474,	1.05883146774112)--(0.863157894736842,1.11298917604258)--(0.915789473684211,	1.16195148694902)--(0.968421052631579,1.20697697866688)--(1.02105263157895,	1.24888568452305)--(1.07368421052632,1.28824720161169)--(1.12631578947368,	1.32547625011001)--(1.17894736842105,1.36088591025268)--(1.23157894736842,	1.39471941423903)--(1.28421052631579,1.42717019186851)--(1.33684210526316,	1.45839507527895)--(1.38947368421053,1.48852331663864)--(1.44210526315789,	1.51766293548225)--(1.49473684210526,1.54590530292692)--(1.54736842105263,	1.57332852678458)--(1.60000000000000,1.60000000000000);

\node [left] at (0.6,0.6) {0};
\node [below] at (1.6,0.6) {1};
\draw [dashed] (1.1, 0.6) -- (1.1,1.8);
\node [below] at (0.85, 1.7) {unbiased estimate};
\node [below] at (0.85, 1.6) {of best polynomial};
\node [below] at (0.85, 1.5) {approximation of};
\node [below] at (0.85, 1.4) {order $\ln n$};
\node [below] at (1.1,0.6) {$\frac{\ln n}{n}$};
\node [below] at (0.85,0.8) {``nonsmooth''};
\node [below] at (1.3,0.8) {``smooth''};
\node [above] at (1.5, 1) {$f(\hat{p}_i) - \frac{f''(\hat{p}_i)\hat{p}_i(1-\hat{p}_i)}{2n}$};
\node [right] at (1.8,0.6) {$p_i$};
\node [above] at (0.6,1.8) {$f(p_i)$};
\end{tikzpicture}

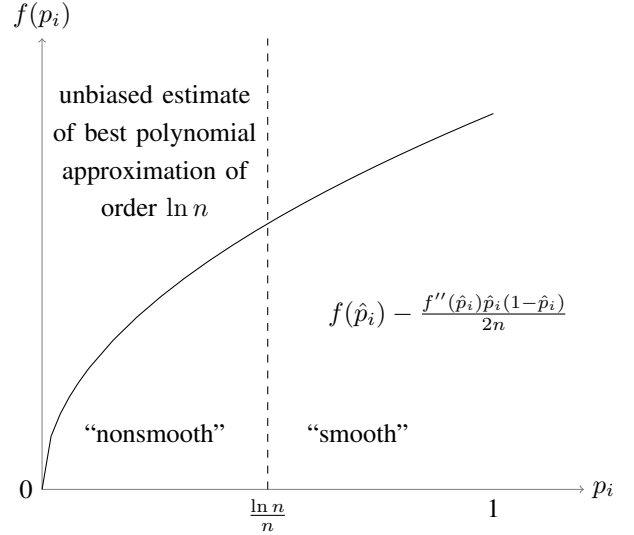
\captionof{figure}{Pictorial explanation of our estimators.}
\label{fig.approximationfigure}
\end{center}

The idea of approximation in the context of estimation has appeared before. Nemirovski~\cite{Ibragimov--Nemirovskii--Khasminskii1987some} pioneered the use of approximation theory in functional estimation in the Gaussian white noise model (see Nemirovski~\cite{Nemirovski2000topics} for a comprehensive treatment). Later, Lepski, Nemirovski, and Spokoiny~\cite{Lepski--Nemirovski--Spokoiny1999estimation} considered estimating the $L_r,r\geq 1$ norm of a regression function, and utilized trigonometric approximation. Cai and Low~\cite{Cai--Low2011} used best polynomial approximation to estimate the $\ell_1$ norm of a Gaussian mean.

We conclude this subsection by comparing any minimax rate-optimal estimator with our estimator. If we consider entropy, Theorem~\ref{thm.entropy.risk} demonstrates that when $n$ is not too large, the risk is dominated by the first term $\frac{S^2}{(n\ln n)^2}$, which corresponds to the squared bias of our estimator in the ``nonsmooth'' regime. Further, it is shown in Wu and Yang~\cite{Wu--Yang2014minimax} that in the worst case, the risk contributed by the ``nonsmooth'' regime (i.e. $[0,\frac{\ln n}{n}]$) is at least of order $\frac{S^2}{(n\ln n)^2}$. These observations together imply that the gist of any successful scheme should contribute squared bias nearly $\frac{S^2}{(n\ln n)^2}$. However, the bias always corresponds to a polynomial approximation error, and in the interval $[0,\frac{\ln n}{n}]$, it roughly corresponds to a polynomial with order $\ln n$. The squared bias $\frac{S^2}{(n\ln n)^2}$ corresponds to a polynomial whose error in approximating $f(p) = -p\ln p$ in $[0,\frac{\ln n}{n}]$ is nearly the same as the best approximation polynomial with order $\ln n$. The theory of strong uniqueness in approximation theory~\cite{Kroo--Pinkus2010strong} states that any polynomial whose approximation property is close to the best approximation must be close to the best approximation polynomial. Thus, we conclude that any successful scheme must inherently conduct near-best polynomial approximation in the ``nonsmooth'' regime, which is what we do in our scheme. Similar arguments also explain $F_\alpha(P)$ and Theorem~\ref{thm.Falpha.risk}, \ref{th_2}, and~\ref{th_1}.

\subsection{Related work}

The problem of estimating functionals of parameters is one that has been studied extensively in such fields as statistics, information theory, computer science, physics, neuroscience, psychology, and ecology, to name a few. Different communities have focused on different aspects of this general problem, and some seemingly different problems can be recast as functional estimation ones. Below we review some of the core ideas in various communities.

\subsubsection{Statistics}

Consider a sequence of independent and identically distributed (i.i.d.) random variables $Z_1,Z_2,\ldots,Z_n$ taking values in $\mathcal{Z}$, $Z_i \sim P_\theta, \theta \in \Theta \subset \mathbb{R}^p$. We would like to obtain a good estimate of the functional $\varphi(\theta)$. In general, this problem differs from that of seeking a good estimate of the parameter $\theta$. The most natural and ambitious aim towards this problem is to seek the ``optimal'' estimator given exactly $n$ samples, which falls in the realm of finite sample theory in statistics~\cite{Lehmann--Casella1998theory}. There is no consensus on what criterion best evaluates how ``good'' an estimator is in a finite sample sense. Over the years, various criteria for goodness have been proposed and analyzed, including the uniform minimum variance unbiased estimator (UMVUE), the minimum risk equivariant estimator (MRE), and the minimax estimator, among others. However, generally it is difficult to obtain estimators for functionals satisfying any of the finite sample optimality criteria mentioned above. Further, even if we obtained an estimator $\hat{\theta}_n$ that is or is close to being optimal for the parameter $\theta$ under a finite sample criterion, the plug-in approach $\varphi(\hat{\theta}_n)$ need not result in an optimal estimator for $\varphi(\theta)$ under some finite sample criterion.

In light of these shortcomings, there seems to be a perception that estimation under finite sample optimality criteria is not amenable to a general mathematical theory \cite{Ibragimov--Hasminskii1981}. Classical asymptotic theory is usually the refuge. The beautiful theory of H\'ajek and Le Cam \cite{Hajek1970characterization,Hajek1972local,LeCam1986asymptotic} showed that, under mild conditions, there exist systematic methods to construct an \emph{asymptotically efficient} estimator $\hat{\theta}_n$ for the finite dimensional parameter $\theta$, where if $\varphi(\theta)$ is differentiable at $\theta$, $\varphi(\hat{\theta}_n)$ is also asymptotically efficient for estimating $\varphi(\theta)$ \cite[Lemma 8.14]{Vandervaart2000}. Furthermore, it is also known that if the functional is non-differentiable, then it is nearly impossible to get an elegant mathematical theory~\cite{Hirano--Porter2012impossibility}.

The question of estimating functionals of finite dimensional parameters being satisfactorily answered under classical asymptotics, functional estimation in various nonparametric settings has been a strong area of focus since. There are several profound contributions in this area, of which we only mention a few. The most developed theory deals with linear functionals, for example, see \cite{Levit1974optimality,Levit1975efficiency,Koshevnik--Levit1977non,Ibragimov--Hasminskii1988estimation, Ibragimov--Hasminskii1981, Ibragimov--Khasminskii1988estimation, Donoho--Liu1991geometrizing2,Donoho--Liu1991geometrizing3, Donoho1994statistical, Goldenshluger--Pereverzev2000adaptive, Klemela--Tsybakov2001sharp,  Cai--Low2003note, Cai--Low2004minimax,Cai--Low2005adaptive,Cai--Low2005adaptivedifferentmeasure, Juditsky--Nemirovski2009nonparametric,Johannes--Schenk2012adaptive}. Another well studied situation deals with the case of ``smooth'' functionals, see \cite{Levit1978asymptotically, Ibragimov--Hasminskii1978nonparametric,Ibragimov--Nemirovskii--Khasminskii1987some, Hall--Marron1987estimation, Bickel--Ritov1988estimating, Donoho--Nussbaum1990minimax, Fan1991estimation, Efromovich--Low1996bickel,Birge--Massart1995,Goldstein--Khasminskii1996efficient,Laurent--Massart2000adaptive, Cai--Low2005nonquadratic, Cai--Low2006optimal}, among others. Estimation of non-smooth functionals is an extremely difficult problem, and is still largely open~\cite{Lepski--Nemirovski--Spokoiny1999estimation}. In particular, the problem of estimating differential entropy $\int -f \ln f$, where $f$ is a density, remains fertile ground for research, cf. \cite{Hall--Morton1993estimation, Tsybakov--vandermeulen1996root, Beirlant--Dudewicz--Gyorfi--Meulen1997, Hero2002--Ma--Michel--Gorman2002, Victor2002binless, Kraskov2004, Paninski--Yajima2008undersmoothed, Mnatsakanov--Misra--Li--Harner2008k, Wang--Kulkarni--Verdu2009now,Bouzebda--Elhattab2011uniform,Liu--Wasserman--Lafferty2012exponential}. Similar situations are also true for estimating the entropy of a discrete distribution supported on a countably infinite alphabet, cf. \cite{Antos--Kontoyiannis2001convergence,Wyner--Foster2003lower,Vu--Yu--Kass2007coverage,Zhang2012entropy,Zhang2013asymptotic}.


\subsubsection{Information theory}

In the information theory community, following the seminal work of Shannon \cite{Shannon1951prediction}, the focus has been on estimating entropy rates of general stationary ergodic processes with fixed (usually small) support (alphabet) sizes. Outside of the favored \emph{binary} alphabet, printed English contributed the other interesting example of support size $27$ (including the ``space''). Cover and King \cite{Cover--King1978convergent} gave an overview of the entropy rate estimation literature until 1978. Soon after the appearance of universal data compression algorithms proposed by Ziv and Lempel \cite{Ziv--Lempel1977universal,Ziv--Lempel1978compression}, the information theory community started applying these ideas in entropy rate estimation, e.g. Wyner and Ziv \cite{Wyner--Ziv1989}, and Kontoyiannis \emph{et al.} \cite{Kontoyiannis--Algoet--Suhov--Wyner1998nonparametric}. Verd\'u \cite{Verdu2005} provides an overview of universal estimation of information measures until 2005. Jiao \emph{et al.} \cite{Jiao--Permuter--Zhao--Kim--Weissman2013} constructed a general framework for applying data compression algorithms to establish near-optimal estimators for information rates, with a focus on directed information.

\subsubsection{Computer science, physics, neuroscience, psychology, ecology, etc}

Much of the efforts in computer science, physics, neuroscience, psychology, ecology, and related fields have focused on some special functionals of particular interest. For example, the problem of estimating Shannon entropy $H(P)$ from a finite alphabet source with i.i.d. observations has been investigated extensively. Section~\ref{subsec.motivation} and~\ref{subsec.howdidwe} summarize some of the efforts.

\subsubsection{Modern era: high dimensions and non-asymptotics}

The current era of ``big data'' abounds with applications in which we no longer operate in the asymptotic regime of large sample sizes. This sample scarcity regime necessitates going beyond classical asymptotic analysis and considering finitely many samples in high dimensions. Indeed, the recent successes of finite-blocklength analysis in information theory~\cite{Polyanskiy--Poor--Verdu2010channel}, and compressed sensing in statistics~\cite{Candes--Tao2006near, Donoho2006compressed} have demonstrated the benefit of carefully analyzing practical sample sizes. There are also ample recent examples in statistics approaching classical questions from a high dimensional perspective, cf. \cite{He--Shal2000parameters,Serdobolskii2007multiparametric,Boucheron--Gassiat2009bernstein,Spokoiny2012parametric,Spokoiny2013bernstein}. The machine learning community has the tradition of favoring non-asymptotic analysis, and usually pose the question of the sample complexity for achieving $\epsilon$ accuracy with $1-\delta$ probability, cf. \cite{Valiant1984theory,Vapnik2006estimation,Vapnik1998statistical}. The information theoretic counterpart of high dimensional statistics might be the large alphabet setting, with exciting recent advances (cf. \cite{Orlitsky--Santhanam--Zhang2004universal, Orlitsky--Santhanam2004speaking,Wagner--Viswanath--Kulkarni2011probability,Ohannessian--Tan--Dahleh2011canonical,Szpankowski--Weinberger2012minimax,Yang--Barron2013large}).

With the above as context, our work revisits the framework of functional estimation for finite dimensional models, with a focus on high dimensional and non-asymptotic analysis.

\subsection{General methodology for functional estimation}

\subsubsection{Review: general methods of estimation}\label{subsubsec.reviewmle}

We begin by reviewing the existing general approaches to estimation. Maximum likelihood is the most widely used statistical estimation technique, which emerged in modern form 90 years ago in a series of remarkable papers by Fisher \cite{Fisher1922mathematical, Fisher1925theory, Fisher1934two}. As evidence of its ubiquity, the Google Scholar
search query ``Maximum Likelihood Estimation'' yields approximately $2,570,000$ articles, patents and books. Indeed, in his response to Berkson \cite{Berkson1980minimum} in 1980, Efron explains the popularity of maximum likelihood:

\begin{center}
\parbox{0.45\textwidth}{~~\emph{  ``The appeal of maximum likelihood stems from its universal applicability, good mathematical properties, by which I refer to the standard asymptotic and exponential family results, and generally good track record as a tool in applied statistics, a record accumulated over fifty years of heavy usage.
"}}
\end{center}

Over the years, the following folk theorem seems to have been tacitly accepted by applied scientists:

\begin{theorem}[``Folk Theorem'']\label{thm.folk}
For a finite dimensional parametric estimation problem, it is ``good'' to employ the MLE.
\end{theorem}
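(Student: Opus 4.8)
The plan is to interpret the Folk Theorem in the only sense in which it admits a genuine proof --- the classical large-sample sense, with the parameter dimension $p$ (equivalently the support size $S$) held fixed while $n\to\infty$ --- and to invoke the machinery of local asymptotic normality. First I would verify that the model $\{P_\theta:\theta\in\Theta\}$ is locally asymptotically normal (LAN) at every interior point $\theta$, under the standard regularity conditions (quadratic-mean differentiability, nonsingular Fisher information). Then the H\'ajek convolution theorem \cite{Hajek1970characterization} pins down the best achievable limiting law of any regular estimator sequence --- a Gaussian with covariance the inverse Fisher information, possibly convolved with independent noise --- and one checks that the MLE $\hat\theta_n$ is regular and attains this Gaussian limit exactly, hence is ``best'' in this class. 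The H\'ajek--Le Cam local asymptotic minimax theorem \cite{Hajek1972local} yields the same conclusion without the regularity restriction, identifying $\hat\theta_n$ as locally minimax. Finally, for a functional $\varphi$ differentiable at $\theta$, the delta method together with \cite[Lemma 8.14]{Vandervaart2000} transfers asymptotic efficiency from $\hat\theta_n$ to the plug-in $\varphi(\hat\theta_n)$. This is the ``proof'' that folklore has in mind, and it is airtight \emph{as stated in this regime}.

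So the key steps, in order, are: (i) establish LAN for the family at hand; (ii) apply the convolution theorem to identify the optimal limiting distribution; (iii) verify that the MLE is regular and attains it; (iv) invoke the local asymptotic minimax theorem for the regularity-free version; (v) push the conclusion through the delta method to an arbitrary differentiable functional $\varphi(\theta)$.

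The main obstacle --- and here one must be candid --- lies not in any of the five steps above but in step $(0)$: making precise what the words ``good'' and ``finite dimensional parametric estimation problem'' are allowed to mean. Every ingredient invoked above is asymptotic in $n$ with $p$ frozen, and is therefore completely silent about the regime $p\asymp n$ or $p\gg n$ that is the subject of this paper; in particular the bias term $S^2/n^2$ appearing in \eqref{eqn.mlemaximumriskentropy} is invisible to the first-order delta-method expansion \eqref{eqn.approximationmle}. Consequently an honest proof of the Folk Theorem exists only once one restricts attention to the fixed-dimension, $n\to\infty$ asymptotics; the unqualified statement --- which applied scientists have tacitly accepted --- is false. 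I would therefore present the ``proof'' with this caveat made explicit, and use the gap it exposes as the bridge to the remainder of the paper, whose purpose is precisely to quantify the failure of the Folk Theorem in high dimensions and to repair it for functionals of the form \eqref{eqn.generalf}.
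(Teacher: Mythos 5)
You have correctly read the situation: Theorem~\ref{thm.folk} is not a theorem to be proved --- the paper states it as a tacitly accepted folk belief, immediately observes that ``maximum likelihood is by no means sacrosanct,'' and later asserts (via the bound (\ref{eqn.mlemaximumriskentropy}) from the companion work) that it is ``far from true in high-dimensional non-asymptotic settings.'' Your classical-asymptotic justification --- LAN, the H\'ajek convolution theorem \cite{Hajek1970characterization}, the H\'ajek--Le Cam local asymptotic minimax theorem \cite{Hajek1972local}, and the transfer to differentiable functionals via \cite[Lemma~8.14]{Vandervaart2000} --- is exactly the chain of references the paper itself invokes in Section~\ref{subsec.motivation}, and your caveat that these arguments are blind to the regime $S\asymp n$ or $S\gg n$, where the bias term $S^2/n^2$ dominates, is precisely the gap the paper is built to expose and repair.
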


From the perspective of mathematical statistics, however, maximum likelihood is by no means sacrosanct. As early as in 1930, in his letters to Fisher, Hotelling raised the possibility of the MLE performing poorly \cite{Stigler2007epic}. Subsequently, various examples showing that the performance of the MLE can be significantly improved upon have been proposed in the literature, cf. Le Cam~\cite{LeCam1990maximum} for an excellent overview. However, as Stigler~\cite[Sec. 12]{Stigler2007epic} discussed in his 2007 survey, while these early examples created a flurry of excitement, for the most part they were not seen as debilitating to the fundamental theory. Perhaps because these examples did not provide a systematic methodology for improving the MLE.

In 1956, Stein~\cite{Stein1956inadmissibility} observed that in the Gaussian location model $X \sim \mathcal{N}(\theta, I_p)$ (where $I_p$ is the $p\times p$ identity matrix), the MLE for $\theta$, $\hat{\theta}^{\textrm{MLE}} = X$ is inadmissible \cite[Chap. 1]{Lehmann--Casella1998theory} when $p\geq 3$. Later, James and Stein~\cite{James--Stein1961estimation} showed that an estimator that appropriately shrinks the MLE towards zero achieves uniformly lower $L_2$ risk compared to the risk of the MLE. The \emph{shrinkage} idea underlying the James--Stein estimator has proven extremely fruitful for statistical methodology, and has motivated further milestone developments in statistics, such as wavelet shrinkage~\cite{Donoho--Johnstone1994ideal}, and compressed sensing~\cite{Candes--Tao2006near,Donoho2006compressed}.

One interpretation of the shrinkage idea is that, when one desires to estimate a high dimensional parameter, the MLE may have a relatively small bias compared to the variance. Shrinking the MLE introduces an additional bias, but reduces the overall risk by reducing the variance substantially. A natural question now arises: what about situations wherein the bias is the dominating term? Does there exist an analogous methodology for improving over the performance of the MLE in such scenarios? A precedent to this line of questioning can be found in the 1981 Wald Memorial Lecture by Efron~\cite{Efron1982maximum} entitled ``Maximum Likelihood and Decision Theory'':

\begin{center}
\parbox{0.45\textwidth}{~~\emph{  ``
\ldots the MLE can be non-optimal if the statistician has one specific estimation problem in mind. Arbitrarily bad counterexamples, along the line of estimating $e^{\theta}$ from $X \sim \mathcal{N}(\theta,1)$, are easy to construct. Nevertheless the MLE has a good reputation, acquired over 60 years of heavy use, for producing reasonable point estimates. Useful general improvements on the MLE, such as robust estimation, and Stein estimation, are all the more impressive for their rarity.
"}}
\end{center}

For the aforementioned example, Efron~\cite{Efron1982maximum} argued that the reason the MLE $e^X$ may not be a good estimate for $e^\theta$, is that it has a large \emph{bias}. In particular, the statistician may prefer the uniform minimum variance unbiased estimator (UMVUE), $e^{X-\frac{1}{2}}$ to estimate $e^\theta$. As we discussed in the presentation of our main results, the bias is usually the dominating term in estimation of functionals of high-dimensional parameters. Notably, the two general improvements of the MLE, namely robust estimation and shrinkage estimation, are not designed to handle functional estimation problems such as the one presented by Efron. Also, as Efron himself observed, the statistician cannot always rely on the UMVUE to save the day, since these are generally very hard to compute, and may not always exist \cite[Remark C, Sec. 7]{Efron1982maximum}. Thus, there is a need to address, both in scope and methodology, the improvement over the MLE for problems where the  bias is the leading term. Such a solution could be considered the dual of the idea of shrinkage, since the trade-off between bias and variance is now reversed, i.e., one might want to sacrifice the variance to reduce the bias.

\subsubsection{Approximation: dual of shrinkage}

Our main results in this paper imply that Theorem~\ref{thm.folk} is far from true in high-dimensional non-asymptotic settings. Now, we aim to abstract our scheme in estimating functionals of type~(\ref{eqn.generalf}), and distill a general methodology for estimating functionals of parameters of any finite dimensional parametric families.

Consider estimating $G(\theta)$ of a parameter $\theta \in \Theta \subset \mathbb{R}^p$ for an experiment $\{P_\theta: \theta \in \Theta\}$, with a consistent estimator $\hat{\theta}_n$ for $\theta$, where $n$ is the number of observations. Suppose the functional $G(\theta)$ is analytic\footnote{A function $f$ is analytic at a point $x_0$ if and only if its Taylor series about $x_0$ converges to $f$ in some neighborhood of $x_0$.} everywhere except at $\theta \in \Theta_0$. A natural estimator for $G(\theta)$ is $G(\hat{\theta}_n)$, and we know from classical asymptotics \cite[Lemma 8.14]{Vandervaart2000} that if the model satisfies the benign LAN (Local Asymptotic Normality) condition~\cite{Vandervaart2000} and $\hat{\theta}_n$ is asymptotically efficient for $\theta$, then $G(\hat{\theta}_n)$ is also asymptotically efficient for $G(\theta)$ for $\theta \notin \Theta_0$. Note that this general framework naturally encompasses the family of probability functionals as a special case. To see this, let $\Theta$ be the $S$-dimensional probability simplex, where $S$ denotes the support size. For functionals of the form (\ref{eqn.generalf}), if $f$ is analytic on $(0,1]$, it is clear that $\Theta_0$ denotes the boundary of the probability simplex. One natural candidate for $\hat{\theta}_n$ is the empirical distribution, which is an unbiased estimator for any $\theta \in \Theta$.

We propose to conduct the following two-step procedure in estimating $G(\theta)$.

\begin{enumerate}

\item \textbf{Classify Regime}: Compute $\hat{\theta}_n$, and declare that we are operating in the ``nonsmooth'' regime if $\hat{\theta}_n$ is ``close'' enough to $\Theta_0$. Note that $G(\theta)$ is not analytic at any $\theta \in \Theta_0$. Otherwise declare we are in the ``smooth'' regime;
\item {\bf Estimate}:
\begin{itemize}
\item If $\hat{\theta}_n$ falls in the ``smooth'' regime, use an estimator ``similar'' to $G(\hat{\theta}_n)$ to estimate $G(\theta)$;
\item If $\hat{\theta}_n$ falls in the ``nonsmooth'' regime, replace the functional $G(\theta)$ in the ``nonsmooth'' regime by an approximation $G_{\text{appr}}(\theta)$ (another functional) which can be estimated without bias, then apply an unbiased estimator for the functional $G_{\text{appr}}(\theta)$.
\end{itemize}
\end{enumerate}

\subsubsection{Details of ``Approximation''}

While this general recipe appears clean in its description, there are several problem-dependent features that one needs to design carefully -- namely
\begin{question}\label{question.general1}
How to determine ``nonsmooth'' regime? What is the size of it?
\end{question}

\begin{question}\label{question.general2}
What approximation should we choose to approximate $G(\theta)$ in the ``nonsmooth'' regime?
\end{question}

\begin{question}\label{question.general3}
What does `` `similar' to $G(\hat{\theta}_n)$'' mean precisely? What exactly do we do in the ``smooth'' regime?
\end{question}

The careful reader may have realized that Questions~\ref{question.general1},\ref{question.general2}, and~\ref{question.general3} resemble the questions we asked in Section~\ref{subsec.howdidwe}. Answers to these questions draw on additional problems we investigated~\cite{Jiao--Han--Weissman2015divergence} beyond these in the present paper. 

\begin{enumerate}
\item Question~\ref{question.general1}

We should always choose the ``nonsmooth'' regime to be the smallest regime such that we can still \emph{localize} the parameter $\theta$. In other words, when we observe $\hat{\theta}_n$ in the ``nonsmooth'' regime, we should be able to infer with high probability that $\theta$ is also in the ``nonsmooth'' regime. Similarly, we should also be able to localize the parameter in the ``smooth'' regime. A concrete case would be the following. Say we observe $X \sim \mathsf{B}(n,p)$, and we would like to estimate a functional which is not analytic at $p_0 = 0.2$. How should we define the ``nonsmooth'' regime? Noting that $\mathsf{Var}(X/n) = \frac{p(1-p)}{n}$, it turns out we can set the ``nonsmooth'' regime to be $\left[ p_0 -\sqrt{\frac{p_0(1-p_0)\ln n}{n}},p_0 +\sqrt{\frac{p_0(1-p_0)\ln n}{n}} \right ]$ (cf. Lemma~\ref{lemma.poissontail}).

\item Question~\ref{question.general2}

We should always choose an approximation $G_{\text{appr}}(\theta)$ that can be estimated without bias. This requirement leads us to the general theory of unbiased estimation, which was pioneered by Halmos~\cite{Halmos1946theory} and Kolmogorov~\cite{Kolmogorov1950unbiased}. For a comprehensive survey the readers are referred to the monograph by Voinov and Nikulin~\cite{Voinov--Nikulin1996unbiased}.

There is a delicate trade-off: the approximation $G_{\text{appr}}(\theta)$ should be estimated without bias, but also should approximate the functional $G(\theta)$ well, and at the same time not incur too much additional variance. These three requirements yield a highly non-trivial interplay between approximation theory and statistics, of which our understanding is as yet incomplete.

For functionals in (\ref{eqn.generalf}), the separability of each $p_i$ essentially reduces the problem from multivariate to univariate, for which best polynomial approximation plays an important role in the optimal solution. Similar stories are true for the Gaussian setting, e.g. estimating $\sum_{i = 1}^S f(\mu_i)$ where $\mu \in \mathbb{R}^S$ is the mean of a normal vector. Modern approximation theory provides mature machinery of polynomial approximation in one dimension, with various profound results developed over the last century. The best approximation error rate $E_n[f]_A$:
\begin{equation}
E_n[f]_A = \min_{P \in \mathsf{poly}_n} \max_{x\in A}|f(x) - P(x)|,
\end{equation}
where $\mathsf{poly}_n$ is the collection of polynomials with order at most $n$ on $A$, is a crucial object in approximation theory as well as our general methodology. Quantifying $E_n[f]_A$ and obtaining the polynomial that achieves it turned out to be extremely challenging. Remez \cite{Remez1934determination} in 1934 proposed an efficient algorithm for computing the best polynomial approximation, and it was recently implemented and highly optimized in Matlab by the Chebfun team \cite{Trefethen--chebfunv5,Pachon--Trefethen2009barycentric}. Regarding the theoretical understanding of $E_n[f]_A$, de la Vall\'ee-Poussin, Bernstein, Ibragimov, Markov, Kolmogorov and others have made significant contributions, and it is still an active research area. Among others, Bernstein \cite{Bernstein1937,Bernstein1938} and Ibragimov \cite{Ibragimov1946} showed various exact limiting results for some important classes of functions like $|x|^p$ and $|x|^m \ln |x|^n$. For example, we have
\begin{theorem}\label{thm.bernsteinxp}\cite{Bernstein1938}
The following limit exists for all $p>0$:
\begin{equation}
\lim_{n\to \infty} n^p E_n [|x|^p]_{[-1,1]} = \mu(p),
\end{equation}
where $\mu(p)$ is a constant bounded as
\begin{equation}
\frac{\Gamma(p)}{\pi} \Bigg | \sin \frac{\pi p}{2} \Bigg | \left( 1- \frac{1}{p-1} \right) \le \mu(p) \le \frac{\Gamma(p)}{\pi} \Bigg | \sin \frac{\pi p}{2} \Bigg |,
\end{equation}
where $\Gamma(\cdot)$ denotes the Gamma function.
\end{theorem}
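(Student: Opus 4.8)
The plan is to reduce the polynomial problem on $[-1,1]$ to one of approximating $|x|^p$ on all of $\mathbb{R}$ by entire functions of exponential type --- the scaling limits of bounded-degree polynomials --- exploiting the self-similarity of $|x|^p$, and then to pin down the resulting universal constant from above by an explicit construction and from below by duality. The engine is the homogeneity $|cx|^p=c^p|x|^p$, which together with the fact that an affine scaling preserves polynomial degree gives $E_n[|x|^p]_{[-a,a]}=a^p\,E_n[|x|^p]_{[-1,1]}$ for all $a>0$. To see that the limit exists I would localize to the singularity: on $[\delta,1]$ the function $|x|^p$ is real-analytic and hence approximable with geometrically decaying error, whereas on $[-\delta,\delta]$ a degree-$m$ approximant has error exactly $\delta^p E_m[|x|^p]_{[-1,1]}$; splicing a degree-$m$ central approximant with a high-degree exterior one through a smooth partition of unity and re-expanding in Chebyshev polynomials gives $E_n[|x|^p]_{[-1,1]}\le\delta^p E_m[|x|^p]_{[-1,1]}+(\text{geometrically small})$ with $m=(1-o(1))n$, so that $\limsup_n n^pE_n\le\liminf_n n^pE_n$ after letting $n\to\infty$ and then $\delta\to 0$; thus $\mu(p)$ exists. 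The same localization identifies $\mu(p)=\lim_{\sigma\to\infty}\sigma^p A_\sigma[|x|^p]_{\mathbb{R}}$ with $A_\sigma$ the best uniform approximation by entire functions of exponential type $\le\sigma$, and by homogeneity $\sigma^pA_\sigma[|x|^p]_{\mathbb{R}}$ is $\sigma$-independent, equal to $\mu(p)$.

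\emph{Upper bound.} For $0<p<2$ I would start from
\[
|x|^p=\gamma_p\int_0^\infty\frac{1-\cos(xt)}{t^{1+p}}\,dt,\qquad\gamma_p=\Bigl(\int_0^\infty\frac{1-\cos u}{u^{1+p}}\,du\Bigr)^{-1}=\frac{2\,\Gamma(1+p)\,|\sin(\pi p/2)|}{\pi},
\]
the last evaluation being a standard Mellin integral. Because $x\mapsto\cos(xt)$ is entire of exponential type $t$, $G_\sigma(x)\triangleq\gamma_p\int_0^\sigma t^{-1-p}(1-\cos xt)\,dt$ is entire of type $\le\sigma$ with residual $|x|^p-G_\sigma(x)=\gamma_p\int_\sigma^\infty t^{-1-p}(1-\cos xt)\,dt$; absorbing the constant part $\gamma_p\sigma^{-p}/p$ into the approximant and recentering by half the range of the remaining purely oscillatory residual already gives the correct order $\asymp\sigma^{-p}$ with a constant within a bounded factor of the target. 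To reach the sharp constant I would instead take Bernstein's interpolant of $|x|^p$ at an arithmetic progression of spacing $\asymp\pi/\sigma$ (an entire function of type $\sigma$) and evaluate its error by Poisson summation against the integral representation; the resulting Gamma-integral collapses to $\tfrac{\gamma_p}{2p}=\frac{\Gamma(p)}{\pi}|\sin(\pi p/2)|$, giving $\mu(p)\le\frac{\Gamma(p)}{\pi}|\sin(\pi p/2)|$. For non-even $p>2$ a finite second-difference identity reduces matters to $0<p<2$.

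\emph{Lower bound.} I would use the duality
\[
E_n[f]_{[-1,1]}=\sup\Bigl\{\Bigl|\textstyle\int_{-1}^1 f\,d\nu\Bigr|:\ \|\nu\|_{\mathrm{TV}}\le1,\ \textstyle\int x^k\,d\nu=0\ (0\le k\le n)\Bigr\},
\]
and, after the rescaling above, its analogue for entire functions: it suffices to exhibit a signed measure $\nu$ on $\mathbb{R}$ with $\|\nu\|_{\mathrm{TV}}\le1$ whose Fourier transform vanishes on $(-\sigma,\sigma)$ and to lower-bound $|\int|x|^p\,d\nu|$. A near-extremal $\nu$ is carried by an arithmetic progression at scale $\pi/\sigma$ with the alternating sign pattern and mildly decaying weights modeled on the equioscillation signature of the extremal error from the upper-bound step; evaluating $\int|x|^p\,d\nu$ by the same Gamma-integral machinery and optimizing the weights under the band-limited-annihilation constraint yields $\frac{\Gamma(p)}{\pi}|\sin(\pi p/2)|\bigl(1-\tfrac{1}{p-1}\bigr)$, where the factor $1-\tfrac{1}{p-1}$ is precisely the slack between a constructible annihilating measure and the (unknown) genuinely extremal one.

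\emph{Main obstacle.} The crux is the constant bookkeeping in the last two steps: the upper-bound interpolant has to be tuned essentially optimally near the cutoff frequency $t\approx\sigma$, and the lower-bound measure has to be calibrated to the \emph{same} extremal signature, so that the two bounds meet up to the explicit factor $1-\tfrac{1}{p-1}$; by contrast the homogeneity-driven existence argument is comparatively routine, modulo a careful gluing and Chebyshev-expansion estimate. A secondary nuisance is that the clean integral representation holds only for $0<p<2$, so all constants for larger non-even $p$ come indirectly through iterated differencing, which must be verified not to degrade the asymptotics.
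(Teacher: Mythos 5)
The paper does not prove this statement; it quotes it from Bernstein's 1938 paper as a black box, so there is no in-paper argument to compare against. Judged on its own terms, your scaffolding does match the classical route: the homogeneity $E_n[|x|^p]_{[-a,a]}=a^p E_n[|x|^p]_{[-1,1]}$, the passage to best approximation of $|x|^p$ on $\mathbb{R}$ by entire functions of exponential type as the scaling limit, and the subordination formula $|x|^p=\gamma_p\int_0^\infty t^{-1-p}(1-\cos xt)\,dt$ with $\gamma_p=2\Gamma(1+p)|\sin(\pi p/2)|/\pi$ are all correct, as is the identity $\gamma_p/(2p)=\Gamma(p)|\sin(\pi p/2)|/\pi$.

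The genuine gap is exactly where you locate it, and the naive version of the upper bound does not close it. Truncating the integral at $t=\sigma$ gives $G_\sigma$ entire of type $\sigma$ with residual $R(x)=\gamma_p\int_\sigma^\infty t^{-1-p}(1-\cos xt)\,dt\ge 0$. Substituting $a=\sigma x$ one gets
\begin{equation*}
R(x)=\frac{\gamma_p}{\sigma^p}\Bigl(\frac{1}{p}-a^p\!\int_a^\infty \frac{\cos u}{u^{1+p}}\,du\Bigr),
\end{equation*}
which vanishes at $a=0$ and tends to $\gamma_p/(p\sigma^p)$ as $a\to\infty$, but \emph{overshoots} that value at intermediate $a$. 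For $p=1$ the bracket is $1-\cos a+a\,\mathrm{si}(a)$ with $\mathrm{si}(a)=\int_a^\infty u^{-1}\sin u\,du$; its derivative is $\mathrm{si}(a)$, so the supremum is attained at the first zero $a_*\approx 1.93$ of $\mathrm{si}$ and equals $1-\cos a_*\approx 1.35$. The recentered truncation therefore only yields $\mu(1)\le 1.35/\pi$, not the claimed $1/\pi$, so the ``within a bounded factor'' caveat you wrote is not a cosmetic issue. The proposed repair — an arithmetic-progression interpolant estimated by Poisson summation — is only gestured at and would have to remove precisely this overshoot; as stated I do not see that it does, and the cardinal series of $|x|^p$ with $p\ge 1$ does not even converge absolutely, so the interpolant would first have to be redefined before it could be analyzed. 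The lower bound is in the same condition: the duality formulation is right, but no annihilating measure is actually produced, and the factor $1-\tfrac{1}{p-1}$ (which, one should note, is nonvacuous only for $p>2$) is asserted rather than derived. The existence argument via localization and splicing is also a sketch whose ``Chebyshev re-expansion'' step would need to be made quantitative. In short, the framework and every stated identity are correct, but the two quantitative steps where the theorem's actual content lives are not established by the proposal.
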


Regarding bounds on $E_n[f]$ for any finite $n$, Korneichuk \cite[Chap. 6]{Korneichuk1991} provides a comprehensive study. For a comprehensive treatment of modern approximation theory, DeVore and Lorentz \cite{Devore--Lorentz1993}, Ditzian and Totik \cite{Ditzian--Totik1987} provide excellent references. For the most up-to-date review of polynomial approximation, we refer the readers to Bustamante~\cite{Bustamante2011algebraic}.

We emphasize that the discussions above refer to approximation in dimension one. The general multivariate case is extremely complicated. Rice~\cite{Rice1963tchebycheff} wrote:
\begin{center}
\vspace{5pt}
\parbox{0.45\textwidth}{~~\emph{  ``The theory of Chebyshev approximation (a.k.a. best approximation) for functions
of one real variable has been understood for some time and is quite elegant. For about fifty years attempts have been made to generalize this theory to functions of several variables. These attempts have failed because of the lack
of uniqueness of best approximations to functions of more than one variable.
"}}
\vspace{5pt}
\end{center}

Another related paper~\cite{Jiao--Han--Weissman2015divergence} showed that the non-uniqueness can cause serious trouble: some polynomial that can achieve the best approximation error rate cannot be used in our general methodology in functional estimation. What if we relax the requirement of computing the best approximation in multivariate case, and only want to analyze the best approximation rate (i.e., the best approximation error up to a multiplicative constant)? That turns out also to be extremely difficult. Ditzian and Totik~\cite[Chap. 12]{Ditzian--Totik1987} obtained the error rate estimate on simple polytopes\footnote{A simple polytope in $\mathbb{R}^d$ is a polytope such that each vertex has $d$ edges. }, balls, and spheres, and it remained open until Totik~\cite{Totik2013polynomial} generalized the results to general polytopes.  For results in balls and spheres, the readers are referred to Dai and Xu~\cite{Dai--Xu2013approximation}. We still know little about regimes other than polytopes, balls, and spheres.

We do not know whether in general polynomial approximation can achieve the minimax rates in general settings. Probably other approximation bases need be chosen for certain problems.

\item Question~\ref{question.general3}

Note that we have assumed $G(\theta)$ is analytic in the ``smooth'' regime. For various statistical models (like Gaussian and Poisson), any analytic functional admits unbiased estimators. We propose to use Taylor series bias correction~\cite{Withers1987} in the ``smooth'' regime, where the order of the Taylor series may vary between problems.
\end{enumerate}

\subsection{Remaining content}

The rest of the paper is organized as follows. Section~\ref{sec.achievability} details the construction of our estimators $\hat{H}$ and $\hat{F}_\alpha$ and their analysis. In Section~\ref{sec.lowerbound} we present our general approach for proving minimax lower bounds and apply it to establish Theorems~\ref{thm.Falpha.risk} and~\ref{th_1}. Section~\ref{sec.experiments} presents a few experiments demonstrating the practical advantages of our estimators in entropy estimation, mutual information estimation, entropy rate estimation, and learning graphical models. Complete proofs of the remaining theorems and lemmas are provided in the appendices.

\section{Estimator construction and analysis}\label{sec.achievability}

Throughout our analysis, we utilize the Poisson sampling model, which is equivalent to having a $S$-dimensional random vector $\mathbf{Z}$ such that each component $Z_i$ in $\mathbf{Z}$ has distribution $\mathsf{Poi}(np_i)$, and all coordinates of $\mathbf{Z}$ are independent. For simplicity of analysis, we conduct the classical ``splitting'' operation \cite{Tsybakov2013aggregation} on the Poisson random vector $\mathbf{Z}$, and obtain two independent identically distributed random vectors $\mathbf{X} = [X_1,X_2,\ldots,X_S]^T, \mathbf{Y} = [Y_1,Y_2,\ldots,Y_S]^T$, such that each component $X_i$ in $\mathbf{X}$ has distribution $\spo(np_i/2)$, and all coordinates in $\mathbf{X}$ are independent. For each coordinate $i$, the splitting process generates a random variable $T_i$ such that $T_i|\mathbf{Z} \sim \mathsf{B}(Z_i, 1/2)$, and assign $X_i = T_i, Y_i = Z_i - T_i$. All the random variables $\{T_i:1\leq i\leq S\}$ are conditionally independent given our observation $\mathbf{Z}$.

For simplicity, we re-define $n/2$ as $n$, and denote
\begin{equation}
\hat{p}_{i,1} = \frac{X_i}{n}, \hat{p}_{i,2} = \frac{Y_i}{n}, \Delta = \frac{c_1 \ln n}{n}, K = c_2 \ln n, t = \frac{\Delta}{4},
\end{equation}
where $c_1,c_2$ are positive constants to be specified later. For simplicity we assume $K$ is always a non-negative integer. Note that $\Delta,K,t$ are functions of $n$, where we omit the subscript $n$ for brevity. We remark that the ``splitting'' operation is used to simplify the analysis, and is not performed in the experiments. We also note that for random variable $X$ such that $nX \sim \spo(np)$,
\begin{equation}
\bE \prod_{r = 0}^{k-1} \left( X - \frac{r}{n} \right) = p^k,
\end{equation}
for any $k \in \mathbb{N}_+$. For a proof of this fact we refer the readers to Withers~\cite[Example 2.8]{Withers1987}.
\subsection{Estimator construction}

Our estimator $\hat{F}_\alpha, \alpha>0$, is constructed as follows.
\begin{equation}
\hat{F}_\alpha \triangleq \sum_{i = 1}^S \left[  L_\alpha(\hat{p}_{i,1}) \mathbbm{1}(\hat{p}_{i,2} \leq 2\Delta) + U_\alpha(\hat{p}_{i,1}) \mathbbm{1}(\hat{p}_{i,2}>2\Delta) \right],  \label{eqn.falphaconstruct}
\end{equation}
where
\begin{align}
S_{K,\alpha}(x) & \triangleq  \sum_{k = 1}^{K} g_{k,\alpha} (4\Delta)^{-k + \alpha} \prod_{r = 0}^{k-1} (x-r/n) \label{eqn.SKalpha} \\
L_\alpha(x) & \triangleq \min \left\{ S_{K,\alpha}(x) , 1 \right\} \label{eqn.Lalphadef} \\
U_\alpha(x) & \triangleq I_n(x)\left( 1 + \frac{\alpha(1-\alpha)}{2n x} \right) x^\alpha. \label{eqn.Ualphadef}
\end{align}

We explain each equation in detail as follows.

\begin{enumerate}
\item Equation~(\ref{eqn.falphaconstruct}):

Note that $\hat{p}_{i,1}$ and $\hat{p}_{i,2}$ are i.i.d. random variables such that $n \hat{p}_{i,1} \sim \mathsf{Poi}(n p_i)$. We use $\hat{p}_{i,2}$ to determine whether we are operating in the ``nonsmooth'' regime or not. If $\hat{p}_{i,2} \leq 2\Delta$, we declare we are in the ``nonsmooth'' regime, and plug in $\hat{p}_{i,1}$ into function $L_\alpha(\cdot)$. If $\hat{p}_{i,2}>2\Delta$, we declare we are in the ``smooth'' regime, and plug in $\hat{p}_{i,1}$ into $U_\alpha(\cdot)$.

\item Equation~(\ref{eqn.SKalpha}):

The coefficients $g_{k,\alpha}, 0\leq k \leq K$ are coefficients of the best polynomial approximation of $x^\alpha$ over $[0,1]$ up to degree $K$, i.e.,
\begin{equation}
\sum_{k = 0}^{K} g_{k,\alpha} x^k = \argmin_{y(x)\in \mathsf{poly}_{K}} \sup_{x\in [0,1]} |y(x) - x^\alpha|,
\end{equation}
where $\mathsf{poly}_{K}$ denotes the set of algebraic polynomials up to order $K$. Note that in general $g_{k,\alpha}$ depends on $K$, which we do not make explicit for brevity. Lemma~\ref{lemma.lowpartbiasvariance} shows that for $nX \sim \spo(np)$,
\begin{equation}
\bE S_{K,\alpha}(X) = \sum_{k =1 }^{K} g_{k,\alpha} (4\Delta)^{-k + \alpha} p^k.
\end{equation}
Thus, we can understand $S_{K,\alpha}(X), nX\sim \spo(np)$ as a random variable whose expectation is nearly \footnote{Note that we have removed the constant term from the best polynomial approximation. It is to ensure that we assign zero to symbols we do not see. } the best approximation of function $x^\alpha$ over $[0,4\Delta]$.

\item Equation~(\ref{eqn.Lalphadef}):

Any reasonable estimator for $p_i^\alpha$ should be upper bounded by the value one. We cut off $S_{K,\alpha}(x)$ by upper bound $1$, and define the function $L_\alpha(x)$, which means ``lower part''.

\item Equation~(\ref{eqn.Ualphadef}):

The function $U_\alpha(x)$ (standing for ``upper part'') is nothing but a product of an interpolation function $I_n(x)$\footnote{The usage of the interpolation function was partially inspired by Valiant and Valiant~\cite{Valiant--Valiant2011power}. } and the bias-corrected MLE. The careful reader may note that the bias-corrected MLE is not exactly the same as what we used before in (\ref{eqn.biascorrectiongeneralf}). It is because here we are using the Poisson model instead of the Multinomial model. In the Poisson model, the bias correction formula should be modified to
\begin{equation}\label{eqn.biascorrectiongeneralfpoisson}
f^c(\hat{p}) = f(\hat{p}) - \frac{f''(\hat{p}) \hat{p}}{2n}.
\end{equation}

The interpolation function $I_n(x)$ is designed to make $U_\alpha(x)$ a smooth function on $[0,1]$. Indeed, when $0<\alpha<1$, were it not for the interpolation function, $U_\alpha(x)$ would be unbounded for $x$ close to zero. Note that $L_\alpha(x)$ and $U_\alpha(x)$ are dependent on $n$. We omit this dependence in notation for brevity. The interpolation function $I_n(x)$ is defined as follows:
\begin{equation}
I_n(x) = \begin{cases} 0 & x \leq t \\
g\left( x-t; t \right) & t < x < 2t \\ 1 & x \geq  2t\end{cases}
\end{equation}

The following lemma characterizes the properties of the function $g(x;a)$ appearing in the definition of $I_n(x)$. In particular, it shows that $I_n(x)$ is four times continuously differentiable. 
\begin{lemma}\label{lemma.gxa}
For the function $g(x;a)$ on $[0,a]$ defined as follows,
\begin{align}
& g(x;a) \nonumber \\
& \quad \triangleq 126 \left( \frac{x}{a} \right)^5 - 420 \left(\frac{x}{a}\right)^6 \nonumber \\
& \quad \quad + 540 \left( \frac{x}{a} \right)^7-315 \left( \frac{x}{a} \right)^8 + 70 \left( \frac{x}{a} \right)^9 ,
\end{align}
we have the following properties:
\begin{align}
g(0;a) = 0,\quad & g^{(i)}(0;a) = 0, 1\leq i\leq 4 \\
g(a;a) = 1, \quad & g^{(i)}(a;a) = 0, 1\leq i\leq 4
\end{align}
\end{lemma}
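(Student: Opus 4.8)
The plan is to collapse the two-parameter function into a single one-variable polynomial and then recognize that polynomial as an explicit antiderivative, which makes all ten boundary conditions transparent simultaneously. Setting $t = x/a$, write $p(t) \triangleq g(x;a) = 126 t^5 - 420 t^6 + 540 t^7 - 315 t^8 + 70 t^9$, a polynomial of degree $9$. Since $g(x;a) = p(x/a)$, the chain rule gives $g^{(i)}(x;a) = a^{-i} p^{(i)}(x/a)$, so the eight derivative conditions together with $g(0;a)=0$ and $g(a;a)=1$ are equivalent to the five statements: $p(0)=0$, $p(1)=1$, and $p^{(i)}(0)=p^{(i)}(1)=0$ for $1\le i\le 4$.

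The one nontrivial observation is that $p$ is a scalar multiple of an antiderivative of $t^4(1-t)^4$. Expanding $630\,t^4(1-t)^4 = 630(t^4 - 4t^5 + 6t^6 - 4t^7 + t^8)$ and integrating termwise from $0$ to $t$ reproduces $126 t^5 - 420 t^6 + 540 t^7 - 315 t^8 + 70 t^9$ exactly; hence
\[
p(t) = 630 \int_0^t s^4 (1-s)^4 \, ds, \qquad p'(t) = 630\, t^4 (1-t)^4 .
\]
(Equivalently, $p$ is the regularized incomplete Beta function $I_t(5,5)$, the classical $C^4$ ``smoothstep''.) This identity is the only computation in the proof, and it is a routine polynomial expansion.

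Given this representation, everything follows at once. We have $p(0)=0$, and $p(1) = 630\int_0^1 s^4(1-s)^4\,ds = 630\, B(5,5) = 630\cdot \frac{4!\,4!}{9!} = 1$. Moreover $p'(t) = 630\,t^4(1-t)^4$ has a zero of order $4$ at $t=0$ and a zero of order $4$ at $t=1$, so its derivatives of orders $0,1,2,3$ all vanish at those two points; equivalently $p^{(i)}(0) = p^{(i)}(1) = 0$ for $i=1,2,3,4$. Pushing these back through $g^{(i)}(x;a) = a^{-i}p^{(i)}(x/a)$ yields $g(0;a)=0$, $g(a;a)=1$, and $g^{(i)}(0;a)=g^{(i)}(a;a)=0$ for $1\le i\le 4$, as claimed.

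There is no real obstacle here: the lemma is a finite verification, and the only mild cleverness is spotting the factorization $p'(t)=630\,t^4(1-t)^4$ rather than differentiating $p$ four times by hand at each endpoint, which would be error-prone. If one prefers to avoid the integral representation entirely, a purely elementary route is to write $p(t) = t^5 q(t)$ with $q$ of degree $4$, which gives the vanishing at $t=0$ immediately from the Leibniz rule applied to the order of a zero of a product, and then to verify by polynomial division that $1 - p(t) = (1-t)^5\,\tilde q(t)$ for some degree-$4$ polynomial $\tilde q$, which supplies the conditions at $t=1$ in exactly the same way.
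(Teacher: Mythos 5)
Your proof is correct, and it takes a genuinely different route from the paper's. The paper's ``proof'' of Lemma~\ref{lemma.gxa} does not verify anything by hand: it simply states that the polynomial was obtained by Hermite interpolation, giving a \textsf{WolframAlpha} command (\textsf{InterpolatingPolynomial[\ldots]}) and implicitly relying on the reader to trust the computer-algebra output and the uniqueness of the degree-$9$ Hermite interpolant with prescribed values and first four derivatives at two nodes. Your argument instead supplies a self-contained human-checkable verification: after reducing to the normalized polynomial $p(t)=g(at;a)$, you observe that $p'(t)=630\,t^4(1-t)^4$, from which the fourth-order vanishing of $p'$ at both endpoints (hence $p^{(i)}(0)=p^{(i)}(1)=0$ for $1\le i\le 4$), $p(0)=0$, and $p(1)=630\,B(5,5)=1$ all drop out at once. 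The one computation you need---expanding and integrating $630\,t^4(1-t)^4$---is a line of arithmetic and is verified correctly in your write-up. What the paper's approach buys is construction (how to find such a polynomial); what yours buys is verification and a conceptual identification of $g(\cdot\,;a)$ as the classical $C^4$ smoothstep, i.e.\ the regularized incomplete Beta function $I_t(5,5)$, which also makes it obvious how to generate higher-smoothness analogues by replacing $t^4(1-t)^4$ with $t^k(1-t)^k$.
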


The function $g(x;1)$ is depicted in Figure~\ref{fig.gx1}.
\begin{center}
\vspace*{15pt}
  \centering
  \centerline{\includegraphics[width=0.4\textwidth]{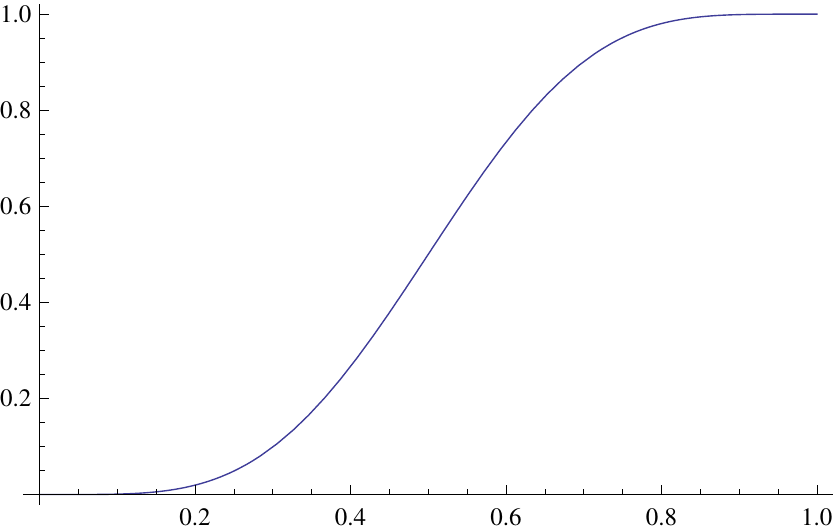}}
  \vspace{-0.1cm}
\captionof{figure}{The function $g(x;1)$ over interval $[0,1]$. }
\label{fig.gx1}
\end{center}

\end{enumerate}

Similarly, we define our estimator for entropy $H(P)$ as
\begin{equation}
\hat{H} \triangleq \sum_{i = 1}^S \left[  L_H(\hat{p}_{i,1}) \mathbbm{1}(\hat{p}_{i,2} \leq 2\Delta) + U_H(\hat{p}_{i,1}) \mathbbm{1}(\hat{p}_{i,2}>2\Delta) \right],
\end{equation}
where
\begin{align}
S_{K,H}(x) & \triangleq  \sum_{k = 1}^{K} g_{k,H} (4\Delta)^{-k + 1} \prod_{r = 0}^{k-1} (x-r/n) \\
L_H(x) & \triangleq \min \left\{ S_{K,H}(x) , 1 \right\} \\
U_H(x) & \triangleq I_n(x)\left( -x \ln x + \frac{1}{2n}\right).
\end{align}

The coefficients $\{g_{k,H}\}_{1\leq k\leq K}$ are defined as follows. We first define
\begin{equation}
\sum_{k = 0}^K r_{k,H} x^k = \argmin_{y(x) \in \mathsf{poly}_{K}} \sup_{x\in [0,1]} |y(x)- (-x \ln x)|
\end{equation}
and then define
\begin{equation}\label{eqn.gkhdefine}
g_{k,H} = r_{k,H}, 2\leq k \leq K, g_{1,H} = r_{1,H} - \ln (4\Delta).
\end{equation}

Lemma~\ref{lemma.approsmallentropy} shows that for $nX \sim \spo(np)$,
\begin{equation}
\bE S_{K,H}(X) = \sum_{k = 1}^{K} g_{k,H} (4\Delta)^{-k + 1} p^k
\end{equation}
is a near-best polynomial approximation for $-p\ln p$ on $[0,4\Delta]$.

Figure~\ref{fig.comparepluginandbest} is designed to provide pictorial explanation of our scheme in the ``nonsmooth'' regime for entropy estimation. The curve $-p\ln p$ is the functional we want to estimate, and the horizontal axis represents the possible values of $p$. We take $n = 100$, and use a $3$-order best polynomial approximation for function $-p\ln p$ over regime $[0, \frac{\ln n}{n} = 0.0461]$. It is evident from the curve that the expectation $\bE [L_H(\hat{p})]$ is very close to the true function $-p\ln p$, but the expectation of the MLE $\bE[-\hat{p}\ln \hat{p}]$, and the function $L_H(\cdot)$ itself are both far from $-p\ln p$. It demonstrates the nuanced nature of our scheme: we first construct a polynomial (equal to $\bE [L_H(\hat{p})]$) that approximates the function $-p\ln p$ very well, then we design the function $L_H(\cdot)$ to make sure its expectation is the good approximation. Plugging $\hat{p}$ in $L_H(\cdot)$ may seem less sensible than plugging $\hat{p}$ into $-p\ln p$ at first glance, but Figure~\ref{fig.comparepluginandbest} vividly demonstrates that in fact plugging $\hat{p}$ into $L_H(\cdot)$ is far more accurate in estimating $-p\ln p$.

\begin{center}
  \centering
  \centerline{\includegraphics[width=0.45\textwidth]{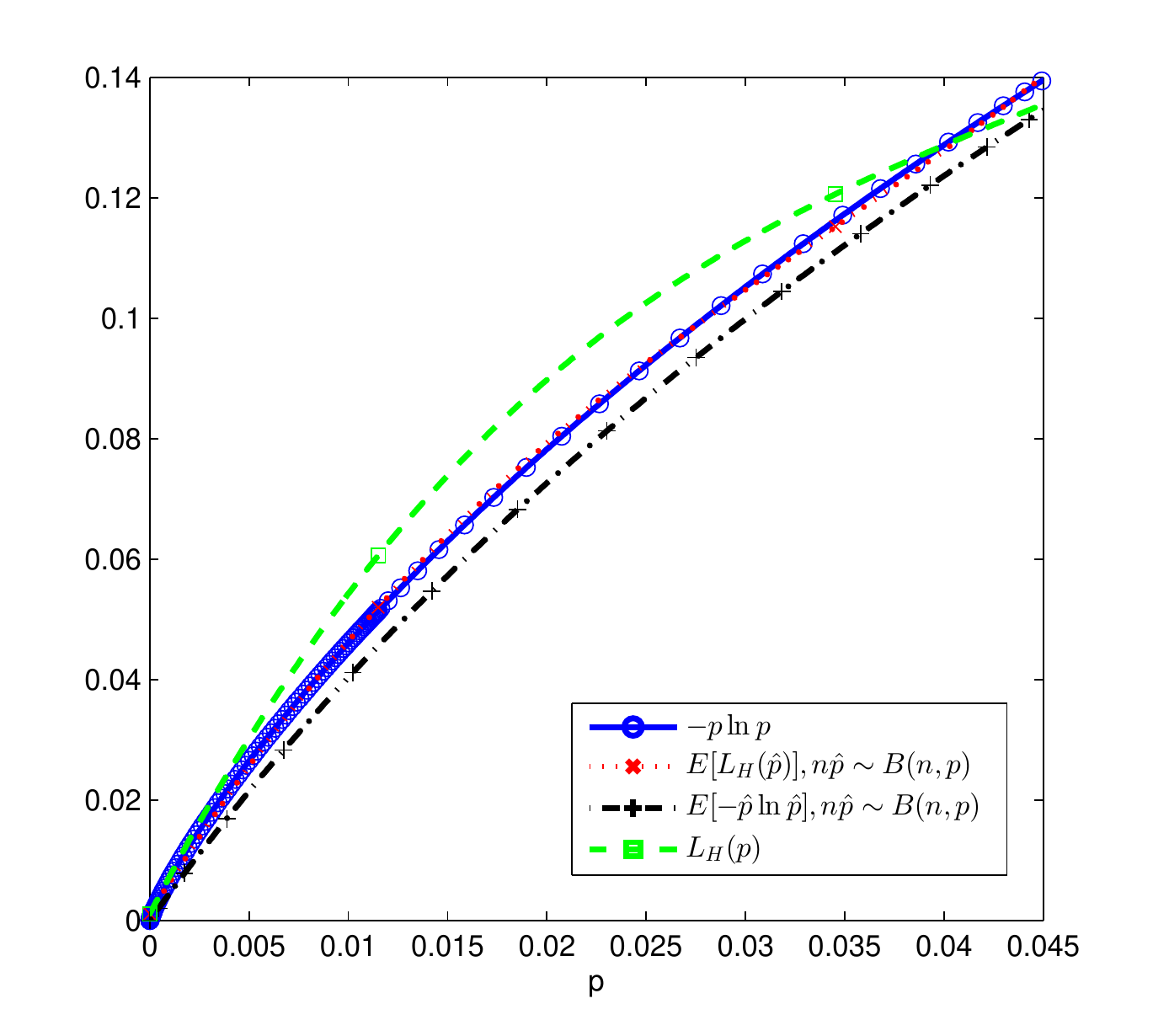}}
  \vspace{-0.1cm}
\captionof{figure}{Comparison of our scheme and MLE}
\label{fig.comparepluginandbest}
\end{center}

\subsection{Estimator analysis}

We demonstrate our analysis techniques via the proof of Theorem~\ref{thm.Falpha.risk} and~\ref{th_2}, and note that similar techniques allow us to establish Theorem~\ref{thm.entropy.risk}.

The next two lemmas show that the estimators $U_\alpha(x), U_H(x)$ have desirable bias and variance properties when the true probability $p$ is not too small.

\begin{lemma}\label{lemma.largebias}
Suppose $nX \sim \spo(np), p \geq  \Delta, c_1 \ln n\geq 1$. For $0<\alpha<3/2$, we have
\begin{align}
\left| \bE U_\alpha(X) - p^\alpha \right| & \leq \frac{17}{n^\alpha(c_1 \ln n)^{2-\alpha}} + \frac{8310 (1+\alpha)}{\alpha(2-\alpha)}p^\alpha n^{-c_1/8}.
\end{align}
For $0<\alpha\leq 1/2$,
\begin{align}
\mathsf{Var}(U_\alpha(X))& \leq   \frac{24}{n^{2\alpha}(c_1\ln n)^{1-2\alpha}}  + \frac{576}{\alpha}p^{2\alpha} n^{-c_1/8} \nonumber \\
& \quad  + \frac{28800}{\alpha^2} p^{2\alpha}n^{-c_1/4}. 
\end{align}
For $1/2<\alpha<1$,
\begin{align}
\mathsf{Var}(U_\alpha(X))& \leq \frac{14p^{2\alpha-1}}{n} + \frac{576}{\alpha}p^{2\alpha}n^{-c_1/8} \nonumber \\
& \quad + \frac{28800}{\alpha^2}p^{2\alpha}n^{-c_1/4} + \frac{8}{n^{2\alpha}(c_1\ln n)^{2-2\alpha}}.
\end{align}
For $1<\alpha<3/2$, 
\begin{align}
\mathsf{Var}(U_\alpha(X))& \leq \frac{202p}{n} + \frac{8}{n^2} + \frac{28800}{\alpha^2}p^{2\alpha} n^{-c_1/4} \nonumber \\
& \quad + \frac{120}{\alpha}p^{2\alpha} n^{-c_1/8}. 
\end{align}
\end{lemma}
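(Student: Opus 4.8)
The plan is to study $U_\alpha(X)$ separately on a high‑probability ``typical'' event $E\triangleq\{|X-p|\le p/2\}$ and on its complement, controlling $E^c$ by crude pointwise bounds together with the Poisson tail estimates of Lemma~\ref{lemma.poissontail}. The structural fact driving everything is that on $E$ one has $X\ge p/2\ge\Delta/2=2t$, so $I_n(X)=1$ and $U_\alpha(X)$ equals $\widetilde U_\alpha(X)$, where $\widetilde U_\alpha(x)\triangleq x^\alpha+\tfrac{\alpha(1-\alpha)}{2n}x^{\alpha-1}$; moreover on $E$ the argument is comparable to $p$, so $|\widetilde U_\alpha^{(j)}(\xi)|\lesssim_\alpha p^{\alpha-j}$ for $\xi$ between $X$ and $p$ and $j=0,\dots,4$ (here $\alpha<3/2$ keeps the products $\alpha(\alpha-1)\cdots$ bounded). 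The inputs will be the Poisson central moments $\mathbb{E}_p(X-p)=0$, $\mathbb{E}_p(X-p)^2=p/n$, $\mathbb{E}_p(X-p)^3=p/n^2$, $\mathbb{E}_p(X-p)^4\lesssim(p/n)^2$, the crude moment bound $\mathbb{E}_p X^r\lesssim_r p^r$ (valid since $np\ge c_1\ln n\ge1$), and the pointwise bound $U_\alpha(x)\lesssim_\alpha x^\alpha\mathbbm{1}(x>t)$, which holds because $I_n\in[0,1]$ is supported on $\{x>t\}$ and $nt=c_1\ln n/4\gtrsim1$.

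For the bias I would write $\widetilde U_\alpha=P_3+R$ on $E$, with $P_3$ the degree‑$3$ Taylor polynomial of $\widetilde U_\alpha$ at $p$ and $R(x)=\tfrac1{24}\widetilde U_\alpha^{(4)}(\xi_x)(x-p)^4$, and split $\mathbb{E}U_\alpha(X)-p^\alpha$ into $\bigl(\mathbb{E}P_3(X)-p^\alpha\bigr)+\mathbb{E}[R(X)\mathbbm{1}_E]-\mathbb{E}[P_3(X)\mathbbm{1}_{E^c}]+\mathbb{E}[U_\alpha(X)\mathbbm{1}_{E^c}]$. The first term is the heart of the matter: since $\mathbb{E}_p(X-p)=0$ and $P_3^{(j)}\equiv0$ for $j\ge4$, $\mathbb{E}P_3(X)=\widetilde U_\alpha(p)+\tfrac12\widetilde U_\alpha''(p)\tfrac pn+\tfrac16\widetilde U_\alpha'''(p)\tfrac p{n^2}$, and a short computation shows that $\widetilde U_\alpha(p)-p^\alpha=\tfrac{\alpha(1-\alpha)}{2n}p^{\alpha-1}$ is cancelled up to $O_\alpha(p^{\alpha-2}/n^2)$ by $\tfrac12\widetilde U_\alpha''(p)\tfrac pn$ --- which is exactly the effect the Poisson bias correction $\tfrac{\alpha(1-\alpha)}{2nx}x^\alpha$ is engineered to produce --- so $\mathbb{E}P_3(X)-p^\alpha\lesssim_\alpha p^{\alpha-2}/n^2$, and the remainder satisfies $\mathbb{E}[R(X)\mathbbm{1}_E]\lesssim_\alpha p^{\alpha-4}\mathbb{E}_p(X-p)^4\lesssim p^{\alpha-2}/n^2$ as well. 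Since $\alpha<2$, $p^{\alpha-2}/n^2$ is decreasing in $p$ and hence maximized at $p=\Delta$, where it equals $\tfrac1{n^\alpha(c_1\ln n)^{2-\alpha}}$ --- the first claimed term. The two $E^c$ terms I would bound via Cauchy--Schwarz against Lemma~\ref{lemma.poissontail}: $|P_3(X)|\lesssim_\alpha\sum_{j=0}^3 p^{\alpha-j}|X-p|^j$ and $U_\alpha(X)\lesssim_\alpha X^\alpha\mathbbm{1}(X>t)$ make each $\lesssim_\alpha p^\alpha\sqrt{\mathbb{P}(E^c)}$ (the $j=0$ piece dominating for $p\le1$), which Lemma~\ref{lemma.poissontail} turns into the $p^\alpha n^{-c_1/8}$ term; collecting constants gives the stated inequality.

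The variance is treated in parallel from $\mathsf{Var}(U_\alpha(X))\le\mathbb{E}[(U_\alpha(X)-p^\alpha)^2]$ (the gap from the true variance is the squared bias, already shown negligible; alternatively one may start from the Poisson Poincar\'e inequality $\mathsf{Var}(f(N))\le\lambda\,\mathbb{E}[(f(N+1)-f(N))^2]$). On $E$ a first‑order expansion gives $(U_\alpha(X)-p^\alpha)^2\lesssim_\alpha p^{2\alpha-2}/n^2+\widetilde U_\alpha'(p)^2(X-p)^2+p^{2\alpha-4}(X-p)^4$, whose $E$‑expectation is dominated by $\widetilde U_\alpha'(p)^2\,\mathbb{E}_p(X-p)^2\lesssim_\alpha p^{2\alpha-1}/n$ plus a secondary term $\lesssim_\alpha p^{2\alpha-2}/n^2$. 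Now the case split enters: for $0<\alpha\le1/2$ the exponent $2\alpha-1\le0$ makes $p^{2\alpha-1}/n$ maximal at $p=\Delta$, giving $\tfrac1{n^{2\alpha}(c_1\ln n)^{1-2\alpha}}$ (the secondary term similarly gives the smaller $\tfrac1{n^{2\alpha}(c_1\ln n)^{2-2\alpha}}$, absorbed into the constant); for $1/2<\alpha<1$ one keeps $p^{2\alpha-1}/n$ and lists the secondary term separately; for $1<\alpha<3/2$ monotonicity of $\widetilde U_\alpha'$ and $p\le1$ give $p^{2\alpha-1}/n\le p/n$ and the secondary term $\le n^{-2}$. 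On $E^c$, $\mathbb{E}[(U_\alpha(X)-p^\alpha)^2\mathbbm{1}_{E^c}]\le 2\mathbb{E}[U_\alpha(X)^2\mathbbm{1}_{E^c}]+2p^{2\alpha}\mathbb{P}(E^c)$, and $U_\alpha(X)^2\lesssim_\alpha X^{2\alpha}\mathbbm{1}(X>t)$ with Cauchy--Schwarz against $\mathbb{E}_p X^{4\alpha}\lesssim_\alpha p^{4\alpha}$ yields $\mathbb{E}[U_\alpha(X)^2\mathbbm{1}_{E^c}]\lesssim_\alpha p^{2\alpha}\sqrt{\mathbb{P}(E^c)}$; Lemma~\ref{lemma.poissontail} then supplies the $p^{2\alpha}n^{-c_1/8}$ and $p^{2\alpha}n^{-c_1/4}$ terms. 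Chasing constants through these steps recovers the four displayed variance bounds.

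The main obstacle I anticipate is not any single estimate but an interplay of two features. First, the non‑integer exponent makes $\widetilde U_\alpha$ and all its derivatives blow up as $x\downarrow0$, so the quantities $\widetilde U_\alpha^{(j)}(p)\,\mathbb{E}_p(X-p)^j$, and even $\mathbb{E}_p X^{\alpha-1}$ itself, are infinite when $\alpha<1$: one genuinely cannot Taylor‑expand against the full Poisson law, so every expansion must be confined to $E$ (or to $\{X>t\}$, where $I_n$ vanishes), and one must verify that this restriction costs only terms already dominated by the $n^{-c_1/8}$‑type contributions --- this is what forces the somewhat delicate ``$E$ versus $E^c$'' bookkeeping above. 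Second, one must confirm that the precise tail decay of Lemma~\ref{lemma.poissontail} for $p\ge\Delta=c_1\ln n/n$ is, after the Cauchy--Schwarz loss, still strong enough to produce exponents as favourable as $c_1/8$ and $c_1/4$; this is exactly where the choice $\Delta\asymp\ln n/n$ and the freedom in the constant $c_1$ do their work. The remainder --- the $\alpha$‑dependent case analysis and the chase of explicit numerical constants --- is routine but tedious.
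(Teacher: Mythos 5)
Your plan captures the heart of the paper's argument: expand $\widetilde U_\alpha$ to third order around $p$, observe that $\widetilde U_\alpha(p)-p^\alpha=\tfrac{\alpha(1-\alpha)}{2n}p^{\alpha-1}$ is cancelled exactly by $\tfrac12\widetilde U_\alpha''(p)\tfrac pn$, and treat what is left as $O_\alpha(p^{\alpha-2}/n^2)$, which is maximized at $p=\Delta$. That much matches equation~(\ref{eqn.biascorrectbasic}) and its surroundings. The variance argument, centering at $p^\alpha$ and paying the bias-squared price, is a legitimate alternative to the paper's centering at $f(p)=p^\alpha+\tfrac{\alpha(1-\alpha)}{2n}p^{\alpha-1}$.

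There is, however, a genuine gap in how you treat the complement of the typical event. You use Cauchy--Schwarz to control the $E^c$ contributions: for instance $\mathbb{E}[U_\alpha(X)^2\mathbbm{1}_{E^c}]\lesssim p^{2\alpha}\sqrt{\mathbb{P}(E^c)}$ and similarly for the bias pieces. But $\sqrt{\mathbb{P}(E^c)}$ is the wrong magnitude: Lemma~\ref{lemma.poissontail} gives $\mathbb{P}(E^c)\lesssim n^{-c_1/8}$, so the square root only produces $n^{-c_1/16}$, not the exponent $c_1/8$ appearing in the statement (and the variance needs both $n^{-c_1/8}$ and $n^{-c_1/4}$, which your accounting does not produce). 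The paper does \emph{not} use Cauchy--Schwarz here; it proves a \emph{pointwise} bound $|R(x;p)|\le\tfrac{8310(1+\alpha)}{\alpha(2-\alpha)}p^\alpha$ on $\{x<p/2\}$ (via the integral form of the remainder, starting the integration at $\max\{x,\Delta/4\}$ because $U_\alpha^{(4)}$ vanishes below $t$), and then multiplies by $\mathbb{P}(X<p/2)$ directly, retaining the full exponent.

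A secondary issue is that your two-sided event $E=\{|X-p|\le p/2\}$ forces a separate analysis of the upper tail $\{X>3p/2\}$, on which $U_\alpha(X)\asymp X^\alpha$ is unbounded as a function of $X$; neither a pointwise bound nor Cauchy--Schwarz with the naive $\mathbb{E}_pX^{4\alpha}\lesssim p^{4\alpha}$ (which needs its own justification for $\alpha$ near $3/2$) cleanly gives $\mathbb{P}(E^c)$ rather than its square root. The paper sidesteps this by splitting only on $\{X<p/2\}$ versus $\{X\ge p/2\}$: on $\{X\ge p/2\}$ the quantity $|U_\alpha^{(4)}(\xi_x)|$ is bounded by its value at $p/2$ because $u\mapsto u^{\alpha-4}$ is decreasing ($\alpha<4$), so the Lagrange remainder is uniformly $\lesssim p^{\alpha-4}(X-p)^4$ on the \emph{entire} upper region, and one just integrates against the exact Poisson fourth central moment. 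I would also flag that your remark ``the $j=0$ piece dominating for $p\le1$'' is not quite right as stated; the domination comes from $p\ge\Delta\ge1/n$, not from $p\le1$ alone. In short, the core Taylor cancellation is right, but the tail bookkeeping as described would only yield exponents $c_1/16$, and needs to be replaced by the paper's pointwise-bound-times-probability and derivative-monotonicity devices to recover the constants in the statement.
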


\begin{lemma}\label{lemma.entropyupper}
If $nX \sim \spo(np), p\geq \Delta$,
\begin{align}
\left| \bE U_H(X) +p \ln p \right| & \leq \frac{3}{c_1 n \ln n} + \frac{2}{3 (c_1 \ln n)^2 n} \nonumber \\
& \quad + 8024\left( p \ln(1/p) + 2p\right)n^{-c_1/8} 
\end{align}
\begin{align}
\mathsf{Var}(U_H(X)) &  \leq  2p(\ln p - \ln 2)^2/n \nonumber \\
& \quad + 54 p^2 \left| 2 (\ln p)^2 -2 \ln p + 3\right|n^{-c_1/8} \nonumber \\
& \quad + \left( \frac{1}{n} + 60 \left( p\ln(1/p) + 2p\right)n^{-c_1/8}\right)^2 \nonumber \\
& \quad + 2 \left( p\ln(1/p) + \frac{1}{2n}\right) \times 
\nonumber \\
& \qquad \Big( \frac{1}{n} + 60 \left( p\ln(1/p) + 2p\right)n^{-c_1/8}\Big).
\end{align}
\end{lemma}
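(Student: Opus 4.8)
\emph{Reduction.} Since $p \geq \Delta$ and $nX \sim \spo(np)$ with $np \geq n\Delta = c_1\ln n$ ($X$ being the empirical frequency), the Poisson tail bounds of Lemma~\ref{lemma.poissontail} give $\bP(X < p/2) \lesssim e^{-np/8} \leq n^{-c_1/8}$ and $\bP(X > 2p) \lesssim e^{-np/3} \leq n^{-c_1/3}$. On the event $\{X \geq p/2\} \subseteq \{X \geq \Delta/2\}$ we have $I_n(X) = 1$, so $U_H(X) = -X\ln X + \frac{1}{2n}$ is exactly the Poisson-model Miller--Madow estimator. All the $n^{-c_1/8}$ error terms in the statement will come from the complementary events, where we use only $0 \leq I_n \leq 1$, the crude bound $|U_H(X)| \leq X\ln(1/X) + \frac{1}{2n}$, and, on $\{X < p/2\}$, the monotonicity of $x\mapsto x\ln(1/x)$ near $0$, which gives $X\ln(1/X) \leq \tfrac p2\ln\tfrac2p \lesssim p\ln(1/p) + 2p$ (the range $p \gtrsim 1$ is trivial since then $\bP(X < p/2)$ is super-polynomially small).

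\emph{Bias.} Let $f(x) = -x\ln x$. On $\{X \geq p/2\}$ we Taylor-expand $f$ to fourth order around $p$: since $|f^{(4)}(\xi)| = 2/\xi^3 \leq 16/p^3$ whenever $\xi \geq p/2$, the remainder is at most $\frac{2}{3p^3}(X-p)^4$. Taking expectations with the Poisson central moments $\bE(X-p)^2 = p/n$, $\bE(X-p)^3 = p/n^2$, $\bE(X-p)^4 = 3p^2/n^2 + p/n^3$, the second-order term $\frac12 f''(p)\cdot p/n = -\frac1{2n}$ is cancelled exactly by the $+\frac1{2n}$ in $U_H$; the third-order term gives $\frac16 f'''(p)\cdot p/n^2 = \frac{1}{6pn^2} \leq \frac{1}{6c_1 n\ln n}$; and the remainder gives at most $\frac{2}{pn^2} + \frac{2}{3p^2n^3} \leq \frac{2}{c_1 n\ln n} + \frac{2}{3c_1^2 n(\ln n)^2}$. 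Together these produce the first two terms $\frac{3}{c_1 n\ln n} + \frac{2}{3(c_1\ln n)^2 n}$ of the bound, with room to spare. The indicator-truncation corrections --- namely $\bE[(f(X)-f(p))\mathbbm{1}(X < p/2)]$ and the terms $|f^{(j)}(p)|\,p^j\,\bP(X<p/2)$ that arise when $\bE[(X-p)^j\mathbbm{1}(X\geq p/2)]$ is replaced by the full moment --- are each $\lesssim (p\ln(1/p)+2p)\,n^{-c_1/8}$, which accounts for the third term once $c_1$ is taken large enough that the numerical constants fit under $8024$.

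\emph{Variance.} We bound $\mathsf{Var}(U_H(X)) \leq \bE[(U_H(X) - c)^2]$ with $c = -p\ln p$, and split the sample space into the three events above. On $\{p/2 \leq X \leq 2p\}$, where $U_H(X) = -X\ln X + \frac1{2n}$, the mean value theorem gives $U_H(X) - c = (f(X) - f(p)) + \frac1{2n} = f'(\xi)(X - p) + \frac1{2n}$ with $\xi$ between $X$ and $p$, hence $\xi \in [p/2, 2p]$ and $|f'(\xi)| = |1+\ln\xi| \leq |\ln(p/2)|$ for $p$ not too close to $1$ (the case $p \asymp 1$ gives variance $\lesssim 1/n$ directly); then $(a+b)^2 \leq 2a^2 + 2b^2$ and $\bE(X-p)^2 = p/n$ show this event contributes at most $\frac{2p(\ln p - \ln 2)^2}{n} + \frac{1}{2n^2}$ --- the main term, plus an $O(1/n^2)$ piece absorbed into the $(\frac1n + \cdots)^2$ term. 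On $\{X > 2p\}$, of mass $\lesssim n^{-c_1/3}$, one bounds $(U_H - c)^2 \lesssim (X\ln X)^2 + (p\ln p)^2 + n^{-2}$ and uses Poisson moment bounds to get a term of the form $54p^2|2(\ln p)^2 - 2\ln p + 3|\,n^{-c_1/8}$. On $\{X < p/2\}$, which contains the interpolation band $\{X < \Delta/2\}$ where $I_n$ need not equal $1$ (only $0 \leq I_n \leq 1$ is used), of mass $\lesssim n^{-c_1/8}$, we have $|U_H| \lesssim p\ln(1/p) + 2p + \frac1{2n}$, so $\bE[(U_H - c)^2\mathbbm{1}(X < p/2)] \lesssim (p\ln(1/p) + 2p + \frac1{2n})^2 n^{-c_1/8}$; expanding this square and collecting all cross terms between the scales $\frac1n$ and $(p\ln(1/p)+2p)n^{-c_1/8}$ reproduces the remaining two terms $(\frac1n + 60(p\ln(1/p)+2p)n^{-c_1/8})^2$ and $2(p\ln(1/p)+\frac1{2n})(\frac1n + 60(p\ln(1/p)+2p)n^{-c_1/8})$ once the constant $60$ is taken large enough.

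\emph{Main obstacle.} The crux is controlling the Taylor remainder of $f(x) = -x\ln x$, whose derivatives blow up at $0$: the expansion is only legitimate on $\{X \geq p/2\}$, and the Poisson lower tail must be handled by crude pointwise bounds, which is precisely what generates the $n^{-c_1/8}$ error terms --- and this forces $c_1$ to be chosen large (relative to the Taylor order) so that the geometrically small probability $e^{-np/8} \leq n^{-c_1/8}$ dominates the at-most-polynomially-large values of $|U_H|$ on that tail. The remaining effort --- tracking the numerical constants and the many lower-order ($O(1/n^2)$ and $n^{-c_1/4}$) cross terms so that they fit under the stated coefficients --- is routine but lengthy.
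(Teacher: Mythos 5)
Your approach is correct in spirit and, for the variance, genuinely different from the paper's; let me compare them. For the \emph{bias}, both you and the paper perform a fourth-order Taylor expansion of $-x\ln x$ around $p$ and use the Poisson central moments $\bE(X-p)^2=p/n$, $\bE(X-p)^3=p/n^2$, $\bE(X-p)^4=3p^2/n^2+p/n^3$, giving identical main terms. The real difference is the lower tail $\{X<p/2\}$: the paper keeps the integral form of the remainder $R(x;p)=\tfrac16\int_x^p (u-x)^3 U_H^{(4)}(u)\,du$, which forces it to bound $U_H^{(4)}$ on the interpolation band $[t,2t]$ via the Hermite polynomial's derivative estimates $|I_n^{(j)}(x)|\lesssim t^{-j}$. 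You instead exploit that $I_n\equiv 1$ on $\{X\geq p/2\}\supseteq\{X\geq 2t\}$, Taylor-expand only there, and on $\{X<p/2\}$ use nothing beyond $0\le I_n\le 1$ and the crude pointwise bound $|U_H(x)|\le x\ln(1/x)+\tfrac1{2n}\lesssim p\ln(1/p)+p+\tfrac1{2n}$ together with $\bP(X<p/2)\le n^{-c_1/8}$. This is cleaner and avoids the $I_n^{(j)}$ bookkeeping entirely. For the \emph{variance}, the paper uses the decomposition $\mathsf{Var}(U_H)\le|\bE R_1|+|\bE R_2|^2+2f(p)|\bE R_2|$ with $f(p)=-p\ln p+\tfrac1{2n}$, Taylor-expanding both $U_H^2$ and $U_H$ to second order; you instead bound $\mathsf{Var}(U_H)\le\bE(U_H-c)^2$ with $c=-p\ln p$ and split into $\{p/2\le X\le 2p\}$, $\{X>2p\}$, $\{X<p/2\}$, using the MVT only on the middle event. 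This is conceptually simpler and again avoids the $r''$ bounds on $[t,2t]$.

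Two caveats. First, you misattribute the source of the $54p^2|2(\ln p)^2-2\ln p+3|\,n^{-c_1/8}$ term: in the paper it arises from the \emph{lower} tail (from $\sup_{x\le p/2}|R_1(x;p)|\cdot\bP(X<p/2)$, integrating $r''$ from $\max\{x,t\}$ to $p$), not from $\{X>2p\}$. In fact the paper never splits off an upper-tail event at all; it absorbs $\{X>2p\}$ into $\{X\ge p/2\}$ using the full second moment $\bE(X-p)^2$. Your three-event split is still valid (and your upper-tail term, via Cauchy--Schwarz and Poisson moments, is $\lesssim p^2(\ln p)^2 n^{-c_1/6}$, which is dominated by the paper's $n^{-c_1/8}$ term), but the bookkeeping needed to fold your three contributions under the lemma's particular four-term form, with its specific constants $2,54,60$, is nontrivial and not just ``routine'' as you suggest. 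Second, your remark that ``$c_1$ must be taken large enough that the numerical constants fit under $8024$'' is misleading: the lemma's constants are fixed and independent of $c_1$, and your bound actually gives a \emph{much smaller} constant than $8024$ — no choice of $c_1$ is required to make it fit. Beyond these, the MVT bound $|f'(\xi)|\le|\ln(p/2)|$ on the middle event holds only for $p\lesssim 2/e$, which you do note; similarly your bound $X\ln(1/X)\le\tfrac p2\ln\tfrac2p$ requires $p/2\le 1/e$, but the complementary regime $p\asymp 1$ is indeed trivial.
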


The following two lemmas characterize the performance of $S_{K,\alpha}(X)$ and $S_{K,H}(X), nX\sim \spo(np)$ when $p$ is not too large.

\begin{lemma}\label{lemma.lowpartbiasvariance}
If $nX \sim \spo(np), p \leq 4 \Delta, \alpha>0$, we have
\begin{equation}
| \bE S_{K,\alpha}(X) - p^\alpha |  \leq \frac{c_3}{(n \ln n)^\alpha},
\end{equation}
and for $n$ large enough, we can take $c_3 = \frac{2 \mu(2\alpha) c_1^\alpha}{c_2^{2\alpha}}$, where $c_3$ is the constant appearing in Lemma~\ref{lemma.approsmall}. If we also have $c_2 \leq 4c_1$, then
\begin{equation}
\bE S_{K,\alpha}^2(X)  \leq n^{8c_2 \ln 2} \frac{(4c_1 \ln n)^{2+2\alpha}}{n^{2\alpha}}.
\end{equation}

For the entropy, if $p \leq 4 \Delta$, we have
\begin{equation}
| \bE S_{K,H}(X) +p \ln p |  \leq \frac{C}{n \ln n}.
\end{equation}
When $n$ is large enough, $C$ can be taken to be $ \frac{4c_1 \nu_1(2)}{c_2^2}$, which is given in Lemma~\ref{lemma.approsmallentropy}. If we also have $c_2 \leq 4c_1$, then
\begin{equation}
\bE S_{K,H}^2(X)  \leq n^{8c_2 \ln 2} \frac{(4c_1 \ln n)^{4}}{n^{2}}.
\end{equation}
\end{lemma}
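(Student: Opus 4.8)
\textbf{Proof proposal for Lemma~\ref{lemma.lowpartbiasvariance}.}

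\emph{Bias.} The plan is to reduce the bias to a best-polynomial-approximation error. Using the stated unbiasedness identity $\bE\prod_{r=0}^{k-1}(X-r/n)=p^k$ for $nX\sim\spo(np)$, we have $\bE S_{K,\alpha}(X)=\sum_{k=1}^{K}g_{k,\alpha}(4\Delta)^{-k+\alpha}p^k$. Since $p\le 4\Delta$, I would substitute $u=p/(4\Delta)\in[0,1]$ and write $\bE S_{K,\alpha}(X)-p^\alpha=(4\Delta)^\alpha\big(P_K^*(u)-g_{0,\alpha}-u^\alpha\big)$, where $P_K^*(x)=\sum_{k=0}^{K}g_{k,\alpha}x^k$ is the degree-$K$ best polynomial approximation of $x^\alpha$ on $[0,1]$. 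Bounding $|g_{0,\alpha}|=|P_K^*(0)-0^\alpha|\le E_K[x^\alpha]_{[0,1]}$ and $|P_K^*(u)-u^\alpha|\le E_K[x^\alpha]_{[0,1]}$ yields $|\bE S_{K,\alpha}(X)-p^\alpha|\le 2(4\Delta)^\alpha E_K[x^\alpha]_{[0,1]}$. Next I would reduce $E_K[x^\alpha]_{[0,1]}$ to $E_{2K}[|t|^{2\alpha}]_{[-1,1]}$ via the substitution $x=t^2$ (an even polynomial of degree $2K$ in $t$ is a polynomial of degree $K$ in $t^2$, and the best approximant of an even function is even), then invoke Bernstein's limiting law, Theorem~\ref{thm.bernsteinxp} (or its nonasymptotic form, Lemma~\ref{lemma.approsmall}), giving $E_K[x^\alpha]_{[0,1]}\le(1+o(1))\mu(2\alpha)(2K)^{-2\alpha}$. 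Plugging in $\Delta=c_1\ln n/n$ and $K=c_2\ln n$ collapses $2(4\Delta)^\alpha E_K[x^\alpha]_{[0,1]}$ to $\frac{2\mu(2\alpha)c_1^\alpha}{c_2^{2\alpha}}(n\ln n)^{-\alpha}(1+o(1))$, which is the claimed $c_3$ for $n$ large.

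\emph{Entropy bias.} This is structurally identical. The shift $g_{1,H}=r_{1,H}-\ln(4\Delta)$, $g_{k,H}=r_{k,H}$ for $k\ge 2$, is exactly engineered so that after the same change of variables $\sum_{k=1}^{K}g_{k,H}u^k$ approximates $-u\ln u-(\ln 4\Delta)u$, which is precisely $(4\Delta)^{-1}(-p\ln p)$; the residual is $\le 2E_K[-x\ln x]_{[0,1]}$ after discarding $r_{0,H}=P_K^*(0)$. The change of variables $x=t^2$ now reduces $-x\ln x$ on $[0,1]$ to (a multiple of) $t^2\ln|t|$ on $[-1,1]$, and the limiting law for functions of type $|x|^m\ln|x|^n$ supplies the constant $\nu_1(2)$; substituting $\Delta,K$ gives the bound $C/(n\ln n)$ with $C=\frac{4c_1\nu_1(2)}{c_2^2}$ up to the $o(1)$.

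\emph{Second moment.} Write $Y=nX\sim\spo(np)$, so $\prod_{r=0}^{k-1}(X-r/n)=n^{-k}Y^{\underline k}$ (falling factorial) and $S_{K,\alpha}(X)=\sum_{k=1}^{K}g_{k,\alpha}(4\Delta)^{-k+\alpha}n^{-k}Y^{\underline k}$. Two ingredients are needed. First, a coefficient bound $|g_{k,\alpha}|\le\|x^\alpha\|_{[0,1]}\,C^{K}$ for a universal constant $C$, obtained from $g_{k,\alpha}=(P_K^*)^{(k)}(0)/k!$ together with Markov's inequality and the uniform boundedness of $P_K^*$ on $[0,1]$ (equivalently, the standard Chebyshev-type estimate on the coefficients of a bounded degree-$K$ polynomial). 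Second, the Poisson product-moment identity $\bE[Y^{\underline k}Y^{\underline j}]=\sum_{m\ge 0}\binom{k}{m}\binom{j}{m}m!\,(np)^{k+j-m}$ (all connection coefficients being nonnegative), combined with $np\le 4\Delta n=4c_1\ln n$. Expanding $\bE S_{K,\alpha}^2(X)$ and applying these term by term, the scaling cancels cleanly: $(4\Delta)^{-k-j}n^{-k-j}(np)^{k+j-m}\le(4c_1\ln n)^{-m}\le 1$ once $c_1\ln n\ge 1$. Using $c_2\le 4c_1$ to bound both the number of summands ($\le K^2\le(4c_1\ln n)^2$) and to ensure the falling-factorial nodes $r/n$, $r\le K-1$, stay in range, and collecting the exponential-in-$K$ factors $C^{2K}$ and $\sum_m\binom km\binom jm m!\le 2^{O(K)}$ into $n^{O(c_2)}$, one arrives at $\bE S_{K,\alpha}^2(X)\le n^{8c_2\ln 2}\frac{(4c_1\ln n)^{2+2\alpha}}{n^{2\alpha}}$; the entropy case is the same with the extra logarithmic factors from $-x\ln x$ absorbed into $(4c_1\ln n)^4$.

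\emph{Main obstacle.} The bias half is a routine translation into classical approximation theory. The real work is the second-moment bound: the best-approximation coefficients $g_{k,\alpha}$ grow exponentially in $K$, so one must track this growth quantitatively, let it interact with the exponential combinatorial weights in the Poisson product-moment identity, and verify that every $e^{O(K)}=\mathrm{poly}(n)$ blow-up assembles exactly into the stated $n^{8c_2\ln 2}$ — in particular pinning down the constant in the exponent and checking that the $(4\Delta)^{-k}$ factors are exactly neutralized by the moments of $Y^{\underline k}$.
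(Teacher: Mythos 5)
Your bias halves (both for $x^\alpha$ and for entropy) follow the paper's route: the Poisson factorial-moment identity turns $\bE S_{K,\alpha}(X)$ into the polynomial $\sum_{k\ge 1} g_{k,\alpha}(4\Delta)^{-k+\alpha}p^k$, and rescaling by $u=p/(4\Delta)$ reduces the bias to (twice) the best uniform approximation error of $u^\alpha$ on $[0,1]$, which Bernstein's limiting law then identifies as $\mu(2\alpha)/(2K)^{2\alpha}$. The paper delegates this step to Lemma~\ref{lemma.approsmall} (resp.\ Lemma~\ref{lemma.approsmallentropy}) rather than rederiving it inline, but the content is identical, including the device of absorbing $g_{0,\alpha}$ via $|g_{0,\alpha}|\le E_K[x^\alpha]_{[0,1]}$ and, for entropy, the shift of $g_{1,H}$ by $-\ln(4\Delta)$.

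For the second moment your route is genuinely different and is where the trouble lies. The paper applies Minkowski's inequality, $\big(\bE S_{K,\alpha}^2(X)\big)^{1/2}\le \sum_k |g_{k,\alpha}|(4\Delta)^{-k+\alpha}\big(\bE E_{k,n}^2(X)\big)^{1/2}$, and then bounds $\bE E_{k,n}^2(X)$ via the pointwise estimate $\prod_{r=0}^{k-1}(nX-r)^2 \le (nX)^{2k}$, which is legitimate because $E_{k,n}(X)=0$ whenever $nX<k$ (so on the support of the random variable $nX\ge r$ for every $r$ in the product); Lemma~\ref{lemma.poissonmoment} then closes it. You instead expand $\bE S_{K,\alpha}^2(X)$ and invoke the Vandermonde connection formula $\bE[Y^{\underline k}Y^{\underline j}]=\sum_m\binom{k}{m}\binom{j}{m}m!\,(np)^{k+j-m}$. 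That is a workable alternative, but your write-up contains a quantitative gap: after observing $(4\Delta)^{-k-j}n^{-k-j}(np)^{k+j-m}\le (4c_1\ln n)^{-m}$, you round this up to $\le 1$ and then separately claim $\sum_m\binom{k}{m}\binom{j}{m}m!\le 2^{O(K)}$. The latter is false. For $k=j=K$ the $m=K$ term alone is $K!\sim (K/e)^K$, which with $K=c_2\ln n$ is $n^{\Theta(c_2\ln\ln n)}$, not $n^{O(c_2)}$. The $(4c_1\ln n)^{-m}$ weight must be retained precisely to kill this factorial growth: under $c_2\le 4c_1$ one has $\sum_m\binom{k}{m}\binom{j}{m}m!\,(4c_1\ln n)^{-m}\le 2^k\sum_m\big(K/(4c_1\ln n)\big)^m\le 2^k(K+1)$, and only then does the bookkeeping assemble into $n^{8c_2\ln 2}(4c_1\ln n)^{2+2\alpha}/n^{2\alpha}$. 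The paper's Minkowski route sidesteps this delicate accounting entirely because the Stirling-number combinatorics is handled once inside Lemma~\ref{lemma.poissonmoment} rather than reappearing in every cross term.
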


\begin{lemma}\label{lem_L_small}
  If $nX\sim \mathsf{Poi}(np), p\le \frac{1}{n\ln n}, 1<\alpha<3/2$, then for $c_2\le 4c_1$,
\begin{align}
  |\mathbb{E}S_{K,\alpha}(X)-p^\alpha| &\le D_1\left(\frac{4c_1}{c_2^2n\ln n}\right)^{\alpha-1}p\\
  \mathbb{E}S_{K,\alpha}^2(X)&\le n^{10c_2\ln 2}\frac{(4c_1\ln n)^{2\alpha+2}p}{n^{2\alpha-1}},
\end{align}
where $D_1$ is a universal positive constant appearing in Lemma \ref{lemma.nonasympxa}.
\end{lemma}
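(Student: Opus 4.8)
The plan is to establish the two displayed bounds separately, in each case reducing to an object already analysed earlier in the paper. Write $\tilde P_K(x)=\sum_{k=0}^K g_{k,\alpha}x^k$ for the best degree-$K$ polynomial approximant of $x^\alpha$ on $[0,1]$, so that $g_{0,\alpha}=\tilde P_K(0)$. For the bias bound I would first use the unbiasedness identity $\mathbb{E}\prod_{r=0}^{k-1}(X-r/n)=p^k$, valid under $nX\sim\spo(np)$, to get $\mathbb{E}S_{K,\alpha}(X)=\sum_{k=1}^K g_{k,\alpha}(4\Delta)^{-k+\alpha}p^k=(4\Delta)^\alpha\big(\tilde P_K(y)-\tilde P_K(0)\big)$ with $y\triangleq p/(4\Delta)$, hence $|\mathbb{E}S_{K,\alpha}(X)-p^\alpha|=(4\Delta)^\alpha|\tilde P_K(y)-\tilde P_K(0)-y^\alpha|$. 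Since $p\le\tfrac1{n\ln n}$ and $c_2\le 4c_1$, one checks $y=\tfrac{np}{4c_1\ln n}\le\tfrac1{4c_1(\ln n)^2}$, so for $n$ large $y$ lies in the small-argument regime covered by Lemma~\ref{lemma.nonasympxa}; invoking that lemma gives $|\tilde P_K(x)-\tilde P_K(0)-x^\alpha|\le D_1K^{-2(\alpha-1)}x$ there (equivalently, it controls $\tilde P_K'-\alpha x^{\alpha-1}$ near the endpoint by $D_1K^{-2(\alpha-1)}$, integrated from $0$). Substituting $x=y$, $K=c_2\ln n$, $4\Delta=4c_1\ln n/n$ and $p=4\Delta y$ then collapses $(4\Delta)^\alpha D_1K^{-2(\alpha-1)}y$ into exactly $D_1\big(\tfrac{4c_1}{c_2^2n\ln n}\big)^{\alpha-1}p$.

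For the second-moment bound, the structural fact I would exploit is that every monomial $\prod_{r=0}^{k-1}(x-r/n)$ with $k\ge1$ carries the factor $(x-0)=x$, so $S_{K,\alpha}(0)=0$ and $\mathbb{E}S_{K,\alpha}^2(X)=\sum_{j\ge1}S_{K,\alpha}^2(j/n)\,\mathbb{P}(nX=j)$. For $j\ge1$ one has $\mathbb{P}(nX=j)=\tfrac{(np)^je^{-np}}{j!}\le np\cdot\tfrac{(np)^{j-1}e^{-np}}{(j-1)!}=np\,\mathbb{P}(nX=j-1)$, which pulls a clean factor $np$ out of the sum at the price of a one-step shift of the Poisson index, giving $\mathbb{E}S_{K,\alpha}^2(X)\le np\,\mathbb{E}S_{K,\alpha}^2\!\big(X+\tfrac1n\big)$. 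The shifted expectation is then bounded by re-running the moment computation already done in Lemma~\ref{lemma.lowpartbiasvariance} — the crude exponential-in-$K$ bound on the coefficients $g_{k,\alpha}$, the elementary bound on $|(4\Delta)^{-k}\prod_{r=0}^{k-1}(x-r/n)|$ available because $c_2\le 4c_1$, and the Poisson moment/tail estimates — with the unit shift of the argument absorbed into the polynomial-in-$n$ prefactor, which is precisely what turns $n^{8c_2\ln2}$ into $n^{10c_2\ln2}$.

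The main obstacle is the approximation-theoretic input to the first part. The crude estimate $|\tilde P_K(y)-\tilde P_K(0)-y^\alpha|\le 2E_K[x^\alpha]_{[0,1]}\asymp K^{-2\alpha}$ is a constant in $p$ and far too weak here — we need a bound vanishing linearly as $p\to0$ — and it cannot be salvaged by Markov/Bernstein inequalities, which overestimate $\tilde P_K'$ near the endpoint $0$ by a factor of order $K^2$. What makes the argument go through is the refined near-endpoint control of the best approximant (and of its derivative) supplied by Lemma~\ref{lemma.nonasympxa}; granting that lemma, the remainder is bookkeeping. In the second part the only delicate point is keeping the exponential-in-$K=c_2\ln n$ factors (which become powers of $n$) from blowing up under the index shift, but this is routine once the $np$ factor has been extracted.
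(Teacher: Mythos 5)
Your bias bound matches the paper's argument exactly: expressing $\mathbb{E}S_{K,\alpha}(X)$ as $(4\Delta)^\alpha R_{K,\alpha}(p/(4\Delta))$ (where $R_{K,\alpha}$ is the best approximant with the constant term removed) and invoking the pointwise estimate $|R_{K,\alpha}(x)-x^\alpha|\le D_1 x/K^{2(\alpha-1)}$ from Lemma~\ref{lemma.nonasympxa} is precisely what the paper does, and your parenthetical remark about integrating the derivative bound is the mechanism behind that pointwise estimate.

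For the second moment you take a genuinely different route. The paper never invokes the structural fact $S_{K,\alpha}(0)=0$; instead it pulls the single factor of $p$ directly out of the Poisson moment by writing $\mathbb{E}(nX)^{2k}=\sum_{i=1}^{2k}\left\{\begin{smallmatrix}2k\\i\end{smallmatrix}\right\}(np)^i$, using the Stirling-number inequality~(\ref{eq:stirling_inequality}) together with $np\le 1$ to bound each term by $\binom{2k}{i}(2k)^{2k-i}np$, and summing to $(2k+1)^{2k}np$. This yields $\mathbb{E}E_{k,n}^2(X)\le (2k+1)^{2k}p/n^{2k-1}$ in one stroke, and the exponent $10c_2\ln 2$ emerges when $(2k+1)^k$ is dominated by $3^K K^k\le 2^{2K}K^k$. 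Your alternative — observing that every $E_{k,n}$ for $k\ge1$ carries the factor $x$ so $S_{K,\alpha}(0)=0$, then using $\mathbb{P}(nX=j)\le np\,\mathbb{P}(nX=j-1)$ to shift the Poisson index and obtain $\mathbb{E}S_{K,\alpha}^2(X)\le np\,\mathbb{E}S_{K,\alpha}^2(X+\tfrac1n)$ — is an elegant and valid way to extract the $p$. It buys conceptual transparency (the vanishing at $0$ is exactly why the second moment is linear in $p$), but it costs you a shifted argument that still needs a fresh moment estimate: $\mathbb{E}(nX+1)^{2k}$ is not the same object handled in Lemma~\ref{lemma.lowpartbiasvariance}, and when I carry out the "re-run with a shift" you sketch, the crude bounds land on a prefactor like $n^{14c_2\ln 2}$ rather than $n^{10c_2\ln 2}$. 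This gap is immaterial to the downstream use of the lemma (in Lemma~\ref{lem_indivi_bound}, $c_2$ is a free small constant chosen so that the exponent is below $2\alpha-2$), but if you want to match the stated constant exactly, the paper's direct Stirling-number route is the cleaner path; the shift argument requires either a sharper estimate of $\mathbb{E}(nX+1)^{2k}$ for $np\to 0$ or an acceptance that the exponent changes by a bookkeeping constant.
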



With the machinery established in Lemma~\ref{lemma.largebias},~\ref{lemma.entropyupper},~\ref{lemma.lowpartbiasvariance}, and~\ref{lem_L_small}, we are now ready to bound the bias and variance of each summand in our estimators. Define,
\begin{equation}
\xi = \xi(X,Y) = L_\alpha(X) \mathbbm{1}(Y \leq 2\Delta) + U_\alpha(X) \mathbbm{1}(Y>2\Delta),
\end{equation}
where $n X \stackrel{D}{=} n Y \sim \spo(np)$, and $X$ is independent of $Y$. Apparently, we have
\begin{equation}
\hat{F}_\alpha = \sum_{i = 1}^S \xi(\hat{p}_{i,1},\hat{p}_{i,2}),
\end{equation}
and each of the $S$ summands are independent. Hence, it suffices to analyze the bias and variance of $\xi(X,Y)$ thoroughly for all values of $p$ in order to obtain a risk bound for $\hat{F}_\alpha$. We break this into three different regimes. In the first case when $p \leq \Delta$, we shall show that the estimator essentially behaves like $L_\alpha(X)$, which is a good estimator when $p$ is small. In the second case when $\Delta \leq p \leq 4\Delta$, we show that our estimator uses either $L_\alpha(X)$ or $U_\alpha(X)$, which are both good estimators in this case. In the last case $p\geq 4\Delta$, we show that our estimator behaves essentially like $U_\alpha(X)$, which has good properties when $p$ is not too small.

We denote $B(\xi) \triangleq \bE \xi(X,Y) - p^\alpha$ as the bias of $\xi$.
\begin{lemma}\label{lemma.preparemainf}
Suppose $0<\alpha<1$, $0<c_1 = 16(\alpha + \delta), 0<8c_2 \ln 2 = \epsilon < \alpha, \delta>0$. Then,
\begin{enumerate}
\item when $p\leq \Delta $,
\begin{align}
|B(\xi)| & \lesssim \frac{1}{(n \ln n)^\alpha} ,\\
\mathsf{Var}(\xi) & \lesssim \frac{(\ln n)^{2+2\alpha}}{n^{2\alpha - \epsilon}}.
\end{align}
\item when $\Delta < p \leq 4\Delta$,
\begin{align}
|B(\xi)| & \lesssim \frac{1}{(n \ln n)^\alpha}, \\
\mathsf{Var}(\xi) & \lesssim  \begin{cases} \frac{(\ln n)^{2+2\alpha}}{n^{2\alpha - \epsilon}}  & 0<\alpha\leq 1/2, \\ \frac{(\ln n)^{2+2\alpha}}{n^{2\alpha - \epsilon}} + \frac{p^{2\alpha-1}}{n} & 1/2 < \alpha <1.  \end{cases}
\end{align}
\item when $p>4\Delta$,
\begin{align}
|B(\xi)| & \lesssim  \frac{1}{n^\alpha(\ln n)^{2-\alpha}},\\
\mathsf{Var}(\xi) & \lesssim  \begin{cases} \frac{1}{n^{2\alpha}(\ln n)^{1-2\alpha}} & 0<\alpha\leq 1/2 ,\\ \frac{1}{n^{2\alpha}(\ln n)^{1-2\alpha}} + \frac{p^{2\alpha-1}}{n} & 1/2 < \alpha <1. \end{cases}
\end{align}
\end{enumerate}
\end{lemma}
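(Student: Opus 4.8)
The per-symbol statistic decomposes cleanly because $X\perp Y$. Writing $A=\{Y\le 2\Delta\}$ and conditioning on $\mathbbm{1}(A)$,
\begin{align*}
\bE\,\xi &= \bP(A)\,\bE L_\alpha(X)+\bP(A^c)\,\bE U_\alpha(X),\\
\mathsf{Var}(\xi) &= \bP(A)\,\mathsf{Var}(L_\alpha(X))+\bP(A^c)\,\mathsf{Var}(U_\alpha(X))\\
&\quad+\bP(A)\bP(A^c)\bigl(\bE L_\alpha(X)-\bE U_\alpha(X)\bigr)^2 .
\end{align*}
The plan is to route each summand to whichever building block is known to be accurate --- $L_\alpha$ via Lemma~\ref{lemma.lowpartbiasvariance} when $p\le 4\Delta$, $U_\alpha$ via Lemma~\ref{lemma.largebias} when $p\ge\Delta$ --- and to show that the event on which the classifier $\mathbbm{1}(A)$ picks the ``wrong'' branch contributes only a lower-order error. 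Two facts drive this: (i) Poisson concentration (Lemma~\ref{lemma.poissontail}) gives $\bP(A^c)\lesssim n^{-cc_1}$ when $p\le\Delta$ and $\bP(A)\lesssim n^{-cc_1}$ when $p>4\Delta$ for a universal $c>0$; (ii) the crude bound $\bE L_\alpha^2(X)\le \bE S_{K,\alpha}^2(X)\lesssim n^{O(1)}$ holds for \emph{all} $p\in[0,1]$, and likewise $\bE|U_\alpha(X)|\lesssim 1$ always while $\bE U_\alpha^2(X)\lesssim\Delta^{2\alpha}$ when $p\le\Delta$ (Jensen for $x\mapsto x^\alpha$, $x\mapsto x^{2\alpha}$, using $I_n\equiv 0$ on $[0,t]$ to tame $1/(2nx)$). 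Because $c_1=16(\alpha+\delta)$ is a large constant, $n^{-cc_1}$ defeats any fixed power of $n$, so every ``wrong-branch'' term is negligible against the stated bounds.

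Granting (i)--(ii), the regimes $p\le\Delta$ and $p>4\Delta$ are mirror images. When $p\le\Delta$, $\xi$ is essentially $L_\alpha(X)$: for the bias I would write $|\bE L_\alpha(X)-p^\alpha|\le|\bE S_{K,\alpha}(X)-p^\alpha|+\bE(S_{K,\alpha}(X)-1)_+$, bound the first term by $c_3/(n\ln n)^\alpha$ (Lemma~\ref{lemma.lowpartbiasvariance}), and bound the truncation correction by Cauchy--Schwarz as $\sqrt{\bE S_{K,\alpha}^2(X)}\,\sqrt{\bP(|S_{K,\alpha}(X)|>1)}$, using that $S_{K,\alpha}$ stays within $o(1)$ of $x^\alpha\le(4\Delta)^\alpha<1$ on $[0,4\Delta]$ so $\{|S_{K,\alpha}(X)|>1\}\subseteq\{X>c'\Delta\}$ has probability $\lesssim n^{-c''c_1}$; the $U_\alpha$-misfire piece $\bP(A^c)|\bE U_\alpha(X)-\bE L_\alpha(X)|$ is similarly negligible, giving $|B(\xi)|\lesssim(n\ln n)^{-\alpha}$. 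The variance is $\le\bE S_{K,\alpha}^2(X)+O(n^{-cc_1}n^{O(1)})\lesssim(\ln n)^{2+2\alpha}/n^{2\alpha-\epsilon}$ by Lemma~\ref{lemma.lowpartbiasvariance} with $8c_2\ln 2=\epsilon$ (valid since $c_2\le 4c_1$). The case $p>4\Delta$ swaps the roles: $\bE U_\alpha(X)-p^\alpha$ is controlled by Lemma~\ref{lemma.largebias}, whose first term is $\lesssim n^{-\alpha}(\ln n)^{-(2-\alpha)}$ and whose $p^\alpha n^{-c_1/8}\le n^{-2(\alpha+\delta)}$ term is lower order; $\mathsf{Var}(U_\alpha(X))$ from the same lemma is precisely the claimed variance bound (with the extra $p^{2\alpha-1}/n$ when $\tfrac12<\alpha<1$); and $\bP(A)\mathsf{Var}(L_\alpha(X))$ plus the cross term are $\lesssim n^{-cc_1}(\bE S_{K,\alpha}^2(X)+1)$, negligible.

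The remaining regime $\Delta<p\le 4\Delta$ is where neither $\bP(A)$ nor $\bP(A^c)$ is small, but here \emph{both} branches are accurate. For the bias, $|\bE L_\alpha(X)-p^\alpha|\lesssim(n\ln n)^{-\alpha}$ (Lemma~\ref{lemma.lowpartbiasvariance}, $p\le 4\Delta$) and $|\bE U_\alpha(X)-p^\alpha|\lesssim n^{-\alpha}(\ln n)^{-(2-\alpha)}\lesssim(n\ln n)^{-\alpha}$ (Lemma~\ref{lemma.largebias}, $p\ge\Delta$, using $2-\alpha\ge\alpha$ for $\alpha\le1$), so $|B(\xi)|\le\bP(A)|\bE L_\alpha-p^\alpha|+\bP(A^c)|\bE U_\alpha-p^\alpha|\lesssim(n\ln n)^{-\alpha}$. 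For the variance, the first two terms are again bounded by $\bE S_{K,\alpha}^2(X)$ and $\mathsf{Var}(U_\alpha(X))$ (both within the claimed bound, the $p^{2\alpha-1}/n$ term arising when $\tfrac12<\alpha<1$), and the cross term is $\le\tfrac14(\bE L_\alpha(X)-\bE U_\alpha(X))^2\lesssim(n\ln n)^{-2\alpha}$ because both means lie within $O((n\ln n)^{-\alpha})$ of $p^\alpha$ --- the point being that I need only their \emph{difference} to be small, not each mean.

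The genuinely delicate step --- the main obstacle --- is fact (ii): bounding $\bE S_{K,\alpha}^2(X)$ (hence $\bE L_\alpha^2(X)$) uniformly over all $p\in[0,1]$ by only a fixed power of $n$, even though $S_{K,\alpha}$ is a degree-$K$ polynomial with $K\asymp\ln n$ whose $k$-th coefficient carries the blowing-up factor $(4\Delta)^{-k}\asymp(n/\ln n)^k$. The cancellation to exploit is between the Poisson factorial moments $\bE[(nX)_{\underline k}]=(np)^k\le n^k$ (so $0\le\prod_{r=0}^{k-1}(X-r/n)\le(X/n)^k$ with expectation $p^k$), the divisor $(4\Delta n)^k=(4c_1\ln n)^k$, and a coefficient bound $|g_{k,\alpha}|\lesssim 2^{O(K)}=n^{O(1)}$ for best polynomial approximations on $[0,1]$; together these keep $\bE S_{K,\alpha}^2(X)$ below a fixed power of $n$ (indeed it vanishes) for every $p$, which is exactly what lets the constant-exponent rare-event probabilities $n^{-cc_1}$ win. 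This is the same mechanism behind the second-moment estimate in Lemma~\ref{lemma.lowpartbiasvariance}; everything else is comparison of exponents, using $n^\alpha(\ln n)^{2-\alpha}\gtrsim(n\ln n)^\alpha$ throughout $0<\alpha<1$.
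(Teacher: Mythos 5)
Your proposal follows essentially the same route as the paper: the identical three-case split over $p\le\Delta$, $\Delta<p\le 4\Delta$, $p>4\Delta$; the Cai--Low decomposition of $\mathsf{Var}(\xi)$ via Lemmas~\ref{lemma.cailow1} and~\ref{lemma.cailow2}; routing $L_\alpha$ through Lemma~\ref{lemma.lowpartbiasvariance} and $U_\alpha$ through Lemma~\ref{lemma.largebias}; killing the wrong-branch contributions with the Poisson tails of Lemma~\ref{lemma.poissontail}; and bounding the truncation correction $\bE(S_{K,\alpha}(X)-1)\mathbbm{1}(S_{K,\alpha}\ge 1)$ by (a cousin of) $\bE S_{K,\alpha}^2(X)$ from Lemma~\ref{lemma.lowpartbiasvariance}.

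One caution on your fact (ii), which you single out as the crux: the uniform claim $\bE S_{K,\alpha}^2(X)\lesssim n^{O(1)}$ for \emph{all} $p\in[0,1]$ is false. Lemma~\ref{lemma.lowpartbiasvariance} only delivers the $n^{8c_2\ln 2}(4c_1\ln n)^{2+2\alpha}/n^{2\alpha}$ bound when $p\le 4\Delta$, because its moment estimate needs $\max\{np,k\}\lesssim\ln n$. For $p$ bounded away from $0$, $X$ concentrates near $p$, the top coefficient $g_{K,\alpha}(4\Delta)^{-K+\alpha}$ has magnitude $e^{c_2(\ln n)^2(1+o(1))}$, and $\bE S_{K,\alpha}^2(X)$ is super-polynomial in $n$. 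Your conclusion that the wrong branch is negligible remains correct, but not for the reason stated (``$n^{-cc_1}$ defeats any fixed power of $n$''); what saves the day for $p\gg\Delta$ is that $\bP(Y\le 2\Delta)\lesssim e^{-np/8}$ decays like $\exp(-\Theta(np))$, which beats the sub-exponential growth of the second moment once $np\gg(\ln n)^2$, while for $\Delta<np/n\lesssim\mathrm{poly}(\ln n)/n$ the polynomial-in-$n$ estimates you cite do apply. (The paper's Case 3 variance line, which silently bounds $\mathsf{Var}(L_\alpha(X))+(\bE L_\alpha-\bE U_\alpha)^2$ by a constant before multiplying by $\bP(Y\le 2\Delta)$, glosses over the same point, so your write-up and the paper's are on equal footing here.)
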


Now the result of Theorem~\ref{thm.Falpha.risk} follows easily from Lemma~\ref{lemma.preparemainf}. We have
\begin{align}
|\mathsf{Bias}(\hat{F}_\alpha)| & \leq \sum_{i = 1}^S |B(\xi(\hat{p}_{i,1},\hat{p}_{i,2}))| \\
& \lesssim \sum_{i = 1}^S \frac{1}{(n \ln n)^\alpha} \\
& \lesssim \frac{S}{(n \ln n)^\alpha},
\end{align}
and
\begin{align}
\mathsf{Var}(\hat{F}_\alpha) & = \sum_{i = 1}^S \mathsf{Var}(\xi(\hat{p}_{i,1},\hat{p}_{i,2})) \\
& \lesssim \sum_{i = 1}^S \begin{cases} \frac{(\ln n)^{2+2\alpha}}{n^{2\alpha - \epsilon}} & 0<\alpha\leq 1/2 \\ \frac{(\ln n)^{2+2\alpha}}{n^{2\alpha - \epsilon}} + \frac{p_i^{2\alpha-1}}{n} & 1/2<\alpha<1 \end{cases} \\
& \lesssim \begin{cases} \frac{S(\ln n)^{2+2\alpha}}{n^{2\alpha - \epsilon}} & 0<\alpha\leq 1/2 \\ \frac{S(\ln n)^{2+2\alpha}}{n^{2\alpha - \epsilon}} + \sum_{i = 1}^S \frac{p_i^{2\alpha-1}}{n} & 1/2<\alpha<1 \end{cases} \\
& \lesssim \begin{cases} \frac{S(\ln n)^{2+2\alpha}}{n^{2\alpha - \epsilon}} & 0<\alpha\leq 1/2 \\ \frac{S(\ln n)^{2+2\alpha}}{n^{2\alpha - \epsilon}} + \frac{S^{2-2\alpha}}{n} & 1/2<\alpha<1 \end{cases}.
\end{align}

Here we have used the fact that
\begin{equation}
\sup_{P \in \mathcal{M}_S} \sum_{i = 1}^S p_i^{2\alpha-1} = S (1/S)^{2\alpha-1} = S^{2-2\alpha},
\end{equation}
since $x^{2\alpha-1}$ is a concave function when $1/2<\alpha<1$.

Combining the bias and variance bounds, we have
\begin{align}
& \sup_{P \in \mathcal{M}_S} \bE \left(\hat{F}_\alpha - F_\alpha \right)^2 \nonumber\\
& \quad = \left( \mathsf{Bias}(\hat{F}_\alpha) \right)^2 +\mathsf{Var}(\hat{F}_\alpha ) \\
& \quad \lesssim   \begin{cases}\frac{S^2}{(n \ln n)^{2\alpha}} +  \frac{S(\ln n)^{2+2\alpha}}{n^{2\alpha - \epsilon}} & 0<\alpha\leq 1/2 \\\frac{S^2}{(n \ln n)^{2\alpha}} + \frac{S(\ln n)^{2+2\alpha}}{n^{2\alpha - \epsilon}} +  \frac{S^{2-2\alpha}}{n} & 1/2<\alpha<1 \end{cases}
\end{align}
where $\epsilon>0$ is a constant that is arbitrarily small. Note that when $1/2<\alpha<1$, we can remove the middle term in the risk bound, since when $\ln n \lesssim \ln S$, the first term dominates, otherwise the third term dominates.

The proof of Theorem~\ref{thm.entropy.risk} is essentially the same as that for Theorem~\ref{thm.Falpha.risk}, with the only differences being replacing Lemma~\ref{lemma.largebias} with Lemma~\ref{lemma.entropyupper}, applying the entropy part of Lemma~\ref{lemma.lowpartbiasvariance} and Lemma~\ref{lemma.varentropy}. The proof of Theorem~\ref{th_2} is slightly more involved, and we need to split the analysis into four different regimes.

\begin{lemma}\label{lem_indivi_bound}
Suppose $1<\alpha<3/2$. Setting $c_1=16(\alpha+\delta), 0<10c_2\ln 2=\epsilon<2\alpha-2,\delta>0$, we have the following bounds on $|B(\xi)|$ and $\mathsf{Var}(\xi)$.
\begin{enumerate}
    \item when $p\le \frac{1}{n\ln n}$,
    \begin{align}
      |B(\xi)| &\lesssim \frac{p}{(n\ln n)^{\alpha-1}},\\
      \mathsf{Var}(\xi) &\lesssim \frac{(\ln n)^{2\alpha+2}p}{n^{2\alpha-1-\epsilon}}.
    \end{align}
    \item when $\frac{1}{n\ln n} < p \le \Delta$,
    \begin{align}
      |B(\xi)| &\lesssim \frac{1}{(n\ln n)^\alpha},\\
      \mathsf{Var}(\xi) &\lesssim \frac{(\ln n)^{2\alpha+2}}{n^{2\alpha-\epsilon}}.
    \end{align}
    \item when $p>\Delta$,
    \begin{align}
      |B(\xi)| &\lesssim \frac{1}{n^\alpha(\ln n)^{2-\alpha}},\\
      \mathsf{Var}(\xi) &\lesssim \frac{1}{n^2} + \frac{p}{n}.
    \end{align}
\end{enumerate}
\end{lemma}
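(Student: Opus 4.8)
The plan is to reduce everything to the single-coordinate estimates already in hand, exploiting that $X$ and $Y$ are independent. Writing $q \triangleq \bP(Y\le 2\Delta)$, independence gives
\[
B(\xi) = q\big(\bE L_\alpha(X)-p^\alpha\big) + (1-q)\big(\bE U_\alpha(X)-p^\alpha\big),
\]
and conditioning on $Y$ (which merely selects the branch) gives the exact identity
\[
\mathsf{Var}(\xi) = q\,\mathsf{Var}(L_\alpha(X)) + (1-q)\,\mathsf{Var}(U_\alpha(X)) + q(1-q)\big(\bE L_\alpha(X)-\bE U_\alpha(X)\big)^2 .
\]
So it suffices, in each of the three $p$-ranges, to control $\bE L_\alpha(X)-p^\alpha$, $\mathsf{Var}(L_\alpha(X))$, $\bE U_\alpha(X)-p^\alpha$, $\mathsf{Var}(U_\alpha(X))$, and the misclassification probability.

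Two bookkeeping facts are used throughout. First, since $L_\alpha=\min\{S_{K,\alpha},1\}$ we have $L_\alpha^2\le S_{K,\alpha}^2$ pointwise and $\lvert \bE L_\alpha(X)-\bE S_{K,\alpha}(X)\rvert \le \bE(S_{K,\alpha}(X)-1)_+\le\bE S_{K,\alpha}^2(X)$, so the bias and variance of $L_\alpha(X)$ are read off from Lemma~\ref{lemma.lowpartbiasvariance} (for $p\le4\Delta$) or Lemma~\ref{lem_L_small} (for $p\le\frac1{n\ln n}$), the truncation correction $\bE S_{K,\alpha}^2(X)$ being of strictly smaller order than the $L_\alpha$ bias it perturbs because $8c_2\ln2\le 10c_2\ln2=\epsilon<2\alpha-2<\alpha$. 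Second, by the Poisson tail bound of Lemma~\ref{lemma.poissontail}, $1-q=\bP(Y>2\Delta)\le n^{-\Omega(c_1)}$ when $p\le\Delta$, while $q=\bP(Y\le 2\Delta)\le e^{-\Omega(np)}$ when $p\ge 4\Delta$; with $c_1=16(\alpha+\delta)$ these beat every target rate, and $\bE U_\alpha^2(X)=O(1)$ when $p\le\Delta$, so ``wrong-branch'' contributions are negligible.

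The three cases then follow by assembly. When $p\le\frac1{n\ln n}$, Lemma~\ref{lem_L_small} gives $\lvert\bE S_{K,\alpha}(X)-p^\alpha\rvert\lesssim p/(n\ln n)^{\alpha-1}$ and $\bE S_{K,\alpha}^2(X)\lesssim n^{\epsilon}(\ln n)^{2\alpha+2}p/n^{2\alpha-1}$, which after absorbing the truncation correction and the $n^{-\Omega(c_1)}$-weighted $U_\alpha$ branch yields the first display. When $\frac1{n\ln n}<p\le\Delta$, the same structure with Lemma~\ref{lemma.lowpartbiasvariance} replacing Lemma~\ref{lem_L_small} gives $\lvert\bE S_{K,\alpha}(X)-p^\alpha\rvert\lesssim 1/(n\ln n)^\alpha$ and $\bE S_{K,\alpha}^2(X)\lesssim n^{8c_2\ln2}(\ln n)^{2\alpha+2}/n^{2\alpha}$, hence the second display. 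When $p>\Delta$ the $U_\alpha$ branch is always controlled by Lemma~\ref{lemma.largebias} (legitimate since $p\ge\Delta$), giving $\lvert\bE U_\alpha(X)-p^\alpha\rvert\lesssim n^{-\alpha}(\ln n)^{-(2-\alpha)}$ and $\mathsf{Var}(U_\alpha(X))\lesssim p/n+n^{-2}$ once the $n^{-c_1/8},n^{-c_1/4}$ remainders are swallowed; for the $L_\alpha$ branch one splits $\Delta<p\le4\Delta$ (where $p\le4\Delta$ still licenses Lemma~\ref{lemma.lowpartbiasvariance}, giving $\lvert\bE L_\alpha(X)-p^\alpha\rvert\lesssim1/(n\ln n)^\alpha\lesssim n^{-\alpha}(\ln n)^{-(2-\alpha)}$ since $\alpha>1$, and $\mathsf{Var}(L_\alpha(X))\le\bE S_{K,\alpha}^2(X)=o(n^{-2})$ since $\epsilon<2\alpha-2$) from the subcase $p>4\Delta$; the cross term is handled the same way. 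Collecting the bounds produces the three displays.

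The main obstacle is the subcase $p>4\Delta$ of case~3: there $S_{K,\alpha}$ is evaluated far outside its design interval $[0,4\Delta]$, so $\bE L_\alpha(X)$ and $\bE L_\alpha^2(X)$ can both be superpolynomially large (of size $e^{O((\ln n)^2)}$, since $S_{K,\alpha}$ has degree $O(\ln n)$ with monomial coefficients $n^{O(c_2)}$), and neither Lemma~\ref{lemma.lowpartbiasvariance} nor Lemma~\ref{lem_L_small} applies. The remedy is to keep this blow-up paired with $q=\bP(Y\le2\Delta)$: for $p\ge4\Delta$ one has $q\le e^{-\Omega(np)}$ with $np\ge 4c_1\ln n$, and a direct Markov--Chebyshev estimate on the coefficients of $S_{K,\alpha}$ bounds $\sqrt{\bE S_{K,\alpha}^2(X)}\,q$ and $\bE S_{K,\alpha}^2(X)\,q$ by $(\ln n)^{O(1)}n^{-\alpha+\Psi(np/(c_1\ln n))}$, where $\Psi(z)=O(c_2)\ln z-\Omega(c_1)z$ is bounded above by a negative absolute constant on the relevant range $z\ge4$ because the linear tail term dominates the logarithmic growth from the polynomial; this is exactly why $c_1$ is taken as large as $16(\alpha+\delta)$ and $c_2$ (equivalently $\epsilon$) small. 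Everything else is routine Poisson moment and tail estimation, the only care being to match each $n^{-\Omega(c_1)}$ or $e^{-\Omega(np)}$ remainder against the polynomially small target rate it must beat.
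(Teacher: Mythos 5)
Your argument follows the paper's proof essentially step for step: the same split into the four ranges $p\le\frac1{n\ln n}$, $\frac1{n\ln n}<p\le\Delta$, $\Delta<p\le4\Delta$, $p>4\Delta$; the same Cai--Low identities for $B(\xi)$ and $\mathsf{Var}(\xi)$; and the same appeals to Lemmas~\ref{lemma.largebias}, \ref{lemma.lowpartbiasvariance}, \ref{lem_L_small} together with the Poisson tail bound to discard the wrong-branch contributions. The one place you genuinely diverge is the subcase $p>4\Delta$. There the paper passes directly from $(|\bE L_\alpha(X)|+p^\alpha)\bP(Y\le2\Delta)$ to $2\,\bP(nY\le2n\Delta)$, implicitly using $|\bE L_\alpha(X)|\le1$, and likewise bounds $\mathsf{Var}(L_\alpha(X))+(\bE L_\alpha(X)-\bE U_\alpha(X))^2$ by an absolute constant. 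As you observe, $L_\alpha=\min\{S_{K,\alpha},1\}$ is capped only from above, and outside $[0,4\Delta]$ the degree-$K$ polynomial $S_{K,\alpha}$ can be large and negative, so that one-sided cap does not by itself justify those constants. Your replacement --- pair the superpolynomial growth of $\bE S_{K,\alpha}^2(X)$ against the exponential tail $\bP(Y\le2\Delta)\le e^{-np/8}$, writing the net exponent of $n$ in $\sqrt{\bE S_{K,\alpha}^2(X)}\,q$ and $\bE S_{K,\alpha}^2(X)\,q$ as $-\alpha+O(c_2)\ln z-\Omega(c_1)z$ with $z=np/(c_1\ln n)\ge4$ and noting the linear term wins --- is sound, and the arithmetic (driven by $c_1=16(\alpha+\delta)$ large and $10c_2\ln2=\epsilon$ small) does close the gap. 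In short, the proposal is correct, and its only departure from the paper is in the direction of rigor: it makes explicit a bound on the wrong-branch $L_\alpha$ moments for $p>4\Delta$ that the published proof leaves implicit.
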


Now the result of Theorem \ref{th_2} follows easily from Lemma \ref{lem_indivi_bound}. First, the total bias can be bounded by
\begin{align}
  |\mathsf{Bias}(\hat{F}_\alpha)| &\le \sum_{i=1}^S |B(\xi(\hat{p}_{i,1},\hat{p}_{i,2}))|\\
  &= \sum_{i: p_i\le \frac{1}{n\ln n}} |B(\xi(\hat{p}_{i,1},\hat{p}_{i,2}))| \nonumber \\
  & \quad + \sum_{i: \frac{1}{n\ln n}<p_i\le \Delta} |B(\xi(\hat{p}_{i,1},\hat{p}_{i,2}))| \nonumber \\
  & \quad + \sum_{i: p_i> \Delta} |B(\xi(\hat{p}_{i,1},\hat{p}_{i,2}))|\\
  &\lesssim \sum_{i: p_i\le \frac{1}{n\ln n}}\frac{p}{(n\ln n)^{\alpha-1}} + \sum_{i: \frac{1}{n\ln n}<p_i\le \Delta} \frac{1}{(n\ln n)^\alpha} \nonumber \\
  & \quad + \sum_{i: p_i>\Delta}  \frac{1}{n^\alpha(\ln n)^{2-\alpha}}\\
  &\le \frac{1}{(n\ln n)^{\alpha-1}} + \frac{1}{(n\ln n)^{\alpha}}\cdot (n\ln n) \nonumber \\
  & \quad+ \frac{1}{n^\alpha(\ln n)^{2-\alpha}}\cdot \frac{n}{c_1\ln n}\\
  &\lesssim \frac{1}{(n\ln n)^{\alpha-1}}.
\end{align}

Second, the total variance is bounded by
\begin{align}
  \mathsf{Var}(\hat{F}_\alpha) &= \sum_{i=1}^S \mathsf{Var}(\xi(\hat{p}_{i,1},\hat{p}_{i,2}))\\
  &= \sum_{i: p_i\le \frac{1}{n\ln n}} \mathsf{Var}(\xi(\hat{p}_{i,1},\hat{p}_{i,2})) \nonumber \\
  & \quad + \sum_{i: \frac{1}{n\ln n}<p_i\le\Delta} \mathsf{Var}(\xi(\hat{p}_{i,1},\hat{p}_{i,2})) \nonumber \\
  &  + \sum_{i: p_i> \Delta} \mathsf{Var}(\xi(\hat{p}_{i,1},\hat{p}_{i,2}))\\
  &\lesssim \sum_{i: p_i\le \frac{1}{n\ln n}} \frac{(\ln n)^{2\alpha+2}p}{n^{2\alpha-1-\epsilon}} \nonumber \\
  & \quad + \sum_{i: \frac{1}{n\ln n}<p_i\le\Delta} \frac{(\ln n)^{2\alpha+2}}{n^{2\alpha-\epsilon}} \nonumber \\
  & \quad + \sum_{i: p_i> \Delta} \left(\frac{p}{n}+\frac{1}{n^2}\right)\\
  &\le \frac{(\ln n)^{2\alpha+2}}{n^{2\alpha-1-\epsilon}} + \frac{(\ln n)^{2\alpha+2}}{n^{2\alpha-\epsilon}} \cdot (n\ln n) \nonumber \\
  & \quad + \left(\frac{1}{n} + \frac{1}{n^2}\cdot \frac{n}{c_1\ln n}\right)\\
  &\lesssim \frac{(\ln n)^{2\alpha+3}}{n^{2\alpha-1-\epsilon}} + \frac{1}{n}\\
  &\lesssim \frac{1}{(n\ln n)^{2\alpha-2}}.
\end{align}

Combining the bias and variance bounds, we have
\begin{align}
  \sup_{P}\mathbb{E}_P \left(\hat{F}_\alpha-F_\alpha(P)\right)^2 & = \left(\mathsf{Bias}(\hat{F}_\alpha)\right)^2 +  \mathsf{Var}(\hat{F}_\alpha) \\
  & \lesssim \frac{1}{(n\ln n)^{2\alpha-2}},\quad 1<\alpha<\frac{3}{2},
\end{align}
which completes the proof of Theorem \ref{th_2}.

\section{Minimax lower bounds for estimating $F_\alpha(P), 0<\alpha<3/2$}\label{sec.lowerbound}

There are two main lemmas that we employ towards the proof of the minimax lower bounds in Theorem~\ref{thm.Falpha.risk} and~\ref{th_1}. The first lemma is the Le Cam two-point method. Suppose we observe a random vector ${\bf Z} \in (\mathcal{Z},\mathcal{A})$ which has distribution $P_\theta$ where $\theta \in \Theta$. Let $\theta_0$ and $\theta_1$ be two elements of $\Theta$. Let $\hat{T} = \hat{T}({\bf Z})$ be an arbitrary estimator of a function $T(\theta)$ based on $\bf Z$. Le Cam's two-point method gives the following general minimax lower bound.
\begin{lemma}\label{lem_twopoint}
  \cite[Sec. 2.4.2]{Tsybakov2008} Denoting the Kullback-Leibler divergence between $P$ and $Q$ by
  \begin{align}
    D(P\|Q) \triangleq \begin{cases}
      \int \ln\left(\frac{dP}{dQ}\right)dP, &\text{if }P\ll Q,\\
      +\infty, &\text{otherwise}.
    \end{cases}
  \end{align}
  we have
  \begin{align}
& \inf_{\hat{T}} \sup_{\theta \in \Theta} \bP_\theta\left( |\hat{T} - T(\theta)| \geq \frac{|T(\theta_1)-T(\theta_0)|}{2} \right) \nonumber \\
& \quad \geq
\frac{1}{4}\exp\left(-D\left(P_{\theta_1}\|P_{\theta_0}\right)\right).
  \end{align}
\end{lemma}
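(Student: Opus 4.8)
The plan is to prove this as the classical two-point (Le Cam) reduction from estimation to binary hypothesis testing, followed by an information-theoretic lower bound on the testing error. I would first note that we may assume $P_{\theta_1} \ll P_{\theta_0}$, since otherwise $D(P_{\theta_1}\|P_{\theta_0}) = +\infty$ and the claimed bound is vacuous.

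First I would restrict the supremum over $\Theta$ to the two points $\theta_0,\theta_1$. Writing $d \triangleq |T(\theta_1)-T(\theta_0)|$ and $A_j \triangleq \{|\hat T - T(\theta_j)| \ge d/2\}$ for $j=0,1$, the triangle inequality forces $A_0 \cup A_1$ to be the whole sample space (if $\hat T$ were within $d/2$ of both $T(\theta_0)$ and $T(\theta_1)$ we would get $d < d$), so $A_1^c \subseteq A_0$. Bounding a maximum by an average then gives
\[
\sup_{\theta\in\Theta}\bP_\theta\!\left(|\hat T-T(\theta)|\ge d/2\right) \ge \tfrac12\big(\bP_{\theta_0}(A_0)+\bP_{\theta_1}(A_1)\big) \ge \tfrac12\big(\bP_{\theta_0}(A_0)+\bP_{\theta_1}(A_0^c)\big),
\]
using $A_0^c \subseteq A_1$ in the last step, so it remains to show $\bP_{\theta_0}(A)+\bP_{\theta_1}(A^c)\ge \tfrac12 e^{-D(P_{\theta_1}\|P_{\theta_0})}$ for every measurable $A$.

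For that inequality I would pass through the Hellinger affinity. With densities $p_0,p_1$ against a common dominating measure, $\bP_{\theta_0}(A)+\bP_{\theta_1}(A^c)\ge \int\min(p_0,p_1)$; by Cauchy--Schwarz applied to $\sqrt{p_0p_1}=\sqrt{\min(p_0,p_1)}\cdot\sqrt{\max(p_0,p_1)}$ together with $\int\max(p_0,p_1)=2-\int\min(p_0,p_1)\le 2$, one obtains $\int\min(p_0,p_1)\ge\tfrac12\big(\int\sqrt{p_0p_1}\big)^2$; and Jensen's inequality for $\exp$ gives $\int\sqrt{p_0p_1}=\bE_{P_{\theta_1}}\!\left[e^{\frac12\ln(p_0/p_1)}\right]\ge e^{-\frac12 D(P_{\theta_1}\|P_{\theta_0})}$. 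Chaining these three facts yields the desired per-event bound; substituting $A=A_0$ into it, combining with the previous display, and taking the infimum over all estimators $\hat T$ completes the proof with the constant $\tfrac14$.

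The argument has no genuinely hard step; the only place that needs care is the last one — obtaining an \emph{exponential}-in-KL bound rather than the polynomial bound that Pinsker's inequality would produce — which is why I would route through the Hellinger affinity (this is the Bretagnolle--Huber-type estimate) instead of through total variation and Pinsker. The covering argument and the ``max $\ge$ average'' step that produces the factor $1/4$ are routine bookkeeping.
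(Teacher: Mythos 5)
The paper does not prove this lemma; it is cited directly from Tsybakov (Section 2.4.2), so there is no internal proof to compare against. Your argument is a correct derivation of exactly the bound in the cited source: reduce to a two-point testing problem via the triangle-inequality covering of $A_0 \cup A_1$, bound the minimax testing error below by $\int\min(p_0,p_1)$, and then pass through the Hellinger affinity and Jensen to get the exponential-in-KL dependence. This is the same chain Tsybakov uses to establish his Theorem 2.2(iii); your observation that one must route through Hellinger (Bretagnolle--Huber) rather than Pinsker to avoid a merely polynomial bound is the key point, and you handle it correctly, including the correct handling of absolute continuity and the $p_1=0$ set. The bookkeeping $\tfrac12\cdot\tfrac12=\tfrac14$ and the closing infimum over estimators are also in order.
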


The second lemma is the so-called method of two fuzzy hypotheses presented in Tsybakov \cite{Tsybakov2008}. Suppose we observe a random vector ${\bf Z} \in (\mathcal{Z},\mathcal{A})$ which has distribution $P_\theta$ where $\theta \in \Theta$. Let $\sigma_0$ and $\sigma_1$ be two prior distributions supported on $\Theta$. Write $F_i$ for the marginal distribution of $\mathbf{Z}$ when the prior is $\sigma_i$ for $i = 0,1$. 
Let $\hat{T} = \hat{T}({\bf Z})$ be an arbitrary estimator of a function $T(\theta)$ based on $\bf Z$. We have the following general minimax lower bound.

\begin{lemma}\cite[Thm. 2.15]{Tsybakov2008} \label{lemma.tsybakov}
Given the setting above, suppose there exist $\zeta\in \mathbb{R}, s>0, 0\leq \beta_0,\beta_1 <1$ such that
\begin{align}
\sigma_0(\theta: T(\theta) \leq \zeta -s) & \geq 1-\beta_0 \\
\sigma_1(\theta: T(\theta) \geq \zeta + s) & \geq 1-\beta_1.
\end{align}
If $V(F_1,F_0) \leq \eta<1$, then
\begin{equation}
\inf_{\hat{T}} \sup_{\theta \in \Theta} \bP_\theta\left( |\hat{T} - T(\theta)| \geq s \right) \geq \frac{1-\eta - \beta_0 - \beta_1}{2},
\end{equation}
where $F_i,i = 0,1$ are the marginal distributions of $\mathbf{Z}$ when the priors are $\sigma_i,i = 0,1$, respectively.
\end{lemma}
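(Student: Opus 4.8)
\emph{Proof plan.} The plan is to turn the estimation problem into a binary hypothesis test between the two mixtures $F_0$ and $F_1$, which is the standard route for the two-fuzzy-hypotheses bound. First I would fix an arbitrary estimator $\hat{T} = \hat{T}(\mathbf{Z})$ and associate to it the test $\psi \triangleq \mathbbm{1}(\hat{T} \geq \zeta)$, read as ``decide $\sigma_1$'' when $\psi = 1$. Write $\Theta_0 = \{\theta : T(\theta) \leq \zeta - s\}$ and $\Theta_1 = \{\theta : T(\theta) \geq \zeta + s\}$, so the hypotheses of the lemma say $\sigma_0(\Theta_0) \geq 1-\beta_0$ and $\sigma_1(\Theta_1) \geq 1-\beta_1$. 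The elementary geometric observation driving everything is: on $\{T(\theta) \leq \zeta - s\}$, if $\psi = 1$ then $\hat{T} - T(\theta) \geq \zeta - (\zeta - s) = s$, hence $\{\psi = 1\} \subseteq \{|\hat{T} - T(\theta)| \geq s\}$ there; symmetrically, on $\{T(\theta) \geq \zeta + s\}$ one has $\{\psi = 0\} \subseteq \{|\hat{T} - T(\theta)| \geq s\}$.

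Next I would lower bound the worst-case risk by the averaged (Bayes) risk under the two priors:
\begin{equation}
\sup_{\theta \in \Theta} \bP_\theta\big(|\hat{T} - T(\theta)| \geq s\big) \geq \frac{1}{2}\sum_{j \in \{0,1\}} \int_\Theta \bP_\theta\big(|\hat{T} - T(\theta)| \geq s\big)\, d\sigma_j(\theta).
\end{equation}
For the $j=0$ term I would restrict the integral to $\Theta_0$, apply the containment above, and then reinstate the full integral at the cost of the discarded prior mass: $\int_\Theta \bP_\theta(|\hat{T}-T(\theta)|\geq s)\, d\sigma_0 \geq \int_{\Theta_0}\bP_\theta(\psi = 1)\, d\sigma_0 \geq F_0(\psi = 1) - \sigma_0(\Theta \setminus \Theta_0) \geq F_0(\psi = 1) - \beta_0$, where $F_0(\psi = 1) = \int_\Theta \bP_\theta(\psi = 1)\, d\sigma_0(\theta)$ is the probability that $\psi=1$ under the mixture $F_0$. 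The $j=1$ term is identical with $\psi = 0$ in place of $\psi = 1$, yielding $F_1(\psi = 0) - \beta_1$. Adding, $2\sup_\theta \bP_\theta(|\hat{T}-T(\theta)|\geq s) \geq F_0(\psi = 1) + F_1(\psi = 0) - \beta_0 - \beta_1$.

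Finally I would invoke the standard fact that for any $\{0,1\}$-valued test $\psi$ one has $F_0(\psi = 1) + F_1(\psi = 0) \geq 1 - V(F_0, F_1)$, which is just the characterization of total variation via the likelihood-ratio test, since $F_0(\psi=1) + F_1(\psi=0) = 1 + \int \psi\, d(F_0 - F_1) \geq 1 - V(F_0,F_1)$. Combining with the assumption $V(F_1,F_0) \leq \eta$ gives $\sup_\theta \bP_\theta(|\hat{T} - T(\theta)| \geq s) \geq (1 - \eta - \beta_0 - \beta_1)/2$, and since $\hat{T}$ was arbitrary I would take the infimum over estimators to conclude.

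The argument is essentially bookkeeping, so I do not expect a deep obstacle; the one place requiring care — the ``hard part'' such as it is — is the step where $\sigma_0$ (resp.\ $\sigma_1$) is not supported entirely on $\Theta_0$ (resp.\ $\Theta_1$), where one must split $\int_\Theta = \int_{\Theta_0} + \int_{\Theta\setminus\Theta_0}$ and absorb the leftover into $\beta_0$ (resp.\ $\beta_1$), which is exactly how the $\beta_j$ terms surface in the final bound. Measurability of $\psi$ and of $\theta \mapsto \bP_\theta(A)$ is routine under the usual regularity on $\{P_\theta\}$, and if an $L_2$ (or other loss) version is wanted downstream it can be appended by Markov's inequality.
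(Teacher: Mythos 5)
Your proof is correct and is essentially the standard argument for Tsybakov's two-fuzzy-hypotheses bound. Note that the paper itself does not prove this lemma — it is cited directly from \cite[Thm.~2.15]{Tsybakov2008} — and your reconstruction (reduce to testing $F_0$ versus $F_1$ via the test $\psi = \mathbbm{1}(\hat{T}\geq\zeta)$, absorb the prior mass outside $\Theta_0,\Theta_1$ into $\beta_0,\beta_1$, and invoke $F_0(\psi=1)+F_1(\psi=0)\geq 1-V(F_0,F_1)$) matches the proof given there.
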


Here $V(P,Q)$ is the total variation distance between two probability measures $P,Q$ on the measurable space $(\mathcal{Z},\mathcal{A})$. Concretely, we have
\begin{equation}
V(P,Q) \triangleq \sup_{A\in \mathcal{A}} | P(A) - Q(A) | = \frac{1}{2} \int |p-q| d\nu,
\end{equation}
where $p = \frac{dP}{d\nu}, q = \frac{dQ}{d\nu}$, and $\nu$ is a dominating measure so that $P \ll \nu, Q \ll \nu$.

\subsection{Minimax lower bound for Theorem~\ref{thm.Falpha.risk} ($F_\alpha:0<\alpha<1$)}

Note that the minimax lower bound in Theorem~\ref{thm.Falpha.risk} consists of two parts when $1/2<\alpha<1$. Hence, for $1/2<\alpha<1$, it suffices to first show that
\begin{equation}\label{eqn.falphalowerfirststep}
\inf_{\hat{F}_\alpha} \sup_{P\in \mathcal{M}_S} \bE_P \left( \hat{F}_\alpha - F_\alpha(P) \right)^2 \gtrsim \frac{S^{2-2\alpha}}{n},
\end{equation}
and then show
\begin{equation}\label{eqn.falphalowersecondstep}
\inf_{\hat{F}_\alpha} \sup_{P\in \mathcal{M}_S} \bE_P \left( \hat{F}_\alpha - F_\alpha(P) \right)^2 \gtrsim \frac{S^2}{(n\ln n)^{2\alpha}},
\end{equation}
in order to obtain the desired conclusion via the relation $\max\{a,b\}\geq \frac{a+b}{2}$.

Regarding (\ref{eqn.falphalowerfirststep}), we have the following theorem.
\begin{theorem}\label{th_lower_F}
  For $\frac{1}{2}\le\alpha<1$, we have
  \begin{align}
    \inf_{\hat{F}_\alpha}\sup_{P\in\mathcal{M}_S} \bE_P\left(\hat{F}_\alpha-F_\alpha(P)\right)^2 &\gtrsim \frac{S^{2-2\alpha}}{n},
  \end{align}
  where the infimum is taken over all possible estimators $\hat{F}_\alpha$.
\end{theorem}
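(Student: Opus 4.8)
The plan is to prove this ``variance-type'' bound with Le Cam's two-point method (Lemma~\ref{lem_twopoint}), choosing the two hypotheses to differ along the direction in which $F_\alpha$ is \emph{first-order} sensitive on the simplex. A naive perturbation of the uniform distribution will \emph{not} work: the uniform distribution is a critical point of $F_\alpha$ restricted to $\mathcal{M}_S$, since its gradient $(\alpha p_i^{\alpha-1})_i$ is constant across coordinates and hence orthogonal to the tangent space of the simplex, so such a perturbation changes $F_\alpha$ only at second order and yields a lower bound of the weaker order $S^{2-2\alpha}/n^2$. To recover the extra factor of $n$ I will center the construction at a non-uniform ``two-level'' distribution.

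Concretely, fix distinct $a,b>0$ with $a+b=2$ (e.g.\ $a=\tfrac12,b=\tfrac32$), and let $Q\in\mathcal{M}_S$ put mass $a/S$ on half of the symbols and $b/S$ on the other half (splitting $S$ into $\lfloor S/2\rfloor$ and $\lceil S/2\rceil$ groups when $S$ is odd). Let $P_1$ have $i$-th coordinate $q_i(1+t_i)$, where
$$t_i \;=\; \beta\big(q_i^{\alpha-1}-F_\alpha(Q)\big), \qquad \beta \;=\; \sqrt{\frac{c}{n\,V_Q}}, \qquad V_Q \;\triangleq\; \sum_{i=1}^S q_i^{2\alpha-1}-F_\alpha(Q)^2,$$
for a small absolute constant $c>0$. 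Then $\sum_i q_i t_i = \beta\big(\sum_i q_i^\alpha - F_\alpha(Q)\big)=0$, so $P_1\in\mathcal{M}_S$; a short computation gives $\sum_i q_i^\alpha t_i = \beta V_Q$ and $\sum_i q_i t_i^2 = \beta^2 V_Q$; and for the two-level $Q$ one checks $V_Q = \tfrac{ab}{4}\big(a^{\alpha-1}-b^{\alpha-1}\big)^2\,S^{2-2\alpha} \asymp S^{2-2\alpha}$, so that $|t_i| = O(1/\sqrt n)$ uniformly in $i$.

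It remains to control the two ingredients of Le Cam's bound. Under the Poisson model the coordinates are independent with $Z_i\sim\mathsf{Poi}(np_i)$, so using $D(\mathsf{Poi}(\lambda_1)\|\mathsf{Poi}(\lambda_0)) = \lambda_1\ln(\lambda_1/\lambda_0)-\lambda_1+\lambda_0$ together with the elementary inequality $0\le (1+t)\ln(1+t)-t\le t^2$ for $|t|\le\tfrac12$,
$$D\big(P_{\theta_1}\,\big\|\,P_{\theta_0}\big) \;=\; \sum_{i=1}^S nq_i\big[(1+t_i)\ln(1+t_i)-t_i\big] \;\le\; n\sum_{i=1}^S q_i t_i^2 \;=\; n\beta^2 V_Q \;=\; c,$$
valid once $n$ exceeds an absolute constant. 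For the functional gap, a Taylor expansion of $x^\alpha$ gives $F_\alpha(P_1)-F_\alpha(Q) = \alpha\sum_i q_i^\alpha t_i + O\big(\sum_i q_i^\alpha t_i^2\big) = \alpha\beta V_Q + O\big(S^{1-\alpha}/n\big)$, and since $\alpha\beta V_Q \asymp \sqrt{S^{2-2\alpha}/n}=S^{1-\alpha}/\sqrt n$ dominates the remainder, $|F_\alpha(P_1)-F_\alpha(Q)|\asymp S^{1-\alpha}/\sqrt n$.

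Applying Lemma~\ref{lem_twopoint} with $T=F_\alpha$ and $(\theta_0,\theta_1)\leftrightarrow(Q,P_1)$ yields an absolute $\rho>0$ with $\inf_{\hat{F}_\alpha}\sup_{P\in\mathcal{M}_S}\bP_P\big(|\hat{F}_\alpha-F_\alpha(P)|\ge s\big)\ge\rho$ for $s=\tfrac12|F_\alpha(P_1)-F_\alpha(Q)|\asymp S^{1-\alpha}/\sqrt n$, and Markov's inequality converts this into $\inf_{\hat{F}_\alpha}\sup_{P\in\mathcal{M}_S}\bE_P(\hat{F}_\alpha-F_\alpha(P))^2\ge\rho s^2\asymp S^{2-2\alpha}/n$, as claimed. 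The only real obstacle is careful bookkeeping of remainder terms: one must verify that the cubic-and-higher corrections in both the Poisson KL divergence and the Taylor expansion of $F_\alpha$ are genuinely of smaller order than the leading terms, which is precisely why the construction forces $|t_i|=\Theta(1/\sqrt n)\to 0$; the odd-$S$ splitting and the $\alpha$-dependent (but harmless, since $\alpha$ is fixed) constant in $V_Q$ are routine. Note that this argument needs no relation between $n$ and $S$, consistent with the hypothesis-free statement of the theorem.
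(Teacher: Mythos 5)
Your proof is correct, and it uses the same high-level tool (Le Cam's two-point bound, Lemma~\ref{lem_twopoint}, in the Poisson model) but with a genuinely different two-point construction than the paper's. The paper takes $\theta_1 = \big(\tfrac{1}{2(S-1)},\dots,\tfrac{1}{2(S-1)},\tfrac12\big)$ and $\theta_0$ obtained by shifting mass $\tfrac\epsilon2$ from the single heavy coordinate to the $S-1$ light ones; because all light coordinates are scaled by the same factor $1+\epsilon$ and the heavy one by $1-\epsilon$, the per-trial KL telescopes to the exact closed form $D(\theta_1\|\theta_0)=-\tfrac12\ln(1-\epsilon^2)\le\epsilon^2$ with no Taylor argument, and the functional gap is then read off directly from $(1\pm\epsilon)^\alpha-1$. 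You instead take a symmetric two-level base $Q$ and perturb along the simplex-projected gradient $t_i\propto q_i^{\alpha-1}-F_\alpha(Q)$, which is the direction maximizing first-order sensitivity of $F_\alpha$; your computation of $V_Q=\tfrac{ab}{4}(a^{\alpha-1}-b^{\alpha-1})^2 S^{2-2\alpha}$, the identities $\sum q_it_i=0$, $\sum q_i^\alpha t_i=\beta V_Q$, $\sum q_it_i^2=\beta^2 V_Q$, and the remainder estimates are all correct. Your opening remark about why the uniform center fails — the gradient $(\alpha p_i^{\alpha-1})_i$ is constant there and hence orthogonal to the tangent space, leaving only a second-order $\asymp S^{1-\alpha}/n$ change and thus the weaker $S^{2-2\alpha}/n^2$ — is the same geometric reason the paper's construction puts one coordinate at $\tfrac12$; both constructions are ``far from uniform'' for exactly this purpose. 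What the paper's choice buys is bookkeeping-free exactness (no Taylor remainders to control in the KL), while yours buys conceptual transparency and immediate generalizability to other $F(P)=\sum f(p_i)$: the gradient-direction perturbation of a two-level base is essentially the prescription for the ``smooth''/variance-dominated two-point lower bound for any such functional. One small omission: like the paper you work under the Poissonized model, but unlike the paper you do not explicitly convert back to the Multinomial model via Lemma~\ref{lemma.poissonmultinomial}; this is routine (the $e^{-n/4}S^{2-2\alpha}$ correction is dominated by $S^{2-2\alpha}/n$ for all $n$ past a fixed constant, and the theorem is trivial for bounded $n$ since $\sup_P F_\alpha(P)^2 = S^{2-2\alpha}$), but it should be stated.
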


\begin{proof}
Applying this lemma to our Poissonized model $n\hat{p}_i\sim\mathsf{Poi}(np_i),1\le i\le S$, we know that for $\theta_1=(p_1,p_2,\cdots,p_S), \theta_0=(q_1,q_2,\cdots,q_S)$,
\begin{align}
&  D\left(P_{\theta_1}\|P_{\theta_0}\right) \nonumber \\
&\quad= \sum_{i=1}^S D\left(\mathsf{Poi}(np_i)\|\mathsf{Poi}(nq_i)\right)\\
  &\quad= \sum_{i=1}^S \sum_{k=0}^\infty \bP\left(\mathsf{Poi}(np_i)=k\right)\cdot \left[k\ln\frac{p_i}{q_i}-n(p_i-q_i)\right] \\
  &\quad= \sum_{i=1}^S np_i\ln\frac{p_i}{q_i} - n\sum_{i=1}^S(p_i-q_i)\\
  &\quad = nD(\theta_1\|\theta_0),
\end{align}
then Markov's inequality yields
\begin{align}
&  \inf_{\hat{F}} \sup_{P \in \mathcal{M}_S} \bE_P\left( \hat{F}-F_\alpha(P)\right)^2 \\
&\quad \ge \frac{|F_\alpha(\theta_1)-F_\alpha(\theta_0)|^2}{4}\times \nonumber \\
& \qquad \quad
  \inf_{\hat{F}} \sup_{P \in \mathcal{M}_S} \bP\left( |\hat{F}-F_\alpha(P)| \ge \frac{|F_\alpha(\theta_1)-F_\alpha(\theta_0)|}{2}\right)\\
  &\quad \ge \frac{|F_\alpha(\theta_1)-F_\alpha(\theta_0)|^2}{16}\exp\left(-nD(\theta_1\|\theta_0)\right),
\end{align}
where we are operating under the Poissonized model.

Fix $\epsilon\in(0,1/2)$ to be specified later. Letting
\begin{align}
  \theta_1 & = \left(\frac{1}{2(S-1)},\cdots,\frac{1}{2(S-1)},\frac{1}{2}\right), \\
  \theta_0 & = \left(\frac{1+\epsilon}{2(S-1)},\cdots,\frac{1+\epsilon}{2(S-1)},\frac{1-\epsilon}{2}\right),
\end{align}
direct computation yields
\begin{align}
  D(\theta_1\|\theta_0) & = -\frac{1}{2}\ln(1+\epsilon) + \frac{1}{2}\ln\frac{1}{1-\epsilon} \\
  & = -\frac{1}{2}\ln(1-\epsilon^2)\le \epsilon^2,
\end{align}
and
\begin{align}
&  |F_\alpha(\theta_1)-F_\alpha(\theta_0)| \nonumber \\
 &\quad= [(1+\epsilon)^\alpha-1]\cdot(2(S-1))^{1-\alpha} - 2^{-\alpha}\left[1-\left(1-\epsilon\right)^\alpha\right]\\
  &\quad \ge \left(\alpha\epsilon-\frac{\alpha(1-\alpha)\epsilon^2}{2}\right)\cdot(2(S-1))^{1-\alpha} \nonumber \\
  & \qquad - 2^{-\alpha}\left(\alpha\epsilon+\frac{\alpha(1-\alpha)\epsilon^2}{2}\right)\\
  &\quad \ge \alpha\left((2(S-1))^{1-\alpha}-2^{-\alpha}\right)\epsilon \nonumber\\
  & \qquad -\frac{\alpha(1-\alpha)}{2}\left((2(S-1))^{1-\alpha}+2^{-\alpha}\right)\epsilon^2.
\end{align}

Hence, by choosing $\epsilon=n^{-\frac{1}{2}}$, we know that
\begin{align}
& \inf_{\hat{F}} \sup_{P \in \mathcal{M}_S} \bE_P\left( \hat{F}-F_\alpha(P)\right)^2 \nonumber \\
& \  \ge \frac{\alpha^2}{16en}\left[(2(S-1))^{1-\alpha}-2^{-\alpha} -\frac{1-\alpha}{2n}\left((2(S-1))^{1-\alpha}+2^{-\alpha}\right)\right]^2
\end{align}
under the Poissonized model. Applying Lemma \ref{lemma.poissonmultinomial}, we know that under the Multinomial model, the non-asymptotic minimax lower bound is
\begin{align}
 & \frac{\alpha^2}{32en}\left[(2(S-1))^{1-\alpha}-2^{-\alpha} -\frac{1-\alpha}{4n}\left((2(S-1))^{1-\alpha}+2^{-\alpha}\right)\right]^2 \nonumber \\
 & \quad - e^{-n/4}S^{2(1-\alpha)} \gtrsim \frac{S^{2-2\alpha}}{n}.
\end{align}
The proof is complete.
\end{proof}

Now we start the proof of (\ref{eqn.falphalowersecondstep}) in earnest. For $1/2<\alpha<1$, (\ref{eqn.falphalowersecondstep}) follows directly from (\ref{eqn.falphalowerfirststep}) if $\frac{S^{2-2\alpha}}{n}\gtrsim \frac{S^2}{(n\ln n)^{2\alpha}}$, or equivalently, $S\lesssim n^{1-\frac{1}{2\alpha}}\ln n$. Hence, we only need to consider the case where $S\gtrsim n^{1-\frac{1}{2\alpha}}\ln n$, which implies that $\ln S\gtrsim \ln n$. Since the condition $\ln S\gtrsim \ln n$ is also treated as an assumption in Theorem \ref{thm.Falpha.risk} for $0<\alpha\le 1/2$, we adopt it throughout the following proof.

We construct the two fuzzy hypotheses required by Lemma~\ref{lemma.tsybakov}. Similar construction was applied in proving minimax lower bounds in~\cite{Lepski--Nemirovski--Spokoiny1999estimation} and~\cite{Cai--Low2011}. 

\begin{lemma}\label{lemma.priorconstruct}
For any given positive integer $L>0$, there exist two probability measures $\nu_0^*$ and $\nu_1^*$ on $[0,1]$ that satisfy the following conditions:
\begin{enumerate}
\item $\int t^l \nu_1^*(dt) = \int t^l \nu_0^*(dt)$, for $l = 0,1,2,\ldots,L$;
\item $\int t^\alpha \nu^*_1(dt) - \int t^\alpha\nu^*_0(dt) = 2 E_L[x^\alpha]_{[0,1]}$,
\end{enumerate}
where $E_L[x^\alpha]_{[0,1]}$ is the distance in the uniform norm on $[0,1]$ from the function $f(x) = x^{\alpha}$ to the space $\mathsf{poly}_L$ of polynomials of no more than degree $L$.
\end{lemma}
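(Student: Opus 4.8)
The plan is to recognize this as the linear-programming dual of the best polynomial approximation problem, and to build $\nu_0^*,\nu_1^*$ explicitly from the Chebyshev equioscillation of $x^\alpha$. Writing $\mu = \nu_1^* - \nu_0^*$, conditions (1) and (2) say exactly that $\mu$ is a signed measure on $[0,1]$ with $\|\mu\|_{TV}=2$ (so that its positive and negative parts are probability measures), with vanishing moments $\int t^l\,d\mu = 0$ for $l=0,1,\ldots,L$, and with $\int t^\alpha\,d\mu = 2E_L[x^\alpha]_{[0,1]}$. That this value is in fact the largest achievable follows from the one-line estimate $\int (t^\alpha - P(t))\,d\mu \le \|t^\alpha - P\|_\infty\,\|\mu\|_{TV} \le 2E_L[x^\alpha]_{[0,1]}$, valid for every $P\in\mathsf{poly}_L$ since the moments of $\mu$ kill $P$; but for the lemma it suffices to \emph{exhibit} a $\mu$ attaining it.

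To construct $\mu$, let $P^*$ be the (unique) best degree-$L$ uniform approximant to $f(x)=x^\alpha$ on $[0,1]$; since $x^\alpha\notin\mathsf{poly}_L$ for $0<\alpha<1$ we have $E_L \triangleq E_L[x^\alpha]_{[0,1]}>0$. By the Chebyshev equioscillation theorem there are points $0\le x_0<x_1<\cdots<x_{L+1}\le 1$ with $f(x_j)-P^*(x_j) = \sigma(-1)^j E_L$ for a fixed $\sigma\in\{-1,+1\}$. Consider the homogeneous system $\sum_{j=0}^{L+1} c_j x_j^l = 0$, $l=0,1,\ldots,L$: this is $L+1$ equations in $L+2$ unknowns, hence has a nontrivial solution $c=(c_0,\ldots,c_{L+1})$, and because any $L+1$ of the columns $(x_j^l)_{l=0}^{L}$ form a (transposed) Vandermonde matrix with nonzero determinant, the solution space is one-dimensional; Cramer's rule then gives $c_j = (-1)^j V_j$ with $V_j>0$ the Vandermonde determinant of $\{x_i : i\ne j\}$, so the $c_j$ strictly alternate in sign. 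Choosing the overall sign of $c$ so that $\mathrm{sgn}(c_j)=\sigma(-1)^j$ and scaling so that $\sum_j |c_j| = 2$, set $\mu = \sum_j c_j\delta_{x_j}$, $\nu_1^* = \sum_{j:\,c_j>0} c_j\delta_{x_j}$, $\nu_0^* = \sum_{j:\,c_j<0}(-c_j)\delta_{x_j}$. The $l=0$ equation gives $\sum_j c_j = 0$, so $\nu_0^*$ and $\nu_1^*$ each have total mass $\tfrac12\sum_j|c_j| = 1$ and are genuine probability measures on $[0,1]$.

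It then remains to verify the two conditions. Condition (1) is immediate: $\int t^l\,d\nu_1^* - \int t^l\,d\nu_0^* = \sum_j c_j x_j^l = 0$ for $l=0,1,\ldots,L$. For condition (2), since $P^*\in\mathsf{poly}_L$ and all moments of $\mu$ up to order $L$ vanish, $\int t^\alpha\,d\nu_1^* - \int t^\alpha\,d\nu_0^* = \int t^\alpha\,d\mu = \int (t^\alpha-P^*(t))\,d\mu = \sum_j c_j\bigl(f(x_j)-P^*(x_j)\bigr) = \sum_j c_j\,\sigma(-1)^j E_L = E_L\sum_j |c_j| = 2E_L[x^\alpha]_{[0,1]}$, where the penultimate equality uses the sign alignment $\mathrm{sgn}(c_j)=\sigma(-1)^j$. (Had the opposite overall sign been chosen, this would come out $-2E_L$, and swapping the labels of $\nu_0^*$ and $\nu_1^*$ repairs it.)

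I expect the main obstacle to be the sign bookkeeping in the middle step: arguing that the unique-up-to-scaling solution of the moment system strictly alternates in sign (via the Vandermonde/Cramer computation) and that its sign pattern can be matched to the alternation pattern $\sigma(-1)^j$ of the equioscillation error. This is precisely what forces $\int(t^\alpha-P^*)\,d\mu$ to collapse to $E_L\sum_j|c_j|$ rather than something of smaller magnitude; everything else is routine linear algebra together with the standard existence/uniqueness/equioscillation facts for best uniform approximation.
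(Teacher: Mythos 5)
Your proof is correct but takes a genuinely different route from the one the paper uses. The paper's own proof is a one-line citation to the functional-analytic (Hahn--Banach / LP-duality) argument of Lepski, Nemirovski, and Spokoiny as elaborated in Cai--Low (Lemma~1 of~\cite{Cai--Low2011}), where the existence of the extremal signed measure comes from a separating-hyperplane argument without exhibiting the measure explicitly. You instead give an explicit construction: take the $L{+}2$ equioscillation nodes of the best approximant to $x^\alpha$, solve the $L{+}1$ linear moment constraints to find the one-dimensional null space, observe via the Vandermonde/Cramer computation $c_j = (-1)^j V_j$ with $V_j>0$ that the coefficients strictly alternate in sign, and then align the alternation with the equioscillation pattern $\sigma(-1)^j$ so that $\sum_j c_j\bigl(f(x_j)-P^*(x_j)\bigr)$ collapses to $E_L\sum_j|c_j| = 2E_L$. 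This is essentially the explicit construction the paper attributes to Wu and Yang~\cite{Wu--Yang2014minimax} in the surrounding discussion. What your approach buys is self-containedness and an actual description of the atoms of $\nu_0^*,\nu_1^*$ (which is useful, e.g., for the subsequent variance calculations in Lemma~\ref{lemma.lowermeanvar}); what the Cai--Low route buys is brevity and the fact that it generalizes painlessly beyond polynomial spaces. Both establish the lemma; yours is more informative.

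One small remark: your opening paragraph argues that $2E_L$ is the \emph{largest} achievable value of $\int t^\alpha\,d\mu$ under the constraints, but the lemma only asks you to \emph{achieve} $2E_L$; the maximality is not needed and you correctly set it aside. The sign bookkeeping step you flag as the main obstacle is handled correctly: the null vector $c_j=(-1)^jV_j$ with $V_j>0$ (the Vandermonde determinant of the increasing nodes $\{x_i:i\neq j\}$) strictly alternates, and multiplying the whole vector by $\sigma$ aligns it with the error alternation without disturbing $\sum_j c_j=0$, which is exactly what is needed for the two parts to be probability measures.
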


The two probability measures $\nu_0^*$ and $\nu^*_1$ can be understood as the solution to the optimization problem of maximizing $\int t^\alpha \nu_1(dt) - \int t^\alpha\nu_0(dt)$, with the constraint that $\int t^l \nu_1(dt) = \int t^l \nu_0(dt)$, for $l = 0,1,2,\ldots,L$, $\mathsf{supp}(\nu_i) \subset [0,1],i = 0,1$. Wu and Yang~\cite{Wu--Yang2014minimax} gave an explicit construction of the measures $\nu^*_0$ and $\nu^*_1$ from the solution of the best polynomial approximation problem for general functions on an interval. In some sense, the two probability measures $\nu_0^*$ and $\nu_1^*$ are chosen to be those that differ the most in terms of the expectations of the functions we care about (here is $x^\alpha$), with the same moments up to a certain order. Hence, they are difficult to distinguish via samples, but the corresponding functional values are maximally apart from each other. 

According to Lemma~\ref{lemma.nonasympxa}, we have
\begin{equation}
\lim_{L\to \infty} L^{2\alpha}E_L[x^\alpha]_{[0,1]} = \frac{\mu(2\alpha)}{2^{2\alpha}},
\end{equation}

Since we have assumed $n \gtrsim \frac{S^{1/\alpha}}{\ln S}$ and $\ln S\gtrsim \ln n$, we represent
\begin{equation}
n = c \frac{S^{1/\alpha}}{\ln S},\quad 1\lesssim c\lesssim n^{1-\delta},
\end{equation}
for some constant $\delta\in(0,1)$, which implies that
\begin{equation}
S \sim  \left( \frac{\alpha}{c} \right)^\alpha n^\alpha (\ln n)^{\alpha}.
\end{equation}
Note that
\begin{equation}
\frac{1}{c^{2\alpha}} \asymp \frac{S^2}{(n\ln n)^{2\alpha}}.
\end{equation}

Define
\begin{equation}
M = d_1 \frac{\ln n}{n}, \quad L =d_2 \ln n , \quad S' = S-1,
\end{equation}
where $d_1,d_2$ are positive constants (not depending on $n$) that will be determined later. Without loss of generality we assume that $d_2 \ln n$ is always a positive integer.

For a given integer $L$, let $\nu^*_0$ and $\nu^*_1$ be the two probability measures possessing the properties given in Lemma~\ref{lemma.priorconstruct}. Let $g(x) = Mx$ and let $\mu_i$ be the measures on $[0,1]$ defined by $\mu_i(A) = \nu^*_i(g^{-1}(A))$ for $i = 0,1$. It follows from Lemma~\ref{lemma.priorconstruct} that:
\begin{enumerate}
\item $\int t^l \mu_1(dt) = \int t^l \mu_0(dt)$, for $l = 0,1,2,\ldots,L$;
\item $\int t^{\alpha} \mu_1(dt) - \int t^{\alpha} \mu_0(dt) = 2M^\alpha E_L[x^\alpha]_{[0,1]}$.
\end{enumerate}

Let $\mu_1^{S'}$ and $\mu_0^{S'}$ be the product priors $\mu_i^{S'} = \prod_{j = 1}^{S'} \mu_i$. We assign these priors to the length-$S'$ vector $(p_1,p_2,\ldots,p_{S'})$. Under $\mu_0^{S'}$ or $\mu_1^{S'}$, we have almost surely
\begin{equation}
\sum_{i = 1}^{S'} p_i \leq S' M \sim d_1 \left( \frac{\alpha}{c} \right)^\alpha \frac{(\ln n)^{\alpha + 1}}{n^{1-\alpha}}   \ll 1,
\end{equation}
hence
\begin{equation}
p_S^\alpha \geq \left( 1- O\left( \frac{(\ln n)^{\alpha+1}}{n^{1-\alpha}}\right) \right)^\alpha \sim 1, \quad n\to \infty.
\end{equation}

We decompose $F_\alpha(P)$ as
\begin{equation}
F_\alpha(P) = \underline{F_\alpha}(P) + p_S^\alpha,
\end{equation}
where
\begin{equation}
 \underline{F_\alpha}(P) = \sum_{i =1}^{S'} p_i^\alpha.
\end{equation}

We argue that it suffices to show the minimax lower bound in Theorem~\ref{thm.Falpha.risk} holds when we replace $F_\alpha(P)$ by $\underline{F_\alpha}(P)$. Indeed, we just showed that
\begin{align}
\left( | F_\alpha(P) - \underline{F_\alpha}(P)| -1\right)^2 & \lesssim \frac{1}{c^{2\alpha}} \frac{(\ln n)^{2+2\alpha}}{n^{2-2\alpha}} \\
& \ll \frac{1}{c^{2\alpha}} \\
& \asymp \frac{S^2}{(n\ln n)^{2\alpha}}.
\end{align}
Hence, if there exists an estimator $\tilde{F}$ that violates the minimax lower bound for $F_\alpha(P)$ in Theorem~\ref{thm.Falpha.risk}, then $\tilde{F}-1$ will violates the same minimax lower for estimating $\underline{F_\alpha}(P)$, which will contradict what we show below.


For $Y|p \sim \mathsf{Poi}(np),p\sim \mu_0$, we denote the marginal distribution of $Y$ by $F_{0,M}(y)$, whose pmf can be computed as
\begin{equation}
F_{0,M}(y) = \int \frac{e^{-np} (np)^y}{y!} \mu_0(dp).
\end{equation}
We define $F_{1,M}(y)$ in a similar fashion.

\begin{lemma}\label{lemma.lowermeanvar}
The following bounds are true if $d_1 = 1, d_2 = 10e$:
\begin{align}
\bE_{\mu_1^{S'}} \underline{F_\alpha}(P) - \bE_{\mu_0^{S'}} \underline{F_\alpha}(P) & =  2 \left( \frac{\alpha}{c} \right)^\alpha \frac{\mu(2\alpha) d_1^\alpha}{(2 d_2)^{2\alpha}}(1 + o(1)) \\
&  \asymp \frac{1}{c^\alpha}, \\
\mathsf{Var}_{\mu_j^{S'}}(\underline{F_\alpha}(P)) & \leq \left( \frac{\alpha d_1^2}{c} \right)^\alpha \frac{(\ln n)^{3\alpha}}{n^\alpha} \\
& \stackrel{n\to \infty}{\longrightarrow} 0, \quad j = 0,1, \\
V(F_{1,M}, F_{0,M}) & = \frac{1}{2}\sum_{y = 0}^\infty |F_{1,M}(y) - F_{0,M}(y)| \\
&  \leq \frac{1}{n^6}.
\end{align}
\end{lemma}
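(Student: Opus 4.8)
The plan is to verify the three displays one at a time, in each case reducing everything to the two defining properties of the pushed-forward pair $(\mu_0,\mu_1)$ recorded just above: they are supported on $[0,M]$, their moments agree up to order $L$, and $\int t^\alpha\mu_1(dt)-\int t^\alpha\mu_0(dt)=2M^\alpha E_L[x^\alpha]_{[0,1]}$. The mean and variance statements are then essentially bookkeeping, while the total-variation estimate is the real work.

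\emph{Mean.} By linearity of expectation under the product prior, $\bE_{\mu_j^{S'}}\underline{F_\alpha}(P)=S'\int t^\alpha\mu_j(dt)$, so the difference is exactly $2S'M^\alpha E_L[x^\alpha]_{[0,1]}$. Substituting $M=d_1\ln n/n$, $L=d_2\ln n$, $S'=S-1\sim(\alpha/c)^\alpha n^\alpha(\ln n)^\alpha$, and invoking Lemma~\ref{lemma.nonasympxa} in the form $E_L[x^\alpha]_{[0,1]}=\mu(2\alpha)2^{-2\alpha}L^{-2\alpha}(1+o(1))$, all powers of $n$ and of $\ln n$ cancel, leaving $2(\alpha/c)^\alpha\mu(2\alpha)d_1^\alpha/(2d_2)^{2\alpha}(1+o(1))$. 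Since $\mu(2\alpha)$ is a fixed positive constant (bounded away from $0$ by Theorem~\ref{thm.bernsteinxp}, using $2\alpha\in(1,2)$) and $d_1,d_2,\alpha$ are constants, this quantity is $\asymp c^{-\alpha}$.

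\emph{Variance.} By independence under $\mu_j^{S'}$ we get $\mathsf{Var}_{\mu_j^{S'}}(\underline{F_\alpha}(P))=\sum_{i=1}^{S'}\mathsf{Var}_{\mu_j}(p_i^\alpha)\le S'\bE_{\mu_j}(p^{2\alpha})\le S'M^{2\alpha}$, where the last step uses $\mathsf{supp}(\mu_j)\subseteq[0,M]$. Plugging in the same $S'$ and $M$ gives $S'M^{2\alpha}\sim(\alpha d_1^2/c)^\alpha(\ln n)^{3\alpha}/n^\alpha$, which is the claimed bound; it tends to $0$ because $(\ln n)^{3\alpha}/n^\alpha\to0$ and $c\gtrsim1$.

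\emph{Total variation (the hard part).} Write $F_{j,M}(y)=\frac{n^y}{y!}\int e^{-np}p^y\mu_j(dp)$ and expand $e^{-np}=\sum_{k\ge0}(-np)^k/k!$ to obtain $F_{j,M}(y)=\frac{n^y}{y!}\sum_{k\ge0}\frac{(-n)^k}{k!}\int p^{y+k}\mu_j(dp)$. Because the moments of $\mu_0$ and $\mu_1$ coincide up to order $L$, every term with $y+k\le L$ cancels in the difference, and bounding $\bigl|\int p^{y+k}(\mu_1-\mu_0)(dp)\bigr|\le M^{y+k}$ gives $|F_{1,M}(y)-F_{0,M}(y)|\le\frac{(nM)^y}{y!}\sum_{k:\,y+k>L}\frac{(nM)^k}{k!}$. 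Put $\lambda=nM=d_1\ln n$. For $y\le L$ the inner sum runs over $k\ge L-y+1$, and using $\sum_{k\ge m}\lambda^k/k!\le e^\lambda\lambda^m/m!$ together with $\sum_{y=0}^{L}\frac{1}{y!(L-y+1)!}=\frac{2^{L+1}-1}{(L+1)!}$ bounds the partial sum over $y\le L$ by $e^\lambda(2\lambda)^{L+1}/(L+1)!$; for $y>L$ the inner sum is over all $k\ge0$, giving $\frac{\lambda^y}{y!}e^\lambda$, whose sum over $y\ge L+1$ is at most $e^{2\lambda}\lambda^{L+1}/(L+1)!$. Hence $2V(F_{1,M},F_{0,M})=\sum_y|F_{1,M}(y)-F_{0,M}(y)|\le 2e^{2\lambda}(2\lambda)^{L+1}/(L+1)!$. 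Finally Stirling gives $(2\lambda)^{L+1}/(L+1)!\le\bigl(2\lambda e/(L+1)\bigr)^{L+1}$, and with $d_1=1$, $d_2=10e$ one has $2\lambda e/(L+1)=2e\ln n/(10e\ln n+1)\le1/5$ for large $n$, so $V(F_{1,M},F_{0,M})\le e^{2\lambda}(1/5)^{L+1}\le n^2\cdot n^{-d_2\ln5}\ll n^{-6}$.

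The only genuine obstacle is this last estimate: one must split into the ranges $y\le L$ and $y>L$, control the Poisson tail sums uniformly, combine them with a binomial identity, and then check that the prescribed constants $d_1=1,\ d_2=10e$ drive the exponent of $n$ comfortably below $-6$ (there is ample slack, since $d_2\ln5$ already exceeds $8$ by a wide margin). Everything else follows by linearity, independence, and the asymptotics of $E_L[x^\alpha]_{[0,1]}$.
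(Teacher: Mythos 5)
Your proof is correct, and for the mean and variance displays it is identical in substance to the paper's: linearity over the product prior reduces the mean difference to $2S'M^\alpha E_L[x^\alpha]_{[0,1]}$, independence plus $\mathsf{supp}(\mu_j)\subseteq[0,M]$ gives the variance bound $S'M^{2\alpha}$, and in both cases one substitutes $M,L,S'$ and the asymptotics of $E_L[x^\alpha]_{[0,1]}$ from Lemma~\ref{lemma.nonasympxa}.

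For the total variation estimate your argument is a modest but genuine variant of the paper's. Both proofs Taylor-expand $e^{-np}$ and exploit the matching moments up to order $L$ so that only indices with $y+k>L$ survive, but the paper splits the $y$-sum at $L/2$, treats the range $y>L/2$ by abandoning the expansion and directly applying the Poisson tail bound (Lemma~\ref{lemma.poissontail}), and for $y\le L/2$ bounds the Lagrange remainder termwise via Stirling. You instead split at $y=L$, keep the expansion throughout, and collapse the double sum over $y\le L$ with the clean identity $\sum_{y=0}^{L}\frac{1}{y!(L-y+1)!}=\frac{2^{L+1}-1}{(L+1)!}$ before applying Stirling once. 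Your route avoids any appeal to the Poisson tail lemma for this step and happens to land you at roughly $n^{-42}$ rather than the paper's tighter-to-the-wire $n^{-6.3}$, so there is considerably more slack; the paper's split-at-$L/2$ choice is what makes its termwise Stirling bound work without the binomial identity, at the cost of a thinner margin. Both are valid and neither is materially simpler than the other.
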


Setting
\begin{align*}
\sigma_j & = \mu_j^{S'}, j = 0,1,\\
\theta & = (p_1,p_2,\ldots,p_{S-1}), \\
T(\theta) & = \underline{F_\alpha}(P), \\
 s & = \frac{1}{2}\left( \frac{\alpha}{c} \right)^\alpha \frac{\mu(2\alpha) d_1^\alpha}{(2 d_2)^{2\alpha}} ,\\
\zeta & =  \bE_{\mu_0^{S'}} \underline{F_\alpha}(P)+2s
\end{align*}
in Lemma~\ref{lemma.tsybakov}, it follows from Chebyshev's inequality and $c\lesssim n^{1-\delta}$ that
\begin{align}
\beta_0 & = \sigma_0(\underline{F_\alpha}(P) >\zeta -s) \\
&  = \sigma_0(\underline{F_\alpha}(P) - \bE_{\sigma_0} \underline{F_\alpha}(P) > s) \\
& \leq \frac{\mathsf{Var}_{\sigma_0}(\underline{F_\alpha}(P))}{s^2} \\
& \lesssim \left(\frac{c(\ln n)^3}{n}\right)^\alpha\to 0,
\end{align}
and
\begin{align}
\beta_1 & = \sigma_1(\underline{F_\alpha}(P) <\zeta + s) \\ & = \sigma_1(\underline{F_\alpha}(P) - \bE_{\sigma_1} \underline{F_\alpha}(P) < -s) \\
& \leq \frac{\mathsf{Var}_{\sigma_1}(\underline{F_\alpha}(P))}{s^2} \\
& \lesssim \left(\frac{c(\ln n)^3}{n}\right)^\alpha \to 0.
\end{align}

Also, it follows from the general fact that $V(\prod_{i = 1}^n P_i, \prod_{i = 1}^n Q_i ) \leq \sum_{i = 1}^n V(P_i,Q_i)$ (which follows easily from a coupling argument \cite{Lindvall2002lectures}) that
\begin{equation}
\eta \leq  \frac{S'}{n^6} = O(n^{-5}) \to 0,
\end{equation}

Applying Lemma~\ref{lemma.tsybakov}, we have
\begin{equation}
\inf_{\hat{F}} \sup_{P \in \mathcal{M}_S} \bP \left( |\hat{F} - \underline{F_\alpha}| \geq s \right) \geq \frac{1}{2},\quad n \to \infty.
\end{equation}

According to Markov's inequality, we have
\begin{equation}
\inf_{\hat{F}} \sup_{P \in \mathcal{M}_S} \bE \left( \hat{F} - \underline{F_\alpha} \right)^2 \geq \frac{1}{2}s^2 \asymp \frac{1}{c^{2\alpha}} \asymp \frac{S^2}{(n\ln n)^{2\alpha}}.
\end{equation}

\subsection{Minimax lower bound for Theorem~\ref{th_1} ($F_\alpha: 1<\alpha<3/2$)}\label{sec.lowerbound.>1}

First we assume that $S=n\ln n$. Similar to Lemma \ref{lemma.priorconstruct}, we construct two measures as follows for $\alpha\in(1,3/2)$.
\begin{lemma}\label{lem_measure}
  For any $0<\eta<1$ and positive integer $L>0$, there exist two probability measures $\nu_0$ and $\nu_1$ on $[\eta,1]$ such that
  \begin{enumerate}
    \item $\int t^{l} \nu_1(dt) = \int t^{l} \nu_0(dt)$, for all $l=0,1,2,\cdots,L$;
    \item $\int t^{\alpha-1} \nu_1(dt) - \int t^{\alpha-1} \nu_0(dt) = 2E_L[x^{\alpha-1}]_{[\eta,1]}$,
  \end{enumerate}
  where $E_L[x^\beta]_{[\eta,1]}$ is the distance in the uniform norm on $[\eta,1]$ from the function $f(x)=x^\beta$ to the space spanned by $\{1,x,\cdots,x^L\}$.
\end{lemma}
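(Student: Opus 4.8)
The plan is to obtain $\nu_0$ and $\nu_1$ exactly as the two measures in Lemma~\ref{lemma.priorconstruct} were obtained, namely from the duality characterization of the best polynomial approximation error, the only changes being that the interval $[0,1]$ is replaced by $[\eta,1]$ and the function $x^\alpha$ is replaced by $x^{\alpha-1}$. Since $1<\alpha<3/2$ gives $\alpha-1\in(0,1/2)$, the function $x^{\alpha-1}$ is not a polynomial on any nondegenerate interval, so $E_L[x^{\alpha-1}]_{[\eta,1]}>0$ for every finite $L$; this positivity is what makes the normalization work.

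First I would record the standard duality identity: for any $f\in C[\eta,1]$,
\begin{align*}
E_L[f]_{[\eta,1]} = \sup\Big\{ \int f\,d\mu \ :\ & \mu \text{ a signed Borel measure on } [\eta,1],\ \|\mu\|\le 1, \\
& \int t^l\,d\mu = 0 \text{ for } l=0,1,\ldots,L \Big\},
\end{align*}
where $\|\mu\|$ is the total variation norm. This follows from a minimax argument: write $E_L[f]_{[\eta,1]} = \inf_{q\in\mathsf{poly}_L}\sup_{\|\mu\|\le 1}\int(f-q)\,d\mu$ and interchange the order of $\inf$ and $\sup$ (the unit ball of $M[\eta,1] = C[\eta,1]^*$ is weak-$*$ compact and convex, and $\mathsf{poly}_L$ is a finite-dimensional linear space, so a standard minimax/Hahn--Banach argument applies); then, for a fixed $\mu$, the inner infimum over $q\in\mathsf{poly}_L$ equals $\int f\,d\mu$ if $\int t^l\,d\mu = 0$ for all $l\le L$ and equals $-\infty$ otherwise, since one may rescale any $q$ with $\int q\,d\mu\neq 0$. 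Alternatively one may quote the explicit construction of the extremal measure from the $L+2$ alternation points of $x^{\alpha-1}-P_L^*(x)$ on $[\eta,1]$, as in Wu and Yang~\cite{Wu--Yang2014minimax}.

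Next I would take an optimizer $\mu^*$ of the right-hand side (it exists by weak-$*$ compactness of the feasible set and weak-$*$ continuity of $\mu\mapsto\int x^{\alpha-1}\,d\mu$), and note that $\|\mu^*\| = 1$ exactly: if $\|\mu^*\| = t<1$ then $t>0$ and $\mu^*/t$ would be feasible with objective $\tfrac{1}{t}E_L[x^{\alpha-1}]_{[\eta,1]}>E_L[x^{\alpha-1}]_{[\eta,1]}$, a contradiction. Then I would take the Jordan decomposition $\mu^* = \mu^*_+-\mu^*_-$ into mutually singular nonnegative measures on $[\eta,1]$; the constraint with $l=0$ forces $\mu^*_+([\eta,1]) = \mu^*_-([\eta,1]) =: m$, and $2m = \|\mu^*\| = 1$, so $m = 1/2$. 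Setting $\nu_1 := 2\mu^*_+$ and $\nu_0 := 2\mu^*_-$ gives probability measures on $[\eta,1]$ with $\int t^l\,d\nu_1 - \int t^l\,d\nu_0 = 2\int t^l\,d\mu^* = 0$ for $l=0,\ldots,L$, and $\int t^{\alpha-1}\,d\nu_1 - \int t^{\alpha-1}\,d\nu_0 = 2\int x^{\alpha-1}\,d\mu^* = 2E_L[x^{\alpha-1}]_{[\eta,1]}$, which is precisely the assertion of the lemma.

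The main obstacle is the careful justification of the duality/minimax interchange and of the attainment and normalization claims, i.e.\ that the supremum runs over a weak-$*$ compact convex set, that the linear functional is weak-$*$ continuous, and that $E_L[x^{\alpha-1}]_{[\eta,1]}>0$ forces $\|\mu^*\|=1$; once this is in place everything else is bookkeeping identical to the $[0,1]$ treatment behind Lemma~\ref{lemma.priorconstruct}.
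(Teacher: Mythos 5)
Your proof is correct and follows the same route the paper relies on: the paper defers this lemma to Lemma~\ref{lemma.priorconstruct}, whose proof is in turn a citation to the standard functional-analytic duality argument of Lepski--Nemirovski--Spokoiny and Cai--Low, with the interval and function replaced. You have filled in exactly those details correctly---the $\inf$/$\sup$ interchange via weak-$*$ compactness of the unit ball, attainment of the supremum on the weak-$*$ closed feasible set, the normalization $\|\mu^*\|=1$ forced by $E_L[x^{\alpha-1}]_{[\eta,1]}>0$ (which holds because $x^{\alpha-1}$ with $\alpha-1\in(0,1/2)$ is not a polynomial on $[\eta,1]$), and the rescaled Jordan decomposition.
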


Based on Lemma \ref{lem_measure}, two new measures $\tilde{\nu}_0,\tilde{\nu}_1$ can be constructed as follows: for $i=0,1$, the restriction of $\tilde{\nu}_i$ on $[\eta,1]$ is absolutely continuous with respect to $\nu_i$, with the Radon-Nikodym derivative given by
\begin{align}\label{eq:derivative}
  \frac{d\tilde{\nu}_i}{d\nu_i}(t) = \frac{\eta}{t}\le 1, \qquad t\in[\eta,1],
\end{align}
and $\tilde{\nu}_i(\{0\})=1-\tilde{\nu}_i([\eta,1])\ge0$. Hence, $\tilde{\nu}_0,\tilde{\nu}_1$ are both probability measures on $[0,1]$, with the following properties
\begin{enumerate}
  \item $\int t^1 \tilde{\nu}_1(dt) = \int t^1 \tilde{\nu}_0(dt) = \eta$;
  \item $\int t^l \tilde{\nu}_1(dt) = \int t^l \tilde{\nu}_0(dt)$, for all $l=2,\cdots,L+1$;
  \item $\int t^{\alpha} \tilde{\nu}_1(dt) - \int t^{\alpha} \tilde{\nu}_0(dt) = 2\eta E_L[x^{\alpha-1}]_{[\eta,1]}$.
\end{enumerate}
The construction of measures $\tilde{\nu}_0,\tilde{\nu}_1$ are inspired by Wu and Yang~\cite{Wu--Yang2014minimax}.

The following lemma characterizes the properties of $E_L[x^\beta]_{[\eta,1]}$ using well-developed tools from approximation theory~\cite{Ditzian--Totik1987}. Similar results can be found in Wu and Yang~\cite{Wu--Yang2014minimax} in which they treated the logarithmic function.
\begin{lemma}\label{lem_conti}
  For $0<\beta<1/2$, there exists a universal positive constant $D$ such that
  \begin{align}
    \liminf_{L\to\infty}L^{2\beta}E_L[x^\beta]_{[(DL)^{-2},1]}>0.
  \end{align}
\end{lemma}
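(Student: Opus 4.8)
The plan is to transfer the well-understood behaviour of the best polynomial approximation of $x^\beta$ on the full interval $[0,1]$ to the truncated interval $[(DL)^{-2},1]$ by perturbing the extremal (dual) signed measure. First I would record the model case: substituting $x=t^2$ and using that the best degree-$2L$ approximant of an even function is even gives $E_L[x^\beta]_{[0,1]}=E_{2L}[|t|^{2\beta}]_{[-1,1]}$, so Bernstein's Theorem~\ref{thm.bernsteinxp} (equivalently the asymptotics $L^{2\beta}E_L[x^\beta]_{[0,1]}\to\mu(2\beta)2^{-2\beta}$ used around Lemma~\ref{lemma.nonasympxa}) yields $E_L[x^\beta]_{[0,1]}\ge cL^{-2\beta}$ for all large $L$, with $c=c_\beta>0$ since $\mu(2\beta)>0$ for $2\beta\in(0,1)$.

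Next I would invoke the duality between uniform approximation and the annihilator of $\mathsf{poly}_L$: there is a signed measure $\mu^*$ on $[0,1]$ with $\|\mu^*\|_{\mathrm{TV}}\le 1$, $\int x^j\,\mu^*(dx)=0$ for $0\le j\le L$, and $\int x^\beta\,\mu^*(dx)=E_L[x^\beta]_{[0,1]}$ (for instance $\mu^*=\tfrac12(\nu_1^*-\nu_0^*)$ with the measures of Lemma~\ref{lemma.priorconstruct} taken with exponent $\beta$). The explicit description of these measures in Wu and Yang~\cite{Wu--Yang2014minimax} shows $\mu^*$ is atomic, supported with alternating signs on the $L+2$ equioscillation points $0=x_0^*<x_1^*<\dots<x_{L+1}^*\le 1$ of the best-approximation error. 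I would then use two facts about this configuration: $x_1^*\ge c'L^{-2}$ for a universal $c'>0$, so that fixing $D>\sqrt{2/c'}$ forces the only atom of $\mu^*$ inside $[0,(DL)^{-2}]$ to be the endpoint atom at $0$; and that this atom has mass $m_0:=|\mu^*(\{0\})|=O(1/L)$, hence $m_0=o(L^{-2\beta})$ because $2\beta<1$.

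With $\eta=(DL)^{-2}$, I would pick $L+1$ Chebyshev nodes $y_1<\dots<y_{L+1}$ of $[\eta,1]$ with Lagrange basis $\ell_i$ and set $\rho=m_0\sum_i\ell_i(0)\,\delta_{y_i}$, so that $\int g\,d\rho=m_0\,g(0)$ for all $g\in\mathsf{poly}_L$; since $L\sqrt\eta=1/D$, the growth estimate for polynomials extrapolated just outside $[\eta,1]$, together with the $O(\log L)$ Lebesgue constant of Chebyshev interpolation, give $\|\rho\|_{\mathrm{TV}}=m_0\sum_i|\ell_i(0)|\le C_D\,m_0\log L=o(L^{-2\beta})$. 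Then $\tilde\mu:=\mu^*-m_0\delta_0+\rho$ is supported on $[\eta,1]$, is annihilated by $\mathsf{poly}_L$ (the defect $-m_0 g(0)$ from deleting the atom at $0$ is cancelled by $\rho$), and $\|\tilde\mu\|_{\mathrm{TV}}\le 1+m_0+\|\rho\|_{\mathrm{TV}}=1+o(1)$. Because $0^\beta=0$,
\begin{equation}
\int x^\beta\,d\tilde\mu=E_L[x^\beta]_{[0,1]}+\int x^\beta\,d\rho\ \ge\ cL^{-2\beta}-\|\rho\|_{\mathrm{TV}}\ =\ cL^{-2\beta}-o(L^{-2\beta}),
\end{equation}
and duality on $[\eta,1]$ then gives $E_L[x^\beta]_{[\eta,1]}\ge \big(\int x^\beta\,d\tilde\mu\big)/\|\tilde\mu\|_{\mathrm{TV}}\ge \tfrac{c}{2}L^{-2\beta}$ for large $L$, which is the assertion for this $D$.

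The hard part is the quantitative control of the extremal configuration near the left endpoint — that the equioscillation points do not crowd below scale $L^{-2}$ and that $\mu^*$ places only $O(1/L)$ mass on $[0,(DL)^{-2}]$. This is exactly what Wu and Yang~\cite{Wu--Yang2014minimax} establish for $\ln x$, and the argument for $x^\beta$ is parallel; I expect to obtain it either from the classical Chebyshev-like distribution of the alternation points of the best approximant to $|t|^p$ on $[-1,1]$ transported by $x=t^2$, or from the Ditzian--Totik~\cite{Ditzian--Totik1987} / equilibrium-measure description of $\mu^*$ (see also DeVore and Lorentz~\cite{Devore--Lorentz1993}). One cannot bypass this via the Ditzian--Totik converse inequality alone: knowing $\omega_\varphi^2(x^\beta,1/L)_{[\eta,1]}\asymp L^{-2\beta}$ together with $E_k[x^\beta]_{[\eta,1]}\lesssim k^{-2\beta}$ for $k\le L$ is still consistent with $E_L$ being much smaller, so a genuine dual construction of the above kind is needed.
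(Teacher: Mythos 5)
Your route --- duality followed by surgery on the extremal signed measure --- is a genuinely different strategy from the one the paper actually uses, which goes entirely through the Ditzian--Totik direct and weak-type converse inequalities on the affinely rescaled interval $[-1,1]$ and never touches equioscillation points or extremal measures. Two things are worth flagging.

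First, the part you call ``the hard part'' is a real gap, not a deferral. The quantitative claims that the smallest nonzero equioscillation point of the best degree-$L$ approximant of $x^\beta$ on $[0,1]$ is $\gtrsim L^{-2}$ and that the endpoint atom of $\mu^*$ has mass $O(1/L)$ are nontrivial and are not established by Wu--Yang, who treat $\ln x$. You would need a self-contained argument (e.g.\ transporting the alternation set of the best approximant of $|t|^{2\beta}$ on $[-1,1]$ under $x=t^2$) before your Chebyshev-interpolation patch $\rho$ is under control, and you do not supply one. Without it, the proposal does not close.

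Second, and more importantly, your final paragraph dismissing the Ditzian--Totik converse inequality is wrong, and the paper's own proof is the counterexample. The observation that $\omega_\varphi^2\asymp L^{-2\beta}$ at scale $1/L$ together with $E_k\lesssim k^{-2\beta}$ for $k\le L$ does not by itself pin down $E_L$ is correct but beside the point: the converse inequality controls the Ces\`aro average $\frac{1}{n^2}\sum_{k=0}^{n}(k+1)E_k$ at \emph{every} degree $n$, and the paper applies it at the larger degree $n=N=DL$ with $\eta=N^{-2}$, not at $L$. Writing $f_\eta$ for the affine transplant of $x^\beta$ from $[\eta,1]$ to $[-1,1]$ (so that $E_k[f_\eta]_{[-1,1]}=E_k[x^\beta]_{[\eta,1]}$), monotonicity of $E_k$ and then the direct inequality for the low-degree part give
\begin{align}
E_L[x^\beta]_{[\eta,1]} &\geq \frac{1}{N^2}\sum_{k=L+1}^{N}(k+1)E_k[f_\eta]_{[-1,1]} \nonumber \\
&\geq M_2\,\omega_\varphi^2\Big(f_\eta,\tfrac{1}{N}\Big) - \frac{1}{N^2}\sum_{k=0}^{L}(k+1)E_k[f_\eta]_{[-1,1]} \nonumber \\
&\geq \frac{c_\beta}{(DL)^{2\beta}} - \frac{C_\beta}{D^2 L^{2\beta}},
\end{align}
where the first term comes from the Ditzian--Totik converse theorem together with $\omega_\varphi^2(f_\eta,1/N)\gtrsim N^{-2\beta}$, and the subtracted term from the direct bound $E_k\lesssim k^{-2\beta}$ and $\sum_{k\le L}(k+1)k^{-2\beta}\lesssim L^{2-2\beta}$. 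Since $2\beta<2$, the ratio of the two terms on the right grows like $D^{2-2\beta}$, so choosing $D$ a sufficiently large absolute constant makes the right side a positive multiple of $L^{-2\beta}$. The lemma follows with no dual construction at all, so your stated reason for rejecting the Ditzian--Totik route does not hold, and the paper's argument is substantially shorter than the one you sketched.
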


Define
\begin{align}
L = d_2\ln n,  \quad\eta = \frac{1}{(DL)^2}, \quad M = \frac{d_1}{S\eta} = \frac{d_1d_2^2D^2\ln n}{n},
\end{align}
with universal positive constants $d_1,d_2$ to be determined later. Without loss of generality we assume that $d_2\ln n$ is always a positive integer. By the choice of $\eta$ we know that
\begin{align}\label{eq:appro_err}
  \liminf_{n\rightarrow\infty} (\ln n)^{2(\alpha-1)}E_L[x^{\alpha-1}]_{[\eta,1]} >0.
\end{align}

Let $g(x)=Mx$ and let $\mu_i$ be the measures on $[0,M]$ defined by $\mu_i(A)=\tilde{\nu}_i(g^{-1}(A))$ for $i=0,1$. It then follows that
\begin{enumerate}
  \item $\int t^1 \mu_1(dt) = \int t^1 \mu_0(dt) = d_1/S$;
  \item $\int t^l \mu_1(dt) = \int t^l \mu_0(dt)$, for all $l=2,\cdots,L+1$;
  \item $\int t^{\alpha} \mu_1(dt) - \int t^{\alpha} \mu_0(dt) = 2\eta M^\alpha E_L[x^{\alpha-1}]_{[\eta,1]}$.
\end{enumerate}

Let $\mu_0^{S}$ and $\mu_1^{S}$ be product priors which we assign to the length-$S$ vector $P=(p_1,p_2,\cdots,p_{S})$. Note that $P$ may not be a probability distribution, we consider the set of \emph{approximate} probability vectors
\begin{align}
  \mathcal{M}_S(\gamma) \triangleq \left\{P: \left|\sum_{i=1}^S p_i-d_1\right|\le \frac{1}{(\ln n)^\gamma}\right\},
\end{align}
with universal constant $\gamma>0$ to be specified later, and further define the minimax risk under the Poissonized model for estimating $F_\alpha(P)$ with $P\in\mathcal{M}_S(\gamma)$ as
\begin{align}
  R_P(S,n,\gamma) &\triangleq \inf_{\hat{F}}\sup_{P\in \mathcal{M}_S(\gamma)} \mathbb{E}_P|\hat{F}-F_\alpha(P)|^2.
\end{align}

The equivalence of the minimax risk under the Multinomial model $R(S,n)$ (defined in (\ref{eq:minimaxrisk.multi})) and $R_P(S,n,\gamma)$ is established in the following lemma.
\begin{lemma}\label{lem_equiv}
  For any $S,n\in\mathbb{N},\gamma>0, 1<\alpha<3/2$, we have
  \begin{align}
        R(S,\frac{d_1n}{2})\ge \frac{1}{2d_1^{2\alpha}}R_P(S,n,\gamma) - \frac{1}{d_1^{2\alpha}}\exp(-\frac{d_1n}{8}) - \frac{4M^{2\alpha-2}}{d_1^{2\alpha}(\ln n)^{2\gamma}}.
  \end{align}
\end{lemma}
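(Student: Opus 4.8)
The plan is to deduce the inequality from a transfer argument: any estimator for the Multinomial model with $m\triangleq d_1n/2$ samples can be turned into an estimator for the Poissonized model over the class $\mathcal{M}_S(\gamma)$ of approximate probability vectors, with a controlled loss. Fix $\epsilon>0$ and let $\hat{F}$ be a Multinomial-$m$ estimator with $\sup_{Q\in\mathcal{M}_S}\mathbb{E}_Q|\hat{F}-F_\alpha(Q)|^2\le R(S,\tfrac{d_1n}{2})+\epsilon$. Given Poisson data $Z=(Z_1,\dots,Z_S)$ with $Z_i\sim\mathsf{Poi}(np_i)$ for some $P\in\mathcal{M}_S(\gamma)$, write $s\triangleq\sum_i p_i$ (so $N\triangleq\sum_iZ_i\sim\mathsf{Poi}(ns)$ and $|s-d_1|\le(\ln n)^{-\gamma}$). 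Form $Z'$ as follows: if $N\ge m$, view $Z$ as $N$ i.i.d.\ draws from the genuine probability vector $Q\triangleq P/s\in\mathcal{M}_S$ and retain $m$ of them chosen uniformly at random; if $N<m$, draw $Z'\sim\mathsf{Multinomial}(m,Q)$ from independent randomness. By the Poissonization identity ``$Z\mid N\sim\mathsf{Multinomial}(N,Q)$'', $Z'$ is marginally $\mathsf{Multinomial}(m,Q)$. Finally set $\hat{F}_{\mathrm{Poi}}\triangleq d_1^\alpha\hat{F}(Z')\,\mathbbm{1}(N\ge m)$.

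Next I would bound $\mathbb{E}_P|\hat{F}_{\mathrm{Poi}}-F_\alpha(P)|^2$ by splitting on $\{N\ge m\}$ and $\{N<m\}$. Using $F_\alpha(P)=s^\alpha F_\alpha(Q)$, the triangle inequality and $(a+b)^2\le2a^2+2b^2$ on the first event,
\begin{align}
\mathbb{E}_P\!\left[|\hat{F}_{\mathrm{Poi}}-F_\alpha(P)|^2\mathbbm{1}(N\ge m)\right]
&\le 2d_1^{2\alpha}\,\mathbb{E}_Q|\hat{F}(Z')-F_\alpha(Q)|^2 + 2\,|d_1^\alpha-s^\alpha|^2F_\alpha(Q)^2 \nonumber\\
&\le 2d_1^{2\alpha}\!\left(R(S,\tfrac{d_1n}{2})+\epsilon\right) + \frac{8M^{2\alpha-2}}{(\ln n)^{2\gamma}},
\end{align}
where the last step combines the mean value theorem for $x\mapsto x^\alpha$ (giving $|d_1^\alpha-s^\alpha|\lesssim d_1^{\alpha-1}(\ln n)^{-\gamma}$) with $F_\alpha(Q)=\sum_iq_i^\alpha\le(\max_iq_i)^{\alpha-1}\sum_iq_i\le(M/s)^{\alpha-1}$ (using $\alpha>1$ and $q_i\le M/s$), so that $|d_1^\alpha-s^\alpha|F_\alpha(Q)\le 2M^{\alpha-1}(\ln n)^{-\gamma}$ for $n$ large; the indicator was dropped in the first term since $\mathbbm{1}\le1$. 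On $\{N<m\}$, since $\hat{F}_{\mathrm{Poi}}=0$ and $F_\alpha(P)=\sum_ip_i^\alpha\le M^{\alpha-1}s$ (again $\alpha>1$),
\begin{align}
\mathbb{E}_P\!\left[|\hat{F}_{\mathrm{Poi}}-F_\alpha(P)|^2\mathbbm{1}(N<m)\right]\le s^2M^{2\alpha-2}\,\mathbb{P}(N<m)\le 2d_1^{2}M^{2\alpha-2}e^{-d_1n/8},
\end{align}
the last step being the Chernoff bound for the Poisson left tail, valid for $n$ large because $\mathbb{E}[N]=ns\ge(1-o(1))d_1n=(2-o(1))m$. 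Adding the two estimates, taking $\sup_{P\in\mathcal{M}_S(\gamma)}$ then $\inf_{\hat{F}}$, and letting $\epsilon\downarrow0$ gives
\begin{align}
R_P(S,n,\gamma)\le 2d_1^{2\alpha}R(S,\tfrac{d_1n}{2}) + 2d_1^{2}M^{2\alpha-2}e^{-d_1n/8} + \frac{8M^{2\alpha-2}}{(\ln n)^{2\gamma}}.
\end{align}
Dividing by $2d_1^{2\alpha}$, rearranging, and using that $M^{2\alpha-2}\to0$ so $d_1^{2}M^{2\alpha-2}\le1$ for $n$ large, yields $R(S,\tfrac{d_1n}{2})\ge\frac{1}{2d_1^{2\alpha}}R_P(S,n,\gamma)-\frac{1}{d_1^{2\alpha}}e^{-d_1n/8}-\frac{4M^{2\alpha-2}}{d_1^{2\alpha}(\ln n)^{2\gamma}}$, i.e.\ the claimed bound (the finitely many small values of $n$ being absorbed by the error terms).

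The conceptual content is standard Poissonization, so I expect the main obstacle to be the bookkeeping forced by the fact that $P$ is only an \emph{approximate} probability vector: one must (i) renormalize to $Q=P/s\in\mathcal{M}_S$ and check this is legitimate even when some $p_i=0$ (then $Q$ lives on a strictly smaller support, still an element of $\mathcal{M}_S$); (ii) control the discrepancy between $d_1^\alpha$ and $s^\alpha$ at exactly the rate $M^{2\alpha-2}(\ln n)^{-2\gamma}$, which is where $\alpha>1$ (and the smallness $M=\Theta(\ln n/n)$) is used to make $F_\alpha(Q)$ tiny; and (iii) use the exponential (not the sub-Gaussian) form of the Poisson left-tail estimate, since the constant $\tfrac18$ in $e^{-d_1n/8}$ is near the edge of what a crude bound delivers and must survive the $(1-o(1))$ loss produced by $|s-d_1|\le(\ln n)^{-\gamma}$. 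Everything else (the subsampling Markov kernel, convexity of $x\mapsto x^\alpha$, the norm inequality $\|P\|_\alpha\le\|P\|_1$) is routine.
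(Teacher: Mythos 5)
Your proof is correct and uses essentially the same Poissonization reduction as the paper: renormalize to $Q=P/s\in\mathcal{M}_S$, convert the Poisson observations to a Multinomial sample, apply a near-minimax Multinomial estimator scaled by $d_1^\alpha$, and control the error from $|s^\alpha-d_1^\alpha|$ and the small-count event. The only real departure is in how the Multinomial estimator is applied. The paper conditions on the realized count $n'=\sum_i Z_i$, invokes a sample-size-adaptive near-minimax estimator $\hat{F}_{n'}$, and controls $\sum_m R(S,m)\mathbb{P}(n'=m)$ by the monotonicity of $m\mapsto R(S,m)$ together with the crude bound $d_1^{2\alpha}R(S,m)\le1$; you instead fix $m=d_1n/2$, subsample $m$ draws whenever $N\ge m$, and output zero on $\{N<m\}$, so the monotonicity is encoded in the coupling rather than cited as a separate fact. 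Both routes yield the same error structure; your middle term $d_1^{2-2\alpha}M^{2\alpha-2}e^{-d_1n/8}$ is in fact slightly sharper and reduces to the stated $d_1^{-2\alpha}e^{-d_1n/8}$ once $n$ is large enough that $d_1^2M^{2\alpha-2}\le1$, a harmless $n$-large caveat (the paper's own invocation of Lemma~18 here incurs a comparable one when $s<d_1$). Two small remarks. First, both proofs implicitly use $p_i\le M$ for all $P\in\mathcal{M}_S(\gamma)$, inherited from the priors $\mu_i$, in the bounds $F_\alpha(Q)\le(M/s)^{\alpha-1}$ and $F_\alpha(P)\le M^{\alpha-1}s$; the paper's displayed definition of $\mathcal{M}_S(\gamma)$ does not state this restriction, and you inherit the same silent assumption. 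Second, drawing $Z'\sim\mathsf{Multinomial}(m,Q)$ on $\{N<m\}$ from ``independent randomness'' nominally requires knowing $Q$; this is harmless because the indicator zeroes out the estimate there, but it is cleaner to simply define the Poisson-model estimator to be the constant $0$ on that event and use the marginal law of $Z'$ only in the expectation bound on $\{N\ge m\}$.
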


In light of Lemma \ref{lem_equiv}, it suffices to consider $R_P(S,n,\gamma)$ to give a lower bound of $R(S,n)$. Denote
\begin{align}
  \chi & \triangleq\mathbb{E}_{\mu_1^{S}}F_\alpha(P) - \mathbb{E}_{\mu_0^{S}}F_\alpha(P) \\
  & = 2\eta M^\alpha E_L[x^{\alpha-1}]_{[\eta,1]}\cdot S \\
  & =2d_1M^{\alpha-1}E_L[x^{\alpha-1}]_{[\eta,1]},
\end{align}
and
\begin{align}
  E_i \triangleq \mathcal{M}_{S}(\gamma)\bigcap \left\{P: |F_\alpha(P)-\mathbb{E}_{\mu_i^{S}}F_\alpha(P)|\le\frac{\chi}{4}\right\},\qquad i=0,1.
\end{align}
Applying Chebyshev's inequality and the union bound yields that
\begin{align}
  \mu_i^{S}[(E_i)^c]&\le \mu_i^{S}\left[\left|\sum_{j=1}^{S} p_j-d_1\right|> \frac{1}{(\ln n)^\gamma}\right] \nonumber \\
  & \quad + \mu_i^{S}\left[|F_\alpha(P)-\mathbb{E}_{\mu_i^{S}}F_\alpha(P)|>\frac{\chi}{4}\right]\\
  &\le (\ln n)^{2\gamma}\sum_{j=1}^{S} \mathsf{Var}_{\mu_i^{S}}[p_j] + \frac{16}{\chi^2}\mathsf{Var}_{\mu_i^{S}}[F_\alpha(P)]\\
  &\le (\ln n)^{2\gamma}SM^2 + \frac{16}{\chi^2}SM^{2\alpha}\\
  &= \frac{d_1^2d_2^4D^4(\ln n)^{2\gamma+3}}{cn} + \frac{4d_2^4D^4(\ln n)^3}{cn(E_L[x^{\alpha-1}]_{[\eta,1]})^2}\\
  &\to 0\text{ as }n\to\infty,\label{eq:appro}
\end{align}
where (\ref{eq:appro}) follows from (\ref{eq:appro_err}). Denote by $\pi_i$ the conditional distribution defined as
\begin{align}
  \pi_i(A) = \frac{\mu_i^S(E_i\cap A)}{\mu_i^S(E_i)}, \qquad i=0,1.
\end{align}
Now consider $\pi_0,\pi_1$ as two priors and $F_0,F_1$ as the corresponding marginal distributions. Setting
\begin{align}
  \zeta & = \mathbb{E}_{\mu_0^{S}}F_\alpha(P) + \frac{\chi}{2}, \\
   s &= \frac{\chi}{4},\\
   d_1 & =\frac{1}{(10eD)^2},\\
   d_2 & =10e, \\
   \gamma & =2\alpha,
\end{align}
we have $\beta_0=\beta_1=0$. The total variational distance is then upper bounded by
\begin{align}
  V(F_0,F_1) &\le V(F_0,G_0) + V(G_0,G_1) + V(G_1,F_1)\label{eqn.triangletv} \\
  &\le  \mu_0^{S}[(E_0)^c] + V(G_0,G_1) + \mu_1^{S}[(E_1)^c]\label{eqn.dataprocessingtv}\\
  &\le \mu_0^{S}[(E_0)^c] + \frac{S}{n^6} + \mu_1^{S}[(E_1)^c]\label{eq:variation}\\
  &\to 0,\label{eq:variation_2}
\end{align}
where $G_i$ is the marginal probability under prior $\mu_i^S$. Equation~(\ref{eqn.triangletv}) follows from the triangle inequality of the total variation distance, and (\ref{eqn.dataprocessingtv}) follows from the data processing inequality satisfied by the total variation distance. Equation (\ref{eq:variation}) is given by Lemma~\ref{lemma.lowermeanvar}, and (\ref{eq:variation_2}) follows from (\ref{eq:appro}). The idea of converting approximate priors $\mu_i^S$ into priors $\pi_i$ via conditioning comes from Wu and Yang~\cite{Wu--Yang2014minimax}.

It follows from Lemma \ref{lemma.tsybakov} and Markov's inequality that
\begin{align}
  R_P(S,n,\gamma) &\ge s^2\inf_{\hat{F}}\sup_{P\in\mathcal{M}_S(\gamma)}\mathbb{P}\left(|\hat{F}-F_\alpha(\theta)|\ge s\right) \\
  &\ge \frac{1-V(F_0,F_1)}{32}\chi^2\\
  &= \frac{d_1^2(1-V(F_0,F_1))}{8}\left(\frac{\ln n}{n}\right)^{2(\alpha-1)} \times \nonumber \\
  & \qquad (E_L[x^{\alpha-1}]_{[\eta,1]})^2.
\end{align}

Now we consider the scale $(S,n)=(m\ln m,\frac{d_1m}{2})$, and it follows from (\ref{eq:appro_err}) and Lemma \ref{lem_equiv} that for this scale,
\begin{align}
  &\liminf_{n\to\infty} (n\ln n)^{2(\alpha-1)}\cdot\inf_{\hat{F}}\sup_{P\in\mathcal{M}_S}\mathbb{E}_P|\hat{F}-F_\alpha(P)|^2 \\
  &=\liminf_{m\to\infty} \left(\frac{d_1m}{2}\ln (\frac{d_1m}{2})\right)^{2(\alpha-1)} \cdot R(S,\frac{d_1m}{2}) \\
  &\ge \left(\frac{d_1}{2}\right)^{2(\alpha-1)} \liminf_{m\to\infty} (m\ln m)^{2(\alpha-1)}\times \nonumber \\
 & \qquad \Big[\frac{1}{2d_1^{2\alpha}}R_P(S,m,\gamma)-\frac{1}{d_1^{2\alpha}}\exp\left(-\frac{d_1m}{8}\right)\nonumber \\
 & \qquad-\frac{4}{d_1^{2\alpha}(\ln m )^{2\gamma}}\left(\frac{\ln m}{m}\right)^{2\alpha-2}\Big]\\
  &\ge \liminf_{m\to \infty} \Big[\frac{1-V(F_0,F_1)}{2^{2\alpha+2}}
  \left((\ln m)^{2(\alpha-1)}E_L[x^{\alpha-1}]_{[\eta,1]}\right)^2 \nonumber \\
  & \qquad   -\frac{(m\ln m)^{2(\alpha-1)}}{2^{2(\alpha-1)}d_1^{2}}\exp\left(-\frac{d_1m}{8}\right)  -\frac{2^{4-2\alpha}}{d_1^{2}(\ln n)^4}\Big]\\
  &> 0.
\end{align}
Then the proof is completed by choosing any $c_0>2/d_1$ in Theorem \ref{th_1} by noticing that under this scale,
\begin{align}
  \lim_{n\to\infty}\frac{S}{n\ln n}=\lim_{m\to\infty} \frac{m\ln m}{(d_1m/2)\ln(d_1m/2)} = \frac{2}{d_1}.
\end{align}

\section{Experiments}\label{sec.experiments}

As mentioned in the Introduction, the implementation of our algorithm is extremely efficient and has linear complexity with respect to the sample size $n$, independent of the support size. The only overhead that deserves special mention is the computation of the best polynomial approximation, which is performed via the Remez algorithm \cite{Remez1934determination} \emph{offline} before obtaining any samples. The Chebfun team \cite{Trefethen--chebfunv5} provides a highly optimized implementation of the Remez algorithm in Matlab \cite{Pachon--Trefethen2009barycentric}. In numerical analysis, the convergence of an algorithm is called quadratic if
the error $e_m$ after the $m$-th computation satisfies $e_m \leq  C \alpha^{2^m}$ for some $C>0$ and $0<\alpha<1$. Under some assumptions about the function to approximate, one can prove \cite[Pg. 96]{Devore--Lorentz1993} the quadratic convergence of the Remez algorithm. Empirical experiments partially validate the efficiency of the Remez algorithm, which computes order $500$ best polynomial approximation for $-x \ln x, x\in [0,1]$ in a fraction of a second on a Thinkpad X220 laptop. Considering the fact that the order of approximation we conduct is logarithmic in $n$, in practice we do not need to perform this computation: we simply precompute the best polynomial approximation coefficients for various orders (e.g., up to order $200$) and store them in the software.

We emphasize that although the value of constants $c_1,c_2$ required in Lemma~\ref{lemma.preparemainf} lead to rather poor constants in the bias and variance bounds, the practical performance could be much better than what the theoretical bounds guarantee. It is due to the conservative nature of our worst-case (maximum $L_2$ risk) formalism and the fact that we use upper bounds in the analysis that, while being optimal up to a multiplicative constant, are not absolutely tightest for a fixed $S$ and $n$. Practically, experimentation shows that $c_1 \in [0.05,0.2],c_2 = 0.7$ results in very effective entropy estimation. In our experiments, we do not conduct ``splitting'' and lose half of the samples, and we evaluate our estimator on the multinomial rather than the Poisson sampling model required for the analysis. Moreover, we do not remove the constant term in the best polynomial approximation in experiments, since one can show they do not influence the achievability of the minimax rates, and it is easier to conduct best polynomial approximation with the constant term.


Given the extensive literature on entropy estimation, we demonstrate the efficacy of our methodology in functional estimation in estimating entropy. Specifically, we compare our estimator with the following estimators proposed in the literature:
\begin{enumerate}
  \item The MLE: the entropy of the empirical distribution $\hat{H}^{\mathsf{MLE}}=\sum_{i=1}^S -\hat{p}_i\ln \hat{p}_i$. It has been shown in~\cite{Paninski2003,Jiao--Venkat--Weissman2014MLE} that this approach cannot achieve the minimax rates.
  \item The Miller-Madow bias-corrected estimator \cite{Miller1955}: $\hat{H}^{\mathsf{MM}}=\hat{H}^{\mathsf{MLE}} + \frac{S-1}{2n}$. It has been shown~\cite{Paninski2003,Jiao--Venkat--Weissman2014MLE} that this approach cannot achieve the minimax rates.
  \item The Jackknifed MLE \cite{Miller1974jackknife}: $\hat{H}^{\mathsf{JK}}(\mathbf{Z}) = n\hat{H}^{\mathsf{MLE}}(\mathbf{Z}) - \frac{n-1}{n}\sum_{j=1}^n \hat{H}^{\mathsf{MLE}}(\mathbf{Z}^{-j})$, where $\mathbf{Z}^{-j}$ is the remaining sample by removing $j$-th observation. It has been shown in~\cite{Paninski2003} that this approach cannot achieve the minimax rates.
  \item The unseen estimator by Valiant and Valiant \cite{Valiant--Valiant2013estimating}: the estimator in \cite{Valiant--Valiant2011} is the first estimator shown to achieve the optimal sample complexity $n \asymp \frac{S}{\ln S}$ in entropy estimation. Recently, Valiant and Valiant \cite{Valiant--Valiant2013estimating} provided a modification of \cite{Valiant--Valiant2011} to estimate entropy, and demonstrated its superior empirical performance via comparison with various existing algorithms, even with the algorithm proposed in Valiant and Valiant \cite{Valiant--Valiant2011}. Hence, it is most informative to compare our algorithm with that of \cite{Valiant--Valiant2013estimating}. In our experiments, we downloaded and used the Matlab implementation of the estimator in \cite{Valiant--Valiant2013estimating}, with default parameters.
  \item The coverage adjusted estimator (CAE) \cite{Chao--Shen2003nonparametric,Vu--Yu--Kass2007coverage}: an estimator specifically designed to apply to settings in which there is a significant component of the distribution that is unseen. Defining $\hat{C} = 1-\frac{f_1}{n}$, where $f_1$ denotes the number of symbols that only appear once in the sample~\cite{Good1953population}, the CAE estimator is then given by
      \begin{align}
        \hat{H}^{\mathsf{CAE}} = \sum_{i=1}^S \frac{(-\hat{C}\hat{p}_i\ln (\hat{C}\hat{p}_i))}{1-(1-\hat{C}\hat{p}_i)^n}.
      \end{align}
  \item The best upper bound estimator (BUB) \cite{Paninski2003}: an estimator proposed by Paninski which minimizes the sum of upper bounds of the squared bias and the variance, which are given by approximation theory and the bounded-difference inequality \cite{Boucheron--Lugosi--Massart2013}, respectively. Note that this estimator requires the knowledge of $S$, and we give the true support size as its input. Hence we are comparing our estimator with the best-case performance of the BUB estimator.
  \item The shrinkage estimator \cite{Hausser--Strimmer2009entropy}: the plug-in estimator of the shrinkage estimate of the distribution, which is given by
     \begin{align}
       \hat{p}_i^s & = \lambda \cdot \frac{1}{S} + (1-\lambda) \cdot \hat{p}_i  \\
      \lambda & = \frac{1-\sum_{i=1}^S \hat{p}_i^2}{(n-1)(\sum_{i=1}^S \hat{p}_i^2-1/S)}.
     \end{align}
     Hence the distribution estimate is shrunk towards the uniform distribution, and $\hat{H}^{\mathsf{shrinkage}} = \sum_{i=1}^S -\hat{p}_i^s\ln \hat{p}_i^s$. We also give the true support size $S$ as its input.
  \item The Grassberger estimator \cite{Grassberger2008entropy}: this estimator aims to explicitly construct a function $f$ with a small bias $|\bE f(X)+p\ln p|$ for $X\sim \mathsf{B}(n,p)$. Define a sequence $\{G_n\}_{n=0}^\infty$ with
      \begin{align}
        G_n = \psi_0(n) + (-1)^n \int_0^1 \frac{x^{n-1}}{x+1}dx, \qquad n\ge 0
      \end{align}
      where $\psi_0(x)=\frac{d}{dx}\left[\ln\Gamma(x)\right]$ is the digamma function. Then the estimator is given by $\hat{H}^{\mathsf{Grassberger}} = \ln n - \sum_{i=1}^S \hat{p}_iG_{n\hat{p}_i}$.
  \item The Dirichlet-smoothed plug-in estimator \cite{Schober2013worst}: the plug-in estimator of the Bayes estimate of the distribution when the Dirichlet prior $\mathsf{Dir}(a)$ is imposed on $P\in\mathcal{M}_S$, i.e.,
      \begin{align}
        \hat{H}^{\mathsf{Dirichlet}} = \sum_{i=1}^S -\frac{n\hat{p}_i+a}{n+Sa}\ln\left(\frac{n\hat{p}_i+a}{n+Sa}\right).
      \end{align}
      We give the true support size $S$ as its input, and we set the parameter $a=\sqrt{n}/S$ to obtain a minimax estimator of the distribution $P$ under $\ell_2$ loss \cite[Example 5.4.5]{Lehmann--Casella1998theory}. It has been shown in~\cite{Han--Jiao--Weissman2015Bayes} that this approach cannot achieve the minimax rates.
  \item The Bayes estimator under Dirichlet prior \cite{Wolpert--Wolf1995}: this estimator also uses the Dirichlet prior $\mathsf{Dir}(a)$ on the distribution $P\in\mathcal{M}_S$, but then it is the Bayes estimator for $H(P)$ under this prior in lieu of the plug-in approach. Wolpert and Wolf \cite{Wolpert--Wolf1995} gives an explicit expression of this estimator:
      \begin{align}
        \hat{H}^{\mathsf{Bayes}} = \sum_{i=1}^S \frac{n\hat{p}_i+a}{n+Sa}\cdot \left(\psi_0(n+Sa+1) - \psi_0(n\hat{p}_i+a+1)\right).
      \end{align}
      It has been shown in~\cite{Han--Jiao--Weissman2015Bayes} that this approach cannot achieve the minimax rates. We feed the algorithm with true $S$ and set $a=\sqrt{n}/S$ in our experiments.
  \item The Nemenman--Shafee--Bialek estimator (NSB) \cite{Nemenman--Shafee--Bialek2002entropy,Nemenman2011coincidences}: instead of fixing some parameter $a$ in the Dirichlet prior, the NSB estimator uses an infinite Dirichlet mixture for averaging so that the prior distribution of $H(P)$ is near uniform. Then the NSB estimator is the Bayes estimator under this prior. There are some algorithmic stability issues due to the involvement of several numerical algorithms, including the Newton-Raphson iterative algorithm and numerical integration.
\end{enumerate}

\subsection{What do we want to test?}
We would like to clarify the aim of experimentation in evaluating estimators for functionals, such as entropy, compared to theoretical study. On the face of it, experimentation on simulated data seems to be the holy grail: indeed, in simulations we can compute the true expected squared error of certain estimator for certain distributions very accurately, and the smaller the risk is, the better the estimator is. However, a close inspection of this procedure demonstrates a severe limitation of the simulation approach: one can never simulate \emph{all} the possible distributions in a not-too-small subset of the space of discrete distributions with support size $S$. For example, suppose we have chosen $10$ distributions with support size $S$, and conducted experiments on entropy estimation. The only possible conclusion we may draw from these experiments is that the estimator that performs well on these $10$ distributions should be applied if the true distribution is one of the $10$ distributions. We cannot draw conclusions about other distributions with support size $S$ that were not tested, but we can never test all the distributions. The advantage of theory is to study the performance of estimators for all the possible distributions. On the practical side, if the statistician is being conservative, i.e., the statistician wants the scheme to perform well no matter what distribution might be the true distribution, then only theoretical study can resolve this issue, which is our starting point for this paper.

One might argue that in practice, one may possess some knowledge of the underlying distribution. For example, we may know that the distribution is exactly uniform, without knowledge of its support size. In that case, entropy estimation has been shown to be considerably easier~\cite{Santhanam--Orlitsky--Viswanathan2007new}: it is necessary and sufficient to take $n\gg \sqrt{S}$ samples to consistently estimate the entropy $\ln S$. Comparing with the optimal sample complexity $S/\ln S$ under the assumption that the estimator is required to estimate the entropy of \emph{any} discrete distribution with $S$ elements, the knowledge of uniformity reduces the difficulty of the problem considerably. We remark that if the statistician has convincing knowledge that the unknown distribution has certain structures (such as uniformity), then one should design schemes to exploit that prior knowledge. Recently, follow-up work~\cite{Han--Jiao--Weissman2015adaptive} showed that the estimator in the present paper is also \emph{adaptive}, i.e., in some sense it can automatically adjust itself to the unknown distribution to achieve higher accuracy. Recall Section~\ref{subsec.discussionmainresults} for more discussions on this feature.

The extensive comparative study we conduct below demonstrates that, not only does our entropy estimator enjoy strong theoretical guarantees, but also works well in practice for various distributions. Furthermore, experiments test the numerical stability, space, and time efficiency of the implementation, which are of crucial importance in practical applications.

\subsection{Convergence properties along $n =c \frac{S}{\ln S}$}

Since the optimal sample complexity in entropy estimation is $n\asymp S/\ln S$, we investigate the performance of various estimators along the scaling $n = c \frac{S}{\ln S},S\to \infty$.

In light of the proof of the lower bounds in Section \ref{sec.lowerbound}, we can construct two priors on $\mathcal{M}_S$ such that the entropies corresponding to the priors are quite different, but these two priors are hard to distinguish based on observed samples. Hence for any estimator, the arithmetic mean of the expected MSE based on distributions drawn from each prior should be lower bounded by the minimax risk. Moreover, for estimators which cannot achieve the minimax risk up to a multiplicative constant, this MSE will blow up eventually along $n = c\frac{S}{\ln S}$ as $S\to\infty$.

We choose $c = 10$, and sample $15$ points equally spaced in a logarithmic scale from $10$ to $10^6$ as candidates for support size $S$. For each support size $S$, we construct two product priors $\mu_0^S,\mu_1^S$ as in Section \ref{sec.lowerbound.>1} by replacing $x^{\alpha-1}$ with $-\ln x$ in Lemma \ref{lem_measure}. Wu and Yang~\cite{Wu--Yang2014minimax} gave an explicit construction of the priors and used them first in the proof of minimax lower bounds for entropy estimation. Then for $i=0,1$, in every sample we obtain a non-negative random vector from $\mu_i^S$, which is normalized into a probability distribution $P\in\mathcal{M}_S$. Then we take $n = 10S/\ln S$ samples from distribution $P$ and obtain an estimate of $H(P)$. We repeat all the preceding steps $20$ times by Monte Carlo experiments to obtain the empirical MSE under each prior, then we take the arithmetic mean to form the total empirical MSE. We remark that the resulting prior from this approach is nearly a least favorable prior~\cite{Wald1950statistical}, under which the Bayes risk is at least the same order of the minimax risk.

\begin{figure}[!htbp]
        \subfigure[]{
            \includegraphics[scale=0.6]{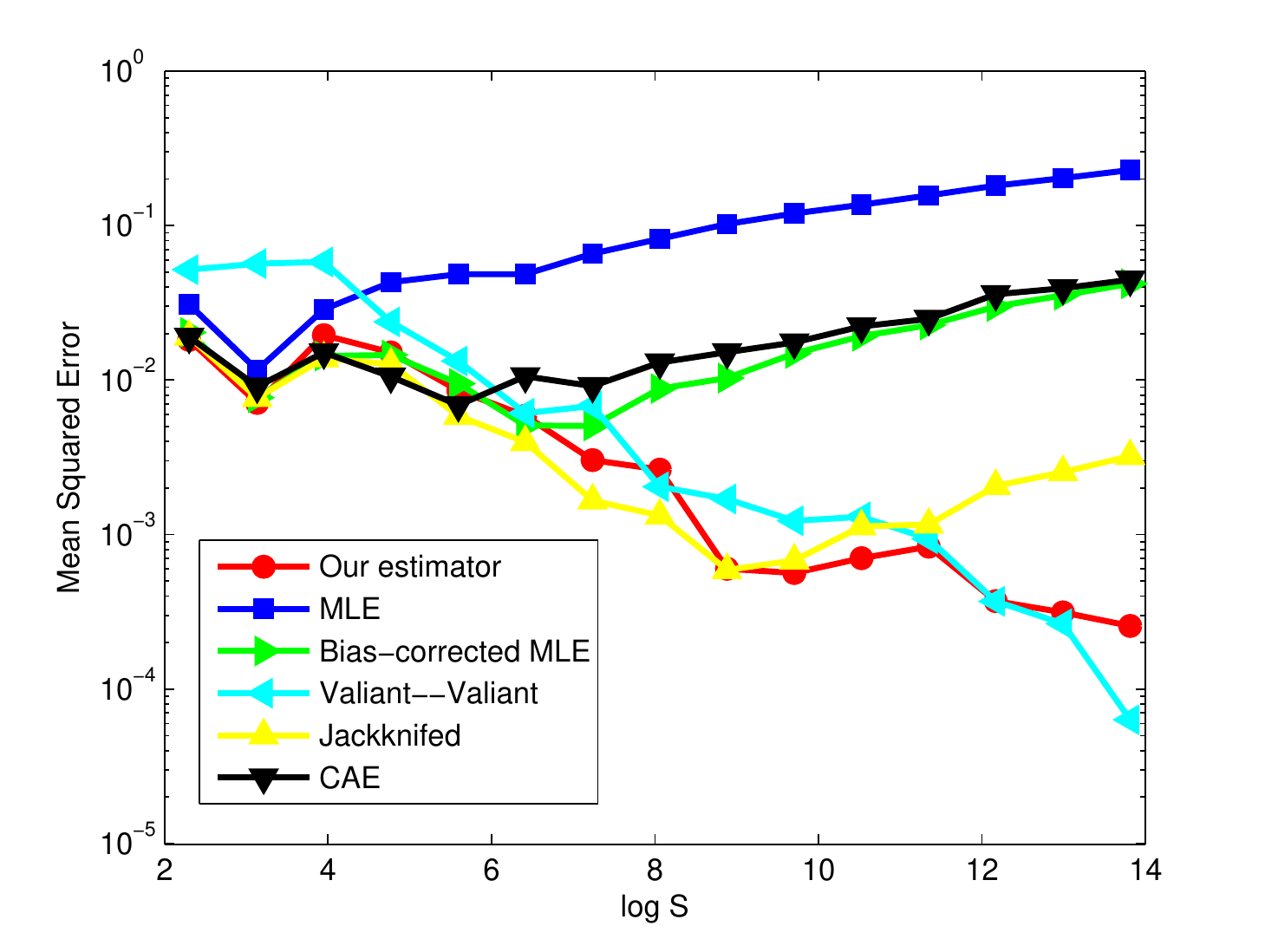}\\
            }
        \subfigure[]{
            \includegraphics[scale=0.6]{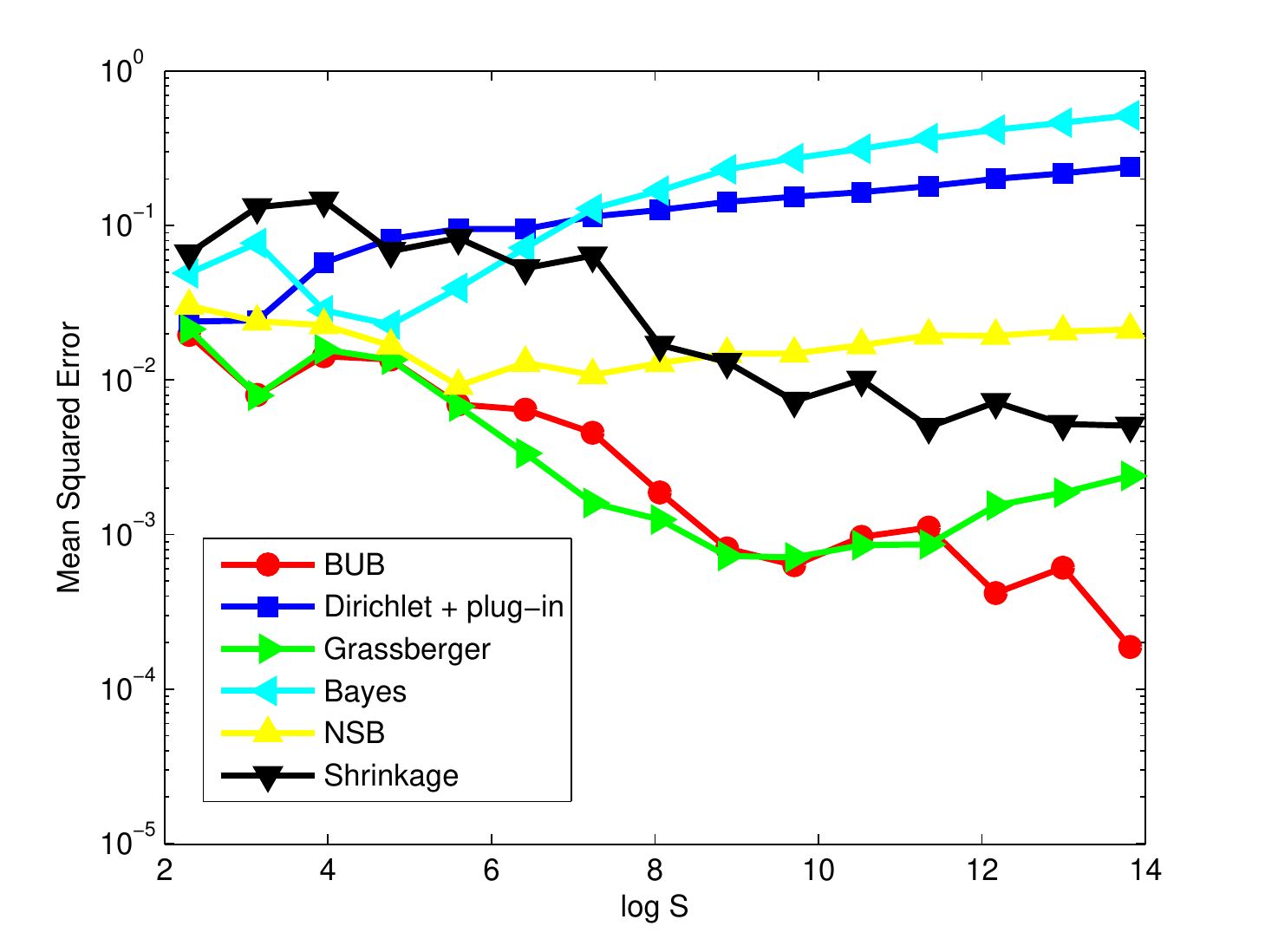}\\
            }
        \caption{The total empirical MSE of all 12 estimators along sequence $n = 10S/\ln S$, where $S$ is sampled equally spaced logarithmically from $10$ to $10^6$. The horizontal line is $\ln S$, and the vertical line is the MSE in logarithmic scale.}
        \label{Fig.Convergence}
\end{figure}

The experimental results are exhibited in Figure~\ref{Fig.Convergence}. We remark that the vertical line is the MSE in logarithmic scale, which means that a small positive slope represents exponential growth in MSE. Bearing this in mind, Figure \ref{Fig.Convergence} suggests that the following three estimators out of 12, namely our estimator, the estimator by Valiant and Valiant \cite{Valiant--Valiant2013estimating} and the best upper bound (BUB) estimator, achieve the minimax rate. Some other estimators have already been shown not to achieve the optimal sample complexity, e.g., the Miller-Madow bias-corrected MLE \cite{Paninski2003,Jiao--Venkat--Weissman2014MLE}, the jackknifed estimator \cite{Paninski2003}, and the Dirichlet-smoothed plug-in estimator as well as the Bayes estimator under Dirichlet prior~\cite{Han--Jiao--Weissman2015Bayes}. We remark that the shrinkage estimator only improves the MLE when the distribution is near-uniform, and this estimator performs poorly under the Zipf distribution (verified by our experiments), thus it attains neither the optimal minimum sample complexity nor the minimax rate.

Motivated by the preceding result, in our subsequent experiments we only consider our estimator, the estimator in \cite{Valiant--Valiant2013estimating} and the BUB estimator, as well as the MLE used as a benchmark. We also remark that all other estimators are still tested in our experiments, which consistently demonstrate the superior empirical performance of our estimator, the estimator in \cite{Valiant--Valiant2013estimating} and the BUB estimator over others.

\subsection{Estimation of entropy}
Now we examine the performance of our estimator, the estimator in \cite{Valiant--Valiant2013estimating}, the BUB estimator and the MLE in entropy estimation for various distributions. We remark that in theory, our estimator is the only one among these estimators which has been shown to achieve the minimax $L_2$ rates, while the estimator in \cite{Valiant--Valiant2013estimating} has only been shown order-optimal in terms of the sample complexity, and currently neither the maximum $L_2$ risk nor the sample complexity is known for the BUB estimator. Moreover, the BUB estimator requires an accurate upper bound for the support size $S$, and it was remarked in \cite{Vu--Yu--Kass2007coverage} that the performance of BUB degrades considerably when this bound is inaccurate.

We divide our experiments into two regimes.

\subsubsection{Data rich regime: $S \ll n$}

We first experiment in the regime $S \ll n$, which is an ``easy'' regime where even the MLE is known to perform very well. However, the estimator in \cite{Valiant--Valiant2013estimating} exhibits peculiar behavior. We sample $8$ points equally spaced in a logarithmic scale from $10^2$ to $10^3$ as candidates for support size $S$, and for each $S$ we conduct $20$ Monte Carlo simulations of estimation based on $n = 50S$ observations from certain distribution over an alphabet of size $S$. We consider two special distributions, i.e., the uniform distribution with $p_i=1/S$ and the Zipf distribution $p_i=i^{-\alpha}/\sum_{j=1}^S j^{-\alpha}$ with order $\alpha=1$ for $1\le i\le S$. The empirical root MSE is exhibited in Figure~\ref{fig.datarich}.

\begin{figure}[!htbp]
        \subfigure[Uniform Distribution]{
            \includegraphics[scale=0.6]{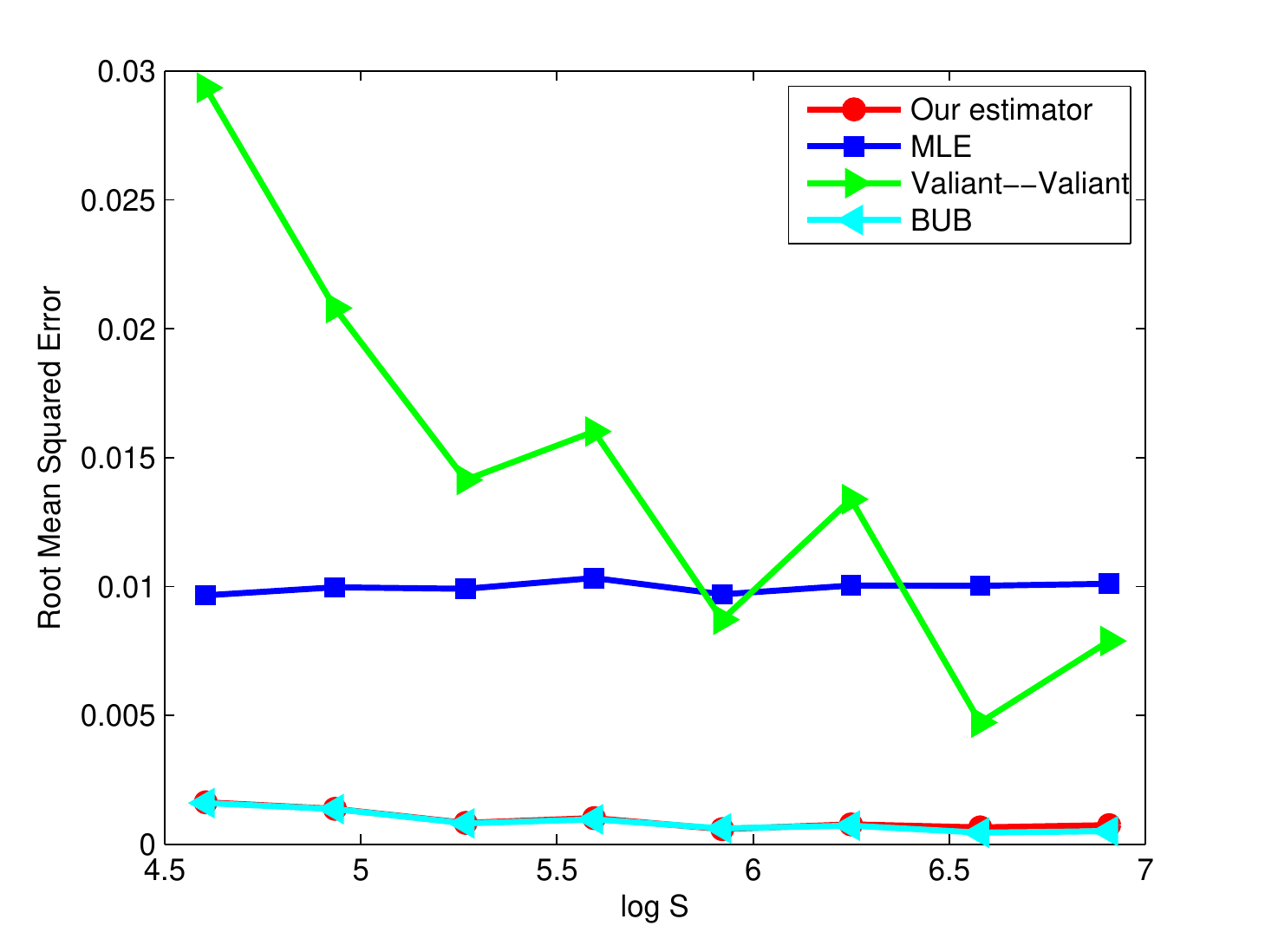}\\
            }
        \subfigure[Zipf Distribution]{
            \includegraphics[scale=0.6]{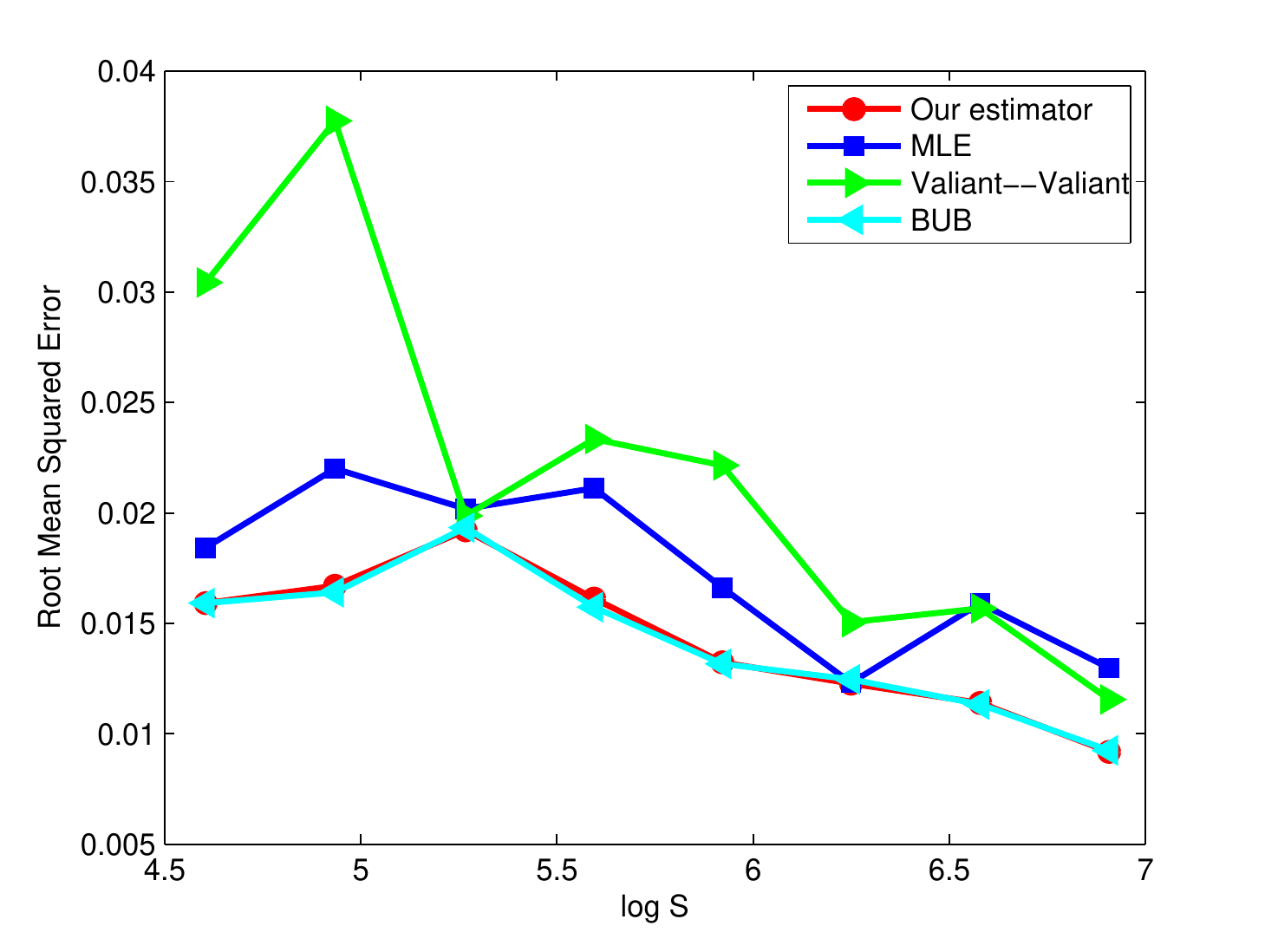}\\
            }
        \caption{The empirical MSE of our estimator, the MLE, the estimator in \cite{Valiant--Valiant2013estimating} and the BUB estimator for the uniform and Zipf distributions along sequence $n = 50S$, where $S$ is sampled equally spaced logarithmically from $10^2$ to $10^3$. The horizontal line is $\ln S$, and the vertical line is the root MSE.}
        \label{fig.datarich}
\end{figure}

It is quite clear that both our estimator and the BUB estimator perform quite well for these distributions in the data rich regime, but the MLE and the estimator in \cite{Valiant--Valiant2013estimating} return the entropy estimate which is far from the true entropy. A close inspection shows that the variance dominates the squared bias for our estimator and the BUB estimator, while there is a huge bias which constitutes the major part of the MSE for both the MLE and the estimator in \cite{Valiant--Valiant2013estimating}.

We remark that the estimator in \cite{Valiant--Valiant2013estimating} and the BUB estimator have substantially longer running time than ours in the data rich regime. For the uniform distribution, the total running time of our estimator in $160$ Monte Carlo simulations is $0.75$s, with a small overhead over the MLE which requires $0.47$s, whereas the one in \cite{Valiant--Valiant2013estimating} takes $186.72$s and the BUB estimator takes $32.97$s. Similar results hold for the Zipf distribution.

\subsubsection{Data sparse regime: $S\asymp n$ or $S \gg n$}

This is the regime where the conventional approaches such as MLE fail. We sample $8$ points equally spaced in a logarithmic scale from $10,000$ to $40,000$ as candidates for support size $S$, and for each $S$ we conduct $20$ Monte Carlo simulations of estimation based on $n = c\frac{S}{\ln S}$ observations from certain distribution over an alphabet of size $S$, where $c=5$ for the uniform distribution and $c=15$ for the Zipf distribution with $\alpha=1$. Note that when $S=20,000$, we have $n=10,098$ for $c=5$ and $n=30,293$ for $c=15$, so we are actually in the data sparse regime. The outputs of our estimator, the MLE, the estimator in \cite{Valiant--Valiant2013estimating} and the BUB estimator for both distributions are exhibited in Figure~\ref{fig.datasparse}.

\begin{figure}[!htbp]
        \subfigure[Uniform Distribution]{
            \includegraphics[scale=0.6]{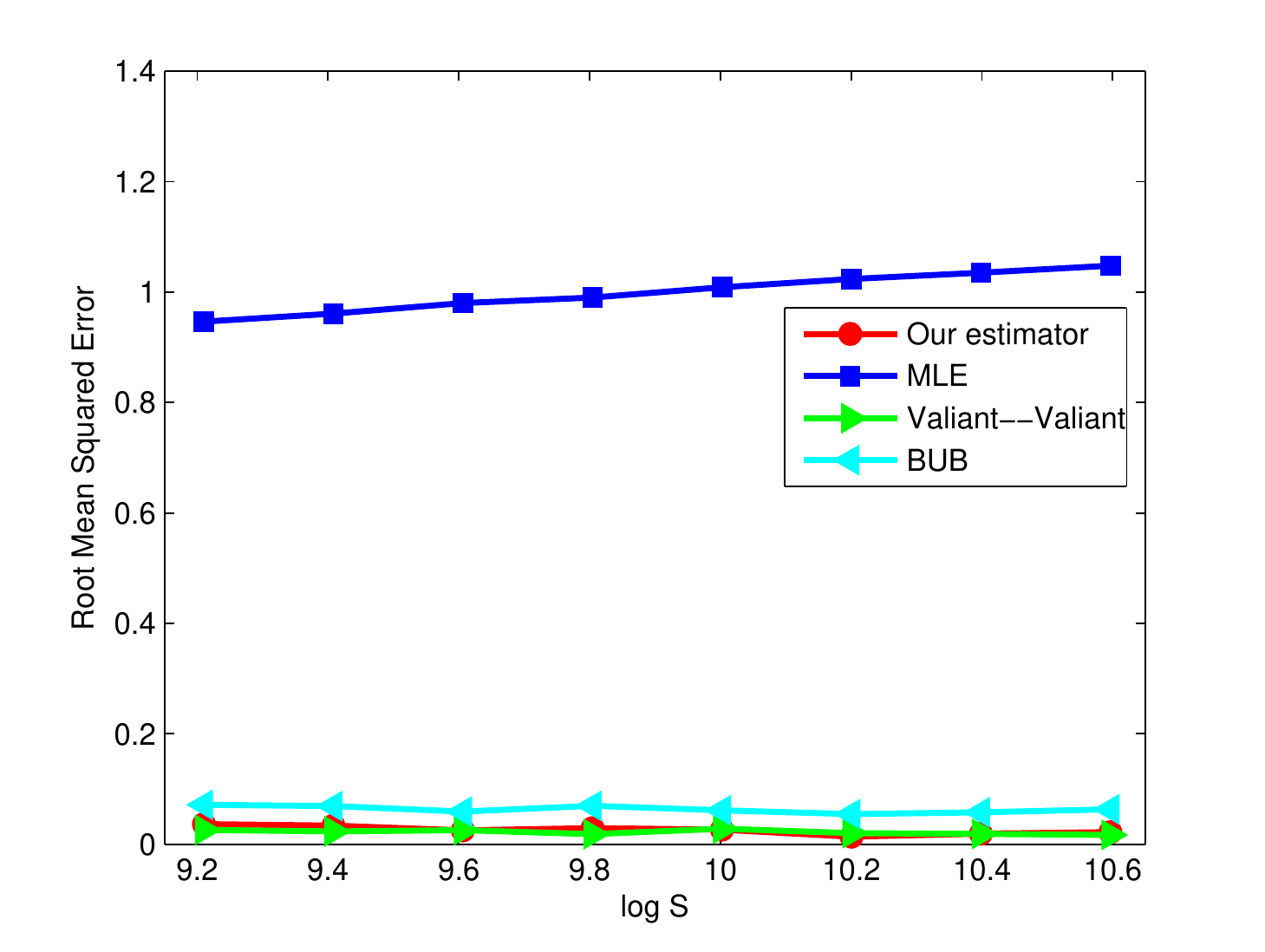}\\
            }
        \subfigure[Zipf Distribution]{
            \includegraphics[scale=0.6]{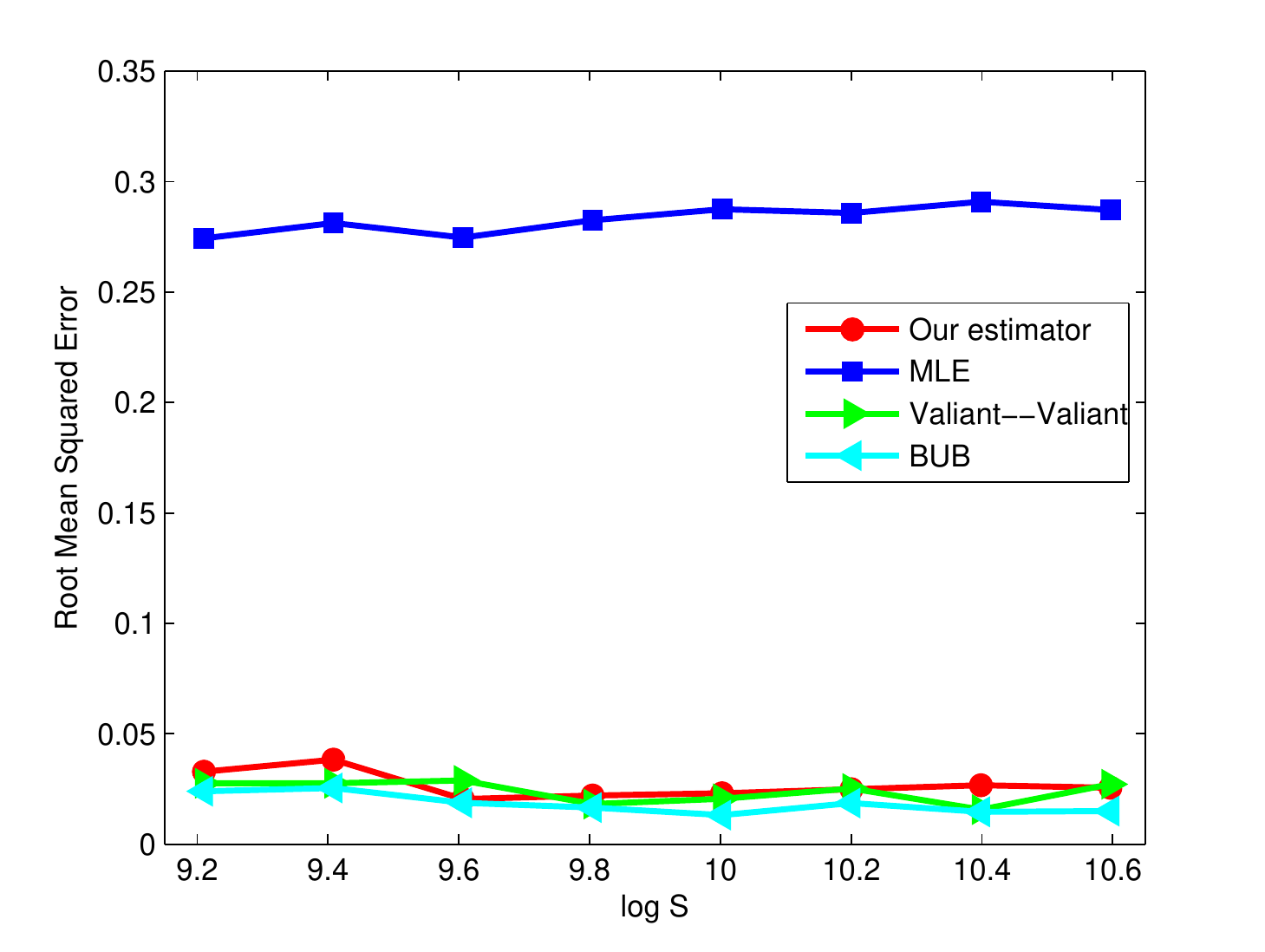}\\
            }
        \caption{The empirical root MSE of our estimator, the MLE, the estimator in \cite{Valiant--Valiant2013estimating} and the BUB estimator along sequence $n = c\frac{S}{\ln S}$, where $S$ is sampled equally spaced logarithmically from $10,000$ to $40,000$. The quantity $c=5$ for the uniform distribution and $c=15$ for the Zipf distribution. The horizontal line is $\ln S$, and the vertical line is the root MSE.}
        \label{fig.datasparse}
\end{figure}

Figure~\ref{fig.datasparse} shows that the MLE is far from the true entropy. Both our estimator and that of \cite{Valiant--Valiant2013estimating} perform quite well, but comparatively the BUB estimator exhibits a large bias for some distributions, e.g., the uniform distribution. Interestingly, with the same sample size $n = 10000$, the estimator in \cite{Valiant--Valiant2013estimating} and the BUB estimator run much faster than in the data rich regime, with a total running time $9.28$s and $2.35$s for the uniform distribution. However, it is still slower than our estimator, which takes $0.71$s, only $0.05$s longer compared with the MLE.

We have experimented with other distributions such as the Zipf with order $\alpha\neq 1$, the mixture of the uniform and Zipf distributions, as well as randomly generated distributions, with similar results. In summary, we observe that
\begin{enumerate}
\item the MLE usually concentrates at some point far away from the true functional value, particularly when the support size is comparable, or larger than the number of observations;
\item the estimator in \cite{Valiant--Valiant2013estimating} performs quite well in the data sparse regime $S\asymp n$ and $S \gg n$, but performs worse than the MLE in the data rich regime $S \ll n$, which is undesirable in applications such as mutual information estimation and situations where the support size $S$ is unknown;
\item the BUB estimator performs quite well in the data rich regime $S\ll n$, but performs worse than our estimator and the estimator in \cite{Valiant--Valiant2013estimating} for some distributions in the data sparse regime. Moreover, it requires the knowledge of $S$ and does not have theoretical guarantees on its worst-case performance thus far, which is undesirable and not convincing in practice;
\item our estimator has stable performance, linear complexity, and high accuracy.
\end{enumerate}

\subsection{Estimation of mutual information}

One functional of particular significance in various applications is the mutual information $I(X;Y)$, but it cannot be directly expressed in the form of~(\ref{eqn.generalf}). Indeed, we have
\begin{align}
I(X;Y) & = \sum_{x,y} P_{XY}(x,y) \ln \frac{P_{XY}(x,y)}{P_X(x)P_Y(y)} \\
& = \sum_{x,y} P_{XY}(x,y) \ln \frac{P_{XY}(x,y)}{ ( \sum_{y} P_{XY}(x,y) ) ( \sum_{x} P_{XY}(x,y) ) }.
\end{align}

However, one can easily show that if $X,Y$ both take values in alphabets of size $S$, then the sample complexity for estimating $I(X;Y)$ is $n \asymp S^2/\ln S$, rather than $n \asymp S^2$ required by the MLE~\cite{Jiao--Venkat--Han--Weissman2014beyond}. Applying our entropy estimator in the following way results in an essentially minimax (rate-optimal) mutual information estimator. We represent
\begin{equation}
I(X;Y) = H(X) + H(Y) - H(X,Y),
\end{equation}
where $H(X,Y)$ is the entropy associated with the joint distribution $P_{XY}$, and use our entropy estimator to estimate each term. As was exhibited in previous experiments, in the data rich regime, the BUB estimator is better than the MLE and the estimator in \cite{Valiant--Valiant2013estimating}, and in the data sparse regime, the worst-case performance of \cite{Valiant--Valiant2013estimating} is better than the BUB estimator and the MLE, and in both regimes our estimators are doing well uniformly. However, in mutual information estimation, the estimators of $H(X)$ and $H(Y)$ may be operating in the data rich regime, but that of $H(X,Y)$ in the data sparse regime. Conceivably, in this situation none of the MLE, \cite{Valiant--Valiant2013estimating} and the BUB estimator would perform well, but our estimator is expected to have good performance.

In order to investigate this intuition, we fix $n=2,500$ and sample $8$ points equally spaced in a logarithmic scale from $100$ to $200$ as candidates for support size $S$, and we generate two random variables $X,Y$ both with support size $S$ as follows. We first randomly generate two marginal distributions $P_X(i)$ and $P_Z(i), 1\leq i \leq S$, where for each $i$ we choose two independent random variables distributed as $\mathsf{Beta}(0.6,0.5)$ for $P_X(i)$ and $P_Z(i)$, and we normalize at the end to make them distributions. We pass $X$ through a transition channel to obtain $Y$, such that $Y = (X+Z) \bmod S$. Note that we are in the regime $S\ll n\ll S^2$. We conduct $20$ Monte Carlo simulations for each $S$, and the results are exhibited in Figure~\ref{fig.mutual}.

\begin{figure}[!htbp]
\centering
    \includegraphics[width = 0.45 \textwidth]{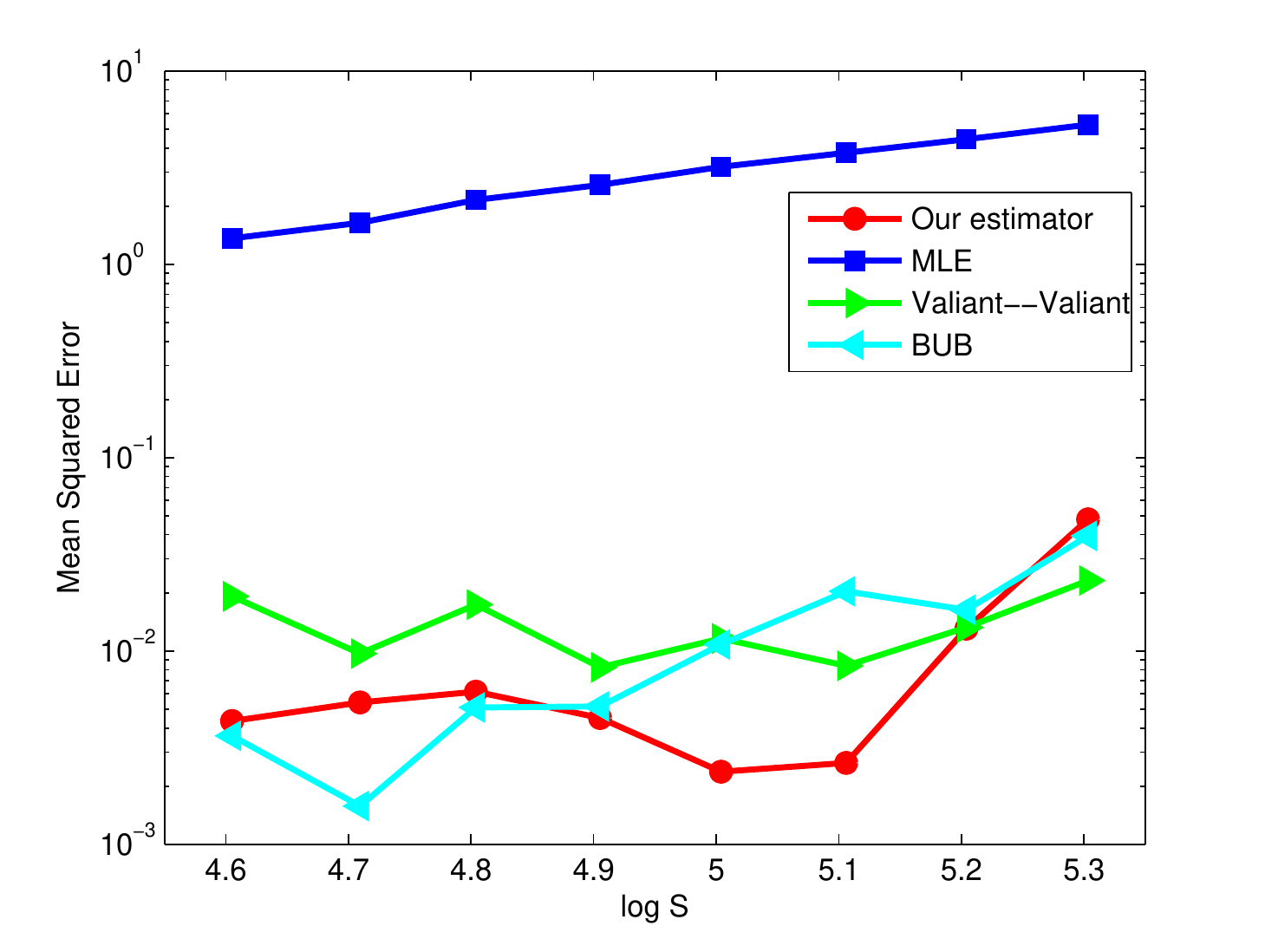}\\
    \caption{The empirical MSE of our estimator, the MLE, the estimator in \cite{Valiant--Valiant2013estimating} and the BUB estimator, where $n=2,500$ is fixed, and $S$ is sampled equally spaced logarithmically from $100$ to $200$. The horizontal line is $\ln S$, and the vertical line is the MSE in logarithmic scale. The goal is the estimate the mutual information $I(X;Y)$.}
    \label{fig.mutual}
\end{figure}

It is clear from Figure~\ref{fig.mutual} that the MLE deviates from the true mutual information significantly, but our estimator is quite accurate, performing comparably to the estimator in \cite{Valiant--Valiant2013estimating} and the BUB estimator for most support sizes $S$. Note that the true mutual information is only about 0.4, so it is a significant improvement for decreasing the MSE from 0.01 to its half. At the same time, the estimator in \cite{Valiant--Valiant2013estimating} and the BUB estimator have considerably longer running time than our estimator. It takes $128.13$s and $48.86$s for the estimator in \cite{Valiant--Valiant2013estimating} and the BUB estimator to complete the 160 simulations, respectively, whereas ours requires $1.98$s.

\subsection{Estimation of entropy rate}
Another functional of particular significance is the entropy rate $H=H(X_0|X_{-\infty}^{-1})$ of a stationary ergodic stochastic process $\{X_n\}_{n=-\infty}^\infty$, of fundamental importance in information theory~\cite{Shannon1948}. Consider a stationary ergodic Markov process $\{X_n\}_{n=0}^\infty$ with support size $S$ and memory length $D$, then the entropy rate $H$ can be expressed as
\begin{align}
  H = H(X_{D+1}|X_1^D) = H(X_1^{D+1}) - H(X_1^D)
\end{align}
where we have adopted the notation $X_m^n=(X_m,X_{m+1},\cdots,X_n)$ for $m\le n$. Hence, to estimate the entropy rate, it suffices to estimate the joint entropy $H(X_1^{D+1})$ and $H(X_1^D)$ separately, which can be accomplished by any entropy estimator $\hat{H}$. Specifically, to estimate $H(X_1^D)$, we can construct $n-D+1$ supersymbols $Y_i=X_i^{i+D-1}\overset{D}{=} Y$ for $1\le i\le n-D+1$ and then estimate $H(Y)$ using the estimator for entropy. The size of the alphabet in which $Y$ takes value is $S^D$, but we remark that the sample complexity $n\gg S^D/\ln (S^D)$ need not be optimal in this setting of a stationary ergodic stochastic process, i.e., there may exist some estimator $\hat{H}$ with a vanishing worst-case $L_2$ risk for this case even if $n\lesssim S^D/\ln (S^D)$. The reasons are twofold: the supersymbols $Y_i$ are no longer independent, and it satisfies the asymptotic equipartition property (AEP) by the Shannon-McMillan-Breiman theorem \cite{Algoet--Cover1988sandwich}. To see the role played by ergodicity, it has been shown in \cite{Han--Jiao--Weissman2014Distribution} that minimax estimation of discrete distributions with support size $S$ under $\ell_1$ loss requires $n\gg S$ samples, which turns out to be $S^D\ll n$, or equivalently, $D\ll \frac{\ln n}{\ln S}$, in the $D$-tuple distribution estimation in stochastic processes. However, \cite{Marton--Shields1994entropy} showed that it suffices to choose the memory length $D\le (1-\epsilon)\frac{\ln n}{H}$ for any $\epsilon>0$ in a stationary ergodic stochastic process, where $H$ is its entropy rate, to guarantee that the empirical joint distribution of $D$-tuple converges to the true joint distribution under $\ell_1$ loss. Noting that $H\le\ln S$ holds for any distribution, we conclude that the stochastic process is easier to handle compared with a single distribution with a large support size. Finally, we remark that minimax estimation of entropy rate for stationary ergodic processes in the finite sample setting remains largely open.

Now we examine the performance of our estimator, the MLE, the estimator in \cite{Valiant--Valiant2013estimating} and the BUB estimator in the estimation of entropy rate. We fix the memory length $D=4$, and choose the support size $S$ from $7$ to $14$. For each support size $S$, we construct a discrete distribution $P_Z$ where $P_Z(i)$ is independently drawn from $\mathsf{Beta}(0.6,0.5)$ for $1\le i\le S$ before normalization. Then for the sample size $n=1.5S^{D+1}/\ln(S^{D+1})$, we draw $n$ i.i.d. samples $Z_1,Z_2,\cdots,Z_n$ from $P_Z$ and then construct the stochastic process $\{X_k\}_{k=1}^n$ with memory length $D$ as follows: $X_k=Z_k$ for $1\le k\le D$, and $X_k=(Z_k+\sum_{j=k-D}^{k-1} X_j)\bmod S$ for $k>D$. We conduct $20$ Monte Carlo simulations for each $S$, with results exhibited in Figure~\ref{fig.entropyrate}

\begin{figure}[!htbp]
\centering
    \includegraphics[width = 0.45\textwidth]{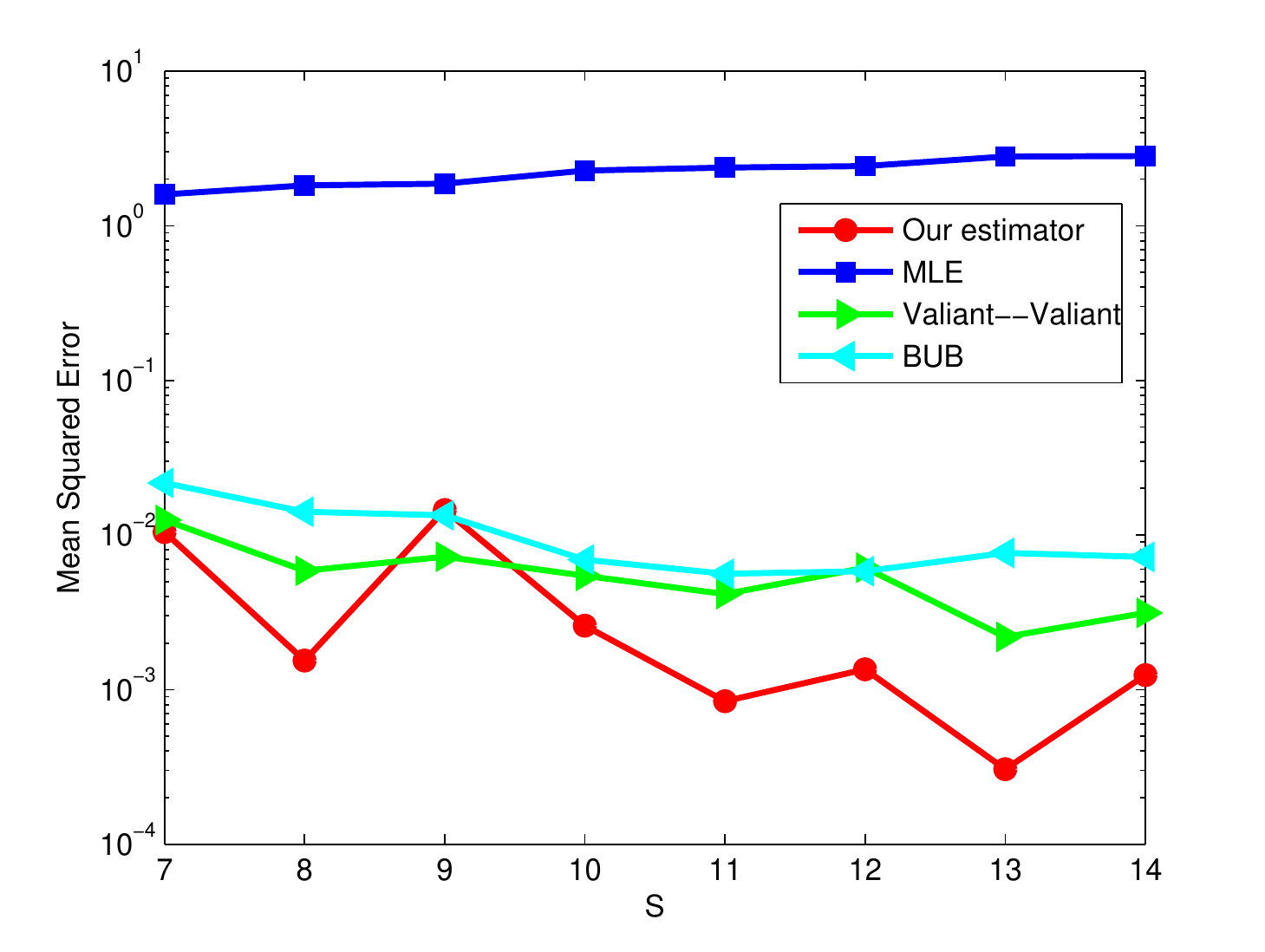}\\
    \caption{The empirical MSE of our estimator, the MLE, the estimator in \cite{Valiant--Valiant2013estimating} and the BUB estimator, where $D=4$ is fixed, $7\le S\le 14$, and $n=1.5S^{D+1}/\ln(S^{D+1})$. The horizontal line is $S$, and the vertical line is the MSE in logarithmic scale. The goal is the estimate the entropy rate $H(X_{D+1}|X_1^D)$.}
    \label{fig.entropyrate}
\end{figure}

It is clear from Figure \ref{fig.entropyrate} that our estimator performs most favorably in estimation of entropy rate for most $S$. As before, it takes our estimator less time ($6.62$s) to complete all 160 simulations compared with the estimator in \cite{Valiant--Valiant2013estimating} ($24.65$s) and the BUB estimator ($13.91$s). Due to the high accuracy and the linear complexity, our estimator is an efficient tool when dealing with high dimensional data.

\subsection{Application in learning graphical models}
Given $n$ i.i.d. samples of a random vector $\mathbf{X} = (X_1,X_2,\ldots,X_d)$, where $X_i \in \mathcal{X}, |\mathcal{X}|<\infty$, we are interested in estimating the joint distribution of $\mathbf{X}$. It was shown~\cite{Han--Jiao--Weissman2014Distribution} that one needs to take $n \gg |\mathcal{X}|^d$ samples to consistently estimate the joint distribution \cite{Paninski2004variational}, which blows up quickly with growing $d$. Practically, it is convenient and necessary to impose some structure on the joint distribution $P_{\mathbf{X}}$ to reduce the required sample complexity. Chow and Liu \cite{Chow--Liu1968} considered this problem under the constraint that the joint distribution of $\mathbf{X}$ satisfies order-one dependence. To be precise, Chow and Liu assumed that $P_{\mathbf{X}}$ can be factorized as:
\begin{equation}
P_{\mathbf{X}} = \prod_{i = 1}^d P_{X_{m_i}|X_{m_{j(i)}}},\quad 0\leq j(i)<i,
\end{equation}
where $(m_1,m_2,\ldots,m_d)$ represents an unknown permutation of the integers $(1,2,\ldots,n)$. This dependence structure can be written as a tree with the random variables as nodes.

Towards estimating $P_{\mathbf{X}}$ from $n$ i.i.d. samples, Chow and Liu~\cite{Chow--Liu1968} considered solving for the MLE under the constraint that it factors as a tree. Interestingly, this optimization problem can be efficiently solved after being transformed into a Maximum Weight Spanning Tree (MWST) problem. Chow and Liu~\cite{Chow--Liu1968} showed that the MLE of the tree structure boils down to the following expression:
\begin{align}\label{eqn.CL}
E_{\mathrm{ML}} & = \argmax_{E_Q: Q\textrm{ is a tree}} \sum_{e\in E_Q} I(\hat{P}_e),
\end{align}
where $I(\hat{P}_e)$ is the mutual information associated with the empirical distribution of the two nodes connected via edge $e$, and $E_Q$ is the set of edges of distribution $Q$ that factors as a tree. In words, it suffices to first compute the empirical mutual information between any two nodes (in total $\binom{d}{2}$ pairs), and the maximum weight spanning tree is the tree structure that maximizes the likelihood. To obtain estimates of distributions on each edge, Chow and Liu~\cite{Chow--Liu1968} simply assigned the empirical distribution.

The Chow--Liu algorithm is widely used in machine learning and statistics as a tool for dimensionality reduction, classification, and as a foundation for algorithm design in more complex dependence structures~\cite{Zhou2011structure} in the theory of learning graphical models~\cite{Wainwright--Jordan2008,Koller--Friedman2009}. It has also been widely adopted in applied research, and is particularly popular in systems biology. For example, the Chow--Liu algorithm is extensively used in the reverse engineering of transcription regulatory networks from gene expression data~\cite{Meyer--Kontos--Lafitte--Bontempi2007information}.

Considerable work has been dedicated to the theoretical properties of the CL algorithm. For example, Chow and Wagner~\cite{Chow--Wagner1973consistency} showed that the CL algorithm is consistent as $n\to \infty$. Tan et al.~\cite{Tan--Anandkumar--Tong--Willsky2011large} studies the large deviation properties of CL. However, no study justified the use of the CL in practical scenarios involving finitely many samples. Indeed, the fact that CL solves MLE does not imply it is optimal: recall Section~\ref{subsubsec.reviewmle} for the discussion on MLE. As we elaborate in what follows, this is no coincidence, as the CL can be considerably improved on in practice. To explain the insights underlying our improved algorithm, we revisit equation~(\ref{eqn.CL}) and note that if we were to replace the empirical mutual information with the true mutual information, the output of the MWST would be the true edges of the tree. In light of this, the CL algorithm can be viewed as a ``plug-in'' estimator that replaces the true mutual information with an estimate of it, namely the empirical mutual information. Naturally then, it is to be expected that a better estimate of the mutual information would lead to smaller probability of error in identifying the tree. It is thus natural to suspect that using our estimator for mutual information in lieu of the empirical mutual information in the CL algorithm would lead to performance boosts. It is gratifying to find this intuition confirmed in all the experiments that we conducted. In the following experiment, we fix $d = 7,|\mathcal{X}| = 200$, construct a star tree (i.e. all random variables are conditionally independent given $X_1$), and generate a random joint distribution by assigning independent $\mathsf{Beta}(0.5,0.5)$-distributed random variables to each entry of the marginal distribution $P_{X_1}$ and the transition probabilities $P_{X_k|X_1}, 2\leq k \leq d$ (with normalization). Then, we increase the sample size $n$ from $10^3$ to $5.5\times 10^4$, and for each $n$ we conduct $20$ Monte Carlo simulations.

Note that the true tree has $d-1 = 6$ edges, and any estimated set of edges will have at least one overlap with these $6$ edges because the true tree is a star graph. We define the wrong-edges-ratio in this case as the number of edges different from the true set of edges divided by $d-2 = 5$. Thus, if the wrong-edges-ratio equals one, it means that the estimated tree is maximally different from the true tree and, in the other extreme, a ratio of zero corresponds to perfect reconstruction. We compute the expected wrong-edges-ratio over $20$ Monte Carlo simulations for each $n$, and the results are exhibited in Figure~\ref{fig.CL_ours}.

\begin{figure}[!htbp]
\centerline{\includegraphics[width=\linewidth]{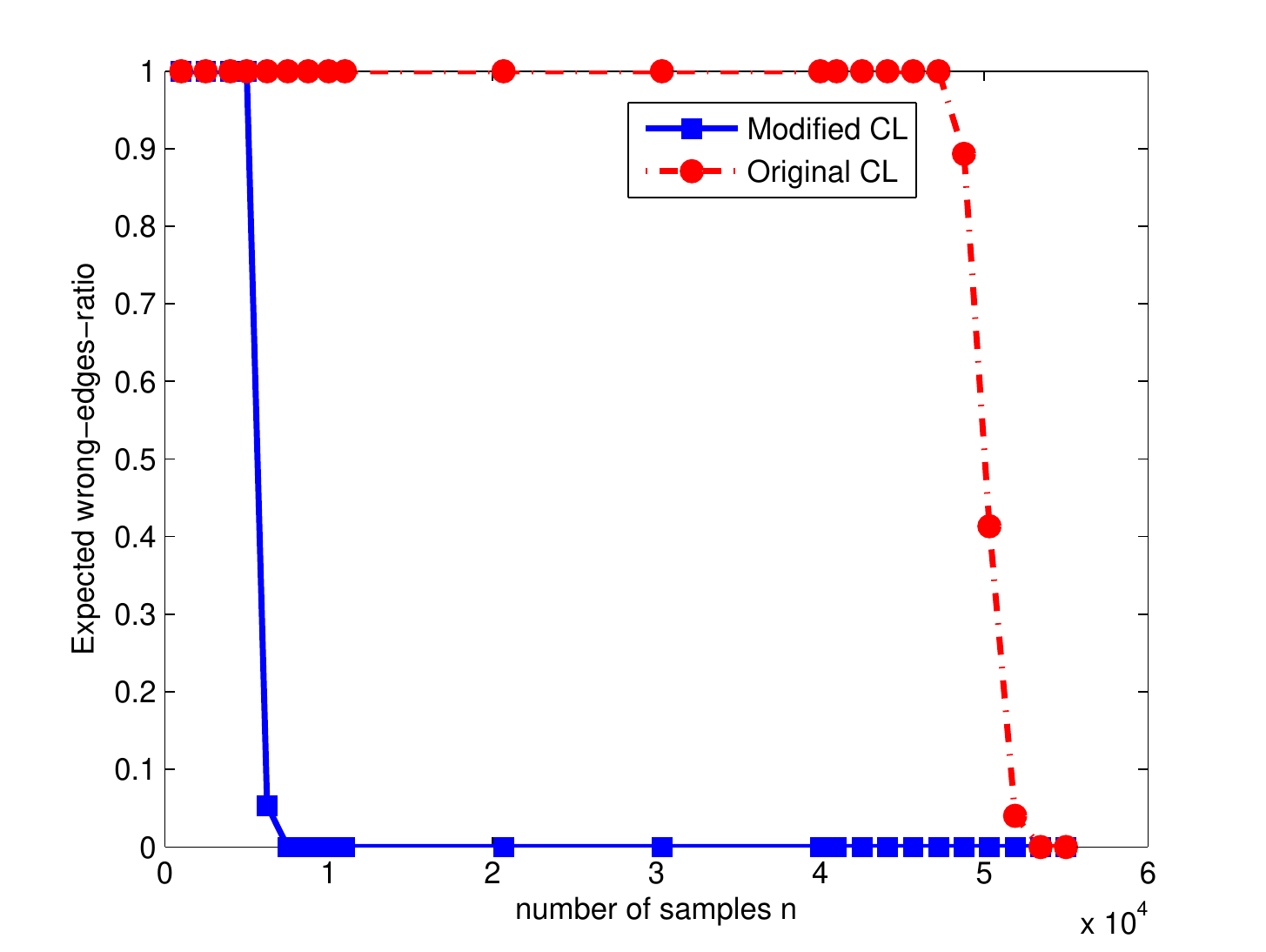}}
\caption{The expected wrong-edges-ratio of our modifed algorithm and the original CL algorithm for sample sizes ranging from $10^3$ to $5.5\times 10^4$.}\label{fig.CL_ours}
\end{figure}

Figure~\ref{fig.CL_ours} reveals intriguing phase transitions for both the modified and the original CL algorithm. When we have fewer than $3\times 10^3$ samples, both algorithms yield a wrong-edges-ratio of $1$, but soon after the sample size exceeds $6\times 10^3$, the modified CL algorithm begins to reconstruct the network perfectly, while the original CL algorithm continues to fail maximally until the sample size exceeds $47\times 10^3$, $8$ times the sample size required by the modified algorithm. The theoretical properties of these sharp phase transitions remain for future work. 

\section{Conclusions and future work}

The risk of any statistical estimator under $L_2$ loss can be decomposed into squared bias and variance. \emph{Shrinkage}~\cite{Stein1956inadmissibility}\cite{Donoho--Johnstone1994ideal}\cite{Donoho--Johnstone1994minimax} is particularly useful in reducing the variance via slightly sacrificing the bias. \emph{Approximation}, which is introduced in this paper, turns out to be the counterpart of shrinkage. The methodology of approximation has demonstrated its efficacy in reducing the bias via slightly sacrificing the variance in constructing minimax rate-optimal estimators for $H(P)$ and $F_\alpha(P)$ in this paper. We remark that in estimating functionals, especially low dimensional functionals of high dimensional parameters, the bias usually dominates the risk~\cite{Jiao--Venkat--Weissman2014MLE}, and the methodology proposed in this paper has proved to be extremely effective in properly reducing bias, and consequently achieving the minimax rates.

We show that the bias of a statistical estimator for a functional can be interpreted as the approximation error of a certain operator in approximating the function, which exhibits an intimate connection between statistics and approximation theory, the latter being a mature mathematical field studied for several centuries. Designing an estimator with smaller bias is equivalent to designing an approximation operator with small approximation error. However, the functional estimation problem is far more subtle than this connection, since one has to control the bias and variance simultaneously. We have made significant efforts to understand this delicate trade-off, which is the foundation for our general methodology in functional estimation in Section~\ref{sec.motivation}. However, we remark that it still remains fertile ground for research.

This paper constitutes a first step towards a more comprehensive understanding of functional estimation. We have partially answered Paninski's question in~[Section 8.3]\cite{Paninski2003}, since we demonstrated that the minimax rates are highly dependent on the properties of the functionals to be estimated. A general theory characterizing the minimax rates in functional estimation using approximation theoretic quantities is yet unavailable. An ambitious goal might be constructing the counterpart of what we know in nonparametric linear functional estimation~\cite{Donoho--Liu1991geometrizing2}, or nonparametric function estimation~\cite{Yang--Barron1999information}.

\section{Acknowledgments}\label{sec.ack}

We are grateful to Gregory Valiant for introducing us to the entropy estimation problem, which motivated this work. We thank many approximation theorists for very helpful discussions, in particular, Dany Leviatan, Kirill Kopotun, Feng Dai, Volodymyr Andriyevskyy, Gancho Tachev, Radu Paltanea, Paul Nevai, Doron Lubinsky, Dingxuan Zhou, and Allan Pinkus. We thank Marcelo Weinberger for bringing up the question of comparing the difficulty between entropy estimation and data compression, and Thomas Courtade for proposing to view $F_\alpha(P)$ as moment generating functions of the information density random variable and suggesting some tricks in the proof of Lemma~\ref{lemma.varentropy}. We thank Liam Paninski for interesting discussions related to Paninski \cite{Paninski2003}. We thank Martin Vinck for interesting discussions related to entropy estimation in physics and neuroscience. We thank Jayadev Acharya,
Alon Orlitsky, Ananda Theertha Suresh, and Himanshu Tyagi for stimulating discussions. We thank Yihong Wu for insightful discussions, in particular, regarding a non-rigorous step in the proof of Lemma~\ref{lemma.poissonmultinomial} in a previous version of the manuscript. We thank Maya Gupta for inspiring discussions, in particular for raising the question of whether Dirichlet prior smoothed plug-in entropy estimation can achieve the minimax rates, whose answer was shown to be negative in~\cite{Han--Jiao--Weissman2015Bayes}. Finally, we thank Dmitri Sergeevich Pavlichin for translating articles from Bernstein's collected works~\cite{Bernstein1964collected}, whose English versions are unavailable.

\appendices

\section{Auxiliary Lemmas}

\begin{lemma}\label{lemma.varentropy}
If the support of distribution $P$ is of size $S$, then
\begin{equation}
\mathsf{Var}(-\ln P(X)) \leq \begin{cases} (\ln S+1)^2 & S<56 \\ \frac{3}{4} \left( \ln S \right)^2 & S \geq 56 \end{cases}.
\end{equation}
\end{lemma}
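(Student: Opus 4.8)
The plan is to bound the variance by a well-chosen centered second moment. Write $Y = -\ln P(X)$, which takes the value $-\ln p_i$ with probability $p_i$, so that $\mathsf{Var}(-\ln P(X)) = \sum_{i=1}^S p_i(\ln p_i)^2 - H(P)^2$. The starting point is the elementary fact $\mathsf{Var}(Y) \le \mathbb{E}[(Y-t)^2] = \sum_{i=1}^S p_i(\ln p_i + t)^2$ for \emph{any} constant $t$, and I would take $t = \tfrac12\ln S$. Equivalently this compares each $p_i$ against the reference value $S^{-1/2}$, which lies between the uniform value $1/S$ and $1$; choosing $t$ near the value of $\mathbb{E}[Y]$ at the near-worst distributions is exactly what makes the final constant fall below $1$. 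This reduces the task to bounding $\sum_{i} p_i(\ln(p_i\sqrt{S}))^2$.

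Next I would split the sum according to whether $p_i \ge S^{-1/2}$ or $p_i < S^{-1/2}$. On the ``large'' part, $\ln(p_i\sqrt{S}) \in [0,\tfrac12\ln S]$, so $(\ln(p_i\sqrt S))^2 \le \tfrac14(\ln S)^2$, and this part contributes at most $\tfrac14(\ln S)^2\sum_{p_i\ge S^{-1/2}}p_i \le \tfrac14(\ln S)^2$. For the ``small'' part, substitute $q_i = p_i\sqrt S\in(0,1)$ to rewrite it as $S^{-1/2}\sum \psi(q_i)$ with $\psi(q)=q(\ln(1/q))^2$, subject to $\sum q_i \le \sqrt S$ and at most $S$ summands. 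Since $\psi$ is concave on $(0,e^{-1}]$, Jensen's inequality maximizes $\sum\psi(q_i)$ over indices in that range by equalizing the $q_i$, which after an elementary monotonicity check gives at most $\sqrt S\,(\ln\sqrt S)^2 = \tfrac14\sqrt S(\ln S)^2$ \emph{provided} $S^{-1/2}\le Se^{-2}$, i.e. $S \ge e^4$; the at most $e\sqrt S$ indices with $q_i > e^{-1}$ contribute only $O(1)$ after the $S^{-1/2}$ prefactor, using $\psi \le e^{-1}$ there. Combining the two parts yields $\mathsf{Var}(-\ln P(X)) \le \tfrac12(\ln S)^2 + O(1)$, which is below $\tfrac34(\ln S)^2$ as soon as $(\ln S)^2$ exceeds a fixed numerical constant; carrying out the bookkeeping pins the crossover at $S = 56$.

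For the remaining finite range $S < 56$, I would fall back on a cruder estimate: the same decomposition with the global bound $\psi(q)\le 4e^{-2}$ in place of the concavity optimization gives $\mathsf{Var}(-\ln P(X)) \le \tfrac14(\ln S)^2 + 4e^{-2}\sqrt S + O(1)$, which is comfortably below $(\ln S + 1)^2$ for every $S \le 55$ (indeed $(\ln S+1)^2$ strictly exceeds $\tfrac34(\ln S)^2$, so the weaker small-$S$ bound is precisely the price of not being able to run the concavity optimization in the range $S < e^4$). Alternatively the handful of very small values of $S$ can simply be verified numerically from $\mathsf{Var}(-\ln P(X)) = \sum_i p_i(\ln p_i)^2 - H(P)^2$.

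The main obstacle is not any single inequality -- each step is elementary -- but the bookkeeping needed to make the pieces fit: choosing $t$ and the large/small threshold consistently, respecting the concavity range $(0,e^{-1}]$ of $\psi$, handling the few ``boundary'' indices with $q_i$ near $1$, and tracking the lower-order terms carefully enough that they assemble into exactly the stated constants (the $\tfrac34$ and the value $56$). In particular, getting the constant strictly below $1$ hinges on retaining the reference shift $t$ rather than using the trivial bound $\mathsf{Var}(Y)\le \mathbb{E}[Y^2] = \sum_i p_i(\ln p_i)^2$, which only yields a constant of order $1$ after the same split.
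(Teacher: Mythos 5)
Your proposal is correct and takes a genuinely different route from the paper's. The paper proves the bound variationally: it uses Lagrange multipliers to argue that the extremal $P$ must have at most two distinct coordinate values, reduces to a one-parameter optimization $V(P)=x(1-x)\left(\ln\frac{1-x}{x}+\ln(S-1)\right)^2$, bounds the stationary-point value via the inequality $z\le z^2$ for $z\ge1$, and checks that the relevant map $z\mapsto z\left(\ln\frac{m(1-z)}{z}\right)^2$ is increasing on $(0,1/2]$ precisely when $m=S-1\ge e^4$, which is where the threshold $S\ge 56$ appears. You instead avoid the Lagrangian reduction entirely: you bound $\mathsf{Var}(Y)$ by $\mathbb{E}[(Y-t)^2]$ with the recentering $t=\frac{1}{2}\ln S$, split indices according to $p_i\gtrless S^{-1/2}$, and handle the small-$p_i$ tail by Jensen for $\psi(q)=q(\ln q)^2$ on its concave domain $(0,e^{-1}]$ together with a crude $O(1)$ bound on the boundary layer $q\in(e^{-1},1)$. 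The threshold in your version comes from the requirement that $\sqrt{S}\le Se^{-2}$ (so that the Jensen optimum is attained at $c=\sqrt{S}$ rather than at the interior critical point $c=Se^{-2}$), which is again $S\ge e^4\approx 54.6$; note you have a small typo there (the condition should read $S^{1/2}\le Se^{-2}$, not $S^{-1/2}\le Se^{-2}$), but the resulting $S\ge e^4$ is right. Both arguments land on a usable $\frac{3}{4}(\ln S)^2$ for $S\ge56$, and for $S<56$ either route falls back on the easy concavity bound $\sum p_i(\ln p_i-1)^2\le (\ln S+1)^2$ or on direct numerics. The paper's approach has the advantage of exactly identifying the extremal (two-point) distribution, which is useful if sharper constants are needed; yours is more elementary (no stationarity analysis needed) and in fact yields the asymptotically sharper constant $\frac{1}{2}(\ln S)^2+O(1)$ rather than $\frac{3}{4}(\ln S)^2$, closer to the true extremal value $\approx\frac{1}{4}(\ln S)^2$ achieved by distributions putting half the mass on one symbol. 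The one place you should be careful when writing this up is the Jensen step: with $N\le S$ summands in $(0,e^{-1}]$ summing to $c\le\sqrt S$, you are maximizing $N\psi(c/N)=c(\ln(c/N))^2$ jointly in $(N,c)$, and you need both that $c/N\le e^{-1}$ stays inside the concave region (automatic here) and that $c(\ln(c/S))^2$ is increasing in $c$ up to $c=\sqrt S$ (this is where $S\ge e^4$ is used). These are exactly the bookkeeping items you flagged, and they do close.
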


The next lemma relates the minimax risk under the Poissonized model and that under the Multinomial model. We define the minimax risk for Multinomial model with $n$ observations on support size $S$ for estimating functional $F$ as
\begin{equation}\label{eq:minimaxrisk.multi}
R(S,n) \triangleq \inf_{\hat{F}} \sup_{P \in \mathcal{M}_S} \bE_{\mathrm{Multinomial}} \left( \hat{F} - F(P) \right)^2,
\end{equation}
and the counterpart for the Poissonized model as
\begin{equation}
R_P(S,n) \triangleq \inf_{\hat{F}} \sup_{P \in \mathcal{M}_S} \bE_{\mathrm{Poisson}} \left( \hat{F} - F(P) \right)^2.
\end{equation}

The next lemma is an extension of Wu and Yang~\cite{Wu--Yang2014minimax}.
\begin{lemma}\label{lemma.poissonmultinomial}
The minimax risks under the Poissonized model and the Multinomial model are related via the following inequalities:
\begin{equation}
R_P(S,2n) - e^{-n/4} \sup_{P \in \mathcal{M}_S}|F(P)|^2 \leq  R(S,n) \leq 2 R_P(S,n/2).
\end{equation}
\end{lemma}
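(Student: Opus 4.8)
The plan is to prove the two bounds separately; in each case I would build an estimator for the target sampling model out of an estimator for the other model and bound the penalty caused by the mismatch between the random Poisson sample size $N$ and the deterministic Multinomial sample size. The only probabilistic facts I need are two elementary concentration estimates for the Poisson count: a Chernoff bound $\bP(\mathsf{Poi}(2n)\le n)\le e^{-n/4}$, and (via Markov's inequality applied to $\mathsf{Poi}(n/2)$) $\bP(\mathsf{Poi}(n/2)\le n)\ge 1/2$. I also use the standard Poissonization identity that, conditioned on $\{N=m\}$, the observed sample is exactly an $m$-sample Multinomial sample, together with the fact that the sample \emph{values} are independent of the count $N$.

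For the left inequality, I would fix $\epsilon>0$, take an estimator $\hat F_M$ for the Multinomial model with $n$ observations whose worst-case risk is within $\epsilon$ of $R(S,n)$, run the Poisson($2n$) experiment, and define $\tilde F$ to equal $\hat F_M$ applied to the first $n$ samples on the event $\{N\ge n\}$ and to equal the constant $0$ on $\{N<n\}$. Since the sample values are independent of $N$, on $\{N\ge n\}$ the first $n$ of them form a genuine Multinomial($n$) sample independent of the indicator $\mathbbm 1(N\ge n)$, so the contribution of that event to the squared risk is at most $\bE_P^{\mathrm{Multinomial}}(\hat F_M-F(P))^2$; the contribution of $\{N<n\}$ is $F(P)^2\,\bP(N<n)\le e^{-n/4}\sup_{P\in\cM_S}|F(P)|^2$. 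Taking the supremum over $P$, the infimum over $\hat F_M$, and $\epsilon\to 0$ yields $R_P(S,2n)\le R(S,n)+e^{-n/4}\sup_{P\in\cM_S}|F(P)|^2$.

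For the right inequality the situation is slightly different: rather than paying an additive price for the bad event I must discard it, which costs a multiplicative factor. Fixing $\epsilon>0$ and taking a (possibly randomized) estimator $\hat F_P$ for the Poisson($n/2$) model within $\epsilon$ of $R_P(S,n/2)$, I would define a Multinomial($n$) estimator $\tilde F$ on a sample $X_1,\dots,X_n$ by first drawing an internal $N'\sim\mathsf{Poi}(n/2)$ conditioned on $\{N'\le n\}$ and then outputting $\hat F_P(X_1,\dots,X_{N'})$; since $N'\le n$ almost surely this is a bona fide estimator, and the pair $(N',(X_1,\dots,X_{N'}))$ has exactly the law of the Poisson($n/2$) sample conditioned on the count being at most $n$. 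Hence $\bE_P^{\mathrm{Multinomial}}(\tilde F-F(P))^2=\bE_P^{\mathrm{Poisson}}\!\big[(\hat F_P-F(P))^2\mid N'\le n\big]\le \bE_P^{\mathrm{Poisson}}(\hat F_P-F(P))^2/\bP(N'\le n)\le 2\,\bE_P^{\mathrm{Poisson}}(\hat F_P-F(P))^2$, and taking suprema over $P$, infima over $\hat F_P$, and $\epsilon\to 0$ gives $R(S,n)\le 2R_P(S,n/2)$; the randomization in $\tilde F$ is harmless since for the squared loss one may replace it by its conditional mean given the data.

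The heart of the argument is routine but does require care: I would need to check that the two constructed estimators are legitimate estimators in their respective models — measurable functions of the observed data with all extra randomness internal — and that the claimed distributional identities hold exactly under the composition of ``draw a Poisson count, then i.i.d.\ samples'' with conditioning on the events $\{N\ge n\}$ and $\{N'\le n\}$. I also implicitly use $\sup_{P\in\cM_S}|F(P)|<\infty$ (true for the functionals of interest, e.g.\ $H(P)\le\ln S$ and $F_\alpha(P)\le S$), so that the constant fallback in the first part contributes only the stated exponentially small term. I expect this bookkeeping, rather than any genuinely hard estimate, to be the only obstacle.
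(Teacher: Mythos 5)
Your proof is correct but takes a genuinely different route from the paper's. The paper first invokes the minimax theorem to write each minimax risk as the supremum of Bayes risks over priors $\pi$, then uses the exact identity $R_B^P(S,n,\pi) = \sum_{m}R_B(S,m,\pi)\,\bP(\mathsf{Poi}(n)=m)$ (valid because the total Poisson count is ancillary for $P$ and the data are conditionally Multinomial given that count), and finally deduces both bounds from monotonicity of $m\mapsto R_B(S,m,\pi)$ together with the same two Poisson tail estimates you used. You instead argue directly with estimators: you simulate a Multinomial experiment inside a Poisson one (and vice versa) and account for the mismatch either additively, by falling back to the zero estimator on the exponentially rare event $\{N<n\}$, or multiplicatively, by conditioning away the rare event $\{N'>n\}$ at a cost of $1/\bP(N'\le n)\le 2$. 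Both arguments are sound. Your version is more elementary and self-contained --- it sidesteps the minimax/maximin equality, which the paper cites without verifying its hypotheses (it does hold here since $\mathcal{M}_S$ is compact and the risk is suitably continuous, but that is a step one would normally want to check) --- at the cost of a little extra bookkeeping for the auxiliary randomization in your constructed estimators, which you correctly dispose of via the Rao--Blackwell/Jensen step for squared loss. Incidentally, the paper itself uses essentially your simulate-inside technique later, in the proof of Lemma~\ref{lem_equiv}, so your approach fits comfortably within the paper's own toolbox.
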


The following lemma characterizes the best polynomial approximation error of $x^\alpha$ over $[0,1]$ in a very precise sense. Concretely, denoting the best polynomial approximation error with order at most $n$ for function $f$ as $E_n[f]$, we have the following lemma.
\begin{lemma}\label{lemma.nonasympxa}
The following limit exists for any $\alpha>0$:
\begin{equation}
\lim_{n\to \infty} n^{2\alpha}E_n[x^\alpha]_{[0,1]} = \frac{\mu(2\alpha)}{2^{2\alpha}},
\end{equation}
where $\mu(p) \triangleq \lim_{n\to \infty} n^{p} E_n[|x|^p]_{[-1,1]}, p>0$ is the Bernstein function introduced by \cite{Bernstein1938}. For an non-asymptotic bound, denote the best polynomial approximation of $x^{\alpha},\alpha>0$ to the $n$-th degree by $\sum_{k=0}^n g_{k,\alpha}x^k$, and define $R_{n,\alpha}(x) \triangleq \sum_{k=1}^n g_{k,\alpha}x^k$ as the best approximation polynomial without the constant term. Then for $0<\alpha<1$, we have the norm bound
\begin{align}
    \max_{0\le x\le 1}|R_{n,\alpha}(x) - x^\alpha| \le 2\left(\frac{\pi}{2n}\right)^{2\alpha}.
\end{align}

 For $1<\alpha<3/2$, we have the norm bound
  \begin{align}
    \max_{0\le x\le 1}|R_{n,\alpha}(x) - x^\alpha| \le 3\left(\frac{\pi}{n}\right)^{2\alpha},
  \end{align}
  and the pointwise bound
  \begin{align}
    \left|R_{n,\alpha}(x)-x^\alpha\right| \le \frac{D_1x}{n^{2(\alpha-1)}},
  \end{align}
  where $D_1>0$ is a universal positive constant.

  Furthermore, for any $\alpha>0$, we have
\begin{align}
  |g_{k,\alpha}| \le 2^{3n}, \quad |g_{k,H}| \leq 2^{3n}, \qquad k=1,2,\cdots,n,
\end{align}
where the coefficients $g_{k,H}$ are defined in (\ref{eqn.gkhdefine}).
\end{lemma}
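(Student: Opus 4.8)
The plan is to assemble the lemma from four essentially independent pieces, each of which leans on classical approximation theory (Bernstein, Ibragimov) plus the simple trick of passing from $|x|^p$ on $[-1,1]$ to $x^\alpha$ on $[0,1]$ via the substitution $x \mapsto x^2$. First, for the limit $\lim_n n^{2\alpha} E_n[x^\alpha]_{[0,1]} = \mu(2\alpha)/2^{2\alpha}$: I would observe that a polynomial $P(x)$ of degree $n$ approximating $x^\alpha$ on $[0,1]$ pulls back, under $x = t^2$, to the even polynomial $P(t^2)$ of degree $2n$ approximating $|t|^{2\alpha}$ on $[-1,1]$, and conversely the best approximant to $|t|^{2\alpha}$ on $[-1,1]$ may be taken even (by symmetrization, since $|t|^{2\alpha}$ is even). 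Hence $E_n[x^\alpha]_{[0,1]} = E_{2n}[|t|^{2\alpha}]_{[-1,1]}$ exactly, and multiplying by $n^{2\alpha} = 2^{-2\alpha}(2n)^{2\alpha}$ and invoking Theorem~\ref{thm.bernsteinxp} (with exponent $p = 2\alpha$) gives the stated limit.

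Second, for the non-asymptotic norm bounds on $R_{n,\alpha}$. Here the point is that removing the constant term $g_{0,\alpha}$ from the best approximant costs at most one more copy of the approximation error at $x=0$: since $\sum_{k\ge 0} g_{k,\alpha}x^k$ is within $E_n[x^\alpha]_{[0,1]}$ of $x^\alpha$ uniformly, evaluating at $x=0$ gives $|g_{0,\alpha}| \le E_n[x^\alpha]_{[0,1]}$ (because $0^\alpha = 0$), whence $\max_x |R_{n,\alpha}(x) - x^\alpha| \le 2E_n[x^\alpha]_{[0,1]}$. I would then bound $E_n[x^\alpha]_{[0,1]} = E_{2n}[|t|^{2\alpha}]_{[-1,1]}$ by the classical explicit bound $E_m[|t|^p]_{[-1,1]} \le C_p/m^p$ — more precisely using the Bernstein-type estimate that gives, for $m = 2n$, a bound of the form $(\pi/(2n))^{2\alpha}$ up to the constant factors $2$ and $3$ claimed (the two regimes $0<\alpha<1$ and $1<\alpha<3/2$ differing only in the admissible constant, which one reads off from the upper bound in Theorem~\ref{thm.bernsteinxp} together with a crude estimate of $\Gamma(2\alpha)|\sin(\pi\alpha)|/\pi$ on the relevant $\alpha$-range, or from Korneichuk~\cite[Chap.~6]{Korneichuk1991}).

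Third, the pointwise bound $|R_{n,\alpha}(x) - x^\alpha| \le D_1 x / n^{2(\alpha-1)}$ for $1<\alpha<3/2$. This is the delicate part and where I expect the main obstacle: a uniform bound does not give a bound decaying like $x$ near the origin. The strategy is to exploit that for $1<\alpha<3/2$ the function $x^\alpha$ has $x^\alpha/x = x^{\alpha-1}$ with $0 < \alpha-1 < 1/2$, so one can instead control $R_{n,\alpha}(x)/x - x^{\alpha-1}$ uniformly on $[0,1]$ by the best approximation error of $x^{\alpha-1}$ of degree $n-1$, which is $O(n^{-2(\alpha-1)})$ by the first part — provided $R_{n,\alpha}(x)/x$ is itself (close to) a near-best polynomial approximant of $x^{\alpha-1}$. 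Since $R_{n,\alpha}$ has no constant term, $R_{n,\alpha}(x)/x$ is a genuine polynomial of degree $n-1$, and one shows it approximates $x^{\alpha-1}$ to within a constant multiple of $E_{n-1}[x^{\alpha-1}]_{[0,1]}$ by using a pointwise (Timan/Ditzian--Totik-type) estimate for the error $R_{n,\alpha}(x)-x^\alpha$ near $x=0$, namely $|R_{n,\alpha}(x)-x^\alpha| \lesssim n^{-2\alpha} + x\cdot n^{-2(\alpha-1)}$ uniformly, of which the first term is absorbed once $x \gtrsim 1/n^2$ and handled directly on $[0, c/n^2]$ by noting $|R_{n,\alpha}(x)| \lesssim x\,\|R_{n,\alpha}'\|_{\infty}$ via a Markov-brothers inequality argument. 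Assembling these cases yields the claimed pointwise bound with a universal $D_1$.

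Fourth, the coefficient bound $|g_{k,\alpha}|, |g_{k,H}| \le 2^{3n}$ for $1 \le k \le n$. The idea is purely linear-algebraic: the best approximant $\sum_{k=0}^n g_{k,\alpha} x^k$ is uniformly bounded on $[0,1]$ by, say, $\|x^\alpha\|_\infty + E_n \le 2$, and a polynomial of degree $n$ bounded by $M$ in sup-norm on $[0,1]$ has all coefficients bounded by $M$ times a factor depending only on $n$; writing the polynomial in the Chebyshev basis (rescaled to $[0,1]$), the Chebyshev coefficients are each $\le 2M$, and expanding $T_k$ in the monomial basis introduces coefficients of size at most $\sum_j |{\rm coef}| \le 2^{k}$ roughly, so the total is at most $\le n \cdot 2M \cdot 2^{n} \le 2^{3n}$ for all $n$ large, and one checks the crude constant $2^{3n}$ absorbs small $n$. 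For $g_{k,H}$ one uses the same argument on the best approximant $r_{k,H}$ of $-x\ln x$ (uniformly bounded since $\|x\ln x\|_{[0,1]} \le 1/e$), noting that the modification in (\ref{eqn.gkhdefine}) only shifts $g_{1,H}$ by $\ln(4\Delta) = O(\ln n)$, which is harmless against $2^{3n}$. The bookkeeping on the explicit constants (the $2$, $3$, $\pi$, and $2^{3n}$) is the only genuinely tedious part; everything structural follows from Theorem~\ref{thm.bernsteinxp}, the even-extension trick, and Markov's inequality for polynomials.
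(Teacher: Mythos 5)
Your pieces (1), (2), and (4) are broadly in the right territory, but let me flag where you diverge from — or fall short of — the paper's argument.

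For the limit and the norm bounds you have the right idea: the paper indeed makes the substitution $y^2 = x$ to turn $E_n[x^\alpha]_{[0,1]}$ into $E_{2n}[|y|^{2\alpha}]_{[-1,1]}$ and then applies Bernstein's theorem, and for the explicit constants it chains Korneichuk's bound $E_n[f] \le \omega(f,\pi/(n+1))$ with Bernstein's derivative-reduction inequality $E_{n+1}[f] \le \frac{\pi}{2(n+1)}E_n[f']$ once (for $1/2 < \alpha < 1$) or twice (for $1 < \alpha < 3/2$). Your observation that $|g_{0,\alpha}| \le E_n[x^\alpha]_{[0,1]}$ (evaluating the error at $x=0$) and hence the constant-stripped polynomial $R_{n,\alpha}$ pays at most double is exactly the step the paper uses to pass from $E_n$ to the bound on $R_{n,\alpha}$.

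The pointwise bound for $1<\alpha<3/2$ is where your plan has a genuine gap. You want to control $R_{n,\alpha}(x)/x - x^{\alpha-1}$ in sup norm, and to do so you invoke a ``Timan/Ditzian--Totik-type'' estimate $|R_{n,\alpha}(x)-x^\alpha| \lesssim n^{-2\alpha} + x\,n^{-2(\alpha-1)}$. But that estimate is, away from a tiny neighborhood of zero, precisely the bound being proved — you never say how to obtain it independently, so the argument is circular. The paper avoids this entirely: it cites the simultaneous-approximation result from DeVore--Lorentz, which says that for the best approximation polynomial, the derivative error $\max_x |R_{n,\alpha}'(x) - \alpha x^{\alpha-1}|$ is bounded by a universal constant times $E_n[\alpha x^{\alpha-1}]_{[0,1]}$ — and the latter is $O(n^{-2(\alpha-1)})$ by the already-established norm bound for exponents in $(0,1/2)$. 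Integrating this derivative error from $0$ to $x$, using $R_{n,\alpha}(0)=0=0^\alpha$, immediately gives the linear-in-$x$ decay with a single universal constant $D_1$. Your Markov-brothers/$[0,c/n^2]$ split is unnecessary once you take this route, and as written your sketch does not establish the needed intermediate inequality.

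For the coefficient bound, your strategy (Chebyshev expansion of a bounded polynomial) is sound in spirit but your constants are off in the intermediate step: the shifted Chebyshev polynomial $T_k(2x-1)$ on $[0,1]$ has leading monomial coefficient $2^{2k-1}$, not $\le 2^k$, so you need the cruder budget and a small-$n$ check to actually land on $2^{3n}$. The paper sidesteps this by staying on $[-1,1]$: it views the best approximant as an even polynomial of degree $2n$ in $y$ (again via $y^2=x$), and then applies Qazi--Rahman's coefficient-domination result (Lemma~\ref{lemma.chebyshev}) together with the Cai--Low bound $2^{3m}$ on the coefficients of $T_{2m}$. This gives the exact constant $2^{3n}$ cleanly, including for $g_{k,H}$, without a separate accounting of the $\ln(4\Delta)$ shift in $g_{1,H}$ — which, as you correctly note, is harmless anyway.
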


Although the Bernstein function $\mu(p)$ seems hard to analyze, we can compute it fairly easily using well-developed machinery in numerical analysis. For example, \cite{Varga--Carpenter1985bernstein} showed the following bound on $\mu(1)$ using analytical methods:
\begin{equation}
0.28016 85460... \leq \mu(1) \leq   0.28017 33791,
\end{equation}
but we can easily obtain it numerically in the Chebfun system \cite{Trefethen--chebfunv5} using polynomial approximation order roughly $100$.

We have the following result by Ibragimov \cite{Ibragimov1946}:
\begin{lemma}\label{lemma.ibragimovnu12}
The following limits exists:
\begin{equation}
\lim_{n\to \infty} n^2 E_n[-x \ln x]_{[0,1]} = \frac{\nu_1(2)}{2} <\frac{1}{2}.
\end{equation}
The function $\nu_1(p)$ was introduced by Ibragimov \cite{Ibragimov1946} as the following limit for $p$ positive even integer and $m$ positive integer:
\begin{equation}
\lim_{n\to \infty} \frac{n^p}{(\ln n)^{m-1}} E_n[|x|^p \ln^m |x|]_{[-1,1]} = \nu_1(p).
\end{equation}
\end{lemma}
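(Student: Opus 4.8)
The plan is to derive the first displayed identity from the second one --- which is Ibragimov's theorem \cite{Ibragimov1946}, quoted in the statement --- by a change of variables that turns the one-sided approximation problem on $[0,1]$ into the two-sided problem on $[-1,1]$, and then to record why $\nu_1(2)<1$.

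\emph{Step 1 (reduction to $[-1,1]$).} Substitute $x=t^2$. This carries a polynomial $P$ with $\deg P\le n$ to the even polynomial $P(t^2)$ with degree $\le 2n$, and carries $f(x)=-x\ln x$ to the even function $g(t)\triangleq -t^2\ln(t^2)=-2\,|t|^2\ln|t|$ on $[-1,1]$, with $g(0)=0$. Since $g$ is even, its best degree-$2n$ uniform approximant on $[-1,1]$ may be taken even: replacing an approximant $Q$ by $\tfrac12\big(Q(t)+Q(-t)\big)$ does not increase the sup-error, and an even polynomial of degree $\le 2n$ is precisely one of the form $P(t^2)$ with $\deg P\le n$. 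Using also that $t\mapsto t^2$ maps $[0,1]$ bijectively onto $[0,1]$ and that an even function attains its supremum over $[-1,1]$ already on $[0,1]$, we get
\begin{equation}
E_n[-x\ln x]_{[0,1]} \;=\; E_{2n}[g]_{[-1,1]} \;=\; 2\,E_{2n}\big[\,|x|^2\ln|x|\,\big]_{[-1,1]},
\end{equation}
the last step because rescaling the target by a constant and changing its sign do not affect the best-approximation error.

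\emph{Step 2 (taking the limit).} Multiplying by $n^2$ and using the algebraic identity $2n^2=\tfrac12(2n)^2$, the identity of Step 1 becomes
\begin{equation}
n^2\,E_n[-x\ln x]_{[0,1]} \;=\; \tfrac12\,(2n)^2\,E_{2n}\big[\,|x|^2\ln|x|\,\big]_{[-1,1]}.
\end{equation}
The second display of the statement, specialized to $p=2,\ m=1$ (so the factor $(\ln n)^{m-1}$ is trivial), asserts that $m^2 E_m[|x|^2\ln|x|]_{[-1,1]}\to\nu_1(2)$; evaluating this convergent sequence along the subsequence $m=2n$ and combining with the previous display yields $\lim_{n\to\infty}n^2 E_n[-x\ln x]_{[0,1]}=\nu_1(2)/2$, which is the asymptotic part of the claim.

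\emph{Step 3 (the bound $\nu_1(2)<1$).} The order $E_n[-x\ln x]_{[0,1]}=O(n^{-2})$ is elementary: with the Ditzian--Totik weight $\varphi(x)=\sqrt{x(1-x)}$ one has $\varphi(x)^2 f''(x)=-(1-x)$, which is bounded on $[0,1]$, so the $K$-functional estimate of \cite{Ditzian--Totik1987} gives $E_n[-x\ln x]_{[0,1]}\le C\,n^{-2}$. The strict bound $\nu_1(2)<1$ (equivalently $\nu_1(2)/2<\tfrac12$) then follows from Ibragimov's explicit evaluation of $\nu_1(p)$ for even $p$ in \cite{Ibragimov1946}; a self-contained route would instead exhibit a concrete degree-$n$ polynomial --- e.g. an interpolant of $-x\ln x$ at Chebyshev nodes clustered near the singularity $x=0$, or minus the $\alpha$-derivative at $\alpha=1$ of a near-best approximant of $x^\alpha$ --- whose error has leading constant below $\tfrac12\,n^{-2}$. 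That $\nu_1(2)$ lies comfortably below $1$ is consistent with the bounds on Bernstein's function $\mu(p)$ near the even integer $p=2$ in Theorem~\ref{thm.bernsteinxp}, both of which vanish there.

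\emph{Main obstacle.} Essentially all the depth sits in Step 2: the very existence of the limit $\lim_n n^2 E_n[-x\ln x]_{[0,1]}$ and the identification of its exact value are the content of Ibragimov's delicate classical analysis (conformal maps and singular integrals), which we take as a black box. Everything we supply ourselves --- the symmetrization in Step 1 and the order-plus-constant bookkeeping in Step 3 --- is routine; the only mildly delicate point would be pinning down a constant strictly below $\tfrac12$ in a self-contained version of Step 3, which is why we prefer to read $\nu_1(2)$ off Ibragimov's formula.
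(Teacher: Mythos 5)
Your Steps 1--2 are correct and make explicit the reduction that the paper leaves to the reader: the paper's own ``proof'' of this lemma consists of the single sentence that it ``follows from Ibragimov \cite[Thm.~$9\delta$]{Ibragimov1946}'' together with a remark correcting a factor-of-$4$ error in Ibragimov's computation of $\lim_n n^2 E_n[(1-x)\ln(1-x)]_{[-1,1]}$. Your route through the quadratic substitution $x=t^2$ (symmetrize, double the degree, read off $p=2,\ m=1$ in the second display, then bookkeeping $2n^2=\tfrac12(2n)^2$) is arguably cleaner than what the paper implicitly relies on: it reads the limiting constant directly off the $|x|^p\ln^m|x|$ statement rather than routing through Ibragimov's auxiliary result on $(1-x)\ln(1-x)$ --- which is where the factor-of-$4$ confusion lived --- and so sidesteps that issue entirely. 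The even-symmetrization observation in Step 1 is the standard but essential justification for the identity $E_n[\,\cdot\,]_{[0,1]}=E_{2n}[\,\cdot\,]_{[-1,1]}$, and you state it correctly.

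Step 3 is the weak link, and you partly acknowledge this. The Ditzian--Totik estimate ($\varphi^2 f''=-(1-x)$ bounded, hence $E_n\lesssim n^{-2}$) only reproduces the \emph{rate}, which is already implicit in the existence of the limit; it gives no information about whether $\nu_1(2)<1$. Deferring to ``Ibragimov's explicit evaluation'' is reasonable in spirit, but you do not exhibit it or extract the bound from it; the heuristic appeal to $\mu(p)$ vanishing at $p=2$ does not transfer, since $\nu_1$ concerns $|x|^p\ln|x|$ rather than $|x|^p$ and the two constants are governed by different singular integrals. To be fair, the paper does not prove $\nu_1(2)<1$ either: it reports the numerical value $\nu_1(2)\approx 0.453$ from Chebfun and leaves it at that. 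So your proposal is at the same level of rigor as the paper on this point; if a genuinely self-contained bound is wanted, one must either evaluate Ibragimov's integral representation of $\nu_1(2)$ or produce an explicit degree-$n$ polynomial approximant with error $< \tfrac12 n^{-2}(1+o(1))$, neither of which you (or the paper) carry out.
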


This Lemma follows from Ibragimov \cite[Thm. $9\delta$]{Ibragimov1946}. Note that Ibragimov \cite{Ibragimov1946} contained a small mistake where the limit of $n^2 E_n[(1-x)\ln (1-x)]_{[-1,1]}$ was wrongly computed to be $4\nu_1(2)$, but it is supposed to be $\nu_1(2)$.  Using numerical computation provided by the Chebfun \cite{Trefethen--chebfunv5} toolbox, we obtain that
\begin{equation}
\nu_1(2) \approx 0.453,
\end{equation}
and this asymptotic result starts to be very accurate for small $n$ such as $5$.

The following two lemmas characterize the approximation error of $x^\alpha$ and $-x \ln x$ when $x$ is small.
\begin{lemma}\label{lemma.approsmall}
For all $x \in [0,4\Delta]$, the following bound holds for $0<\alpha<3/2,\alpha\neq1$:
\begin{equation}
\left|\sum_{k = 1}^{K} g_{k,\alpha} \Delta^{-k + \alpha} x^k - x^\alpha \right| \leq   \frac{c_3}{(n \ln n)^\alpha},
\end{equation}
where $c_3 = 2 \left( \frac{\pi^2 c_1}{c_2^2}\right)^\alpha$ for $0<\alpha<1$, and $c_3=3\left(\frac{4\pi^2 c_1}{c_2^2}\right)^\alpha$ for $1<\alpha<3/2$. When $n$ (or equivalently, $K$) is large enough, we could take
\begin{equation}
c_3 = \frac{2 \mu(2\alpha) c_1^\alpha}{c_2^{2\alpha}},
\end{equation}
where the function $\mu(\cdot)$ is the Bernstein function introduced in Theorem~\ref{thm.bernsteinxp}.
\end{lemma}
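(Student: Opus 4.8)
The plan is to reduce everything to the best polynomial approximation estimates for $x^\alpha$ on the unit interval collected in Lemma~\ref{lemma.nonasympxa}, via the affine change of variables sending $[0,4\Delta]$ onto $[0,1]$. Recall that $\sum_{k=0}^{K} g_{k,\alpha} z^k$ is by definition the degree-$K$ best uniform approximant of $z^\alpha$ on $[0,1]$, and let $R_{K,\alpha}(z)=\sum_{k=1}^{K} g_{k,\alpha} z^k$ be the same polynomial with the constant term discarded, which is exactly the object bounded in Lemma~\ref{lemma.nonasympxa}. First I would set $z=x/(4\Delta)\in[0,1]$ for $x\in[0,4\Delta]$ and record the identity
\begin{equation}
\sum_{k=1}^{K} g_{k,\alpha}(4\Delta)^{-k+\alpha}x^k-x^\alpha=(4\Delta)^\alpha\bigl(R_{K,\alpha}(z)-z^\alpha\bigr),
\end{equation}
so that the left-hand side of the lemma equals $(4\Delta)^\alpha\sup_{z\in[0,1]}\lvert R_{K,\alpha}(z)-z^\alpha\rvert$. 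Everything then comes down to inserting $\Delta=c_1\ln n/n$, $K=c_2\ln n$ and invoking the relevant bound on $\sup_{z\in[0,1]}\lvert R_{K,\alpha}(z)-z^\alpha\rvert$.

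For the non-asymptotic constants I would use the norm bounds of Lemma~\ref{lemma.nonasympxa} directly: for $0<\alpha<1$ it gives $\sup_{z\in[0,1]}\lvert R_{K,\alpha}(z)-z^\alpha\rvert\le 2(\pi/(2K))^{2\alpha}$, and for $1<\alpha<3/2$ it gives $\le 3(\pi/K)^{2\alpha}$. Multiplying by $(4\Delta)^\alpha$ and substituting the values of $\Delta$ and $K$, in the first case the factor $4^\alpha$ produced by the rescaling cancels the factor $2^{-2\alpha}$ hidden in $(\pi/(2K))^{2\alpha}$, leaving $2(\pi^2 c_1/c_2^2)^\alpha(n\ln n)^{-\alpha}$; in the second case there is no such cancellation, so the $4^\alpha$ survives and one obtains $3(4\pi^2 c_1/c_2^2)^\alpha(n\ln n)^{-\alpha}$. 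The excluded value $\alpha=1$ is trivial, since $x^\alpha$ is then a degree-$K$ polynomial and the error vanishes; consistently, $\mu(2)=0$, so the sharp formula degenerates there as well.

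For the sharp constant valid for large $n$, I would observe that $R_{K,\alpha}$ differs from the full best approximant only by the constant $g_{0,\alpha}$, and since $0^\alpha=0$ one has $\lvert g_{0,\alpha}\rvert=\lvert R_{K,\alpha}(0)+g_{0,\alpha}-0^\alpha\rvert\le E_K[x^\alpha]_{[0,1]}$ (the value of a best approximant cannot differ from the function by more than the approximation error); hence $\sup_{z\in[0,1]}\lvert R_{K,\alpha}(z)-z^\alpha\rvert\le 2E_K[x^\alpha]_{[0,1]}$. Combining this with the limit $K^{2\alpha}E_K[x^\alpha]_{[0,1]}\to\mu(2\alpha)/2^{2\alpha}$ from Lemma~\ref{lemma.nonasympxa}, and again with the cancellation $4^\alpha\cdot 2^{-2\alpha}=1$, the quantity $(4\Delta)^\alpha\cdot 2E_K[x^\alpha]_{[0,1]}$ is asymptotic to $2\mu(2\alpha)c_1^\alpha c_2^{-2\alpha}(n\ln n)^{-\alpha}$, so for $n$ large enough one may take $c_3=2\mu(2\alpha)c_1^\alpha/c_2^{2\alpha}$.

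I expect no genuine obstacle here: once Lemma~\ref{lemma.nonasympxa} (the Bernstein–Ibragimov theory of best approximation of $\lvert x\rvert^p$) is granted, the argument is pure bookkeeping. The one place requiring care is tracking the powers of $4$ and $2$ that appear because the rescaling is by $4\Delta$ while the Bernstein constant $\mu(\cdot)$ is defined via approximation on $[-1,1]$ rather than $[0,1]$; a secondary point worth verifying is that discarding the constant term of the best approximant costs at most a factor $2$ uniformly in $K$, which is precisely what $\lvert g_{0,\alpha}\rvert\le E_K[x^\alpha]_{[0,1]}$ delivers.
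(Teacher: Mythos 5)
Your proposal is correct and follows essentially the same route as the paper: rescale $x'=x/(4\Delta)$ to reduce to the $[0,1]$ approximation error, invoke the norm bounds on $R_{K,\alpha}(z)-z^\alpha$ collected in Lemma~\ref{lemma.nonasympxa}, and multiply through by $(4\Delta)^\alpha$, after which the substitution $\Delta=c_1\ln n/n$, $K=c_2\ln n$ gives the stated constants (including the clean cancellation $4^\alpha\cdot 2^{-2\alpha}=1$ for $0<\alpha<1$). The only addition relative to the paper's very terse proof is that you explicitly re-derive the $\lvert g_{0,\alpha}\rvert\le E_K[x^\alpha]_{[0,1]}$ step to justify the sharp constant $c_3=2\mu(2\alpha)c_1^\alpha/c_2^{2\alpha}$; the paper leaves that implicit, since it is already carried out inside the proof of Lemma~\ref{lemma.nonasympxa}, so this is filling in detail rather than a different method.
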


\begin{lemma}\label{lemma.approsmallentropy}
For all $x\in [0,4\Delta]$, there exists a constant $C>0$ such that
\begin{equation}
\left|\sum_{k = 1}^{K} g_{k,H} (4\Delta)^{-k + 1} x^k + x\ln x \right| \leq   \frac{C}{n \ln n}.
\end{equation}
Moreover, when $n$ (equivalently, $K$) is large enough, we could take $C$ to be
\begin{equation}
C =\frac{4c_1 \nu_1(2)}{c_2^2} \approx \frac{1.81 c_1}{c_2^2},
\end{equation}
where the function $\nu_1(p)$ is introduced in Lemma~\ref{lemma.ibragimovnu12}.
\end{lemma}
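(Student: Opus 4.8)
The plan is to reduce the claim to the known asymptotic/non-asymptotic behavior of the best polynomial approximation error $E_K[-x\ln x]_{[0,1]}$ (Lemma~\ref{lemma.ibragimovnu12}) by a rescaling argument, exactly paralleling what is done for $x^\alpha$ in Lemma~\ref{lemma.approsmall}. First I would recall that, by definition, $\sum_{k=0}^K r_{k,H}x^k$ is the best degree-$K$ polynomial approximation to $-x\ln x$ on $[0,1]$, so that $\max_{x\in[0,1]}\left|\sum_{k=0}^K r_{k,H}x^k - (-x\ln x)\right| = E_K[-x\ln x]_{[0,1]}$. The key observation is the scaling identity: for $x\in[0,4\Delta]$, write $x = 4\Delta\, u$ with $u\in[0,1]$; then $-x\ln x = 4\Delta\bigl(-u\ln u\bigr) + 4\Delta\,u\bigl(-\ln(4\Delta)\bigr)$, i.e. $-x\ln x = 4\Delta\bigl(-u\ln u\bigr) - 4\Delta\ln(4\Delta)\,u$. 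This is precisely why the coefficients are defined by $g_{k,H}=r_{k,H}$ for $k\geq 2$ and $g_{1,H}=r_{1,H}-\ln(4\Delta)$ in~(\ref{eqn.gkhdefine}): the linear correction $-\ln(4\Delta)$ absorbs the $u\bigl(-\ln(4\Delta)\bigr)$ term exactly.

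Next I would carry out the substitution in the quantity to be bounded. Set $x=4\Delta u$. Then
\begin{align}
\sum_{k=1}^K g_{k,H}(4\Delta)^{-k+1}x^k + x\ln x
&= 4\Delta\sum_{k=1}^K g_{k,H}u^k - 4\Delta\bigl(-u\ln u\bigr) - 4\Delta\bigl(-\ln(4\Delta)\bigr)u \nonumber\\
&= 4\Delta\left[\sum_{k=1}^K g_{k,H}u^k + u\ln(4\Delta) - \bigl(-u\ln u\bigr)\right] \nonumber\\
&= 4\Delta\left[\sum_{k=1}^K r_{k,H}u^k - \bigl(-u\ln u\bigr)\right],
\end{align}
where in the last step I used $g_{1,H}u + u\ln(4\Delta) = r_{1,H}u$ and $g_{k,H}=r_{k,H}$ for $k\geq2$. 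Since the constant term $r_{0,H}$ is the only missing term, and $-u\ln u$ vanishes at $u=0$, the bracket equals $\sum_{k=0}^K r_{k,H}u^k - r_{0,H} - (-u\ln u)$, whose sup-norm over $u\in[0,1]$ is at most $2E_K[-x\ln x]_{[0,1]}$ (the best approximation error plus $|r_{0,H}|$, and $|r_{0,H}|\le E_K[-x\ln x]_{[0,1]}$ since $-x\ln x$ vanishes at $x=0$). Hence the whole expression is bounded by $8\Delta\, E_K[-x\ln x]_{[0,1]}$.

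Finally I would plug in $\Delta = c_1\ln n/n$ and $K = c_2\ln n$, and invoke Lemma~\ref{lemma.ibragimovnu12}: $E_K[-x\ln x]_{[0,1]} = \frac{\nu_1(2)}{2K^2}(1+o(1)) = \frac{\nu_1(2)}{2c_2^2(\ln n)^2}(1+o(1))$. Therefore
\begin{equation}
8\Delta\, E_K[-x\ln x]_{[0,1]} \le 8\cdot\frac{c_1\ln n}{n}\cdot\frac{\nu_1(2)}{2c_2^2(\ln n)^2}(1+o(1)) = \frac{4c_1\nu_1(2)}{c_2^2}\cdot\frac{1}{n\ln n}(1+o(1)),
\end{equation}
which gives the asymptotic constant $C = 4c_1\nu_1(2)/c_2^2$ claimed, while for finite $n$ a crude bound $E_K[-x\ln x]_{[0,1]}\le C'/K^2$ with an absolute $C'$ (available from Lemma~\ref{lemma.ibragimovnu12} or a direct estimate, e.g. via Jackson-type theorems applied to the modulus of continuity of $-x\ln x$) yields the stated non-asymptotic form with some constant $C$. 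The only mildly delicate point — the "main obstacle" — is justifying that removing the constant term $r_{0,H}$ costs at most a factor $2$; this follows because $-x\ln x\to 0$ as $x\to0^+$ forces $|r_{0,H}|\le E_K[-x\ln x]_{[0,1]}$, so the constant-free polynomial still approximates with error at most $2E_K[-x\ln x]_{[0,1]}$. Everything else is bookkeeping with the affine rescaling $x=4\Delta u$.
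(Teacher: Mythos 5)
Your proposal is correct and follows essentially the same route as the paper: rescale via $x=4\Delta u$, observe that the adjustment $g_{1,H}=r_{1,H}-\ln(4\Delta)$ exactly cancels the $u\ln(4\Delta)$ term so the expression reduces to $4\Delta\bigl[\sum_{k=1}^K r_{k,H}u^k - (-u\ln u)\bigr]$, bound $|r_{0,H}|$ by the best-approximation error (since $-u\ln u$ vanishes at $u=0$) to pay at most a factor of $2$, then invoke Lemma~\ref{lemma.ibragimovnu12} and substitute $\Delta=c_1\ln n/n$, $K=c_2\ln n$. The algebra and the final constant $C=4c_1\nu_1(2)/c_2^2$ match the paper's derivation.
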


According to Lemma~\ref{lemma.ibragimovnu12}, the asymptotic result $C \approx \frac{1.81 c_1}{c_2^2}$ starts to become very accurate even from very small values of $K$ such as $5$. The following lemma gives some tails bounds for Poisson and Binomial random variables.
\begin{lemma}\cite[Exercise 4.7]{mitzenmacher2005probability}\label{lemma.poissontail}
If $X\sim \spo(\lambda)$, or $X\sim \mathsf{B}(n,p), np = \lambda$, then for any $\delta>0$, we have
\begin{align}
\bP(X \geq (1+\delta) \lambda) & \leq \left( \frac{e^\delta}{(1+\delta)^{1+\delta}} \right)^\lambda \\
\bP(X \leq (1-\delta)\lambda) & \leq  \left( \frac{e^{-\delta}}{(1-\delta)^{1-\delta}} \right)^\lambda \leq e^{-\delta^2 \lambda/2}.
\end{align}
\end{lemma}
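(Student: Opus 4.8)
The plan is to obtain both tail bounds by the exponential Markov (Chernoff) method, handling the Poisson and Binomial cases simultaneously.

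First I would record the moment generating functions: for $X \sim \spo(\lambda)$ one has $\bE e^{tX} = \exp(\lambda(e^t-1))$ for every $t \in \bR$, while for $X \sim \mathsf{B}(n,p)$ with $np=\lambda$ one has $\bE e^{tX} = (1-p+pe^t)^n$. The unifying observation is that, applying $1+u \le e^u$ with $u = p(e^t-1)$, the Binomial MGF is dominated by the Poisson one: $(1-p+pe^t)^n \le \exp(np(e^t-1)) = \exp(\lambda(e^t-1))$. Hence it suffices to prove both inequalities under the single hypothesis $\bE e^{tX} \le \exp(\lambda(e^t-1))$ for all $t$, which then covers both models at once.

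Next, for the upper tail, fix $t>0$ and apply Markov's inequality to $e^{tX}$:
\[
\bP(X \ge (1+\delta)\lambda) \le e^{-t(1+\delta)\lambda}\,\bE e^{tX} \le \exp\!\big(\lambda(e^t - 1 - t(1+\delta))\big).
\]
Minimizing the exponent over $t>0$ leads to $e^t = 1+\delta$, i.e.\ $t = \ln(1+\delta)>0$; substituting gives exactly $\bP(X \ge (1+\delta)\lambda) \le \big(e^\delta/(1+\delta)^{1+\delta}\big)^\lambda$. For the lower tail I would repeat the argument with $t<0$ (so $x \mapsto e^{tx}$ is decreasing): $\bP(X \le (1-\delta)\lambda) \le \exp(\lambda(e^t - 1 - t(1-\delta)))$, and the optimal choice $e^t = 1-\delta$, i.e.\ $t = \ln(1-\delta)<0$, is admissible and yields $\bP(X \le (1-\delta)\lambda) \le \big(e^{-\delta}/(1-\delta)^{1-\delta}\big)^\lambda$.

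Finally, to pass to the cleaner form $e^{-\delta^2\lambda/2}$ I would prove the elementary inequality $-\delta - (1-\delta)\ln(1-\delta) \le -\delta^2/2$ on $(0,1)$, equivalently $\phi(\delta) := (1-\delta)\ln(1-\delta) + \delta - \delta^2/2 \ge 0$; this follows because $\phi(0)=0$, $\phi'(\delta) = -\ln(1-\delta)-\delta$ with $\phi'(0)=0$, and $\phi''(\delta) = \delta/(1-\delta) \ge 0$. I do not expect a genuine obstacle here — the statement is a textbook Chernoff bound (it is in fact quoted from Mitzenmacher–Upfal) — and the only points requiring a little care are the sign bookkeeping in the $t<0$ Markov step and the one-line calculus lemma just described.
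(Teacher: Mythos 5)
Your proof is correct and is precisely the standard Chernoff-bound argument; the paper provides no proof of its own for this lemma but simply cites it as Exercise~4.7 of Mitzenmacher--Upfal, and your derivation matches that textbook's method (exponential Markov with the optimizing choice $e^t = 1+\delta$ resp.\ $e^t=1-\delta$, plus the elementary calculus inequality for the $e^{-\delta^2\lambda/2}$ simplification). The one small point worth making explicit is that the lower-tail bound, in the form involving $(1-\delta)^{1-\delta}$, is only meaningful for $0<\delta\le 1$ (for $\delta\ge 1$ the event $\{X\le(1-\delta)\lambda\}$ is trivially handled by $\bP(X=0)\le e^{-\lambda}$), which is the implicit range in which your choice $t=\ln(1-\delta)$ and the convexity argument for $\phi$ are valid.
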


Next lemma gives an upper bound on the $k$-th moment of a Poisson random variable.

\begin{lemma}\label{lemma.poissonmoment}
Let $X \sim \spo(\lambda)$, $k$ be an positive integer. Taking $M = \max\{\lambda,k\}$, we have
\begin{equation}
\bE X^k \leq (2M)^k.
\end{equation}
\end{lemma}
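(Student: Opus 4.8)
## Proof Proposal for Lemma~\ref{lemma.poissonmoment}

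The plan is to bound $\bE X^k$ for $X\sim\spo(\lambda)$ by relating the raw moments of a Poisson random variable to its factorial moments, which have the clean closed form $\bE[X(X-1)\cdots(X-j+1)] = \lambda^j$. The combinatorial identity connecting the two is the Stirling-number expansion
\begin{equation}
X^k = \sum_{j=0}^{k} {k \brace j}\, X(X-1)\cdots(X-j+1),
\end{equation}
where ${k \brace j}$ denotes the Stirling numbers of the second kind. Taking expectations and using $\bE[X(X-1)\cdots(X-j+1)] = \lambda^j$ gives $\bE X^k = \sum_{j=0}^{k} {k \brace j}\lambda^j$, which is exactly the $k$-th Bell/Touchard polynomial evaluated at $\lambda$. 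So the task reduces to bounding $\sum_{j=0}^k {k\brace j}\lambda^j$ by $(2M)^k$ with $M=\max\{\lambda,k\}$.

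First I would record the elementary bound ${k\brace j}\le \binom{k}{j} j^{\,k-j} \le \binom{k}{j} k^{\,k-j}$, which follows since a surjection-counted partition into $j$ blocks can be crudely over-counted by choosing, for each of the $k$ elements, one of $j$ labels, and the leading $\binom{k}{j}$ absorbs the choice of which $j$ labels are used; alternatively one can just use the very loose ${k\brace j}\le k^{k-j}\binom{k}{j}$ or even ${k\brace j}\le k^{k}/j!$. Then
\begin{equation}
\bE X^k = \sum_{j=0}^k {k\brace j}\lambda^j \le \sum_{j=0}^k \binom{k}{j} k^{k-j}\lambda^j = (k+\lambda)^k \le (2M)^k,
\end{equation}
since $k+\lambda \le 2\max\{k,\lambda\} = 2M$. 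This is the whole argument; every inequality in the chain is routine.

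The only mild subtlety — and the one place to be careful — is justifying the interchange of expectation and the finite sum (trivial, since it is a finite sum) and confirming the factorial-moment identity $\bE[X(X-1)\cdots(X-j+1)]=\lambda^j$ for Poisson, which is standard and also appears implicitly in the paper's use of unbiased estimators for powers (cf. the displayed identity $\bE\prod_{r=0}^{k-1}(X-r/n)=p^k$ for $nX\sim\spo(np)$, which gives the $j$-th factorial moment as $(n\lambda_{\mathrm{loc}})^{\underline j}\cdot$ normalization). If one prefers to avoid Stirling numbers entirely, an alternative is a direct induction on $k$ using the Poisson identity $\bE[X\cdot h(X)] = \lambda\,\bE[h(X+1)]$, which yields the Touchard recurrence $\bE X^{k+1} = \lambda\sum_{i=0}^k \binom{k}{i}\bE X^i$ and then an easy induction gives $\bE X^k \le (k+\lambda)^k$. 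I do not anticipate any real obstacle here; the lemma is a convenience estimate and the bound $(2M)^k$ is deliberately loose.
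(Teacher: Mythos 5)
Your proof is correct and follows essentially the same path as the paper's: both expand $\bE X^k$ into the Touchard polynomial $\sum_{j} \left\{\begin{matrix}k\\j\end{matrix}\right\}\lambda^j$ via factorial moments and invoke the Rennie--Dobson bound $\left\{\begin{matrix}k\\j\end{matrix}\right\}\le\binom{k}{j}j^{k-j}$, which is exactly the paper's equation~(\ref{eq:stirling_inequality}). The only cosmetic difference lies in the final over-bounding, where you replace $j^{k-j}\le k^{k-j}$ and collapse the sum to $(k+\lambda)^k$ via the binomial theorem, whereas the paper bounds $\lambda^i$ and $i^{k-i}$ each by powers of $M$ term-by-term to reach $M^k\sum_i\binom{k}{i}\le (2M)^k$.
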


The next two lemmas from Cai and Low \cite{Cai--Low2011} are simple facts we will utilize in the analysis of our estimators.
\begin{lemma}\cite[Lemma 4]{Cai--Low2011}\label{lemma.cailow1}
Suppose $\mathbbm{1}(A)$ is an indicator random variable independent of $X$ and $Y$, then
\begin{align}
& \mathsf{Var}(X \mathbbm{1}(A) + Y \mathbbm{1}(A^c)) \nonumber \\
& \quad = \mathsf{Var}(X) \bP(A) + \mathsf{Var}(Y) \bP(A^c) + (\bE X - \bE Y)^2 \bP(A) \bP(A^c).
\end{align}
\end{lemma}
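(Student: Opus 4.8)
The plan is a direct expansion of both sides, exploiting the algebra of the indicator. Write $W \triangleq X\mathbbm{1}(A) + Y\mathbbm{1}(A^c)$ and set $p \triangleq \bP(A)$, $q \triangleq \bP(A^c) = 1-p$. The elementary facts I would use repeatedly are the pointwise identities $\mathbbm{1}(A)^2 = \mathbbm{1}(A)$, $\mathbbm{1}(A^c)^2 = \mathbbm{1}(A^c)$, $\mathbbm{1}(A)\mathbbm{1}(A^c) = 0$, together with the hypothesis that $\mathbbm{1}(A)$ is independent of the pair $(X,Y)$.

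First I would compute the two moments of $W$. By independence, $\bE W = \bE[X]\,p + \bE[Y]\,q$. Since the cross term in $W^2$ vanishes and the indicators are idempotent, $W^2 = X^2\mathbbm{1}(A) + Y^2\mathbbm{1}(A^c)$, so again by independence $\bE[W^2] = \bE[X^2]\,p + \bE[Y^2]\,q$. Therefore $\mathsf{Var}(W) = \bE[X^2]\,p + \bE[Y^2]\,q - (\bE[X]\,p + \bE[Y]\,q)^2$.

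It then remains to check the purely algebraic identity that this last expression equals $\mathsf{Var}(X)\,p + \mathsf{Var}(Y)\,q + (\bE X - \bE Y)^2 pq$. Substituting $\mathsf{Var}(X) = \bE[X^2] - (\bE X)^2$ and $\mathsf{Var}(Y) = \bE[Y^2] - (\bE Y)^2$, and expanding $(\bE[X]\,p + \bE[Y]\,q)^2$, one sees that the $\bE[X^2]p$, $\bE[Y^2]q$ and $-2\,\bE X\,\bE Y\,pq$ terms cancel on the two sides, and the remaining terms combine — using $p + q = 1$ — to zero. There is no real obstacle here; the only points needing care are not dropping the cross-term cancellation coming from $\mathbbm{1}(A)\mathbbm{1}(A^c) = 0$, and noting that the independence hypothesis is invoked precisely to factor $\bE[X^k\mathbbm{1}(A)] = \bE[X^k]\,\bP(A)$, without which the formula would fail.
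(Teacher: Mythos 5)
Your proof is correct, and since the paper simply cites this lemma from Cai and Low (2011) without reproducing its proof, there is no internal argument to compare against; your direct moment-expansion is the standard way to prove it. One small remark: you invoke independence of $\mathbbm{1}(A)$ from ``the pair $(X,Y)$,'' but the computation only ever factors expectations of the form $\bE[X^k\mathbbm{1}(A)]$ and $\bE[Y^k\mathbbm{1}(A^c)]$, so pairwise independence of $\mathbbm{1}(A)$ from $X$ and from $Y$ separately — which is exactly what the lemma's hypothesis states — already suffices; no joint independence from the pair is needed. The cross-term cancellation $\mathbbm{1}(A)\mathbbm{1}(A^c)=0$ and idempotence of the indicators are handled correctly, and the closing algebraic identity reduces, as you note, to two uses of $p+q=1$.
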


\begin{lemma}\cite[Lemma 5]{Cai--Low2011}\label{lemma.cailow2}
For any two random variables $X$ and $Y$,
\begin{equation}
\mathsf{Var}(\min\{X,Y\}) \leq \mathsf{Var}(X) + \mathsf{Var}(Y).
\end{equation}
In particular, for any random variable $X$ and any constant $C$,
\begin{equation}
\mathsf{Var}(\min\{X,C\}) \leq \mathsf{Var}(X).
\end{equation}
\end{lemma}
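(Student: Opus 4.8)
\emph{Proof proposal.}
The plan is to rely on two elementary facts: (i) the map $(a,b)\mapsto\min\{a,b\}$ is non-expansive, i.e. $1$-Lipschitz when $\mathbb{R}^2$ is equipped with the $\ell_\infty$ norm, and (ii) for any random variable $Z$ and any deterministic $c$, $\mathsf{Var}(Z)\le \bE(Z-c)^2$. The only point requiring a little care is that the natural identity $\min\{X,Y\}=\tfrac12(X+Y)-\tfrac12|X-Y|$ is of no direct use: expanding $\mathsf{Var}$ then leaves a covariance term of indeterminate sign. Instead one should compare $\min\{X,Y\}$ against the \emph{deterministic} point $\min\{\bE X,\bE Y\}$.

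First I would establish the pointwise bound that for all reals $a_1,b_1,a_2,b_2$,
\begin{equation}
|\min\{a_1,b_1\}-\min\{a_2,b_2\}| \le \max\{|a_1-a_2|,|b_1-b_2|\},
\end{equation}
which follows from $\min\{a_1,b_1\}\le\min\{a_2+|a_1-a_2|,\,b_2+|b_1-b_2|\}\le\min\{a_2,b_2\}+\max\{|a_1-a_2|,|b_1-b_2|\}$ together with the inequality obtained by swapping the roles of the two pairs. Squaring and using $\max\{u^2,v^2\}\le u^2+v^2$ for $u,v\ge 0$ gives $(\min\{a_1,b_1\}-\min\{a_2,b_2\})^2\le (a_1-a_2)^2+(b_1-b_2)^2$. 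Substituting $(a_1,b_1)=(X,Y)$ and $(a_2,b_2)=(\bE X,\bE Y)$, taking expectations, and invoking fact (ii) with $c=\min\{\bE X,\bE Y\}$ yields
\begin{align}
\mathsf{Var}(\min\{X,Y\})
&\le \bE(\min\{X,Y\}-\min\{\bE X,\bE Y\})^2 \nonumber\\
&\le \bE(X-\bE X)^2+\bE(Y-\bE Y)^2 = \mathsf{Var}(X)+\mathsf{Var}(Y),
\end{align}
which is the first claim. (A variant avoiding reference to the means: let $(X_2,Y_2)$ be an independent copy of $(X_1,Y_1)=(X,Y)$, use $\mathsf{Var}(Z)=\tfrac12\bE(Z_1-Z_2)^2$, and apply the same pointwise bound to $(a_1,b_1)=(X_1,Y_1)$, $(a_2,b_2)=(X_2,Y_2)$.)

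The ``in particular'' assertion is then immediate by taking $Y\equiv C$, so that $\mathsf{Var}(Y)=0$ and $\mathsf{Var}(\min\{X,C\})\le\mathsf{Var}(X)$; alternatively $x\mapsto\min\{x,C\}$ is itself a contraction, so $\bE(\min\{X,C\}-\min\{\bE X,C\})^2\le\bE(X-\bE X)^2$ directly. I do not expect a genuine obstacle here; the only things to watch are to center at $\min\{\bE X,\bE Y\}$ rather than attempt a decomposition of $\min$ into a sum, and to observe that $\max\{|a_1-a_2|,|b_1-b_2|\}^2\le (a_1-a_2)^2+(b_1-b_2)^2$ is precisely the slack that produces the sum $\mathsf{Var}(X)+\mathsf{Var}(Y)$ on the right-hand side.
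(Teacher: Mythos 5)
The paper does not reprove this lemma; it is imported verbatim from Cai and Low (2011, Lemma 5), so there is no internal argument to compare against. Your proof is correct and, in fact, coincides in substance with the Cai--Low argument: the pointwise bound $|\min\{a_1,b_1\}-\min\{a_2,b_2\}|\le\max\{|a_1-a_2|,|b_1-b_2|\}$ is established correctly, squaring and the crude bound $\max\{u^2,v^2\}\le u^2+v^2$ give $(\min\{X,Y\}-\min\{\bE X,\bE Y\})^2\le (X-\bE X)^2+(Y-\bE Y)^2$, and the step $\mathsf{Var}(Z)=\min_c\bE(Z-c)^2\le\bE(Z-\min\{\bE X,\bE Y\})^2$ closes the argument. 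Your remark that the decomposition $\min\{X,Y\}=\tfrac12(X+Y)-\tfrac12|X-Y|$ leaves an uncontrolled covariance term is a good observation and correctly diagnoses why the naive route fails; centering at the deterministic point $\min\{\bE X,\bE Y\}$ (or, equivalently, symmetrizing with an independent copy, as in your parenthetical variant) is the right fix. The ``in particular'' specialization with $Y\equiv C$ is immediate. No gaps.
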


\section{Proof of Theorem~\ref{th_3},~\ref{cor_3} and main lemmas}

\subsection{Proof of Theorem~\ref{th_3}}
The convexity of $x^\alpha,\alpha>1$ yields
\begin{align}
  F_\alpha(P) \ge \sum_{i=1}^S \left(\frac{1}{S}\right)^\alpha = S^{1-\alpha},
\end{align}
hence for any $\delta>0$,
\begin{align}
 & \left\{{\bf Z}:\left|\frac{\ln \hat{F}_\alpha({\bf Z})}{1-\alpha}-H_\alpha\right|\ge \delta\right\} \nonumber \\ 
  &\quad \subseteq \left\{{\bf Z}: \left|\hat{F}_\alpha({\bf Z}) - F_\alpha\right|\ge \left(1-e^{(1-\alpha)\delta}\right)F_\alpha\right\} \\
  &\quad \subseteq \left\{{\bf Z}: \left|\hat{F}_\alpha({\bf Z}) - F_\alpha\right|\ge \left(1-e^{(1-\alpha)\delta}\right)S^{1-\alpha}\right\}.
\end{align}

Theorem \ref{th_2} implies that there exists a constant $0<C_\alpha<\infty$ such that
\begin{align}
  \sup_{P}\mathbb{E}_P \left(\hat{F}_\alpha-F_\alpha(P)\right)^2 \le \frac{C_\alpha}{(n\ln n)^{2\alpha-2}},
\end{align}
where the supremum is taken over all discrete distributions supported on countably infinite alphabet. Using this estimator and applying Chebychev's inequality,
\begin{align}
 & \sup_{P}\bP\left(\left|\frac{\ln \hat{F}_\alpha}{1-\alpha}-H_\alpha\right|\ge \delta\right) \nonumber \\
  &\quad\le \sup_{P}\bP\left(\left|\hat{F}_\alpha - F_\alpha\right|\ge \left(1-e^{(1-\alpha)\delta}\right)S^{1-\alpha}\right)\\
  &\quad \le \frac{\sup_P \mathbb{E}_P\left|\hat{F}_\alpha - F_\alpha\right|^2}{\left(1-e^{(1-\alpha)\delta}\right)^2S^{2-2\alpha}}\\
  &\quad \le \frac{C_\alpha}{\left(1-e^{(1-\alpha)\delta}\right)^2}\left(\frac{S}{n\ln n}\right)^{2\alpha-2}.
\end{align}
The proof is finished by choosing
\begin{align}
  c_\alpha(\delta,\epsilon)  = \left(\frac{\epsilon\left(1-e^{(1-\alpha)\delta}\right)^2}{C_\alpha}\right)^{-\frac{1}{2\alpha-2}}.
\end{align}

\subsection{Proof of Theorem~\ref{cor_3}}
Since the central limit theorem claims that
\begin{align}
  \frac{\textsf{Poi}(\lambda)-\lambda}{\sqrt{\lambda}} \rightsquigarrow \mathcal{N}(0,1)
\end{align}
as $\lambda\to\infty$, there exists $\lambda_0>0$ such that
\begin{align}
  \mathbb{P}\left(\mathsf{Poi}(\lambda)>\lambda+1\right) \ge \frac{1}{3}, \quad \forall \lambda\ge\lambda_0.
\end{align}
Denoting $c_m\triangleq\max\{c,\lambda_0\}$, we set $S_0=\lceil\frac{n}{c_m}\rceil\le \lceil\frac{n}{c}\rceil\le S$ and consider the distribution $P=(1/S_0,1/S_0,\ldots,1/S_0,0,0,\ldots,0)$, then $H_\alpha(P)=\ln S_0$. Under the Poissonized model $n\hat{p}_i\sim\mathsf{Poi}(np_i),1\le i\le S$, we have $\hat{p}_i=0$ for $i>S_0$, and
\begin{align}
  p\triangleq\mathbb{P}\left(\hat{p}_i> \frac{c_m+1}{n}\right)\ge\frac{1}{3},\quad \forall i=1,2,\cdots,S_0.
\end{align}
Defining
\begin{align}
  N = \sum_{i=1}^{S_0} \mathbbm{1}\left(\hat{p}_i> \frac{c_m+1}{n}\right),
\end{align}
then the random variable $N$ follows a Binomial distribution $N\sim \mathsf{B}(S_0,p)$, and by the central limit theorem again we have
\begin{align}
  \liminf_{n\to\infty} \mathbb{P}\left(N\ge \frac{S_0}{6}\right) = 1.
\end{align}

Given $\eta\triangleq N/S_0\ge1/6$, it follows from the convexity of $x^\alpha,\alpha>1$ that
\begin{align}
  \sum_{i=1}^S\hat{p}_i^\alpha &= \sum_{1\le i\le S_0:\hat{p}_i>\frac{c_m+1}{n}}\hat{p}_i^\alpha +  \sum_{1\le i\le S_0:\hat{p}_i\le\frac{c_m+1}{n}}\hat{p}_i^\alpha \\
  &\ge \eta S_0\cdot \left(\frac{M}{S_0}\right)^\alpha + (S_0-\eta S_0)\cdot \left(\frac{1-\eta M}{S_0-\eta S_0}\right)^\alpha \\
  & \triangleq S_0^{1-\alpha}\cdot f(\eta,M),
\end{align}
where
\begin{align}
  \frac{1}{\eta}\ge\frac{S_0}{N}\ge M & \triangleq S_0\cdot\frac{1}{N}\sum_{1\le i\le S_0:\hat{p}_i>\frac{c_m+1}{n}}\hat{p}_i \\
  & \ge S_0\cdot\frac{c_m+1}{n} \ge 1+\frac{1}{c_m}.
\end{align}
It can be easily checked that
\begin{align}
  f(x,y) &= x y^\alpha + \frac{(1-xy)^\alpha}{(1-x)^{\alpha-1}}\\
  \frac{\partial f}{\partial y} &= \alpha xy^{\alpha-1} - \frac{\alpha x(1-xy)^{\alpha-1}}{(1-x)^{\alpha-1}} \\
  & =\frac{\alpha x}{(1-x)^{\alpha-1}} \cdot\left((y-xy)^{\alpha-1} - (1-xy)^{\alpha-1}\right) > 0, \nonumber \\
  & \qquad 0<x<xy\le 1.
\end{align}
Hence, due to $0<1/6\le\eta<1<M\le1/\eta$, we conclude that $f(\eta,M)\ge f(\eta,1+c_m^{-1})>f(\eta,1)=1$. Since $f(\eta,1+c_m^{-1})$ is continuous with respect to $\eta\in[1/6,c_m/(c_m+1)]$, we have
\begin{align}
  \hat{H}_\alpha(P_n) & \le \ln S_0 - \frac{\ln f(\eta,M)}{\alpha-1} \\
  & \le  \ln S_0 - \frac{\ln f(\eta,1+c_m^{-1})}{\alpha-1} \\
  & \le \ln S_0 - \frac{\min_{\eta\in[1/6,c_m/(c_m+1)]} \ln f(\eta,1+c_m^{-1})}{\alpha-1} \\
  &  < \ln S_0.
\end{align}
Then the proof is completed by choosing
\begin{align}
  \delta_\alpha(c) = \frac{\min_{\eta\in[1/6,c_m/(c_m+1)]} \ln f(\eta,1+c_m^{-1})}{\alpha-1} > 0.
\end{align}

\subsection{Proof of Lemma~\ref{lemma.largebias}}

For $p\geq \Delta$, we do Taylor expansion of $U_\alpha(x)$ around $x = p$. We have
\begin{align}
U_\alpha(x) & = U_\alpha(p) + U_\alpha(p)(x-p) + \frac{1}{2} U''_\alpha(p)(x-p)^2  \nonumber \\
& \quad + \frac{1}{6} U'''_\alpha(p)(x-p)^3 + R(x;p),\label{eqn.utaylor}
\end{align}
where the remainder term enjoys the following representations:
\begin{align}
R(x;p) & = \frac{1}{6}\int_p^x (x-u)^3 U_\alpha^{(4)}(u)du \\ 
& = \frac{U_\alpha^{(4)}(\xi_x)}{24}(x-p)^4,\quad \xi_x \in [\min\{x,p\}, \max\{x,p\}].
\end{align}
The first remainder is called the integral representation of Taylor series remainders, and the second remainder is called the Lagrange remainder.

Since $p\geq \Delta$, we know that
\begin{align}
U'_\alpha(p) & = \alpha p^{\alpha-1} + \frac{\alpha(1-\alpha)}{2n} (\alpha-1)p^{\alpha-2} \\
U''_\alpha(p) & = \alpha (\alpha-1)p^{\alpha-2} + \frac{\alpha(1-\alpha)(\alpha-1)(\alpha-2)}{2n} p^{\alpha-3} \\
U_\alpha^{(3)}(p) & = \alpha(\alpha-1)(\alpha-2) p^{\alpha-3} \nonumber \\
& \quad + \frac{\alpha(1-\alpha)(\alpha-1)(\alpha-2)(\alpha-3)}{2n} p^{\alpha-4} \\
U_\alpha^{(4)}(p) & = \alpha(\alpha-1)(\alpha-2)(\alpha-3) p^{\alpha-4} \nonumber \\
& \quad + \frac{\alpha(1-\alpha)(\alpha-1)(\alpha-2)(\alpha-3)(\alpha-4)}{2n} p^{\alpha-5}
\end{align}

Replacing $x$ by random variable $X$ in (\ref{eqn.utaylor}), where $nX \sim \mathsf{Poi}(np), p\geq \Delta$, and taking expectations on both sides, we have
\begin{align}
\bE U_\alpha(X) & = U_\alpha(p) + \frac{1}{2} U''_\alpha(p) \frac{p}{n} + \frac{1}{6} U'''_\alpha(p) \frac{p}{n^2} + \bE[R(X;p)]\\
& = p^\alpha + \frac{\alpha(\alpha-1)(\alpha-2)(5-3\alpha)}{12n^2}   p^{\alpha-2} \nonumber \\
& \quad - \frac{\alpha(1-\alpha)^2(2-\alpha)(3-\alpha)}{12n^3}p^{\alpha-3}  +  \bE[R(X;p)] \label{eqn.biascorrectbasic}
\end{align}
where we have used the fact that if $nX \sim \mathsf{Poi}(np)$, then $\bE (X-p)^2 = \frac{p}{n}, \bE(X-p)^3 = \frac{p}{n^2}$.

Since the representation of $R(x;p)$ involves $U_\alpha^{(4)}(\xi_x)$, it would be helpful to obtain some estimates of $U_\alpha^{(4)}(x)$ over $[0,1]$. Denoting $U_\alpha(x) = I_n(x) f(x)$, where $f(x) = x^\alpha + \frac{\alpha(1-\alpha)}{2n}x^{\alpha-1}$, we have
\begin{equation}\label{eqn.ualphataylor4}
U_\alpha^{(4)}(x) = I_n^{(4)} f + 4 I_n^{(3)} f^{(1)} + 6 I_n^{(2)} f^{(2)} + 4 I_n^{(1)} f^{(3)} + I_n f^{(4)}.
\end{equation}

Hence, it suffices to bound each term in (\ref{eqn.ualphataylor4}) separately.

For $x\in [0, t]$, $U_\alpha(x) \equiv 0$, so we do not need to consider this regime. For $x \in [2t,1]$, $U_\alpha(x) = f(x)$, hence
\begin{align}
|U^{(4)}_\alpha(x)| & = |f^{(4)}(x)| \\
& = \Bigg | \alpha(\alpha-1)(\alpha-2)(\alpha-3) x^{\alpha-4} \nonumber \\
& \qquad  + \frac{\alpha(1-\alpha)(\alpha-1)(\alpha-2)(\alpha-3)(\alpha-4)}{2n}x^{\alpha-5}\Bigg |,
\end{align}
which implies that for $x\geq 2t$,
\begin{equation}
\sup_{z \in [x,1]} |U^{(4)}_\alpha(z)| \leq 6x^{\alpha-4} + \frac{12}{n}x^{\alpha-5}.
\end{equation}

Finally we consider $x\in (t, 2t)$. Denoting $y = x-t$, the derivatives of $I_n(x)$ for $x\in (t, 2t)$ are as follows:
\begin{align}
I_n'(x) & = \frac{630 y^4(t-y)^4}{t^9}\\
I_n''(x) & = \frac{2520 y^3(t-2y)(t-y)^3 }{t^9}\\
I_n^{(3)}(x) & = \frac{2520 y^2(t-y)^2 (3t^2-14t y + 14 y^2)}{t^9}\\
I_n^{(4)}(x) & = \frac{15120 y(t-2y)(t-y)(t^2-7t y + 7y^2)}{t^9}.
\end{align}

Considering the fact that $y/t \in [0,1]$, we can maximize $|I_n^{(i)}(x)|$ over $x\in (t,2t)$ for $1\leq i \leq 4$. With the help of $\mathsf{Mathematica}$ \cite{mathematica}, we could show that for $x\in (t, 2t)$,
\begin{align}
|I_n'(x)| & \leq  \frac{4}{t}\\
|I_n''(x)| & \leq \frac{20}{t^2} \\
|I_n^{(3)}(x)| & \leq \frac{100}{t^3} \\
|I_n^{(4)}(x)| & \leq \frac{1000}{t^4}.
\end{align}

Plugging these upper bounds in (\ref{eqn.ualphataylor4}), we know for $x\in (t, 2t)$
\begin{align}
|U_\alpha^{(4)}(x)| & \leq \frac{1000}{t^4} t^\alpha + \frac{4\times 100}{t^3} t^{\alpha-1} + 6\times\frac{20}{t^2} t^{\alpha-2} \nonumber \\
& \quad  + 4\times \frac{4}{t}\times 2t^{\alpha-3} + 6  t^{\alpha-4} \\
&  \leq 1558t^{\alpha-4} \\
& \leq 1558(x/2)^{\alpha-4} \\
& \leq 24928 x^{\alpha-4}.
\end{align}

Now we proceed to upper bound $|\bE [R(X;p)]|, p\geq \Delta$. We consider the following two cases:
\begin{enumerate}
\item Case 1:$x\geq p/2$. In this case,
\begin{align}
|R(x;p)| & = \left|\frac{U_\alpha^{(4)}(\xi_x)}{24}(x-p)^4\right| \\
& \leq \sup_{x\in [p/2,1]}|U_\alpha^{(4)}(x)| \frac{(x-p)^4}{24} \\
& \leq \left(  6(p/2)^{\alpha-4} + \frac{12}{n}(p/2)^{\alpha-5} \right)\frac{(x-p)^4}{24}.
\end{align}
\item Case 2: $0\leq x<p/2$. In this case, denoting $y = \max\{x,\Delta/4\}$,
\begin{align}
& |R(x;p)| \nonumber \\
& \leq  \frac{1}{6}\int_y^p (u-x)^3 |U_\alpha^{(4)}(u)|du \\
& \leq \frac{1}{6} \int_y^p (u-x)^3 24928 u^{\alpha-4} du \\
& \leq  4155 \int_y^p \frac{(u-x)^3}{u^{4-\alpha}}du \\
& = 4155  \int_y^p \left( u^{\alpha-1}-3x u^{\alpha-2} + 3x^2 u^{\alpha-3} - x^3 u^{\alpha-4} \right)du\\
& = 4155  \Big( \frac{1}{\alpha}(p^\alpha - y^\alpha) - \frac{3x}{\alpha-1}\left( p^{\alpha-1}-y^{\alpha-1}\right) \nonumber \\
& \quad + \frac{3x^2}{\alpha-2}\left(p^{\alpha-2}-y^{\alpha-2}\right) -\frac{x^3}{\alpha-3}\left( p^{\alpha-3}-y^{\alpha-3}\right)\Big) \\
& \leq 4155  \left(  \frac{1}{\alpha}(p^\alpha - y^\alpha) + \frac{3x^2}{\alpha-2}\left(p^{\alpha-2}-y^{\alpha-2}\right) \right) \\
& = 4155 \left(  \frac{1}{\alpha}(p^\alpha - y^\alpha) + \frac{3}{\alpha-2}\left(p^\alpha\frac{x^2}{p^2}-y^\alpha\frac{x^2}{y^2}\right) \right) \\
& \leq 4155  \left( \frac{1}{\alpha}p^\alpha + \frac{3}{2-\alpha}p^\alpha\right) \\
& = \frac{8310 (1+\alpha)}{\alpha(2-\alpha)}p^\alpha.
\end{align}
\end{enumerate}

Now we have
\begin{align}
\bE[|R(X;p)|] & = \bE [|R(X;p)|\mathbbm{1}(X\geq p/2)] \nonumber \\
& \quad  + \bE[|R(X;p)|\mathbbm{1}(X<p/2)] \\
& \triangleq B_1 + B_2 .
\end{align}

For the term $B_1$, we have
\begin{align}
B_1 & = \bE [|R(X;p)|\mathbbm{1}(X\geq p/2)] \\
& \leq \left(  6(p/2)^{\alpha-4} + \frac{12}{n}(p/2)^{\alpha-5} \right) \bE[(X-p)^4]/24 \\
&  \leq  \left(  \frac{1}{4}(p/2)^{\alpha-4} + \frac{1}{2n}(p/2)^{\alpha-5} \right)\left( \frac{p}{n^3} + \frac{3p^2}{n^2}\right),
\end{align}
where we have used the fact that if $nX \sim \mathsf{Poi}(np)$, then $\bE (X-p)^4 = (np + 3n^2p^2)/n^4$.

For the term $B_2$, we have
\begin{align}
B_2 & = \bE[|R(X;p)|\mathbbm{1}(X<p/2)] \\
& \leq \frac{8310 (1+\alpha)}{\alpha(2-\alpha)}p^\alpha \bP(nX < np/2).
\end{align}

Applying Lemma~\ref{lemma.poissontail}, we have
\begin{equation}
B_2 \leq \frac{8310 (1+\alpha)}{\alpha(2-\alpha)}p^\alpha e^{-np/8} \leq \frac{8310 (1+\alpha)}{\alpha(2-\alpha)}p^\alpha n^{-c_1/8}.
\end{equation}

Hence, we have
\begin{align}
\bE[R(X;p)] & \leq \bE[|R(X;p)|] \\
& \leq  \left(  \frac{1}{4}(p/2)^{\alpha-4} + \frac{1}{2n}(p/2)^{\alpha-5} \right)\left( \frac{p}{n^3} + \frac{3p^2}{n^2}\right) \nonumber \\
& \quad +\frac{8310(1+\alpha)}{\alpha(2-\alpha)} p^\alpha n^{-c_1/8}.
\end{align}

Plugging this into (\ref{eqn.biascorrectbasic}), we have for $p\geq \Delta$,
\begin{align}
& |\bE U_\alpha(X) - p^\alpha | \nonumber \\
& \quad \leq \frac{\alpha(\alpha-1)(\alpha-2)(5-3\alpha)}{12n^2}   p^{\alpha-2}  \nonumber \\
&\quad \quad + \left(  \frac{1}{4}(p/2)^{\alpha-4} + \frac{1}{2n}(p/2)^{\alpha-5} \right)\left( \frac{p}{n^3} + \frac{3p^2}{n^2}\right) \nonumber \\
& \quad \quad +\frac{8310 (1+\alpha)}{\alpha(2-\alpha)}p^\alpha n^{-c_1/8} \\
& \quad \leq  \frac{17p^{\alpha-2}}{n^2} + \frac{8310 (1+\alpha)}{\alpha(2-\alpha)}p^\alpha n^{-c_1/8} \\
& \quad = \frac{17}{n^\alpha(c_1 \ln n)^{2-\alpha}} + \frac{8310(1+\alpha)}{\alpha(2-\alpha)}p^\alpha n^{-c_1/8}.
\end{align}

For the upper bound on the variance $\mathsf{Var}(U_\alpha(X))$, denoting $f(p) = p^\alpha + \frac{\alpha(1-\alpha)}{2n} p^{\alpha-1}$, for $p\geq \Delta$, we have
\begin{align}
\mathsf{Var}(U_\alpha(X)) & = \bE U_\alpha^2(X) - (\bE U_\alpha(X))^2 \\
& = \bE U_\alpha^2(X) - f^2(p) + f^2(p) - (\bE U_\alpha(X))^2 \\
& \leq | \bE U_\alpha^2(X) - f^2(p)| \nonumber \\
& \quad + | f^2(p) - (\bE U_\alpha(X)-f(p) + f(p))^2 |\\
& = | \bE U_\alpha^2(X) - f^2(p)|+ |(\bE U_\alpha(X)-f(p))^2 \nonumber \\
& \quad + 2f(p) (\bE U_\alpha(X)-f(p))| \\
& \leq | \bE U_\alpha^2(X) - f^2(p)| + |\bE U_\alpha(X)-f(p)|^2 \nonumber \\
& \quad + 2f(p)|\bE U_\alpha(X)-f(p)|. \label{eqn.varianceestimatelarge}
\end{align}

Hence, it suffices to obtain bounds on $| \bE U_\alpha^2(X) - f^2(p)|$ and $|\bE U_\alpha(X) - f(p)|$. Denoting $r(x) = U^2_\alpha(x)$, we know that $r(x) \in C^4[0,1]$, and it follows from Taylor's formula and the integral representation of the remainder term that
\begin{equation}
r(X) = f^2(p) + r'(p)(X-p) + R_1(X;p),
\end{equation}
\begin{align}
R_1(X;p) & =\int_p^X (X-u)r''(u)du \\
& = \frac{1}{2}r''(\eta_X)(X-p)^2, \nonumber \\
& \qquad \eta_X \in [\min\{X,p\},\max\{X,p\}].
\end{align}

Similarly, we have
\begin{equation}
U_\alpha(X) = f(p) + f'(p)(X-p) + R_2(X;p),
\end{equation}
\begin{align}
\quad R_2(X;p) & = \int_p^X (X-u)U_\alpha''(u)du \\
& = \frac{1}{2} U_\alpha''(\nu_X)(X-p)^2,\nonumber \\
& \qquad \nu_X\in [\min\{X,p\},\max\{X,p\}].
\end{align}

Taking expectation on both sides with respect to $X$, where $nX \sim \mathsf{Poi}(np), p\geq \Delta$, we have
\begin{equation}\label{eqn.u2biaslarge}
|\bE U_\alpha^2(X) - f^2(p)| = |\bE R_1(X;p) |.
\end{equation}
Similarly, we have
\begin{equation}
|\bE U_\alpha(X) - f(p)| =  | \bE R_2(X;p)|.
\end{equation}

As we did for function $U_\alpha(x)$, now we give some upper estimates for $|r''(x)|$ over $[0,1]$. Over regime $[0,t]$, $r(x)\equiv 0$, so we ignore this regime. Over regime $[2t,1]$, since $U_\alpha(x) = f(x), f(x) = x^\alpha + \frac{\alpha(1-\alpha)}{2n}x^{\alpha-1}$, we have
\begin{align}
r'(x) & = 2f f'\\
r''(x) & = 2(f')^2 + 2 f f''.
\end{align}
Hence, for $x\geq 2t$,
\begin{equation}
\sup_{z\in [x,1]}|r''(z)| \leq 4 x^{2\alpha-2}.
\end{equation}
\begin{equation}
\sup_{z\in [x,1]}|U_\alpha''(z)| \leq x^{\alpha-2}.
\end{equation}

Over regime $[t,2t]$, we have
\begin{align}
r'(x) & = 2 f f' I_n^2 + 2 I_n I_n' f^2 \\
r''(x) & = 2 \Big( (f')^2 I_n^2 + f f'' I_n^2 + 2 f f' I_n I_n' \nonumber \\
& \quad + (I_n')^2 f^2 + I_n I_n'' f^2 + 2 f f' I_n I_n' \Big).
\end{align}
Hence, we have for $x\in [t,2t]$,
\begin{align}
|r''(x)| & \leq 2 \Big( t^{2\alpha-2} +t^{2\alpha-2} + 2 t^{2\alpha-1} \frac{4}{t} + \left( \frac{4}{t} \right)^2 t^{2\alpha} \nonumber \\
& \qquad + \frac{20}{t^2}t^{2\alpha} +  2 t^{2\alpha-1} \frac{4}{t} \Big) \\
& \leq 108t^{2\alpha-2} \\
& \leq 108(x/2)^{2\alpha-2} \\
& = 432 x^{2\alpha-2}
\end{align}
Also, over regime $[t,2t]$,
\begin{equation}
U_\alpha''(x) = I_n'' f + I_n f'' + 2I_n' f',
\end{equation}
hence for $x\in [t,2t]$,
\begin{align}
|U_\alpha''(x)| & \leq \frac{20}{t^2}t^{\alpha} + t^{\alpha-2} + 2 \frac{4}{t}t^{\alpha-1} \\
& \leq 30 t^{\alpha-2} \\
& \leq 30(x/2)^{\alpha-2} \\
& \leq 120x^{\alpha-2}.
\end{align}

Now we are in the position to bound $|\bE R_1(X;p)|$ and $|\bE R_2(X;p)|$.

We have
\begin{align}
|\bE R_1(X;p)| & \leq \bE |R_1(X;p)| \\
& = \bE [|R_1(X;p)\mathbbm{1}(X\geq p/2)|] \nonumber \\
& \quad+ \bE[R_1(X;p)\mathbbm{1}(X<p/2)] \\
& \leq \bE \left[ \frac{1}{2} 4 (p/2)^{2\alpha-2} (X-p)^2\right] \nonumber \\
& \quad + \bE[R_1(X;p)\mathbbm{1}(X<p/2)] \\
& = 8\frac{p^{2\alpha-1}}{n} + \sup_{x\leq p/2}|R_1(x;p)|\bP(nX<np/2) \\
& \leq 8\frac{p^{2\alpha-1}}{n} + \sup_{x\leq p/2}|R_1(x;p)| n^{-c_1/8},
\end{align}
where in the last step we have applied Lemma~\ref{lemma.poissontail}.

Regarding $\sup_{x\leq p/2}|R_1(x;p)|$, for any $x\leq p/2$, denoting $y = \max\{x,\Delta/4\}$, we have
\begin{align}
R_1(x;p) & = \int_x^p (u-x)r''(u)du \\
& \leq \int_y^p (u-x) 432 u^{2\alpha-2}du \\
& \leq 432 \int_y^p u^{2\alpha-1}du \\
& = \frac{432}{2\alpha} (p^{2\alpha}-y^{2\alpha})\\
& \leq \frac{432}{2\alpha}p^{2\alpha} \\
& \leq \frac{216}{\alpha}p^{2\alpha}.
\end{align}

Hence, we have
\begin{equation}
|\bE R_1(X;p)| \leq \frac{8p^{2\alpha-1}}{n} + \frac{216}{\alpha}p^{2\alpha}n^{-c_1/8}.
\end{equation}

Analogously, we obtain the following bound for $|\bE R_2(X;p)|$:
\begin{equation}
|\bE R_2(X;p)|\leq 2\frac{p^{\alpha-1}}{n} + \frac{120}{\alpha}p^\alpha n^{-c_1/8}.
\end{equation}

Plugging these estimates of $|\bE R_1(X;p)|$ and $|\bE R_2(X;p)|$ into (\ref{eqn.varianceestimatelarge}), we have for $p\geq \Delta,c_1 \ln n\geq 1$,
\begin{align}
& \mathsf{Var}(U_\alpha(X)) \nonumber \\
& \leq \frac{8p^{2\alpha-1}}{n} + \frac{216}{\alpha}p^{2\alpha}n^{-c_1/8} \nonumber \\
& \quad +  \left(2\frac{p^{\alpha-1}}{n} + \frac{120}{\alpha} p^\alpha n^{-c_1/8} \right)^2 \nonumber \\
& \quad  +2f(p)\left(2\frac{p^{\alpha-1}}{n} + \frac{120}{\alpha}p^\alpha n^{-c_1/8} \right).
\end{align}

We need to distinguish two cases: $0<\alpha\leq 1/2$, and $1/2<\alpha<1$.
\begin{enumerate}
\item $0<\alpha\leq 1/2$: in this case, we have
\begin{align}
& \mathsf{Var}(U_\alpha(X)) \nonumber \\
& \leq \frac{8p^{2\alpha-1}}{n} + \frac{216}{\alpha}p^{2\alpha} n^{-c_1/8} \nonumber \\
& \quad +  \left(2\frac{p^{\alpha-1}}{n} + \frac{120}{\alpha }p^\alpha n^{-c_1/8} \right)^2 \nonumber \\
& \quad +2f(p)\left(2\frac{p^{\alpha-1}}{n} + \frac{120}{\alpha}p^\alpha n^{-c_1/8} \right) \\
& \leq \frac{8}{n^{2\alpha}(c_1\ln n)^{1-2\alpha}} + \frac{216}{\alpha}p^{2\alpha} n^{-c_1/8} \nonumber \\
& \quad + 2\left( \frac{4p^{2\alpha-2}}{n^2} + \frac{14400}{\alpha^2}p^{2\alpha} n^{-c_1/4}\right)\\
& \quad  + 2p^\alpha \left( 1+\frac{1}{8c_1 \ln n}\right) \left( 2\frac{p^{\alpha-1}}{n} + \frac{120}{\alpha}p^\alpha n^{-c_1/8} \right) \\
& \leq \frac{16}{n^{2\alpha}(c_1\ln n)^{1-2\alpha}} + \frac{8}{n^{2\alpha}(c_1\ln n)^{2-2\alpha}} \nonumber \\
& \quad + \frac{576}{\alpha} p^{2\alpha}n^{-c_1/8} + \frac{28800}{\alpha^2} p^{2\alpha}n^{-c_1/4} \\
& \leq  \frac{24}{n^{2\alpha}(c_1\ln n)^{1-2\alpha}}  + \frac{576}{\alpha}p^{2\alpha}n^{-c_1/8} \nonumber \\
& \quad + \frac{28800}{\alpha^2}p^{2\alpha} n^{-c_1/4} .
\end{align}
\item $1/2<\alpha<1$: in this case, we have
\begin{align}
& \mathsf{Var}(U_\alpha(X)) \nonumber \\
 & \leq \frac{8p^{2\alpha-1}}{n} + \frac{216}{\alpha}p^{2\alpha} n^{-c_1/8}  \nonumber \\
 & \quad +  \left(2\frac{p^{\alpha-1}}{n} + \frac{120}{\alpha}p^\alpha n^{-c_1/8} \right)^2 \nonumber \\
 &  +2f(p)\left(2\frac{p^{\alpha-1}}{n} + \frac{120}{\alpha}p^\alpha n^{-c_1/8} \right) \\
& \leq \frac{8p^{2\alpha-1}}{n} + \frac{216}{\alpha}p^{2\alpha} n^{-c_1/8} + \frac{8p^{2\alpha-2}}{n^2} \nonumber \\
& \quad  + \frac{28800}{\alpha^2}p^{2\alpha} n^{-c_1/4} + 3p^\alpha \left(2\frac{p^{\alpha-1}}{n} + \frac{120}{\alpha}p^\alpha n^{-c_1/8} \right)\\
& \leq \frac{14p^{2\alpha-1}}{n} + \frac{576}{\alpha}p^{2\alpha} n^{-c_1/8} + \frac{28800}{\alpha^2}p^{2\alpha} n^{-c_1/4} \nonumber \\
& \quad  + \frac{8}{n^{2\alpha}(c_1\ln n)^{2-2\alpha}}.
\end{align}
\end{enumerate}

For $1<\alpha<3/2$, following the same procedures, we obtain some upper bounds on $|r''(x)|$ and $|U_\alpha''(x)|$. Over regime $[0,t]$, $r(x)=U_\alpha^2(x)\equiv0$, we have $r''(x)=U_\alpha''(x)=0$. Over regime $[2t,1]$, since $U_\alpha(x)=f(x)$, we have
\begin{align}
  |r''(x)| &= |2(f')^2+2ff''| \\
  & \le |2(\alpha x^{\alpha-1})^2 + 2x^\alpha\cdot \alpha x^{\alpha-2}| \\
  &\le 8x^{2\alpha-2}\\
  |U_\alpha''(x)| &= |f''(x)| \le \alpha x^{\alpha-2} \le 2x^{\alpha-2}.
\end{align}

Over regime $[t,2t]$, we have
\begin{align}
&  |r''(x)| \nonumber \\
 &= 2|(f')^2I_n^2 + ff''I_n^2 + 2ff'I_nI_n' \nonumber \\
 & \quad + (I_n')^2f^2 + I_nI_n'' f^2 + 2ff'I_nI_n'|\\
  &\le 2\Big(\alpha^2x^{2\alpha-2}+\alpha x^{2\alpha-2} + 2\alpha x^{2\alpha-1} \cdot\frac{4}{t} + \left(\frac{4}{t}\right)^2x^{2\alpha} \nonumber \\
  & \quad 
 + \frac{20}{t^2}\cdot x^{2\alpha}+2\alpha x^{2\alpha-1}\cdot\frac{4}{t}\Big)\\
  &\le 400x^{2\alpha-2},
\end{align}
and
\begin{align}
  |U_\alpha''(x)| & = |I_n''f+2I_n'f'+I_nf''| \\
  & \le \frac{20}{t^2}\cdot x^\alpha + 2\alpha x^\alpha\cdot\frac{4}{t} + \alpha x^{\alpha-2} \\
  & \le 120x^{\alpha-2},
\end{align}
where we have used the inequality
\begin{align}
  |I_n(x)| \le 1, \quad |I_n'(x)|\le \frac{4}{t},\quad |I_n''(x)|\le \frac{20}{t^2},\qquad \forall x\in[t,2t].
\end{align}

Noting that we have obtained a norm bound for $|r''(x)|$ over all regimes expressed as
\begin{align}
  |r''(x)| \le 400x^{2\alpha-2} \le 400, \qquad \forall x\in[0,1],
\end{align}
we have the upper bound
\begin{align}
  |\mathbb{E}R_1(X;p)| & \le \mathbb{E}|R_1(X;p)| \\
  & = \frac{1}{2}\mathbb{E}|f''(\eta_X)(X-p)^2| \\
  & \le 200\mathbb{E}|(X-p)^2| \\
  &= \frac{200p}{n}.
\end{align}
For the upper bound of $|\mathbb{E}R_2(X;p)|$, we first consider the upper bound of $|R_2(x;p)|$ when $x\le p/2$. Denoting $y=\max\{x,\Delta/4\}$, we have
\begin{align}
  R_2(x;p) & = \int_x^p (u-x)U_\alpha''(u)du \\
  & \le \int_y^p (u-x)120u^{\alpha-2}du  \\
  & \le \int_y^p 120u^{\alpha-1}du \\
  & \le \frac{120}{\alpha}p^\alpha,
\end{align}
then
\begin{align}
 & |\mathbb{E}R_2(X;p)| \nonumber \\
  &\le \mathbb{E}|R_2(X;p)|\\
&= \mathbb{E}|R_2(X;p)\mathbbm{1}(X\ge p/2)| + \mathbb{E}|R_2(X;p)\mathbbm{1}(X<p/2)|\\
&\le \mathbb{E}\left[\frac{1}{2}\cdot 2\left(\frac{p}{2}\right)^{\alpha-2}(X-p)^2\right] \nonumber \\
& \quad + \sup_{x< p/2}|R_2(x;p)|\cdot\mathbb{P}\left(nX<\frac{np}{2}\right)\\
&\le \frac{2p^{\alpha-1}}{n} + \frac{120}{\alpha}p^\alpha n^{-c_1/8},
\end{align}
where in the last step we have applied Lemma \ref{lemma.poissontail}. Plugging in the upper bound of $|\mathbb{E}R_1(X;p)|$ and $|\mathbb{E}R_2(X;p)|$ together, we know when $1<\alpha<3/2$,
\begin{align}
 & \mathsf{Var}(U_\alpha(X)) \nonumber \\
  &\le \frac{200p}{n} + \left(\frac{2p^{\alpha-1}}{n} + \frac{120}{\alpha}p^\alpha n^{-c_1/8}\right)^2 \nonumber \\
  & \quad  +
f(p)\left(\frac{2p^{\alpha-1}}{n} + \frac{120}{\alpha}p^\alpha n^{-c_1/8}\right)\\
&\le \frac{200p}{n} + 2\left(\frac{4p^{2(\alpha-1)}}{n^2} + \frac{14400}{\alpha^2}p^{2\alpha} n^{-c_1/4}\right) \nonumber \\
& \quad +
p^\alpha\left(\frac{2p^{\alpha-1}}{n} + \frac{120}{\alpha}p^\alpha n^{-c_1/8}\right)\\
&\le \frac{202p}{n} + \frac{8}{n^2} + \frac{28800}{\alpha^2}p^{2\alpha} n^{-c_1/4} + \frac{120}{\alpha}p^{2\alpha} n^{-c_1/8}.
\end{align}

\subsection{Proof of Lemma~\ref{lemma.entropyupper}}
We have
\begin{equation}
U_H(x) = I_n(x) \left( -x \ln x + \frac{1}{2n} \right) = I_n(x) f(x),
\end{equation}
where $f(x) = -x \ln x + 1/(2n)$.

For $p\geq \Delta$, we do Taylor expansion of $U_H(x)$ around $x = p$. We have
\begin{align}
U_H(x) & = U_H(p) + U_H(p)(x-p) + \frac{1}{2} U''_H(p)(x-p)^2 \nonumber \\
& \quad + \frac{1}{6} U'''_H(p)(x-p)^3 + R(x;p),\label{eqn.utaylorentropy}
\end{align}
where the remainder term enjoys the following representations:
\begin{align}
R(x;p) & = \frac{1}{6}\int_p^x (x-u)^3 U_H^{(4)}(u)du \\
& = \frac{U_H^{(4)}(\xi_x)}{24}(x-p)^4,\quad \xi_x \in [\min\{x,p\}, \max\{x,p\}].
\end{align}
The first remainder is called the integral representation of Taylor series remainders, and the second remainder is called the Lagrange remainder.

Since $p\geq \Delta$, we know that
\begin{align}
U'_H(p) & = - \ln p -1\\
U''_H(p) & = -1/p \\
U_H^{(3)}(p) & = 1/p^2 \\
U_H^{(4)}(p) & = -2/p^3
\end{align}

Replacing $x$ by random variable $X$ in (\ref{eqn.utaylorentropy}), where $nX \sim \mathsf{Poi}(np), p\geq \Delta$, and taking expectations on both sides, we have
\begin{align}
\bE U_H(X) & = U_H(p) + \frac{1}{2} U''_H(p) \frac{p}{n} + \frac{1}{6} U'''_H(p) \frac{p}{n^2} + \bE[R(X;p)]\\
& = -x \ln x + \frac{1}{6pn^2} +  \bE[R(X;p)] \label{eqn.biascorrectbasicentropy}
\end{align}
where we have used the fact that if $nX \sim \mathsf{Poi}(np)$, then $\bE (X-p)^2 = \frac{p}{n}, \bE(X-p)^3 = \frac{p}{n^2}$.

Since the representation of $R(x;p)$ involves $U_H^{(4)}(\xi_x)$, it would be helpful to obtain some estimates of $U_H^{(4)}(x)$ over $[0,1]$. We have
\begin{equation}\label{eqn.ualphataylor4entropy}
U_H^{(4)}(x) = I_n^{(4)} f + 4 I_n^{(3)} f^{(1)} + 6 I_n^{(2)} f^{(2)} + 4 I_n^{(1)} f^{(3)} + I_n f^{(4)}.
\end{equation}

Hence, it suffices to bound each term in (\ref{eqn.ualphataylor4entropy}) separately.

For $x\in [0, t]$, $U_H(x) \equiv 0$, so we do not need to consider this regime. For $x \in [2t,1]$, $U_H(x) = f(x)$, hence
\begin{equation}
|U^{(4)}_H(x)| = |f^{(4)}(x)| = 2/x^3,
\end{equation}
which implies that for $x\geq 2t$,
\begin{equation}
\sup_{z \in [x,1]} |U^{(4)}_\alpha(z)| \leq 2/x^3.
\end{equation}

Finally we consider $x\in (t, 2t)$. Denoting $y = x-t$, the derivatives of $I_n(x)$ for $x\in (t, 2t)$ are as follows:
\begin{align}
I_n'(x) & = \frac{630 y^4(t-y)^4}{t^9}\\
I_n''(x) & = \frac{2520 y^3(t-2y)(t-y)^3 }{t^9}\\
I_n^{(3)}(x) & = \frac{2520 y^2(t-y)^2 (3t^2-14t y + 14 y^2)}{t^9}\\
I_n^{(4)}(x) & = \frac{15120 y(t-2y)(t-y)(t^2-7t y + 7y^2)}{t^9}.
\end{align}

Considering the fact that $y/t \in [0,1]$, we can maximize $|I_n^{(i)}(x)|$ over $x\in (t,2t)$ for $1\leq i \leq 4$. With the help of $\mathsf{Mathematica}$ \cite{mathematica}, we could show that for $x\in (t, 2t)$,
\begin{align}
|I_n'(x)| & \leq  \frac{4}{t}\\
|I_n''(x)| & \leq \frac{20}{t^2} \\
|I_n^{(3)}(x)| & \leq \frac{100}{t^3} \\
|I_n^{(4)}(x)| & \leq \frac{1000}{t^4}.
\end{align}

Plugging these upper bounds in (\ref{eqn.ualphataylor4entropy}), we know for $x\in (t, 2t)$
\begin{align}
& |U_H^{(4)}(x)| \nonumber \\
& \quad \leq \frac{1000}{t^4} t \ln(1/t) + \frac{4\times 100}{t^3}\ln(1/t)+ 6\times\frac{20}{t^2} 1/t \nonumber \\
& \qquad + 4\times \frac{4}{t}\times 1/t^2 + 2/t^3 \\
& \quad \leq 1538\frac{\ln(1/t)}{t^3} \\
& \quad \leq 1538\frac{\ln(2/x)}{(x/2)^3} \\
& \quad \leq 12304 \frac{1+\ln(1/x)}{x^3}.
\end{align}

Now we proceed to upper bound $|\bE [R(X;p)]|, p\geq \Delta$. We consider the following two cases:
\begin{enumerate}
\item Case 1:$x\geq p/2$. In this case,
\begin{align}
|R(x;p)| & = \left|\frac{U_H^{(4)}(\xi_x)}{24}(x-p)^4\right| \\
& \leq \sup_{x\in [p/2,1]}|U_H^{(4)}(x)| \frac{(x-p)^4}{24} \\ & \leq \frac{2}{(p/2)^3}\frac{(x-p)^4}{24} \\
& = \frac{2(x-p)^4}{3p^3}.
\end{align}
\item Case 2: $0\leq x<p/2$. In this case, denoting $y = \max\{x,\Delta/4\}$,
\begin{align}
&|R(x;p)|\nonumber \\
 & \leq  \frac{1}{6}\int_y^p (u-x)^3 |U_H^{(4)}(u)|du \\
& \leq \frac{1}{6} \int_y^p (u-x)^3 12304 \frac{1+\ln(1/u)}{u^3} du \\
& \leq  2051 \int_y^p \frac{(u-x)^3(1+\ln(1/u))}{u^3} du \\
& =  2051 \int_y^p \frac{(u^3 - 3xu^2 + 3x^2 u - x^3)(1+\ln(1/u))}{u^3} du \\
& \leq 2051 \int_y^p \frac{(u^3 + 3x^2 u )(1+\ln(1/u))}{u^3} du \\
& = 2051 \int_y^p \left( 1+ \frac{3x^2}{u^2}\right)\left(1 + \ln(1/u)\right) du \\
& \leq 8204 \int_y^p \left(1 + \ln(1/u)\right) du \\
& \leq 8024 p \left( \ln(1/p) + 2 \right) .
\end{align}
\end{enumerate}

Now we have
\begin{align}
\bE[|R(X;p)|] & = \bE [|R(X;p)|\mathbbm{1}(X\geq p/2)] \nonumber \\
& \quad + \bE[|R(X;p)|\mathbbm{1}(X<p/2)] \\
& \triangleq B_1 + B_2 .
\end{align}

For the term $B_1$, we have
\begin{align}
B_1 & = \bE [|R(X;p)|\mathbbm{1}(X\geq p/2)] \\
& \leq  \bE\left[\frac{2(X-p)^4}{3p^3}\right] \\
&  = \frac{2}{3p^2n^3} + \frac{2}{pn^2},
\end{align}
where we have used the fact that if $nX \sim \mathsf{Poi}(np)$, then $\bE (X-p)^4 = (np + 3n^2p^2)/n^4$.

For the term $B_2$, we have
\begin{align}
B_2 & = \bE[|R(X;p)|\mathbbm{1}(X<p/2)] \\
& \leq 8024 p \left( \ln(1/p) + 2\right) \bP(nX < np/2).
\end{align}

Applying Lemma~\ref{lemma.poissontail}, we have
\begin{align}
B_2 & \leq 8024\left( p \ln(1/p) + 2p\right) e^{-np/8} \\
& = 8024\left( p \ln(1/p) + 2p\right)n^{-c_1/8}.
\end{align}

Hence, we have
\begin{align}
\bE[R(X;p)] & \leq \bE[|R(X;p)|] \\
& \leq  \frac{2}{3p^2n^3} +\frac{2}{pn^2}+ 8024\left( p \ln(1/p) + 2p\right)n^{-c_1/8}.
\end{align}

Plugging this into (\ref{eqn.biascorrectbasicentropy}), we have for $p\geq \Delta$,
\begin{align}
& |\bE U_H(X) + p \ln p | \nonumber \\
 & \leq \frac{1}{6pn^2} +  \frac{2}{3p^2n^3} +\frac{2}{pn^2}+ 8024\left( p \ln(1/p) + 2p\right)n^{-c_1/8} \\
& \leq \frac{3}{pn^2} + \frac{2}{3p^2n^3} +8024\left( p \ln(1/p) + 2p\right)n^{-c_1/8} \\
& \leq \frac{3}{c_1 n \ln n} + \frac{2}{3 (c_1 \ln n)^2 n} + 8024\left( p \ln(1/p) + 2p\right)n^{-c_1/8} .
\end{align}

For the upper bound on the variance $\mathsf{Var}(U_H(X))$, recalling that $f(p) = -x \ln x + \frac{1}{2n}$, for $p\geq \Delta$, we have
\begin{align}
& \mathsf{Var}(U_H(X)) \nonumber \\
& = \bE U_H^2(X) - (\bE U_H(X))^2 \\
& = \bE U_H^2(X) - f^2(p) + f^2(p) - (\bE U_H(X))^2 \\
& \leq | \bE U_H^2(X) - f^2(p)|+ | f^2(p) - (\bE U_H(X)-f(p) + f(p))^2 |\\
& = | \bE U_H^2(X) - f^2(p)|+ |(\bE U_H(X)-f(p))^2 \nonumber \\
& \quad + 2f(p) (\bE U_H(X)-f(p))| \\
& \leq | \bE U_H^2(X) - f^2(p)| + |\bE U_H(X)-f(p)|^2 \nonumber \\
& \quad + 2f(p)|\bE U_H(X)-f(p)|. \label{eqn.varianceestimatelargeentropy}
\end{align}

Hence, it suffices to obtain bounds on $| \bE U_H^2(X) - f^2(p)|$ and $|\bE U_H(X) - f(p)|$. Denoting $r(x) = U^2_H(x)$, we know that $r(x) \in C^4[0,1]$, and it follows from Taylor's formula and the integral representation of the remainder term that
\begin{equation}
r(X) = f^2(p) + r'(p)(X-p) + R_1(X;p),
\end{equation}
\begin{align}
R_1(X;p) & =\int_p^X (X-u)r''(u)du \\
& = \frac{1}{2}r''(\eta_X)(X-p)^2, \nonumber \\
& \qquad \eta_X \in [\min\{X,p\},\max\{X,p\}].
\end{align}

Similarly, we have
\begin{equation}
U_H(X) = f(p) + f'(p)(X-p) + R_2(X;p),
\end{equation}
\begin{align}
R_2(X;p) & = \int_p^X (X-u)U_H''(u)du \\
& = \frac{1}{2} U_H''(\nu_X)(X-p)^2, \nonumber \\
& \qquad \nu_X\in [\min\{X,p\},\max\{X,p\}].
\end{align}

Taking expectation on both sides with respect to $X$, where $nX \sim \mathsf{Poi}(np), p\geq \Delta$, we have
\begin{equation}\label{eqn.u2biaslargeentropy}
|\bE U_H^2(X) - f^2(p)| = |\bE R_1(X;p) |.
\end{equation}
Similarly, we have
\begin{equation}
|\bE U_H(X) - f(p)| =  | \bE R_2(X;p)|.
\end{equation}

As we did for function $U_H(x)$, now we give some upper estimates for $|r''(x)|$ over $[0,1]$. Over regime $[0,t]$, $r(x)\equiv 0$, so we ignore this regime. Over regime $[2t,1]$, since $U_H(x) = f(x)$, we have
\begin{align}
r'(x) & = 2f f'\\
r''(x) & = 2(f')^2 + 2 f f''.
\end{align}
Hence, for $x\geq 2t$,
\begin{equation}
\sup_{z\in [x,1]}|r''(z)| \leq 4 (\ln x)^2.
\end{equation}
\begin{equation}
\sup_{z\in [x,1]}|U_\alpha''(z)| \leq 1/x.
\end{equation}

Over regime $[t,2t]$, we have
\begin{align}
r'(x) & = 2 f f' I_n^2 + 2 I_n I_n' f^2 \\
r''(x) & = 2 \Big( (f')^2 I_n^2 + f f'' I_n^2 + 2 f f' I_n I_n' \nonumber \\
& \quad + (I_n')^2 f^2 + I_n I_n'' f^2 + 2 f f' I_n I_n' \Big).
\end{align}
Hence, we have for $x\in [t,2t]$,
\begin{align}
|r''(x)| & \leq 2 \Big( (\ln t)^2 + \ln(1/t) +2 (t\ln(1/t))(\ln(1/t)) \frac{4}{t} \nonumber \\
& \quad + (4/t)^2 t^2 (\ln t)^2 + \frac{20}{t^2} t^2 (\ln t)^2 \nonumber \\
& \quad + 2 (t \ln(1/t)) \ln(1/t) \frac{4}{t}   \Big) \\
& \leq 108 (\ln t)^2 \\
& \leq 108 (\ln (x/2))^2\\
& = 216 ((\ln x)^2 + 1),
\end{align}
where we have used the fact that $|\ln 2|\approx 0.69 <1$. Also, over regime $[t,2t]$,
\begin{equation}
U_H''(x) = I_n'' f + I_n f'' + 2I_n' f',
\end{equation}
hence for $x\in [t,2t]$,
\begin{align}
|U_H''(x)| & \leq  \frac{20}{t^2} (t \ln(1/t)) + \frac{1}{t} + 2 \ln(1/t) \frac{4}{t} \\
&  \leq \frac{30}{t} \ln(1/t) \\
& \leq \frac{30}{x/2} \ln (2/x) \\
& \leq \frac{60}{x}(\ln(1/x)  +1).
\end{align}

Now we are in the position to bound $|\bE R_1(X;p)|$ and $|\bE R_2(X;p)|$.

We have
\begin{align}
& |\bE R_1(X;p)| \nonumber \\
& \leq \bE |R_1(X;p)| \\
& = \bE [|R_1(X;p)\mathbbm{1}(X\geq p/2)|] + \bE[R_1(X;p)\mathbbm{1}(X<p/2)] \\
& \leq \bE \left[ \frac{1}{2} \times 4(\ln (p/2))^2  (X-p)^2\right] + \bE[R_1(X;p)\mathbbm{1}(X<p/2)] \\
& = 2p(\ln p - \ln 2)^2/n + \sup_{x\leq p/2}|R_1(x;p)|\bP(nX<np/2) \\
& = 2p(\ln p - \ln 2)^2/n  + \sup_{x\leq p/2}|R_1(x;p)| n^{-c_1/8},
\end{align}
where in the last step we have applied Lemma~\ref{lemma.poissontail}.

Regarding $\sup_{x\leq p/2}|R_1(x;p)|$, for any $x\leq p/2$, denoting $y = \max\{x,\Delta/4\}$, we have
\begin{align}
R_1(x;p) & = \int_x^p (u-x)r''(u)du \\
& \leq \int_y^p (u-x) 216 \left( (\ln u)^2 + 1\right) du \\
& \leq 216\int_y^p u\left(  (\ln u)^2 + 1\right) du \\
& \leq 54 p^2 \left( 2 (\ln p)^2 -2 \ln p + 3\right).
\end{align}

Hence, we have
\begin{equation}
|\bE R_1(X;p)| \leq 2p(\ln p - \ln 2)^2/n + 54 p^2 \left| 2 (\ln p)^2 -2 \ln p + 3\right| n^{-c_1/8}.
\end{equation}

Analogously, we obtain the following bound for $|\bE R_2(X;p)|$:
\begin{equation}
|\bE R_2(X;p)|\leq \frac{1}{n} + 60 \left( p\ln(1/p) + 2p\right)n^{-c_1/8}.
\end{equation}

Plugging these estimates of $|\bE R_1(X;p)|$ and $|\bE R_2(X;p)|$ into (\ref{eqn.varianceestimatelargeentropy}), we have for $p\geq \Delta$,
\begin{align}
& \mathsf{Var}(U_H(X)) \nonumber \\
 &  \leq  2p(\ln p - \ln 2)^2/n + 54 p^2 \left| 2 (\ln p)^2 -2 \ln p + 3\right|n^{-c_1/8} \nonumber \\
 & \quad + \left( \frac{1}{n} + 60 \left( p\ln(1/p) + 2p\right)n^{-c_1/8}\right)^2 \\
& \quad + 2 \left( p\ln(1/p) + \frac{1}{2n}\right) \left( \frac{1}{n} + 60 \left( p\ln(1/p) + 2p\right)n^{-c_1/8}\right).
\end{align}

\subsection{Proof of Lemma~\ref{lemma.lowpartbiasvariance}}

We first bound the bias term. It follows from differentiating the moment generating function of the Poisson distribution that if $X \sim \mathsf{Poi}(\lambda)$, then
\begin{equation}
\bE X(X-1)\ldots(X-r+1) = \lambda^r,
\end{equation}
for any $r$ positive integer.

Then, we know that for $nX \sim \spo(np)$,
\begin{equation}
\bE S_{K,\alpha}(X) = \sum_{k =1 }^{K} g_{k,\alpha} (4\Delta)^{-k + \alpha} p^k.
\end{equation}

Applying Lemma~\ref{lemma.approsmall}, we know that for all $p\leq 4\Delta$,
\begin{equation}
|\bE S_{K,\alpha}(X) - p^\alpha| \leq \frac{c_3}{(n \ln n)^\alpha}.
\end{equation}

Now we bound the second moment of $S_{K,\alpha}(X)$. Denote
\begin{equation}
E_{k,n}(x) = \prod_{r = 0}^{k-1} (x-r/n),
\end{equation}
we have
\begin{align}
\bE S_{K,\alpha}^2(X) & \leq \left( \sum_{k = 1}^{K} |g_{k,\alpha}| (4\Delta)^{-k + \alpha}  \left(  \bE E_{k,n}^2(X) \right)^{1/2} \right)^2 \\
& \leq 2^{6K} \left( \sum_{k = 1}^{K} (4\Delta)^{-k + \alpha}  \left(  \bE E_{k,n}^2(X) \right)^{1/2} \right)^2
\end{align}
Here we have used Lemma~\ref{lemma.nonasympxa}.

Since $K \leq 4 n \Delta$, applying Lemma~\ref{lemma.poissonmoment},
\begin{align}
\bE E_{k,n}^2(X) & = \frac{1}{n^{2k}} \bE \prod_{r=0}^{k-1} (nX - r)^2 \\
& \leq \frac{1}{n^{2k}} \bE \prod_{r=0}^{k-1} (nX)^2 \\
& = \frac{1}{n^{2k}} \bE (nX)^{2k} \\
& \leq \frac{1}{n^{2k}} (8 c_1 \ln n)^{2k} \\
& = \left( \frac{8 c_1 \ln n}{n} \right)^{2k},
\end{align}
we know
\begin{align}
\bE S_{K,\alpha}^2(X) & \leq 2^{6K}\left( \sum_{k = 1}^{K} (4\Delta)^{-k + \alpha} \left( \frac{8 c_1 \ln n}{n} \right)^{k}   \right)^2 \\
& \leq 2^{6K} 2^{2K} \left(\sum_{k = 1}^{K} (4\Delta)^{-k+\alpha} (4\Delta)^{k} \right)^{2} \\
& \leq 2^{6K} 2^{2K} \left( \sum_{k = 1}^{K}(4\Delta)^{\alpha} \right)^{2} \\
& = 2^{8K} K^2 (4\Delta)^{2\alpha} \\
& \leq n^{8c_2 \ln 2} (c_2 \ln n)^2 (4c_1 \ln n /n)^{2\alpha}\\
& \leq n^{8c_2 \ln 2} \frac{(4c_1 \ln n)^{2+2\alpha}}{n^{2\alpha}}
\end{align}

The proof for the $S_{K,H}(x)$ case is essentially the same as that for $S_{K,\alpha}(x)$ via replacing $\alpha$ by $1$ and applying Lemma~\ref{lemma.approsmallentropy} rather than Lemma~\ref{lemma.approsmall}.

\subsection{Proof of Lemma \ref{lem_L_small}}
We first bound the bias term. It follows from the property of Poisson distribution that
\begin{align}\label{eq:appro_mean}
  \mathbb{E}S_{K,\alpha}(X) = \sum_{k=1}^K g_{k,\alpha}(4\Delta)^{-k+\alpha}p^k.
\end{align}
It follows from a variation of the pointwise bound in Lemma \ref{lemma.nonasympxa} that
\begin{align}
  |\mathbb{E}S_{K,\alpha}(X)-p^\alpha| \le \frac{D_1(4\Delta)^\alpha}{K^{2(\alpha-1)}}\cdot\frac{p}{4\Delta} = D_1\left(\frac{4c_1}{c_2^2n\ln n}\right)^{\alpha-1}p,
\end{align}
which completes the proof of the first part of Lemma \ref{lem_L_small}. For the variance, denote
\begin{align}
  E_{k,n}(x) = \prod_{r=0}^{k-1}\left(x-\frac{r}{n}\right),
\end{align}
we have
\begin{align}
  \mathbb{E}E_{k,n}^2(X) &= \frac{1}{n^{2k}}\mathbb{E}\prod_{r=0}^{k-1}(nX-r)^2 \le \frac{1}{n^{2k}}\mathbb{E}[nX]^{2k}\\
&= \frac{1}{n^{2k}}\sum_{i=1}^{2k} \left\{\begin{matrix}
                                         2k \\
                                         i
                                       \end{matrix}
\right\}(np)^{i}\\
&\le \frac{1}{n^{2k}}\sum_{i=1}^{2k} \binom{2k}{i}i^{2k-i}(np)^i\\
&\le \frac{1}{n^{2k}}\sum_{i=1}^{2k} \binom{2k}{i}(2k)^{2k-i}np\\
&\le \frac{(2k+1)^{2k}p}{n^{2k-1}},
\end{align}
where $\left\{\begin{matrix}
k \\
i
\end{matrix}
\right\}$ is the Stirling numbers of the second kind, and we have used the inequality \cite{Rennie1969stirling}
\begin{align}\label{eq:stirling_inequality}
  \left\{\begin{matrix}
k \\
i
\end{matrix}
\right\} \le \binom{k}{i}i^{k-i}.
\end{align}

Hence, we can bound the second moment of $S_{K,\alpha}(X)$ as
\begin{align}
  \mathbb{E}S_{K,\alpha}^2(X) &\le \left(\sum_{k=1}^K |g_{k,\alpha}|(4\Delta)^{-k+\alpha}(\mathbb{E}E_{k,n}^2(X))^{1/2}\right)^2\\
&\le 2^{6K}\left(\sum_{k=1}^K (4\Delta)^{-k+\alpha}\frac{(2k+1)^k\sqrt{p}}{n^{k-\frac{1}{2}}}\right)^2\\
&\le 2^{10K}\left(\sum_{k=1}^K \left(\frac{4c_1\ln n}{n}\right)^{-k+\alpha}\frac{K^k\sqrt{p}}{n^{k-\frac{1}{2}}}\right)^2\\
&\le \frac{2^{10K}(4c_1\ln n)^{2\alpha}}{n^{2\alpha-1}}\left(\sum_{k=1}^K(\frac{c_2}{4c_1})^k\right)^2p\\
&\le 2^{10c_2\ln 2}\frac{(4c_1\ln n)^{2\alpha}K^2}{n^{2\alpha-1}}p\\
&\le 2^{10c_2\ln 2}\frac{(4c_1\ln n)^{2\alpha+2}p}{n^{2\alpha-1}},
\end{align}
given $c_2<4c_1$, where we have used Lemma \ref{lemma.nonasympxa}.


\subsection{Proof of Lemma~\ref{lemma.preparemainf}}
We apply Lemma~\ref{lemma.cailow1} and Lemma~\ref{lemma.cailow2} to calculate the bias and variance of $\xi$.
\begin{enumerate}

\item Case 1: $p\leq \Delta$

\subsubsection*{Claim} when $p\leq \Delta$, we have
\begin{align}
|B(\xi)| & \lesssim \frac{1}{(n \ln n)^\alpha} \\
\mathsf{Var}(\xi) & \lesssim \frac{(\ln n)^{2+2\alpha}}{n^{2\alpha - \epsilon}}.
\end{align}
Now we prove this claim. In this regime, we write $L_\alpha(X) = S_{K,\alpha} - (S_{K,\alpha}(X)-1) \mathbbm{1}(S_{K,\alpha}(X)\geq 1)$. We have
\begin{align}
& |B(\xi)| \nonumber \\
 & = \Bigg | \bE S_{K,\alpha}(X) \bP(Y \leq 2\Delta) \nonumber \\
 & \quad - \left[ \bE (S_{K,\alpha}(X)-1)\mathbbm{1}(S_{K,\alpha}\geq 1) \right] \bP(Y\leq 2\Delta) \nonumber \\
 & \quad + \bE U_\alpha(X) \bP(Y>2\Delta) - p^\alpha \Bigg |\\
& = \Bigg | \bE S_{K,\alpha}(X) - p^\alpha \nonumber \\
& \quad - \left[ \bE (S_{K,\alpha}(X)-1)\mathbbm{1}(S_{K,\alpha}\geq 1) \right] \bP(Y\leq 2\Delta) \nonumber \\
&\quad + \left( \bE U_\alpha(X) - \bE S_{K,\alpha}(X)\right) \bP(Y>2\Delta) \Bigg |\\
& \leq |\bE S_{K,\alpha}(X) - p^\alpha| \nonumber \\
& \quad+ \bE (S_{K,\alpha}(X)-1)\mathbbm{1}(S_{K,\alpha}\geq 1) \nonumber \\
& \quad + \left( |\bE U_\alpha(X)| + |\bE S_{K,\alpha}(X)|\right) \bP(Y>2\Delta)  \\
& \equiv B_1  + B_2 + B_3.
\end{align}

Now we bound $B_1,B_2,B_3$ separately. It follows from Lemma~\ref{lemma.lowpartbiasvariance} that
\begin{equation}
B_1 = |\bE S_{K,\alpha}(X) - p^\alpha| \leq \frac{c_3}{(n\ln n)^\alpha} \lesssim \frac{1}{(n \ln n)^\alpha}.
\end{equation}

Now consider $B_2$. Note that for any random variable $Z$ and any constant $\lambda>0$,
\begin{equation}
\bE (Z \mathbbm{1}(Z\geq \lambda)) \leq \lambda^{-1} \bE (Z^2 \mathbbm{1}(X\geq \lambda)) \leq \lambda^{-1} \bE Z^2.
\end{equation}
Hence, we have
\begin{align}
B_2 & = \bE (S_{K,\alpha}(X)-1)\mathbbm{1}(S_{K,\alpha}\geq 1) \\
& \leq \bE S_{K,\alpha}\mathbbm{1}(S_{K,\alpha}\geq 1)\\
& \leq \bE S_{K,\alpha}^2 \\
& \leq n^{8c_2 \ln 2} \frac{(4c_1 \ln n)^{2+2\alpha}}{n^{2\alpha}} \\
&  \lesssim \frac{(\ln n)^{2+2\alpha}}{n^{2\alpha - \epsilon}},\label{eqn.b2method}
\end{align}
where we used Lemma~\ref{lemma.lowpartbiasvariance} in the last step.

Now we deal with $B_3$. We have
\begin{align}
|\bE S_{K,\alpha}(X)| & \leq p^\alpha + \frac{c_3}{(n\ln n)^\alpha} \\
& \leq \left(\frac{c_1\ln n}{n} \right)^{\alpha} + \frac{c_3}{(n\ln n)^\alpha} \\
\bE |U_\alpha(X)| & \leq \sup_{x\in [0,1]} |U_\alpha(x)| \leq  1+ \frac{\alpha(1-\alpha)}{2c_1  \ln n} \\
& \leq 1+ \frac{1}{8c_1  \ln n} \\
\bP(Y \geq 2\Delta) & = \bP(nY \geq 2n\Delta) \\
&  \leq (e/4)^{c_1 \ln n} \\
 & = n^{-c_1 \ln(4/e)},
\end{align}
where we have used Lemma~\ref{lemma.lowpartbiasvariance} and Lemma~\ref{lemma.poissontail}. Thus, we have
\begin{align}
B_3 & = \left( |\bE S_{K,\alpha}(X)|  + \bE |U_\alpha(X)|\right)\bP(Y \geq 2\Delta) \\
&  \lesssim n^{-c_1 \ln(4/e)}.
\end{align}

To sum up, we have the following bound on $|B(\xi)|$:
\begin{equation}
|B(\xi)|\lesssim \frac{1}{(n \ln n)^\alpha} +\frac{(\ln n)^{2+2\alpha}}{n^{2\alpha - \epsilon}} + n^{-16\alpha} \lesssim \frac{1}{(n \ln n)^\alpha}.
\end{equation}

We now consider the variance. It follows from Lemma~\ref{lemma.cailow1} and Lemma~\ref{lemma.cailow2} that
\begin{align}
& \mathsf{Var}(\xi) \nonumber \\
 & \leq \mathsf{Var}(S_{K,\alpha}(X)) + \mathsf{Var}(U_\alpha(X)) \bP(Y>2\Delta) \nonumber \\
 & \quad + (\bE L_\alpha(X) - \bE U_\alpha(X))^2 \bP(Y > 2\Delta) \\
& \leq \bE S_{K,\alpha}^2(X) + \left( \bE U_\alpha^2(X) + 1 + 2 |\bE U_\alpha(X)| \right) \bP(Y > 2\Delta) \\
&  \leq n^{8c_2 \ln 2} \frac{(4c_1 \ln n)^{2+2\alpha}}{n^{2\alpha}} + \left( 1 + \frac{1}{8c_1  \ln n}\right)^2 n^{-c_1 \ln (4/e)} \\
& \lesssim \frac{(\ln n)^{2+2\alpha}}{n^{2\alpha - \epsilon}} + n^{-8\alpha} \\
& \lesssim \frac{(\ln n)^{2+2\alpha}}{n^{2\alpha - \epsilon}}
\end{align}

\item Case 2: $\Delta < p \leq 4\Delta$.
\subsubsection*{Claim} when $\Delta < p \leq 4\Delta$, we have
\begin{align}
|B(\xi)| & \lesssim \frac{1}{(n \ln n)^\alpha} \\
\mathsf{Var}(\xi) & \lesssim  \begin{cases} \frac{(\ln n)^{2+2\alpha}}{n^{2\alpha - \epsilon}}  & 0<\alpha\leq 1/2 \\ \frac{(\ln n)^{2+2\alpha}}{n^{2\alpha - \epsilon}} + \frac{p^{2\alpha-1}}{n} & 1/2 < \alpha <1  \end{cases}
\end{align}
Now we prove this claim. In this case,
\begin{align}
& |B(\xi)| \nonumber \\
 & = \Bigg | (\bE L_\alpha(X) - p^\alpha) \bP(Y \leq 2\Delta)  \nonumber \\
 & \quad + (\bE U_\alpha(X) - p^\alpha) \bP(Y>2\Delta) \Bigg |\\
& \leq |\bE L_\alpha(X) - p^\alpha| + |\bE U_\alpha(X) - p^\alpha| \\
& \leq |\bE S_{K,\alpha}(X) - p^\alpha| + \bE (S_{K,\alpha}(X)-1)\mathbbm{1}(S_{K,\alpha}\geq 1) \nonumber \\
& \quad + |\bE U_\alpha(X) - p^\alpha| \\
& \equiv B_1 + B_2 + B_3.
\end{align}

It follows from Lemma~\ref{lemma.lowpartbiasvariance} that
\begin{equation}
B_1 = |\bE S_{K,\alpha}(X) - p^\alpha| \leq \frac{c_3}{(n \ln n)^\alpha} \lesssim \frac{1}{(n \ln n)^\alpha}.
\end{equation}

As in (\ref{eqn.b2method}), we have
\begin{align}
B_2 & = \bE (S_{K,\alpha}(X)-1)\mathbbm{1}(S_{K,\alpha}\geq 1) \\
& \leq n^{8c_2 \ln 2} \frac{(4c_1 \ln n)^{2+2\alpha}}{n^{2\alpha}} \\
& \lesssim \frac{(\ln n)^{2+2\alpha}}{n^{2\alpha - \epsilon}}.
\end{align}

Regarding $B_3$, applying Lemma~\ref{lemma.largebias}, we have
\begin{align}
B_3 & \leq \frac{17}{n^\alpha(c_1 \ln n)^{2-\alpha}} + \frac{8310 (1+\alpha)}{\alpha(2-\alpha)}p^\alpha n^{-c_1/8} \\
& \lesssim \frac{1}{n^\alpha (\ln n)^{2-\alpha}} + n^{-2\alpha }.
\end{align}

To sum up, we have
\begin{align}
|B(\xi)| & \lesssim \frac{1}{(n \ln n)^\alpha} +  \frac{(\ln n)^{2+2\alpha}}{n^{2\alpha - \epsilon}}+  \frac{1}{n^\alpha (\ln n)^{2-\alpha}} + n^{-2\alpha} \\
& \lesssim  \frac{1}{(n \ln n)^\alpha} .
\end{align}

For the variance, we have
\begin{align}
\mathsf{Var}(\xi) & \leq \mathsf{Var}(S_{K,\alpha}(X)) + \mathsf{Var}(U_\alpha(X)) \nonumber \\
& \quad + \left( \bE L_\alpha(X) -\bE U_\alpha(X) \right)^2.
\end{align}

Applying Lemma~\ref{lemma.lowpartbiasvariance}, we have
\begin{equation}
\mathsf{Var}(S_{K,\alpha}(X)) \leq n^{8c_2 \ln 2} \frac{(4c_1 \ln n)^{2+2\alpha}}{n^{2\alpha}} \lesssim \frac{(\ln n)^{2+2\alpha}}{n^{2\alpha - \epsilon}}.
\end{equation}
Lemma~\ref{lemma.largebias} implies that when $0<\alpha \leq 1/2$, 
\begin{align}
& \mathsf{Var}(U_\alpha(X)) \nonumber \\
& \leq \frac{24}{n^{2\alpha}(c_1\ln n)^{1-2\alpha}}  + \frac{576}{\alpha}p^{2\alpha} n^{-c_1/8} + \frac{28800}{\alpha^2}p^{2\alpha} n^{-c_1/4},
\end{align}
and when $1/2<\alpha<1$,
\begin{align}
& \mathsf{Var}(U_\alpha(X)) \nonumber \\
& \leq  \frac{14p^{2\alpha-1}}{n} + \frac{576}{\alpha}p^{2\alpha} n^{-c_1/8} + \frac{28800}{\alpha^2}p^{2\alpha} n^{-c_1/4} \nonumber \\
& \quad  + \frac{8}{n^{2\alpha}(c_1\ln n)^{2-2\alpha}} 
\end{align}
Regarding $\left( \bE L_\alpha(X) -\bE U_\alpha(X) \right)^2$, we have
\begin{align}
& \left( \bE L_\alpha(X) -\bE U_\alpha(X) \right)^2 \nonumber \\
& \leq \Big[ |\bE S_{K,\alpha}(X)-p^\alpha| + \bE (S_{K,\alpha}(X)-1)\mathbbm{1}(S_{K,\alpha}\geq 1) \nonumber \\
& \quad + |\bE U_\alpha(X) - p^\alpha| \Big]^2 \\
& \leq \Big[ \frac{c_3}{(n \ln n)^\alpha} +   n^{8c_2 \ln 2} \frac{(4c_1 \ln n)^{2+2\alpha}}{n^{2\alpha}} + \frac{17}{n^\alpha(c_1 \ln n)^{2-\alpha}} \nonumber \\
& \quad + \frac{8310 (1+\alpha)}{\alpha(2-\alpha)}p^\alpha n^{-c_1/8} \Big]^2\\
& \lesssim \frac{1}{(n\ln n)^{2\alpha}}.
\end{align}

To sum up, we have
\begin{equation}
\mathsf{Var}(\xi) \lesssim \begin{cases} \frac{(\ln n)^{2+2\alpha}}{n^{2\alpha - \epsilon}}  & 0<\alpha\leq 1/2 \\ \frac{(\ln n)^{2+2\alpha}}{n^{2\alpha - \epsilon}} + \frac{p^{2\alpha-1}}{n} & 1/2 < \alpha <1  \end{cases}
\end{equation}

\item Case 3: $p>4\Delta$.
\subsubsection*{Claim} when $p>4\Delta$, we have
\begin{align}
|B(\xi)| & \lesssim  \frac{1}{n^\alpha(\ln n)^{2-\alpha}}\\
\mathsf{Var}(\xi) & \lesssim  \begin{cases} \frac{1}{n^{2\alpha}(\ln n)^{1-2\alpha}} & 0<\alpha\leq 1/2 \\ \frac{1}{n^{2\alpha}(\ln n)^{1-2\alpha}} + \frac{p^{2\alpha-1}}{n} & 1/2 < \alpha <1 \end{cases}
\end{align}
Now we prove this claim. In this case,
\begin{align}
& |B(\xi)| \nonumber \\
& \leq |\bE U_\alpha(X)-p^\alpha| + \left(|\bE L_\alpha(X)|+p^\alpha \right) \bP(Y \leq 2\Delta) \\
& \leq \frac{17}{n^\alpha(c_1 \ln n)^{2-\alpha}} + \frac{8310 (1+\alpha)}{\alpha(2-\alpha)}p^\alpha n^{-c_1/8} \nonumber \\
& \quad  + 2 \bP(Y \leq 2\Delta) \\
& \leq \frac{17}{n^\alpha(c_1 \ln n)^{2-\alpha}} + \frac{8310 (1+\alpha)}{\alpha(2-\alpha)}p^\alpha n^{-c_1/8} \nonumber \\
& \quad  + 2 \bP(nY \leq 2n\Delta) \\
& \leq \frac{17}{n^\alpha(c_1 \ln n)^{2-\alpha}} + \frac{8310 (1+\alpha)}{\alpha(2-\alpha)}p^\alpha n^{-c_1/8}\nonumber \\
& \quad  + 2 e^{-c_1/2 \ln n} \\
& = \frac{17}{n^\alpha(c_1 \ln n)^{2-\alpha}} + \frac{8310 (1+\alpha)}{\alpha(2-\alpha)}p^\alpha n^{-c_1/8} + 2 n^{-c_1/2}\\
& \lesssim \frac{1}{n^\alpha(\ln n)^{2-\alpha}}.
\end{align}

Regarding the variance, we have
\begin{align}
& \mathsf{Var}(\xi) \nonumber \\
& \leq \mathsf{Var}(U_\alpha(X)) \nonumber \\
& \quad + \left( \mathsf{Var}(L_\alpha(X)) +(\bE L_\alpha(X) - \bE U_\alpha(X))^2 \right) \bP(Y \leq 2\Delta). 
\end{align}

When $0<\alpha\leq 1/2$, we have
\begin{align}
& \mathsf{Var}(\xi) \nonumber \\
& \leq \frac{24}{n^{2\alpha}(c_1\ln n)^{1-2\alpha}}  + \frac{576}{\alpha}p^{2\alpha} n^{-c_1/8} \nonumber \\
& \quad + \frac{28800}{\alpha^2}p^{2\alpha} n^{-c_1/4}+ 3 \bP(Y \leq 2\Delta) \\
& \leq \frac{24}{n^{2\alpha}(c_1\ln n)^{1-2\alpha}}  + \frac{576}{\alpha}p^{2\alpha}n^{-c_1/8} \nonumber \\
& \quad + \frac{28800}{\alpha^2} p^{2\alpha}n^{-c_1/4} + 3 n^{-c_1/2} \\
& \lesssim \frac{1}{n^{2\alpha}(\ln n)^{1-2\alpha}}. 
\end{align}

When $1/2<\alpha<1$, we have
\begin{align}
& \mathsf{Var}(\xi) \nonumber \\
& \leq \frac{14p^{2\alpha-1}}{n} + \frac{576}{\alpha}p^{2\alpha}n^{-c_1/8} + \frac{28800}{\alpha^2}p^{2\alpha}n^{-c_1/4} \nonumber \\
& \quad + \frac{8}{n^{2\alpha}(c_1\ln n)^{2-2\alpha}} + 3 \bP(Y \leq 2\Delta)\\
& \leq \frac{14p^{2\alpha-1}}{n} + \frac{576}{\alpha}p^{2\alpha}n^{-c_1/8} + \frac{28800}{\alpha^2}p^{2\alpha}n^{-c_1/4} \nonumber \\
& \quad + \frac{8}{n^{2\alpha}(c_1\ln n)^{2-2\alpha}}  + 3 n^{-c_1/2}\\
& \lesssim \frac{1}{n^{2\alpha}(\ln n)^{1-2\alpha}} + \frac{p^{2\alpha-1}}{n}. 
\end{align}
\end{enumerate}

\subsection{Proof of Lemma \ref{lem_indivi_bound}}
We use Lemma \ref{lemma.largebias}, Lemma \ref{lemma.lowpartbiasvariance} and Lemma \ref{lem_L_small} to compute the bias and variance of $\xi$. We distinguish four cases.

\begin{enumerate}

\item Case 1: $p\le \frac{1}{n\ln n}$.

In this regime, we write $L_\alpha(X)=S_{K,\alpha}(X)-(S_{K,\alpha}(X)-1)\mathbbm{1}(S_{K,\alpha}(X)\ge1)$ and bound the bias as
\begin{align}
  &|B(\xi)| \nonumber \\
  &= \Big|\mathbb{E}S_{K,\alpha}(X)\mathbb{P}(Y\le 2\Delta) \nonumber \\
  & \quad -[\mathbb{E}(S_{K,\alpha}(X)-1)\mathbbm{1}(S_{K,\alpha}(X)\ge1)]\mathbb{P}(Y\le 2\Delta) \nonumber \\
  & \quad +\mathbb{E}U_\alpha(X)\mathbb{P}(Y> 2\Delta)-p^\alpha\Big|\\
&\le |\mathbb{E}S_{K,\alpha}(X)-p^\alpha| \nonumber \\
& \quad + \mathbb{E}(S_{K,\alpha}(X)-1)\mathbbm{1}(S_{K,\alpha}(X)\ge1) \nonumber \\
& \quad  + (|\mathbb{E}U_\alpha(X)|+|\mathbb{E}S_{K,\alpha}(X)|)\mathbb{P}(Y> 2\Delta)\\
&\equiv B_1+B_2+B_3.
\end{align}
Now we bound $B_1,B_2,B_3$ separately. It follows from Lemma \ref{lem_L_small} that
\begin{align}
  B_1 &= |\mathbb{E}S_{K,\alpha}(X)-p^\alpha| \\
  & \le D_1\left(\frac{4c_1}{c_2^2n\ln n}\right)^{\alpha-1}p \\
  & \lesssim \frac{p}{(n\ln n)^{\alpha-1}},\\
  B_2 &= \mathbb{E}(S_{K,\alpha}(X)-1)\mathbbm{1}(S_{K,\alpha}(X)\ge1) \\
  & \le \mathbb{E}S_{K,\alpha}^2 \\
  & \le n^{10c_2\ln 2}\frac{(4c_1\ln n)^{2\alpha+2}p}{n^{2\alpha-1}}\\
  & \lesssim \frac{(\ln n)^{2\alpha+2}p}{n^{2\alpha-1-\epsilon}}.
\end{align}
Now we consider $B_3$. Since
\begin{align}
  |\mathbb{E}U_\alpha(X)| &\le |\mathbb{E}X^\alpha| \\
  &  \le \frac{1}{n^\alpha}|\mathbb{E}(nX)^\alpha| \\
  & \le  \frac{1}{n^\alpha}|\mathbb{E}(nX)^2| \\
  & = \frac{np+(np)^2}{n^{\alpha}}\lesssim \frac{p}{n^{\alpha-1}}\\
  |\mathbb{E}S_{K,\alpha}(X)| &\le |\mathbb{E}S_{K,\alpha}(X)-p^\alpha| + p^\alpha \\
  & \le D_1\left(\frac{4c_1}{c_2^2n\ln n}\right)^{\alpha-1}p + p^\alpha \\
  & \lesssim \frac{p}{(n\ln n)^{\alpha-1}}\\
  \mathbb{P}(Y> 2\Delta) &=  \mathbb{P}(nY> 2n\Delta) \\
  & \le \left(\frac{enp}{2c_1\ln n}\right)^{2c_1\ln n}\\
  & \le \left(\frac{e}{2c_1(\ln n)^2}\right)^{2c_1\ln n}\\
  & \lesssim n^{-4c_1\ln\ln n},
\end{align}
where we have used Lemma \ref{lemma.poissontail} and the pointwise bound in Lemma \ref{lemma.nonasympxa}, thus
\begin{align}
  B_3 & = (|\mathbb{E}U_\alpha(X)|+|\mathbb{E}S_{K,\alpha}(X)|)\mathbb{P}(Y> 2\Delta) \\
  & \lesssim pn^{-4c_1\ln\ln n}.
\end{align}

To sum up, we have the following bound on $|B(\xi)|$:
\begin{align}
  |B(\xi)| & \lesssim \frac{p}{(n\ln n)^{\alpha-1}} + \frac{(\ln n)^{2\alpha+2}p}{n^{2\alpha-1-\epsilon}} + pn^{-4c_1\ln\ln n}\\
  & \lesssim  \frac{p}{(n\ln n)^{\alpha-1}}
\end{align}
As for variance, it follows from Lemma \ref{lemma.cailow1} and Lemma \ref{lemma.cailow2} that
\begin{align}
&  \mathsf{Var}(\xi) \nonumber \\
 &\le \mathsf{Var}(S_{K,\alpha}(X)) + \mathsf{Var}(U_\alpha)\mathbb{P}(Y>2\Delta) \nonumber \\
 & \quad + (\mathbb{E}L_\alpha(X)-\mathbb{E}U_\alpha(X))^2\mathbb{P}(Y>2\Delta)\\
 &\le \mathbb{E}S_{K,\alpha}^2(X) + 2(\mathbb{E}L_{\alpha}^2(X)+\mathbb{E}U_\alpha^2(X))\mathbb{P}(Y>2\Delta)\\
 &\lesssim \frac{(\ln n)^{2\alpha+2}p}{n^{2\alpha-1-\epsilon}} + 2\left(\frac{(\ln n)^{2\alpha+2}p}{n^{2\alpha-1-\epsilon}}+\frac{p}{n^{2\alpha-1}}\right)n^{-4c_1\ln\ln n}\\
 &\lesssim \frac{(\ln n)^{2\alpha+2}p}{n^{2\alpha-1-\epsilon}},
\end{align}
where we have used the fact that when $p\le \frac{1}{n\ln n}$,
\begin{align}
  |\mathbb{E}U_\alpha^2(X)| & \le |\mathbb{E}X^{2\alpha}|\\
  & \le \frac{1}{n^{2\alpha}}|\mathbb{E}(nX)^{2\alpha}|\\
  & \le  \frac{1}{n^{2\alpha}}|\mathbb{E}(nX)^3|\\
  & = \frac{(np)^3+3(np)^2+np}{n^{2\alpha}}\\
  & \lesssim \frac{p}{n^{2\alpha-1}}.
\end{align}

\item Case 2:$\frac{1}{n\ln n}<p\le \Delta$.

To bound the bias, we use the same definition of $B_1,B_2,B_3$ as in Case 1. It follows from Lemma \ref{lemma.lowpartbiasvariance} that
\begin{align}
  B_1 &= |\mathbb{E}S_{K,\alpha}(X)-p^\alpha| \le \frac{c_3}{(n\ln n)^\alpha} \lesssim \frac{1}{(n\ln n)^\alpha},\\
  B_2 &= \mathbb{E}(S_{K,\alpha}(X)-1)\mathbbm{1}(S_{K,\alpha}(X)\ge1) \le \mathbb{E}S_{K,\alpha}^2 \\
  & \le n^{8c_2\ln 2}\frac{(4c_1\ln n)^{2\alpha+2}}{n^{2\alpha}}\\
  & \lesssim \frac{(\ln n)^{2\alpha+2}}{n^{2\alpha-\epsilon}}.
\end{align}
To deal with $B_3$, we use
\begin{align}
  |\mathbb{E}U_\alpha(X)| &\le \sup_{x\in[0,1]} |U_\alpha(x)| \le 1\\
  |\mathbb{E}S_{K,\alpha}(X)| &\le |\mathbb{E}S_{K,\alpha}(X)-p^\alpha| + p^\alpha \\
  & \le  \frac{c_3}{(n\ln n)^\alpha} + \left(\frac{c_1\ln n}{n}\right)^\alpha \\
  & \lesssim \left(\frac{\ln n}{n}\right)^\alpha\\
   \mathbb{P}(Y> 2\Delta) &=  \mathbb{P}(nY> 2n\Delta) \\
   & \le \left(\frac{e}{4}\right)^{c_1\ln n} \\
   & \lesssim n^{-c_1\ln(4/e)},
\end{align}
to obtain that
\begin{align}
  B_3 &= (|\mathbb{E}U_\alpha(X)|+|\mathbb{E}S_{K,\alpha}(X)|)\mathbb{P}(Y> 2\Delta)\\
  &  \lesssim n^{-c_1\ln(4/e)}.
\end{align}
Hence, the bias is upper bounded by
\begin{align}
  |B(\xi)| & \lesssim \frac{1}{(n\ln n)^\alpha} + \frac{(\ln n)^{2\alpha+2}}{n^{2\alpha-\epsilon}} + n^{-c_1\ln(4/e)}\\
  &  \lesssim \frac{1}{(n\ln n)^\alpha}.
\end{align}

Similar to the analysis in Case 1, the variance is upper bounded by
\begin{align}
 & \mathsf{Var}(\xi) \nonumber \\
  &\le \mathsf{Var}(S_{K,\alpha}(X)) + \mathsf{Var}(U_\alpha(X))\mathbb{P}(Y>2\Delta) \nonumber \\
  & \quad + (\mathbb{E}L_\alpha(X)-\mathbb{E}U_\alpha(X))^2\mathbb{P}(Y>2\Delta)\\
 &\le \mathbb{E}S_{K,\alpha}^2(X) + 2(\mathbb{E}L_{\alpha}^2(X)+\mathbb{E}U_\alpha^2(X))\mathbb{P}(Y>2\Delta)\\
 &\lesssim \frac{(\ln n)^{2\alpha+2}}{n^{2\alpha-\epsilon}} + 2\left(1+1\right)n^{-c_1\ln(4/e)}\\
 &\lesssim \frac{(\ln n)^{2\alpha+2}}{n^{2\alpha-\epsilon}}.
\end{align}

\item Case 3: $\Delta< p< 4\Delta$.

In this case,
\begin{align}
  |B(\xi)| &= \Big|(\mathbb{E}L_\alpha(X)-p^\alpha)\mathbb{P}(Y\le 2\Delta)\nonumber \\
  & \quad +(\mathbb{E}U_\alpha(X)-p^\alpha)\mathbb{P}(Y> 2\Delta)\Big|\\
&\le |\mathbb{E}L_\alpha(X)-p^\alpha| + |\mathbb{E}U_\alpha(X)-p^\alpha|\\
&\le |\mathbb{E}S_{K,\alpha}(X)-p^\alpha| \nonumber \\
& \quad + \mathbb{E}(S_{K,\alpha}(X)-1)\mathbbm{1}(S_{K,\alpha}(X)\ge1) \nonumber \\
& \quad + |\mathbb{E}U_\alpha(X)-p^\alpha|\\
&\equiv B_1+B_2+B_3.
\end{align}
It follows from Lemma \ref{lemma.largebias} and Lemma \ref{lemma.lowpartbiasvariance} that
\begin{align}
    B_1 &= |\mathbb{E}S_{K,\alpha}(X)-p^\alpha| \\
    & \le \frac{c_3}{(n\ln n)^\alpha} \\
    & \lesssim \frac{1}{(n\ln n)^\alpha},\\
  B_2 &= \mathbb{E}(S_{K,\alpha}(X)-1)\mathbbm{1}(S_{K,\alpha}(X)\ge1) \\
  & \le \mathbb{E}S_{K,\alpha}^2 \\
  & \le n^{8c_2\ln 2}\frac{(4c_1\ln n)^{2\alpha+2}}{n^2} \\
  & \lesssim \frac{(\ln n)^{2\alpha+2}}{n^{2-\epsilon}},\\
  B_3 &= |\mathbb{E}U_\alpha(X)-p^\alpha| \\
  & \le \frac{17}{n^\alpha(c_1\ln n)^{2-\alpha}} + \frac{8310(1+\alpha)}{\alpha(2-\alpha)}p^\alpha n^{-c_1/8} \\
 & \lesssim \frac{1}{n^\alpha(\ln n)^{2-\alpha}}.
\end{align}
To sum up, the total bias is upper bounded by
\begin{align}
  |B(\xi)| & \lesssim \frac{1}{(n\ln n)^\alpha} + \frac{(\ln n)^{2\alpha+2}}{n^{2-\epsilon}} + \frac{1}{n^\alpha(\ln n)^{2-\alpha}} \\
  & \lesssim \frac{1}{n^\alpha(\ln n)^{2-\alpha}}.
\end{align}
For the variance, we have
\begin{align}
 & \mathsf{Var}(\xi)\nonumber \\
  &\le \mathsf{Var}(S_{K,\alpha}(X)) + \mathsf{Var}(U_\alpha(X)) \nonumber \\
  & \quad  + (\mathbb{E}L_\alpha(X)-\mathbb{E}U_\alpha(X))^2\\
 &\le \mathbb{E}S_{K,\alpha}^2(X) + \mathsf{Var}(U_\alpha(X)) + (\mathbb{E}L_\alpha(X)-\mathbb{E}U_\alpha(X))^2\\
 &\lesssim \frac{(\ln n)^{2\alpha+2}}{n^{2\alpha-\epsilon}} + \left(\frac{p}{n} + \frac{1}{n^2}\right) + \frac{1}{n^{2\alpha}(\ln n)^{4-2\alpha}}\\
 &\lesssim \frac{1}{n^2} + \frac{p}{n},
\end{align}
where we have used Lemma \ref{lemma.largebias} and
\begin{align}
&  |\mathbb{E}L_\alpha(X)-\mathbb{E}U_\alpha(X)| \nonumber \\
 &\le |\mathbb{E}S_{K,\alpha}(X)-p^\alpha| \nonumber \\
 & \quad +  \mathbb{E}(S_{K,\alpha}(X)-1)\mathbbm{1}(S_{K,\alpha}(X)\ge1) \nonumber \\
 & \quad + |\mathbb{E}U_\alpha(X)-p^\alpha|\\
  &= B_1 + B_2 + B_3\\
  &\lesssim \frac{1}{n^\alpha(\ln n)^{2-\alpha}}.
\end{align}

\item Case 4: $p\ge 4\Delta$.

In this case, the bias is upper bounded by
\begin{align}
&  |B(\xi)| \nonumber \\
&\le |\mathbb{E}U_\alpha(X)-p^\alpha|+(|\mathbb{E}L_\alpha(X)|+p^\alpha)\mathbb{P}(Y\le 2\Delta)\\
  &\le \frac{17}{n^\alpha(c_1\ln n)^{2-\alpha}} + \frac{8310(1+\alpha)}{\alpha(2-\alpha)}p^\alpha n^{-c_1/8} \nonumber \\
  & \quad  + 2\mathbb{P}(nY\le 2n\Delta)\\
  &\le \frac{17}{n^\alpha(c_1\ln n)^{2-\alpha}} + \frac{8310(1+\alpha)}{\alpha(2-\alpha)}p^\alpha n^{-c_1/8} + 2e^{-c_1\ln n/2}\\
  &= \frac{17}{n^\alpha(c_1\ln n)^{2-\alpha}} + \frac{8310(1+\alpha)}{\alpha(2-\alpha)}p^\alpha n^{-c_1/8} + 2n^{-c_1/2}\\
  &\lesssim \frac{1}{n^\alpha(\ln n)^{2-\alpha}},
\end{align}
where we have used Lemma \ref{lemma.poissontail} to bound $\mathbb{P}(nY\le 2n\Delta)$. The variance is then upper bounded by
\begin{align}
 & \mathsf{Var}(\xi) \nonumber \\
 &\le \mathsf{Var}(U_\alpha(X)) \nonumber \\
 & \quad + \left(\mathsf{Var}(L_\alpha(X))+(\mathbb{E}L_\alpha(X)-\mathbb{E}U_\alpha(X))^2\right)\mathbb{P}(nY\le 2n\Delta)\\
  &\le \frac{202p}{n} + \frac{8}{n^2} + \frac{28800}{\alpha^2}p^{2\alpha} n^{-c_1/4} \nonumber \\
  & \quad + \frac{120}{\alpha}p^{2\alpha} n^{-c_1/8} + 2\mathbb{P}(nY\le 2n\Delta)\\
  &\le \frac{202p}{n} + \frac{8}{n^2} + \frac{28800}{\alpha^2}p^{2\alpha} n^{-c_1/4} \nonumber \\
  & \quad + \frac{120}{\alpha}p^{2\alpha} n^{-c_1/8} + 2n^{-c_1/2}\\
  &\lesssim \frac{p}{n} + \frac{1}{n^2},
\end{align}
where we have used Lemma \ref{lemma.largebias}.

\end{enumerate}

\subsection{Proof of Lemma~\ref{lemma.priorconstruct}}

The existence of the two prior distributions $\nu_0$ and $\nu_1$ follows directly from a standard functional analysis argument proposed by Lepski, Nemirovski, and Spokoiny \cite{Lepski--Nemirovski--Spokoiny1999estimation}, and elaborated in best polynomial approximation by Cai and Low \cite[Lemma 1]{Cai--Low2011}. It suffices to replace the interval $[-1,1]$ with $[0,1]$ and the function $|x|$ with $x^\alpha$ in the proof of Lemma 1 in \cite{Cai--Low2011}.

\subsection{Proof of Lemma~\ref{lemma.lowermeanvar}}

We compute the difference of the expectations as follows,
\begin{align}
& \bE_{\mu_1^{S'}} \underline{F_\alpha}(P) - \bE_{\mu_0^{S'}} \underline{F_\alpha}(P) \nonumber \\
&= \sum_{i = 1}^{S'} \left( \bE_{\mu_1} p_i^\alpha - \bE_{\mu_0} p_i^\alpha \right) \\
& = 2S'M^\alpha E_L[x^\alpha]_{[0,1]} \\
& = 2 \left( \frac{\alpha}{c} \right)^\alpha n^\alpha (\ln n)^\alpha \left( \frac{d_1 \ln n}{n} \right)^\alpha \frac{\mu(2\alpha)}{2^{2\alpha}} \frac{1}{(d_2 \ln n)^{2\alpha}} (1 + o(1)) \\
& = 2 \left( \frac{\alpha}{c} \right)^\alpha \frac{\mu(2\alpha) d_1^\alpha}{(2 d_2)^{2\alpha}}(1 + o(1)),
\end{align}
where $\mu(2\alpha)$ is the constant given by Lemma~\ref{lemma.nonasympxa}.

We also have bounds for the variance:
\begin{align}
\mathsf{Var}_{\mu_j^{S'}}(\underline{F_\alpha}(P)) & = \bE_{\mu_j^{S'}} \left( \underline{F_\alpha}(P) - \bE_{\mu_j^{S'}} \underline{F_\alpha}(P) \right)^2 \\
& = \sum_{i = 1}^{S'} \bE_{\mu_j}\left(  p_i^\alpha - \bE_{\mu_j} p_i^\alpha\right)^2 \\
& \leq S' \bE_{\mu_j} p_i^{2\alpha} \\
& \leq S' M^{2\alpha} \\
& \leq \left( \frac{\alpha d_1^2}{c} \right)^\alpha \frac{(\ln n)^{3\alpha}}{n^\alpha}   \to 0, \quad j = 0,1.
\end{align}

%

For any integer $y \geq 0$,
\begin{align}
 F_{1,M}(y) - F_{0,M}(y) & = \int \frac{e^{-np}(np)^y}{y!} \left( \mu_1(dp)- \mu_0(dp)\right)\\
 & = \int \sum_{i = 0}^\infty \frac{(-1)^i (np)^{i+y}}{i!y!}\left( \mu_1(dp)- \mu_0(dp)\right),
\end{align}
where we have used the Taylor expansion of $e^{-x}$.

Now we proceed to bound the total variation distance between the marginal distributions under two priors $\mu_0,\mu_1$.

\begin{align}
& \sum_{y = 0}^\infty |F_{1,M}(y) - F_{0,M}(y)| \nonumber \\
 & = \sum_{y = 0}^{\frac{d_2 \ln n}{2}}  |F_{1,M}(y) - F_{0,M}(y)| + \sum_{y > \frac{d_2 \ln n}{2}}  |F_{1,M}(y) - F_{0,M}(y)| \\
& \triangleq D_1 + D_2.
\end{align}

Note that we take $d_1 = 1, d_2 = 10e$ in the assumption. We bound $D_2$ in the following way:
\begin{align}
D_2 & = \sum_{y > \frac{d_2 \ln n}{2}} \left|\int \frac{e^{-np}(np)^y}{y!} \left( \mu_1(dp)- \mu_0(dp)\right)\right| \\
& \leq \int\bP\left(\mathsf{Poi}(np) > \frac{d_2 \ln n}{2}\right)\left( \mu_1(dp)+ \mu_0(dp)\right) \\
& \leq 2\cdot\bP\left (\mathsf{Poi}(d_1 \ln n) > \frac{d_2 \ln n}{2}\right) \\
& \leq 2\cdot \left( \frac{e^{5e-1}}{(5e)^{5e}} \right)^{\ln n} \\
& = \frac{2}{n^{1+5e\ln 5}} \leq \frac{2}{n^{22}},
\end{align}
where in the fourth step we have applied Lemma~\ref{lemma.poissontail}.

We bound $D_1$ as follows:
\begin{align}
D_1 & = \sum_{y = 0}^{\frac{d_2 \ln n}{2}}  |F_{1,M}(y) - F_{0,M}(y)| \\
& = \sum_{y = 0}^{\frac{d_2 \ln n}{2}} \frac{1}{y!} \left | \sum_{i = 0}^\infty \int \frac{(-1)^i (np)^{i+y}}{i!} \left( \mu_1(dp)- \mu_0(dp)\right) \right|\\
& = \sum_{y = 0}^{\frac{d_2 \ln n}{2}} \frac{1}{y!} \left | \sum_{i > d_2 \ln n -y} \int \frac{(-1)^i (np)^{i+y}}{i!} \left( \mu_1(dp)- \mu_0(dp)\right) \right| \\
& \leq  \sum_{y = 0}^{\frac{d_2 \ln n}{2}} \frac{1}{y!}  \sum_{i > d_2 \ln n -y} \frac{ (d_1 \ln n)^{i+y}}{i!}  \\
& = \sum_{y = 0}^{\frac{d_2 \ln n}{2}} \frac{(d_1 \ln n)^y}{y!}  \sum_{i > d_2 \ln n -y} \frac{ (d_1 \ln n)^{i}}{i!},
\end{align}
where in the third step we have used the fact that $\mu_1$ and $\mu_0$ have matching moments up to order $d_2 \ln n$. The Lagrangian remainder for Taylor series of $e^x, x>0$ shows that
\begin{equation}
\sum_{j> m} \frac{x^j}{j!}  = e^{\xi} \frac{x^{m+1}}{(m+1)!},
\end{equation}
where $0\leq \xi \leq x$. Applying this result, we have
\begin{align}
D_1 & \leq \sum_{y = 0}^{\frac{d_2 \ln n}{2}} \frac{(d_1 \ln n)^y}{y!}  \sum_{i > d_2 \ln n -y} \frac{ (d_1 \ln n)^{i}}{i!} \\
& \leq \sum_{y = 0}^{\frac{d_2 \ln n}{2}} \frac{(d_1 \ln n)^y}{y!}  e^{d_1 \ln n} \frac{(d_1 \ln n)^{d_2 \ln n -y+1}}{(d_2 \ln n -y +1)!} \\
& \leq \sum_{y = 0}^{\frac{d_2 \ln n}{2}} \frac{(d_1 \ln n)^y}{y!}  e^{d_1 \ln n} \frac{(d_1 \ln n)^{d_2 \ln n -y+1} e^{d_2 \ln n -y + 1}}{(d_2 \ln n -y +1)^{d_2 \ln n -y + 1}} \\
& \leq  e^{d_1 \ln n + d_2 \ln n + 1} \left( \frac{d_1 \ln n}{\frac{d_2 \ln n}{2} + 1} \right)^{\frac{d_2 \ln n}{2}+1} \sum_{y = 0}^{\frac{d_2 \ln n}{2}} \frac{(d_1 \ln n)^y}{y!} \\
& \leq e^{2d_1 \ln n + d_2 \ln n + 1} \left( \frac{d_1 \ln n}{\frac{d_2 \ln n}{2} + 1} \right)^{\frac{d_2 \ln n}{2}+1}
\end{align}
where in the third step we have used the fact that $n! \geq \left( \frac{n}{e} \right)^n$.

Plugging $d_1 = 1, d_2 = 10 e$ in, we have
\begin{equation}
D_1 \leq \frac{1}{5 n^6}.
\end{equation}

Combining bounds on $D_1$ and $D_2$ together, we have
\begin{align}
V(F_{1,M}, F_{0,M})  & = \frac{1}{2}\sum_{y = 0}^\infty |F_{1,M}(y) - F_{0,M}(y)| \\
&  \leq \min\left\{\frac{1}{2}(D_1 + D_2),1\right\} \\
& \leq \frac{1}{n^6}.
\end{align}

\subsection{Proof of Lemma \ref{lem_conti}}
For $\eta\in(0,1/2)$, define
\begin{align}
  f_\eta(x) = \left(\frac{1-\eta}{2}x+\frac{1+\eta}{2}\right)^\beta,
\end{align}
then $E_L[f_\eta]_{[-1,1]}=E_L[x^\beta]_{[\eta,1]}$. For $\varphi(x)=\sqrt{1-x^2}$, denote the second-order Ditzian-Totik modulus of smoothness by
\begin{align}
  \omega_\varphi^2(f,t) & \triangleq \sup\Big\{\Big|f(u)+f(v)-2f\Big(\frac{u+v}{2}\Big)\Big|:\nonumber \\
  & \quad u,v\in[-1,1],|u-v|\le 2t\varphi\Big(\frac{u+v}{2}\Big)\Big\},
\end{align}
then it is straightforward to obtain that for all $n\le(4\eta)^{\frac{\beta}{2}-1}$, 
\begin{align}\label{eq:modulus_smoo}
  \omega_\varphi^2(f_\eta,n^{-1}) & = \left|\eta^\beta + \left(\eta+\frac{2(1-\eta)}{n^2+1}\right)^\beta - 2\left(\eta+\frac{1-\eta}{n^2+1}\right)^\beta\right|.
\end{align}
It follows directly from (\ref{eq:modulus_smoo}) that when $n\le \min\{\frac{1}{\sqrt{\eta}},(4\eta)^{\frac{\beta}{2}-1}\}$, 
\begin{align}
  \frac{2\cdot 5^\beta-4^\beta-6^\beta}{(2n)^{2\beta}}\le \omega_\varphi^2(f_\eta,n^{-1}) \le \frac{1}{n^{2\beta}}. 
\end{align}

The relationship between $\omega_\varphi^2(f_\eta,n^{-1})$ and $E_n[f_\eta]_{[-1,1]}$ was shown in \cite[Thm. 7.2.1, Thm. 7.2.4]{Ditzian--Totik1987} that there exists two universal positive constants $M_1,M_2$ such that
\begin{align}\label{eq:appro_direct}
  E_n[f_\eta]_{[-1,1]} &\le M_1\omega_\varphi^2(f_\eta,n^{-1}),\\
  \frac{1}{n^2}\sum_{k=0}^n (k+1)E_k[f_\eta]_{[-1,1]} &\ge M_2\omega_\varphi^2(f_\eta,n^{-1}).\label{eq:appro_converse}
\end{align}
Applying (\ref{eq:appro_direct}) and (\ref{eq:appro_converse}) and setting the approximation order $N=DL$ with a positive constant $D>1$ to be specified later, then given $\eta=1/N^2$, the non-increasing property of $E_n[f_\eta]_{[-1,1]}$ with respect to $n$ yields that
\begin{align}
 & E_L[f_\eta]_{[-1,1]} \nonumber \\
  &\ge \frac{1}{N-L}\sum_{k=L+1}^N E_k[f_\eta]_{[-1,1]} \\
  &\ge \frac{1}{N^2}\sum_{k=L+1}^N (k+1)E_k[f_\eta]_{[-1,1]}\\
&\ge M_2\omega_\varphi^2(f_\eta,N^{-1}) - \frac{E_0[f_\eta]_{[-1,1]}}{N^2} \nonumber \\
& \quad - \frac{1}{N^2}\sum_{k=1}^L (k+1)E_k[f_\eta]_{[-1,1]}\\
&\ge \frac{M_2(2\cdot 5^\beta-4^\beta-6^\beta)}{(2N)^{2\beta}} - \frac{1}{N^2} - \frac{2M_1}{DN}\sum_{k=1}^L \omega_\varphi^2(f_\eta,k^{-1})\\
&\ge \frac{M_2(2\cdot 5^\beta-4^\beta-6^\beta)}{(2DL)^{2\beta}} - \frac{1}{(DL)^2} - \frac{2M_1}{DN}\sum_{k=1}^L \frac{1}{k^{2\beta}}\\
&\ge \frac{M_2(2\cdot 5^\beta-4^\beta-6^\beta)}{(2DL)^{2\beta}} - \frac{1}{D^2L^{2\beta}} - \frac{2M_1}{D^2L}\int_0^L \frac{dx}{x^{2\beta}}\\
&= L^{-2\beta}\left[\frac{M_2(2\cdot 5^\beta-4^\beta-6^\beta)}{(2D)^{2\beta}} - \frac{1}{D^2} - \frac{M_1}{D^2(1-2\beta)}\right].
\end{align}
Due to $0<2\beta<1$, for a sufficiently large universal constant $D$ we can obtain that
\begin{align}
  \liminf_{L\to\infty} L^{2\beta}E_L[x^\beta]_{[\eta,1]} = \liminf_{L\to\infty} L^{2\beta}E_L[f_\eta]_{[-1,1]} > 0.
\end{align}

\subsection{Proof of Lemma \ref{lem_equiv}}
Fix $\delta>0$. Let $\hat{F}({\bf Z})$ be a near-minimax estimator of $F_\alpha(P)$ under the Multinomial model. The estimator $\hat{F}({\bf Z})$ obtains the number of samples $n$ from observation $\bf Z$. By definition, we have
  \begin{align}
    \sup_{P\in \mathcal{M}_S} \mathbb{E}_{\mathrm{Multinomial}}|\hat{F}(\mathbf{Z})-F_\alpha(P)|^2 < R(S,n)+\delta,
  \end{align}
  where $R(S,n)$ is the minimax $L_2$ risk under the Multinomial model. Given $P\in\mathcal{M}_S(\gamma)$, let $\mathbf{Z}=[Z_1,\cdots,Z_S]^T$ with $Z_i\sim \mathsf{Poi}(np_i)$ and let $n'=\sum_{i=1}^S Z_i\sim \mathsf{Poi}(n\sum_{i=1}^Sp_i)$, we use the estimator $d_1^\alpha\cdot\hat{F}(\mathbf{Z})$ to estimate $F_\alpha(P)$.

The triangle inequality gives
  \begin{align}
   & \frac{1}{2}R_P(S,n,\gamma) \nonumber \\
    &\le \frac{1}{2}\mathbb{E}_P|d_1^\alpha\cdot\hat{F}({\bf Z})-F_\alpha(P)|^2 \\
    &\le \mathbb{E}_P\left|d_1^\alpha\cdot\hat{F}({\bf Z})-d_1^\alpha F_\alpha\left(\frac{P}{\sum_{i=1}^S p_i}\right)\right|^2 \nonumber \\
    & \quad + \left|\left(\sum_{i=1}^Sp_i\right)^{\alpha}-d_1^\alpha\right|^2F_\alpha^2\left(\frac{P}{\sum_{i=1}^S p_i}\right)\\
    &\le d_1^{2\alpha}\mathbb{E}_P\left[\left.\left|\hat{F}(\mathbf{Z})-F_\alpha\left(\frac{P}{\sum_{i=1}^S p_i}\right)\right|^2\right|n'=m\right]\mathbb{P}(n'=m) \nonumber \\
    & \quad +\frac{4d_1^{2\alpha-2}}{(\ln n)^{2\gamma}}\cdot \left(\frac{M}{d_1}\right)^{2\alpha-2}\\
    &\le d_1^{2\alpha}\sum_{m=0}^\infty R(S,m)\mathbb{P}(n'=m) + \delta + \frac{4M^{2\alpha-2}}{(\ln n)^{2\gamma}}\\
    &\le d_1^{2\alpha}R(S,\frac{d_1n}{2})\mathbb{P}(n'\ge \frac{d_1n}{2}) + \mathbb{P}(n'\le \frac{d_1n}{2}) + \delta +\frac{4M^{2\alpha-2}}{(\ln n)^{2\gamma}}\\
    &\le d_1^{2\alpha}R(S,\frac{d_1n}{2})+ \exp(-\frac{d_1n}{8}) + \delta + \frac{4M^{2\alpha-2}}{(\ln n)^{2\gamma}},
  \end{align}
  where we have used the fact that conditioned on $n'=m$, $\mathbf{Z}\sim \mathsf{Multinomial}(m,\frac{P}{\sum_ip_i})$, and the last step follows from Lemma~\ref{lemma.poissontail}. The proof is completed by the arbitrariness of $\delta$.

\section{Proof of auxiliary lemmas}

\subsection{Proof of Lemma~\ref{lemma.gxa}}

We obtain the polynomial $g(x;a)$ via the Hermite interpolation formula. Concretely, the following $\mathsf{WolframAlpha}$ (\url{http://www.wolframalpha.com/}) command will give us $g(x;a)$:

\textsf{InterpolatingPolynomial[\{\{\{0\}, 0, 0, 0, 0, 0\}, \{\{a\}, 1, 0, 0, 0, 0\}\}, x]}.


\subsection{Proof of Lemma~\ref{lemma.varentropy}}
For brevity, denote $\mathsf{Var}(-\ln P(X))$ as $V(P)$, we have
\begin{equation}
V(P) = \sum_{i = 1}^S p_i (\ln p_i)^2 - \left( \sum_{i = 1}^S p_i \ln p_i \right)^2.
\end{equation}

We note that
\begin{equation}
V(P) \leq \sum_{i = 1}^S p_i (\ln p_i)^2 \leq \sum_{i = 1}^S p_i (\ln p_i -1)^2.
\end{equation}

The function $x(\ln x-1)^2$ on $[0,1]$ has second derivative $\frac{2\ln x}{x}$, hence is concave. Thus, the expression $\sum_{i = 1}^S p_i (\ln p_i -1)^2$ attains its maximum when $P$ is uniform distribution. In other words, we have shown that
\begin{equation}
V(P) \leq (\ln S+1)^2.
\end{equation}

A tighter bound which gives better constant can be constructed as follows. We define the Lagrangian:
\begin{equation}
\cL = \sum_{i = 1}^S p_i (\ln p_i)^2 - \left( \sum_{i = 1}^S p_i \ln p_i \right)^2 + \lambda \left( \sum_{i = 1}^S p_i -1 \right).
\end{equation}

Taking derivatives with respect to $p_i$, we obtain
\begin{align}
\frac{\partial \cL}{\partial p_i} & = (\ln p_i)^2 + p_i (2 \ln p_i) \times \frac{1}{p_i} - 2 \left( \sum_{i = 1}^S p_i \ln p_i \right)(1+\ln p_i) + \lambda \label{eqn.dev0} \\
& = 0, \quad \text{for all } i.
\end{align}

It is equivalent to
\begin{equation}
(\ln p_i)^2 + 2 \ln p_i + 2 H(P)(1 + \ln p_i) + \lambda = 0,\quad \forall i
\end{equation}

Note that it is a quadratic form for $\ln p_i$ with the same coefficients. Solving for $\ln p_i$, we obtain that
\begin{equation}
\ln p_i = -(1+H(P))\pm \sqrt{ 1 + H^2(P) - \lambda}.
\end{equation}

It implies that components of the maximum achieving distribution can only take two values. Assume $p_i \in \{q_1,q_2\}, \forall i$. Suppose $q_1$ appears $k$ times, we have
\begin{equation}
kq_1  + (S-k) q_2 = 1.
\end{equation}

Now we compute the functional
\begin{align}
& V(P)\nonumber \\
 & = k q_1 (\ln q_1)^2 + (S-k) q_2 (\ln q_2)^2 - \left( k q_1 \ln q_1 + (S-k) q_2 \ln q_2 \right)^2 \\
& = k q_1 (\ln q_1)^2 + (1-kq_1) (\ln q_2)^2 - k^2 q_1^2(\ln q_1)^2 \nonumber \\
& \quad - (1-k q_1)^2 (\ln q_2)^2 - 2k q_1 (1- kq_1) (\ln q_1) (\ln q_2) \\
& = k q_1 (1-kq_1) \left( \ln \frac{q_2}{q_1} \right)^2.
\end{align}

Since $q_2 = \frac{1-kq_1}{S-k}$, we have
\begin{equation}
V(P) =  k q_1 (1-kq_1)  \left( \ln \frac{1-k q_1}{S q_1 - k q_1} \right)^2.
\end{equation}

Denote $x = kq_1, y = k/S$, we have
\begin{equation}
V(P) = x(1-x) \left( \ln \frac{1-x}{x} - \ln \frac{1-y}{y} \right)^2.
\end{equation}

Fixing $x$, we see $V(P)$ is a monotone function of $y$. Without loss of generality, by symmetry we assume $x \leq 1/2$. Then, the maximum achieving $y = \frac{S-1}{S}$, and $V(P)$ as a function of $x$ is
\begin{equation}
V(P) = x(1-x) \left(\ln \frac{1-x}{x} + \ln(S-1) \right)^2.
\end{equation}

Taking derivatives with respect to $x$, ignoring the minimum achieving $x$, we obtain the following equation for maximum achieving value of $x$, which is denoted as $x_1$:
\begin{equation}
(1-2x_1) \left( \ln \left( \frac{1}{x_1} - 1 \right) + \ln(S-1)\right) = 2.
\end{equation}

Denoting $S-1$ by $m$, we have
\begin{equation}
(1-2x_1) \ln \frac{m(1-x_1)}{x_1} = 2,
\end{equation}
which is equivalent to
\begin{equation}
(2-2x_1) \ln \frac{m(1-x_1)}{x_1} = 2 + \ln \frac{m(1-x_1)}{x_1}.
\end{equation}

Multiplying both sides by $\frac{x_1}{2} \ln \frac{m(1-x_1)}{x_1}$, we obtain
\begin{align}
V_{\max} & = x_1(1-x_1)\left( \ln \frac{m(1-x_1)}{x_1} \right)^2 \\
&  = x_1 \left( \ln \frac{m(1-x_1)}{x_1} + \frac{1}{2} \left(\ln \frac{m(1-x_1)}{x_1} \right)^2 \right).
\end{align}

Note that for $x\in (0,1/2]$,
\begin{equation}
\ln \frac{m(1-x)}{x} \in [\ln m, \infty),
\end{equation}
and if $S\geq 4$, we have $\ln m = \ln (S-1) >1$.

Using the bound $z \leq z^2, z \geq 1$, we have
\begin{equation}\label{eqn.vupper}
V_{\max} \leq \frac{3}{2} x_1 \left(\ln \frac{m(1-x_1)}{x_1} \right)^2.
\end{equation}

Taking derivatives with respect to $z$ for $ z \left(\ln \frac{m(1-z)}{z} \right)^2, z\in (0,1/2]$, we have
\begin{align}
& \frac{d}{dz}\left( z \left(\ln \frac{m(1-z)}{z} \right)^2 \right) \nonumber \\
& = \ln \left( \frac{m(1-z)}{z} \right) \times \left( \ln \frac{m(1-z)}{z} + \frac{2}{z-1} \right) \\
& \geq \ln \left( \frac{m(1-z)}{z} \right) \times \left( \ln m - 4 \right),
\end{align}
which is always nonnegative if $m \geq e^4$, i.e., $S\geq 56$. Hence, we know that when $S\geq 56$, the function $z \left(\ln \frac{m(1-z)}{z} \right)^2$ is an increasing function of $z$ for $z\in (0,1/2]$, thus achieves its maximum at $z = 1/2$.

Then, we obtain
\begin{equation}
V_{\max} \leq \frac{3}{4}(\ln m)^2 \leq \frac{3}{4} (\ln S)^2.
\end{equation}

\subsection{Proof of Lemma~\ref{lemma.poissonmultinomial}}

Denote by $\hat{F}_{n}^P, \hat{F}_n$ the estimator for $F(P)$ under the Poissonized model and the Multinomial model with sample size $n$, respectively. By the minimax theorem~\cite{Wald1950statistical}, the minimax risk is the supremum of Bayes risk under all priors, i.e.,
\begin{align}
  R_P(S,n) &= \sup_\pi \inf_{\hat{F}_n^P} \int \bE_P|\hat{F}_n^P(\mathbf{Z}) - F(P)|^2 \pi(dP) \\
  &  \triangleq \sup_\pi R_B^P(S,n,\pi)\\
  R(S,n) &= \sup_\pi \inf_{\hat{F}_n} \int \bE_P|\hat{F}_n(\mathbf{Z}) - F(P)|^2 \pi(dP) \\
  & \triangleq \sup_\pi R_B(S,n,\pi)
\end{align}
where the supremum is taken over all priors on $\mathcal{M}_S$, and we denote the Bayes risk under prior $\pi$ in Poissonized and Multinomial models by $R_B^P(S,n,\pi)$ and $R_B(S,n,\pi)$, respectively. Using the property that independent $Z_i\sim \mathsf{Poi}(np_i), 1\le i\le S$ implies $(Z_1,\cdots,Z_S)|(\sum_{i=1}^S Z_i=n)\sim \mathsf{Multinomial}(n;p_1,\cdots,p_S)$, it is straightforward to show that for any prior $\pi$,
\begin{align}
  R_B^P(S,n,\pi) = \sum_{m=0}^\infty R_B(S,m,\pi)\bP\left(\mathsf{Poi}(n)=m\right).
\end{align}

Since for $n>m$, the Bayes estimator with sample size $m$ can be used for estimation with sample size $n$ by neglecting the last $n-m$ samples, we conclude that the function $R_B(S,n,\pi)$ is non-increasing in $n$. Moreover, it is obvious that $R(S,n,\pi)\le \sup_{P\in\mathcal{M}_S} |F(P)|^2$ by considering the zero estimator. Then we have
\begin{align}
& R_B^P(S,2n,\pi) \nonumber \\
&= \sum_{m=0}^\infty R_B(S,m,\pi)\bP\left(\mathsf{Poi}(2n)=m\right) \\
&= \sum_{m=0}^{n-1} R_B(S,m,\pi)\bP\left(\mathsf{Poi}(2n)=m\right) \nonumber \\
& \quad + \sum_{m=n}^\infty R_B(S,m,\pi)\bP\left(\mathsf{Poi}(2n)=m\right) \\
&\le \sup_{P\in\mathcal{M}_S} |F(P)|^2 \cdot \bP\left(\mathsf{Poi}(2n)<n\right) \nonumber \\
& \quad + R_B(S,n,\pi)\sum_{m=n}^\infty \bP\left(\mathsf{Poi}(2n)=m\right) \\
&\le \sup_{P\in\mathcal{M}_S} |F(P)|^2\cdot e^{-n/4} + R_B(S,n,\pi)
\end{align}
and
\begin{align}
R_B^P(S,n/2,\pi) & = \sum_{m=0}^\infty R_B(S,m,\pi)\bP\left(\mathsf{Poi}(n/2)=m\right) \\
& \ge \sum_{m=0}^n R_B(S,m,\pi)\bP\left(\mathsf{Poi}(n/2)=m\right) \\
& \geq R_B(S,n,\pi) \bP(\mathsf{Poi}(n/2) \leq n) \\
& \geq \frac{1}{2}R_B(S,n,\pi),
\end{align}
where we have applied Lemma~\ref{lemma.poissontail} to show that $\bP\left(\mathsf{Poi}(2n)<n\right) \le e^{-n/4}$ and the Markov inequality to show that $\bP(\mathsf{Poi}(n/2) \leq n)\ge 1/2$. Then we are done by taking the supremum over all prior $\pi$ at both sides of these two inequalities.

\subsection{Proof of Lemma~\ref{lemma.nonasympxa}}

We first show the limiting result. Defining $y^2 = x$, we know
\begin{equation}
E_n[x^\alpha]_{[0,1]} = E_{2n}[y^{2\alpha}]_{[-1,1]}.
\end{equation}
Applying Theorem~\ref{thm.bernsteinxp} to our settings, for any $\alpha>0$ we have
\begin{align}
\lim_{n\to \infty} n^{2\alpha} E_n[x^\alpha]_{[0,1]} & = \lim_{n\to \infty} n^{2\alpha} E_{2n}[y^{2\alpha}]_{[-1,1]} \\
& = \frac{1}{2^{2\alpha}} \lim_{n\to \infty} (2n)^{2\alpha}E_{2n}[y^{2\alpha}]_{[-1,1]} \\
& = \frac{\mu(2\alpha)}{2^{2\alpha}} .
\end{align}

Korneichuk \cite[Sec. 6.2.5]{Korneichuk1991} showed the inequality
\begin{equation}
E_{n}[f] \leq \omega\left(f, \frac{\pi}{n+1}\right)
\end{equation}
for all $f \in C[-1,1]$, where $\omega(f,\delta)$ is the first order modulus of smoothness, defined as
\begin{equation}
\omega(f,\delta) \triangleq \sup \{ |f(x) - f(x+\delta)|: x \in [-1,1], x+\delta \in [-1,1] \}.
\end{equation}

Bernstein \cite[Pg. 171]{Bernstein1958collected} showed
\begin{equation}
E_{n+1}[f(x)] \leq \frac{\pi}{2(n+1)}E_n[f'(x)],
\end{equation}
where $f\in C^1[-1,1]$.

For $0<\alpha \leq 1/2$, $\omega(y^{2\alpha}, \delta) \leq \delta^{2\alpha}, \delta\leq 2$, hence we know
\begin{equation}\label{eqn.useappendixlemma9}
E_n[x^\alpha]_{[0,1]} = E_{2n}[y^{2\alpha}]_{[-1,1]} \leq \left( \frac{\pi}{2n} \right)^{2\alpha}.
\end{equation}

For $1/2 < \alpha <1$, noting that
\begin{equation}
\left( (y^2)^\alpha \right)' = 2 \alpha y (y^2)^{\alpha-1},
\end{equation}
and that
\begin{equation}
\omega( 2 \alpha y (y^2)^{\alpha-1}, \delta) \leq 2 \alpha \delta^{2\alpha-1}, \quad \delta\leq 2,
\end{equation}
we know for $1/2 <\alpha<1$,
\begin{align}
E_{2n}[y^{2\alpha}]_{[-1,1]} & \leq \frac{\pi}{2(2n)} E_{2n-1}[2 \alpha y (y^2)^{\alpha-1}]_{[-1,1]} \\
&  \leq 2 \alpha\frac{\pi}{4n} \left( \frac{\pi}{2n} \right)^{2\alpha-1} \\
& = \alpha \left( \frac{\pi}{2n} \right)^{2\alpha} \\
& \leq \left( \frac{\pi}{2n} \right)^{2\alpha}.
\end{align}
Plugging in $x=0$ yields $|g_{0,\alpha}|<(\pi/2n)^{2\alpha}$, hence
\begin{align}
   \max_{0\le x\le 1}|R_{n,\alpha}(x) - x^\alpha| \le E_n[x^\alpha]_{[0,1]} + |g_{0,\alpha}| \le 2\left(\frac{\pi}{2n}\right)^{2\alpha}.
\end{align}

%
%

For $1<\alpha<3/2$, by defining $y^2=x$ we know that
\begin{align}
  E_n[x^\alpha]_{[0,1]} & = E_{2n}[y^{2\alpha}]_{[-1,1]} \\
  & \le \frac{\pi}{4n}E_{2n-1}[2\alpha y^{2\alpha-1}]_{[-1,1]} \\
  &  \le \frac{\pi^2}{8n(2n-1)}E_{2n-2}[2\alpha(2\alpha-1) y^{2\alpha-2}]_{[-1,1]}
\end{align}
and
\begin{align}
  \omega\left(2\alpha(2\alpha-1)y^{2\alpha-2},\frac{\pi}{2n-1}\right) = \frac{2\alpha(2\alpha-1)\pi^{2\alpha-2}}{(2n-1)^{2\alpha-2}}.
\end{align}
Hence we have
\begin{align}
  E_n[x^\alpha]_{[0,1]} & \le \frac{\pi^2}{8n(2n-1)}\cdot\frac{2\alpha(2\alpha-1)\pi^{2\alpha-2}}{(2n-1)^{2\alpha-2}} \\
  &  < \frac{\alpha(2\alpha-1)}{2}\left(\frac{\pi}{2n-1}\right)^{2\alpha} \\
  &  < \frac{3}{2}\left(\frac{\pi}{n}\right)^{2\alpha}.
\end{align}
Plugging in $x=0$ yields $|g_{0,\alpha}|<\frac{3}{2}(\pi/n)^{2\alpha}$, hence
\begin{align}
   \max_{0\le x\le 1}|R_{n,\alpha}(x) - x^\alpha| \le E_n[x^\alpha]_{[0,1]} + |g_{0,\alpha}| \le 3\left(\frac{\pi}{n}\right)^{2\alpha}.
\end{align}

Moreover, it has been shown in \cite[Pg. 207]{Devore--Lorentz1993} that
\begin{align}\label{eq:derivative_bound}
  \max_{0\le x\le 1}|R_{n,\alpha}'(x)-(x^\alpha)'| \le D \cdot E_n[(x^\alpha)']_{[0,1]} \le D\alpha \left(\frac{\pi}{2n}\right)^{2(\alpha-1)},
\end{align}
where $D>0$ is a positive universal constant, and the last inequality follows directly from (\ref{eqn.useappendixlemma9}). Then integrating on $x$ yields the pointwise bound
\begin{align}
  \left|R_{n,\alpha}(x)-x^\alpha\right| 
  & \le \int_0^x \left|R_{n,\alpha}'(t)-(t^\alpha)'\right|dt \\
  & \le \int_0^xD\alpha \left(\frac{\pi}{2n}\right)^{2(\alpha-1)}dt \\
  & = D\alpha \left(\frac{\pi}{2n}\right)^{2(\alpha-1)}x \\
  &  \triangleq \frac{D_1x}{n^{2(\alpha-1)}}.
\end{align}

In order to bound the coefficients of best polynomial approximations, we need the following result by Qazi and Rahman\cite[Thm. E]{Qazi--Rahman2007} on the maximal coefficients of polynomials on a finite interval.

\begin{lemma}\label{lemma.chebyshev}
Let $p_n(x) = \sum_{\nu=0}^n a_\nu x^\nu$ be a polynomial of degree at most $n$ such that $|p_n(x)|\leq 1$ for $x\in [-1,1]$. Then, $|a_{n-2\mu}|$ is bounded above by the modulus of the corresponding coefficient of $T_n$ for $\mu = 0,1,\ldots,\lfloor n/2 \rfloor$, and $|a_{n-1-2\mu}|$ is bounded above by the modulus of the corresponding coefficient of $T_{n-1}$ for $\mu = 0,1,\ldots,\lfloor (n-1)/2 \rfloor$. Here $T_n(x)$ is the $n$-th Chebyshev polynomials of the first kind. 
\end{lemma}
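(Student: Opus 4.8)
\emph{Proof proposal.} The statement is classical (it is Theorem~E of \cite{Qazi--Rahman2007}); the route I would take splits it into the ``even'' case $n-k$ even and the ``odd'' case $n-k$ odd, reducing the latter to the former by symmetrization. Observe that the first assertion concerns $k=n-2\mu$, so $n-k$ is even, while the second concerns $k=n-1-2\mu$, so $n-k$ is odd and $(n-1)-k$ is even. For the odd case, suppose first that $n$ is even (hence $k$ is odd) and set $\tilde p(x)=\tfrac12\bigl(p_n(x)-p_n(-x)\bigr)$: this is an odd polynomial, hence of degree at most $n-1$, it still satisfies $\|\tilde p\|_{[-1,1]}\le1$, and its coefficient of $x^k$ equals $a_k$. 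Applying the even case to $\tilde p$ with degree bound $n-1$ then gives $|a_k|\le|[x^k]T_{n-1}|$, as required; the sub-case $n$ odd (hence $k$ even) is handled identically using the even part $\tfrac12\bigl(p_n(x)+p_n(-x)\bigr)$, again of degree at most $n-1$. Thus it suffices to prove: if $\deg q\le N$, $\|q\|_{[-1,1]}\le1$ and $N-k$ is even, then $|[x^k]q|\le|[x^k]T_N|$.

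For this even case I would use Lagrange interpolation at the extrema $x_j=\cos(j\pi/N)$, $j=0,\dots,N$, of $T_N$. Writing $\ell_0,\dots,\ell_N$ for the Lagrange basis at these nodes and $c_{j\nu}=[x^\nu]\ell_j$, interpolation gives $q=\sum_j q(x_j)\ell_j$, hence $[x^k]q=\sum_j q(x_j)c_{jk}$ and $|[x^k]q|\le\sum_j|c_{jk}|$ since $|q(x_j)|\le1$. The crux is then the identity
\[
\sum_{j=0}^{N}|c_{jk}| \;=\; \bigl|[x^k]T_N\bigr|\qquad(N-k\ \text{even}).
\]
Because $T_N(x_j)=(-1)^j$ and $T_N$ is itself the degree-$\le N$ interpolant of the data $\bigl((-1)^j\bigr)_j$, one has $[x^k]T_N=\sum_j(-1)^j c_{jk}$, so the identity is equivalent to the claim that $\operatorname{sgn}(c_{jk})=\varepsilon\,(-1)^j$ for a single sign $\varepsilon$ independent of $j$ (whenever $c_{jk}\neq0$). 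Writing $\omega(x)=\prod_i(x-x_i)$ we have $\ell_j(x)=\omega(x)/\bigl((x-x_j)\,\omega'(x_j)\bigr)$, and since $x_0>x_1>\dots>x_N$ we get $\operatorname{sgn}\omega'(x_j)=(-1)^j$; hence the claim reduces to showing that $\operatorname{sgn}\bigl([x^k]\,\tfrac{\omega(x)}{x-x_j}\bigr)$ does not depend on $j$.

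The main obstacle is precisely this sign-stability statement, and it is where the parity of $N-k$ enters (mirroring the fact that $[x^k]T_N=0$ exactly when $N-k$ is odd). The coefficient in question equals $(-1)^{N-k}e_{N-k}\bigl(\{x_i:i\neq j\}\bigr)$, an elementary symmetric function of the Chebyshev extrema with one node deleted. The key structural point is that the full node set is symmetric about the origin (a union of pairs $\pm\cos(l\pi/N)$, together with $0$ when $N$ is even), so deleting $x_j$ leaves a set that is again a union of $\pm$ pairs plus at most two exceptional points; thus $\prod_{i\neq j}(x-x_i)=R(x)Q(x)$ with $Q$ an even polynomial (so all its odd-degree coefficients vanish) and $\deg R\le2$ with leading coefficient~$1$. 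When $N-k$ is even, $[x^k]\bigl(R(x)Q(x)\bigr)$ picks out a single coefficient of $Q$ — the $R$-degree being forced to the parity of $N$, with coefficient $1$ — and the sign of that coefficient of $Q$ works out to $(-1)^{(N-k)/2}$ irrespective of which pairs are present or whether one of the exceptional points is $0$. Carrying out this bookkeeping carefully ($N$ even/odd, $x_j$ zero or not) gives $\operatorname{sgn}(c_{jk})=(-1)^{(N-k)/2}(-1)^j$ for all $j$, which yields the displayed identity and hence the lemma. I expect this elementary case-check to be the only delicate step; everything else is Lagrange interpolation together with the trivial bound $|q(x_j)|\le1$.
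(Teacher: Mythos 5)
The paper does not prove this lemma: it simply cites it as Theorem~E of Qazi and Rahman, so there is no internal proof to compare against. Your proposal supplies an actual proof, and it is correct. The route — Lagrange interpolation at the Chebyshev extrema $x_j=\cos(j\pi/N)$, the observation that $|[x^k]q|\le\sum_j|c_{jk}|$, the reduction of the identity $\sum_j|c_{jk}|=|[x^k]T_N|$ to the sign‑stability $\operatorname{sgn}(c_{jk})=\varepsilon(-1)^j$, and the reduction of the ``odd'' case to the ``even'' case via the odd/even part of $p_n$ — is essentially V.~A.~Markov's classical argument, and each step checks out.

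On the one delicate point you flagged: the decomposition $\prod_{i\neq j}(x-x_i)=R(x)Q(x)$ with $Q$ even of degree $2d$ and $R$ monic of degree $0$, $1$, or $2$ does give, whenever $N-k$ is even, $[x^k]\prod_{i\neq j}(x-x_i)=[x^{2m}]Q$ with $2m=k-\deg R$, and $[x^{2m}]Q=(-1)^{d-m}e_{d-m}\bigl(y_1^2,\dots,y_d^2\bigr)$ with $y_l^2>0$. A quick tally in each of the three sub‑cases ($N$ even with $x_j=0$: $d=N/2$, $\deg R=0$; $N$ even with $x_j\neq0$: $d=N/2-1$, $\deg R=2$; $N$ odd: $d=(N-1)/2$, $\deg R=1$) gives $d-m=(N-k)/2$ every time, so the sign is $(-1)^{(N-k)/2}$ independently of $j$ and strictly positive elementary symmetric functions make the coefficient nonzero whenever $0\le(N-k)/2\le d$; when $(N-k)/2>d$ the coefficient vanishes, which is harmless for the identity. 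Combined with $\operatorname{sgn}\omega'(x_j)=(-1)^j$ this closes the argument, so your ``elementary case‑check'' is exactly the right residual work and it does go through.
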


It is shown in Cai and Low\cite[Lemma 2]{Cai--Low2011} that all of the coefficients of Chebyshev polynomial $T_{2m}(x), m\in \mathbb{Z}_+$ are upper bounded by $2^{3m}$. If we view the best polynomial approximation of $x^\alpha$ or $-x \ln x$ over $[0,1]$ as the best polynomial approximation of $y^{2\alpha}$ or $-y^2 \ln y^2, y^2 = x$, then we would obtain an even polynomial over interval $[-1,1]$ represented as
\begin{equation}
\sum_{k = 0}^{n} g_{k,\alpha} y^{2k}\quad \text{or} \quad \sum_{k = 0}^{n} r_{k,H} y^{2k}.
\end{equation}

Applying Lemma~\ref{lemma.chebyshev} and equation~(\ref{eqn.gkhdefine}), we know that for all $k \leq n$, we have
\begin{equation}
|g_{k,\alpha}| \leq 2^{3n}, \quad |g_{k,H}| \leq 2^{3n}.
\end{equation}

\subsection{Proof of Lemma~\ref{lemma.approsmall}}

Define $x' = \frac{x}{4\Delta} \in [0,1]$. For $0<\alpha<1$, applying Lemma~\ref{lemma.nonasympxa}, we have
\begin{equation}
\left| (x')^\alpha - \sum_{k =1}^{K} g_{k,\alpha}(x')^k \right| \leq 2\left(\frac{\pi}{2K}\right)^{2\alpha}.
\end{equation}

Multiplying both sides by $(4\Delta)^\alpha$, we have
\begin{equation}
\left|\sum_{k = 0}^{K} g_{k,\alpha} (4\Delta)^{-k + \alpha} x^k - x^\alpha \right| \leq 2\left(\frac{\pi}{2}\right)^{2\alpha} \frac{(4\Delta)^\alpha}{K^{2\alpha}}=      \frac{c_3}{(n \ln n)^\alpha}.
\end{equation}

For the case $1<\alpha<3/2$, similar results hold for
\begin{align}
  c_3 = 3\left(\frac{4\pi^2c_1}{c_2^2}\right)^\alpha.
\end{align}

\subsection{Proof of Lemma~\ref{lemma.approsmallentropy}}

Define $x' = \frac{x}{4\Delta}$, hence for $x\in [0,4\Delta], x' \in [0,1]$. It follows from the best polynomial approximation result for $-x\ln x$ on $[0,1]$ that there exists a constant $d>0$ such that for all $x'\in [0,1]$,
\begin{equation}
\left| \sum_{k = 0}^K r_{k,H} (x')^k - (-x' \ln x') \right| \leq \frac{d}{K^2}.
\end{equation}
When $n$ is sufficiently large, we could take $d = \frac{\nu_1(2)}{2}$.
Taking $x' = 0$, we have
\begin{equation}
r_{0,H} \leq \frac{d}{K^2},
\end{equation}
hence
\begin{equation}
\left| \sum_{k = 1}^K r_{k,H} (x')^k - (-x' \ln x') \right| \leq \frac{2d}{K^2}.
\end{equation}
Now, multiplying both sides by $4\Delta$, we have
\begin{equation}
\left| \sum_{k = 1}^K r_{k,H} (4\Delta)^{-k+1}x^k  + x \left( \ln x - \ln (4\Delta) \right)\right| \leq \frac{2d(4\Delta)}{K^2}.
\end{equation}

Since we have defined $g_{k,H}$ as
\begin{equation}
g_{k,H} = r_{k,H}, 2\leq k \leq K,\quad g_{1,H} = r_{1,H} - \ln (4\Delta),
\end{equation}
we have
\begin{align}
\left| \sum_{k = 1}^K g_{k,H} (4\Delta)^{-k+1} x^k + x\ln x \right| & \leq  \frac{2d(4\Delta)}{K^2} \\
& = \frac{8dc_1}{c_2^2 n \ln n} \\
& = \frac{C}{n \ln n}.
\end{align}

When $n$ is sufficiently large, we could replace $d$ by $\nu_1(2)/2$, hence obtain
\begin{equation}
C = \frac{4c_1 \nu_1(2)}{c_2^2}.
\end{equation}

\subsection{Proof of Lemma~\ref{lemma.poissonmoment}}

We know that if $X\sim\mathsf{Poi}(\lambda)$, then it follows from \cite{Riordan1937moment} that
\begin{equation}
\bE X^k = \sum_{i=1}^k \lambda^i \left\{\begin{matrix} k \\ i \end{matrix}\right\},
\end{equation}
where $\left\{\begin{matrix} k \\ i \end{matrix}\right\}$ is the Stirling numbers of the second kind.

Using (\ref{eq:stirling_inequality}), we have
\begin{align}
\bE X^k = & \sum_{i=1}^k \lambda^i \left\{\begin{matrix} k \\ i \end{matrix}\right\} \\
& \leq \sum_{i =1}^k \lambda^i \binom{k}{i}i^{k-i} \\
& \leq \sum_{i = 1}^k M^i \binom{k}{i} M^{k-i} \\
& = M^k \sum_{i = 1}^k \binom{k}{i} \\
& \leq M^k 2^k \\
& = (2M)^k.
\end{align}

\bibliographystyle{IEEEtran}
\bibliography{di}

\begin{IEEEbiographynophoto}{Jiantao Jiao}
(S'13) received the B.Eng. degree with the highest honor in Electronic Engineering from Tsinghua University, Beijing, China in 2012, and a Master's degree in Electrical Engineering from Stanford University in 2014. He is currently working towards the Ph.D. degree in the Department of Electrical Engineering at Stanford University. He is a recipient of the Stanford Graduate Fellowship (SGF). His research interests include information theory and statistical signal processing, with applications in communication, control,
computation, networking, data compression, and learning. 
\end{IEEEbiographynophoto}

\begin{IEEEbiographynophoto}{Kartik Venkat}
(S'12) is a Ph.D. candidate in the Department of Electrical
Engineering at Stanford University. His research interests include statistical inference, information theory, machine learning, and their applications in genomics, wireless networks, neuroscience, and quantitative finance. Kartik received a Bachelor’s degree in Electrical Engineering from the Indian Institute of Technology, Kanpur
in 2010, and a Master's degree in Electrical Engineering from Stanford University in 2012. His honors include a Stanford Graduate Fellowship for Engineering and Sciences, the Numerical Technologies Founders Prize, and a Jack Keil Wolf ISIT Student Paper Award at the 2012 International Symposium on Information Theory.
\end{IEEEbiographynophoto}

\begin{IEEEbiographynophoto}{Yanjun Han}
(S'14) is currently working towards the B.Eng. degree in Electronic Engineering from Tsinghua University, Beijing, China. His research interests include information theory and statistics, with applications in communications, data compression, and learning.
\end{IEEEbiographynophoto}

\begin{IEEEbiographynophoto}{Tsachy Weissman}
(S'99-M'02-SM'07-F'13) graduated summa cum laude with a
B.Sc. in electrical engineering from the Technion in 1997, and earned
his Ph.D. at the same place in 2001. He then worked at Hewlett-Packard
Laboratories with the information theory group until 2003, when he joined
Stanford University, where he is Associate Professor of Electrical
Engineering and incumbent of the
STMicroelectronics chair in the School of Engineering.
He has spent leaves at the Technion, and at ETH Zurich.

Tsachy's research is focused on information theory, statistical signal
processing, the interplay between them, and their applications.

He is recipient of several best paper awards, and prizes for excellence in research.

He served on the editorial board of the \textsc{IEEE Transactions on Information Theory} from Sept. 2010 to Aug. 2013, and currently serves on the editorial board of Foundations and Trends in Communications and Information Theory.
\end{IEEEbiographynophoto}
\end{document}